\documentclass[letterpaper,12pt,reqno]{article}

\usepackage[letterpaper, left=2.5cm,right=2.5cm,top=2.5cm,bottom=2.5cm]{geometry}
\usepackage{setspace}

\usepackage[T1]{fontenc}
\usepackage{newtxtext}
\usepackage{microtype}

\usepackage[hyphens]{url}
\usepackage[hyphenbreaks]{breakurl}
\usepackage{hyperref}
\hypersetup{colorlinks=true, linkcolor=blue, urlcolor  = blue,  citecolor=black}

\usepackage{titlesec}
\usepackage[utf8]{inputenc}
\usepackage{graphicx}
\usepackage{amsmath}
\usepackage{MnSymbol}%
\usepackage{wasysym}%
\graphicspath{{Images/}}
\usepackage{amsfonts}
\usepackage{commath}
\usepackage{algpseudocode}
\usepackage{enumitem} 
\usepackage{xcolor}
\usepackage[english]{babel}
\usepackage{amsthm}
\usepackage{tocloft}
\usepackage{bbm}
\usepackage{mathtools}
\usepackage{tikz}
\usepackage{comment}
\usetikzlibrary{decorations.pathreplacing}
\usetikzlibrary{positioning,quotes,calc}
\usepackage{tikz-3dplot}
\usepackage{subcaption}
\usetikzlibrary{calc,intersections}

\definecolor{niceblue}{RGB}{0,104,178}
\definecolor{darkgreen}{RGB}{3, 131, 127}
\definecolor{orange}{RGB}{213,94,0}
\definecolor{darkorange}{RGB}{191,84,0}
\definecolor{paleniceblue}{RGB}{118,171,208}
\definecolor{camniceblue}{RGB}{145, 182, 177}
\definecolor{grey}{RGB}{60, 62, 61}
\definecolor{purple}{RGB}{153, 51, 255}
\definecolor{red}{RGB}{255, 51, 51}
\definecolor{darkdarkgreen}{RGB}{10, 105, 102}
\definecolor{darkred}{RGB}{148, 47, 47}

\definecolor{niceblue}{RGB}{0,104,178}
\definecolor{darkgrey}{RGB}{3, 131, 127}
\definecolor{orange}{RGB}{213,94,0}
\definecolor{darkorange}{RGB}{191,84,0}
\definecolor{paleniceblue}{RGB}{118,171,208}
\definecolor{camniceblue}{RGB}{145, 182, 177}
\definecolor{grey}{RGB}{60, 62, 61}
\definecolor{purple}{RGB}{153, 51, 255}
\definecolor{red}{RGB}{255, 51, 51}
\definecolor{darkdarkgrey}{RGB}{10, 105, 102}
\definecolor{darkred}{RGB}{148, 47, 47}

\definecolor{teal}{RGB}{0, 150, 136}        
\definecolor{peach}{RGB}{255, 160, 122}      
\definecolor{lavender}{RGB}{178, 102, 255}  
\definecolor{midnightniceblue}{RGB}{25, 25, 112} 
\definecolor{olive}{RGB}{128, 128, 0}       
\definecolor{mustard}{RGB}{204, 153, 0}      
\definecolor{rose}{RGB}{255, 102, 102}       
\definecolor{plum}{RGB}{102, 0, 102}

\newtheorem*{theorem*}{Theorem}
\newtheorem*{corollary*}{Corollary}
\newtheorem*{proposition*}{Proposition}
\newtheorem*{lemma*}{Lemma}
\newtheorem*{fact*}{Fact}
\newtheorem*{definition*}{Definition}
\newtheorem*{conjecture*}{Conjecture}
\newtheorem{theorem}{Theorem}
\newtheorem{example}{Example}
\newtheorem{corollary}{Corollary}
\newtheorem{proposition}{Proposition}
\newtheorem{assumption}{Assumption}
\newtheorem{definition}{Definition}
\newtheorem{lemma}{Lemma}
\newtheorem{problem}{Problem}
\newtheorem{remark}{Remark}
\newtheorem{fact}{Fact}

\DeclarePairedDelimiterX{\inp}[2]{\langle}{\rangle}{#1, #2}

\newcommand{\ool}[1]{\overline{#1}}
\newcommand{\uul}[1]{\underline{#1}}
\DeclareMathOperator*{\argmax}{arg\,max}

\usepackage[font=small, width=0.8\textwidth]{caption}

\titleformat{\section}
		{\large\bfseries\center}
         {\thesection}
        {0.5em}
        {}
        []
\titleformat{\subsection}
        {\normalfont\bfseries}
        {\thesubsection}
        {0.5em}
        {}
\titleformat{\subsubsection}[runin]
        {\normalfont\bfseries}
        {\thesubsubsection}
        {0.5em}
        {}
        [.]

\AtBeginDocument{
  \setlength{\abovedisplayskip}{9pt}
  \setlength{\belowdisplayskip}{9pt}
  \setlength{\abovedisplayshortskip}{1pt}
  \setlength{\belowdisplayshortskip}{6pt}
  \setlength{\jot}{5pt}
}

\usepackage[authoryear]{natbib}
\usepackage{soul}
\title{Targeting Without Transfers\thanks{I am grateful to Itai Ashlagi, Ben Brooks, Eric Budish, Laura Doval, Piotr Dworczak, Joey Feffer, Jason Hartline, Ben Hébert, Jan Kożuszek, Jacob Leshno, Shengwu Li, Axel Niemeyer, Michael Ostrovsky, Ilya Segal, Andrzej Skrzypacz, Takuo Sugaya, Vasiliki Skreta, Frank Yang, and Leeat Yariv for their helpful comments and suggestions.}}
\author{Filip Tokarski \\ Stanford GSB}

\begin{document}
\date{\today} 
\maketitle
\vspace*{-1cm} 
\begin{abstract}
I study the welfare-maximizing allocation of heterogeneous goods when monetary transfers are prohibited. Agents have private values, and the designer chooses a mechanism subject to incentive compatibility and aggregate supply constraints. I characterize when the optimal mechanism takes the form of a simple menu, where each option offers some amount of one kind of good and none of the others. When this is the case, it can be implemented as a competitive equilibrium with equal incomes. However, this mechanism can be suboptimal when narrow preference margins between pure options are sufficiently predictive of higher values. In such cases, the designer can improve welfare by offering mixed bundles, targeting high-value agents through their willingness to accept mixing.
\end{abstract}

\section{Introduction}

When designing mechanisms without transfers, it is often natural to evaluate them using criteria that avoid interpersonal utility comparisons. This approach is especially appealing when the policymaker has explicitly non-welfarist goals, such as fairness, or when participants' valuations for the allocated goods are plausibly similar. Indeed, the literature on mechanisms without money has largely focused on notions based on Pareto efficiency and ordinal welfare rankings.\footnote{See, among others, \cite{hylland1979efficient,abdulkadirouglu1998random, bogomolnaia2001new, budish2011combinatorial, abdulkadirouglu2013matching}.} Nevertheless, criteria agnostic to cardinal values are less fitting for settings like social programs, where policymakers view applicants as differing sharply in terms of need and aim to target those for whom receiving the goods has the greatest social value. For instance, affordable housing programs in many European countries serve a broad population, including families facing eviction as well as middle-class households with stable employment \citep{whitehead2007social}. In the U.S. context, \cite{cook2023build} find that affordable housing recipients differ substantially in various measures of need, and that this heterogeneity persists even after conditioning on observables.

Motivated by such settings, I consider a mechanism design problem without transfers where the designer allocates heterogeneous goods to maximize cardinal welfare. Importantly, agents' valuations are their private information; this prevents the designer from simply giving the available supply to those who need it most. The main lesson from my analysis is that the optimal mechanism depends on the statistical relationship between agents' \emph{absolute} level of need and the strength of their \emph{relative} preferences across options. When higher-value agents tend to be \emph{more picky}, that is, have stronger preferences between the offered goods, the optimal mechanism often takes the form of a simple menu of \emph{pure options}: each option offers some amount of one kind of good and none of the others. This mechanism can be implemented in several natural ways, including as a competitive equilibrium with equal incomes---a mechanism often proposed for its desirable fairness and Pareto-efficiency properties \citep{varian1973equity,hylland1979efficient,budish2011combinatorial,azevedo2019strategy}. However, when higher-value agents tend to be \emph{less picky}, the designer can sometimes improve upon this mechanism by letting agents choose larger mixed bundles alongside smaller pure options. Intuitively, bundles reward agents willing to give up choice with a larger total allocation. When such willingness signals greater need, this directs more resources toward higher-value agents in an incentive-compatible way.

Both patterns of correlation between need and selectivity are plausible in different settings. For instance, \citet{cook2023build} find that lower-income households are less selective when applying for affordable housing: they are more willing to trade off assignment to a preferred unit for a higher probability of receiving an offer \emph{somewhere}.  In other settings, however, the correlation may go in the opposite direction. For example, consider school choice environments with specialized curricula, such as dual-language immersion programs. Families who place disproportionate weight on admission to such programs may do so because of a child's idiosyncratic needs, aptitudes, or interests. In such cases, an intense \emph{relative} preference for a particular option may instead signal a higher \emph{absolute} value for receiving it.

I establish three sets of results. First, I characterize when the mechanism offering only pure options is optimal. The condition is stated in terms of monotone transports on an appropriately constructed signed measure, which captures the designer's marginal value of increasing different types' allocations. I then derive a simpler sufficient condition in the special case of symmetric goods: there, the pure option mechanism is optimal if agents whose cardinal values for their favorite goods are higher tend to be more selective, in a precise stochastic sense.

Second, I fully characterize the welfare-maximizing mechanism when there are two kinds of goods. The mechanism always offers two pure options and sometimes also includes a
single mixed bundle. In the symmetric case, I show that introducing the mixed bundle is welfare-improving precisely when weak relative preferences are sufficiently predictive of higher total value, so that the informational gain from targeting outweighs the allocative inefficiency from mixing. 

Third, I extend the model by allowing agents to differ in characteristics that the designer can observe. Indeed, targeting based on such verifiable tags is common in social programs, where priority often depends on earnings or family size. I show that when the designer can condition the allocation on observables, she effectively faces a set of separate screening problems---one for each tag---linked through a common supply constraint. Her problem can then be decomposed into two stages: an \emph{inner stage}, in which goods are allocated optimally within each group, and an \emph{outer stage}, in which the total supply is divided across groups. This supply split has a particular structure: each group is either \emph{active} and receives positive amounts of every good, or \emph{excluded} and receives nothing. Intuitively, a group is active when it has a ``comparative advantage'' in one of the goods, that is, when agents with the corresponding tag tend to value that kind of good significantly more than agents in other groups do. I also show that observable heterogeneity cannot generally be accommodated by simply assigning each group a different budget in a common pseudomarket, an approach used, for example, by Feeding America to allocate donations across food banks. This is because, generically, the designer can do better by tailoring each group's supply to its comparative advantages.

I then discuss the implications of my results for market design in settings such as allocating donations to food banks and affordable housing. In these settings, existing mechanisms can be improved when the agents the designer wants to prioritize are also those willing to accept less tailored allocations. For food banks, the results suggest a natural modification of the CEEI mechanism currently used to allocate donations: the system could offer discounts for bundles of food, allowing food banks that receive little support from their local donors to get more food cheaply. In the case of affordable housing, the model speaks to the design of waitlists and repeated lotteries. Indeed, I show in Appendix~\ref{app:waitlist} that the baseline model can be interpreted as a reduced-form description of such mechanisms, once allocations are redefined appropriately. There, the program could target ``less picky'' applicants by offering a priority waitlist that requires them to accept any suitable unit, alongside longer waitlists for specific units or locations.

The rest of the paper is structured as follows. Section~\ref{sec:lit} discusses the related literature. Section~\ref{sec:model} presents the model, and Section~\ref{sec:examples} illustrates the core intuitions through two-good examples. Section~\ref{sec:renormalization} formally separates absolute and relative values, and Section~\ref{sec:pureoption} introduces the pure option mechanism and its implementations. Section~\ref{sec:CEEI} characterizes when this mechanism is optimal, and Section~\ref{sec:2good} fully characterizes the optimal mechanism with two kinds of goods. Section~\ref{sec:observable-groups} extends the model to observable groups, Section~\ref{sec:discussion} discusses implications for market design, and Section~\ref{sec:proofdetails} outlines the proof of the main characterization.

\section{Related literature}\label{sec:lit}

My paper contributes to the literature on allocating heterogeneous goods without transfers, and connects most directly to work on pseudomarkets and competitive equilibrium with equal incomes (CEEI). \citet{hylland1979efficient} introduce CEEI in an assignment setting by giving agents equal budgets of artificial currency and letting them purchase probability shares in goods; the resulting lotteries are ex ante Pareto-efficient. Subsequent work has developed pseudomarket mechanisms for richer environments, including approximate CEEI for combinatorial assignment \citep{budish2011combinatorial} and pseudomarkets with priorities and other constraints \citep{he2018pseudo}. A related literature also studies mechanisms that elicit only ordinal rankings over options. Random serial dictatorship lets agents choose sequentially in random order \citep{abdulkadirouglu1998random}, while probabilistic serial assigns agents fractional shares by letting them continuously claim their most-preferred available objects, yielding ordinally efficient outcomes \citep{bogomolnaia2001new}. Other papers analyze the large-market properties of these mechanisms: \citet{azevedo2019strategy} formalize a notion of strategy-proofness in the large for assignment and related mechanisms, while \citet{che2010asymptotic} show that random serial dictatorship and probabilistic serial become asymptotically equivalent.

While the study of allocating heterogeneous goods without transfers has focused mainly on criteria that avoid interpersonal utility comparisons, a smaller body of work allows for \emph{cardinal} objectives and studies mechanisms that maximize them \citep{miralles2012cardinal,CHAKRAVARTY20131,ashlagi2016optimal,dogan2020welfare,akyol2025allocation}. My paper is closest to \citet{miralles2012cardinal}, who studies welfare-maximizing mechanisms with two symmetric goods and finitely many symmetric agents. He characterizes the optimal mechanism under regularity conditions on the value distribution, showing that it combines lotteries, auctions, and insurance, and that under these conditions the optimum approaches a pseudomarket as the number of agents grows large. I instead focus on large markets directly and consider an arbitrary number of goods without symmetry restrictions. I characterize when the pseudomarket is welfare-maximizing and, when these conditions fail, show how the designer can use mixed bundles to screen agents by exploiting the relationship between the strength of their relative preferences and their absolute values.

A related literature studies eliciting preference intensities—information about how strongly agents prefer some options over others. In school choice, \citet{bostonreconsidered} observe that the Boston mechanism can elicit the \emph{extent} to which families prefer certain schools---a property that deferred acceptance does not have. \citet{abebe2020truthful} show that, in one-sided matching, a
truthful mechanism eliciting cardinal values can substantially outperform any
ordinal mechanism in approximating the Nash bargaining solution. In a paper closely related to mine, \cite{ortoleva2021cares} consider optimal mechanisms in a setting without transfers where agents have a common ranking over goods but differ in their sensitivity to quality. My paper, by contrast, does not impose such structure and considers heterogeneous goods. This leads to different and complementary results. Indeed, the authors show that the first-best allocations may offer lotteries between qualities, and that second-best allocations always involve lotteries and may involve free disposal; neither of these results holds in my setting. Like me, they show that CEEI allocations, despite being Pareto-efficient, do not always maximize welfare. 

Finally, my work builds on methods developed in the multidimensional screening literature. To obtain a characterization of the optimality of the pure option mechanism, I invoke ideas used in the study of the multi-product monopoly problem \citep{armstrong1996multiproduct,rochet1998ironing,manelli2006bundling,kleiner2019strong}. In particular, my optimality conditions rely on stochastic dominance and transport arguments related to those in \cite{daskalakis2013mechanism,daskalakis}.

\section{Model}\label{sec:model}

The designer has $N$ different kinds of goods indexed by $i\in \{1,\dots,N\}$ with $N\geq 2$. She possesses a fixed mass of each, with the supplies given by $s=(s_1,s_2,\dots,s_N)> 0$. There is a unit mass of agents, each of whom has a profile of values $v= (v_1,v_2,\dots,v_N)$ for the goods; the values are private information and are distributed according to $F$ with support on a bounded set $\mathcal{V}\subset \mathbb{R}^N_{+}$. This distribution puts no mass on \(\mathbf 0\) and assigns positive mass to agents with \(v_i>0\) for every good \(i\). The designer chooses an allocation rule for the goods, $y=(y_1,y_2,\dots,y_N):\mathcal{V}\to \mathbb{R}_+^N$, to maximize utilitarian welfare:
\begin{equation}\label{eq:obj}
\int_{\mathcal{V}} v\cdot y(v) d F(v). \tag{O}
\end{equation}
She faces incentive compatibility and supply constraints:
    \begin{equation}\label{eq:IC}
      v\cdot y(v) \geq v\cdot y(v') \quad \text{for all }v, v', \tag{IC}
    \end{equation}
    \begin{equation}\label{eq:S}
      \int y(v) d F(v) \leq s. \tag{S}
     \end{equation}
An allocation rule $y:\mathcal{V}\to \mathbb{R}_+^N$ is \emph{feasible} if it satisfies \eqref{eq:IC} and \eqref{eq:S}.

\subsection{Discussion of the model}\label{rem:model-dis}

I briefly discuss the interpretation of the model. First, it assumes that the designer's preferences align with those of agents: if an agent chooses one allocation over another, the designer also considers giving her that allocation more socially valuable.

Second, agents' values \(v\) admit several interpretations. For instance, \(v_i\) may represent an agent's latent willingness to pay for a unit of good \(i\). Although such values cannot be directly elicited without money, they remain meaningful for the designer's welfare objective. More generally, \(v\) may represent the designer's assessment of the social value of allocating goods to different agents. She may, for example, place greater weight on individuals with greater need or vulnerability and believe that these characteristics are correlated with the preferences agents reveal over the available goods.

It is also useful to clarify how my environment differs from a standard mechanism-design problem with monetary transfers. The key difference is that agents cannot bring \emph{their own money} into the system and therefore are unable to signal higher values for the offered goods through their willingness to pay. Formally, this is captured by the restriction that all allocations must be nonnegative. That said, one of the allocated goods can still be interpreted as money. For example, my model can describe a housing program that allocates both a fixed stock of housing and a fixed budget for housing vouchers, or a university that distributes several pools of funding, earmarked for different purposes, across laboratories or researchers.

Finally, note that at baseline we interpret \(y_i(v)\) as the \emph{amount} of good \(i\) given to type \(v\). However, some settings of interest feature unit demand, so applying the model to them requires interpreting allocations as \emph{probabilities} of receiving a certain good. This in turn requires imposing an additional constraint that $\sum y_i(v)\le 1$ for all $v$. While this restriction changes the problem in general, the following result shows that imposing it is without loss when supply is sufficiently scarce relative to the population.

\begin{proposition}\label{prop:P_redundant}
Consider the model augmented with the probability constraint
\begin{equation}\label{eq:P}
  \sum y_i(v) \le 1 \quad \text{for all } v\in\mathcal V. \tag{P}
\end{equation}
There exists $\bar\eta>0$ such that for every $\eta\in(0,\bar\eta]$ and every allocation rule
$y:\mathcal V\to\mathbb R_+^N$ that is feasible with supplies $\eta s$, constraint
\eqref{eq:P} is slack for all $v\in\mathcal V$.
\end{proposition}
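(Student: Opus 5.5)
The plan is to show that incentive compatibility, combined with the aggregate supply constraint, bounds every type's allocation \emph{uniformly} by a constant times the supply scale $\eta$. Once that bound is in place, choosing $\eta$ small enough makes \eqref{eq:P} hold strictly at every $v$.

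\textbf{Step 1 (integrate IC over deviating types).} Fix an arbitrary type $v\in\mathcal V$ and consider every other type $v'$ contemplating a deviation to report $v$. By \eqref{eq:IC}, $v'\cdot y(v)\le v'\cdot y(v')$ holds pointwise for all $v'$. Integrating against $F$ in $v'$ gives
\[
y(v)\cdot \bar v \;\le\; \int_{\mathcal V} v'\cdot y(v')\,dF(v'),
\]
where $\bar v:=\int v'\,dF(v')$ is the mean value vector. Integrability is not an issue because $\mathcal V$ is bounded and the right-hand side is controlled in Step 2.

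\textbf{Step 2 (bound the right-hand side by supply).} Let $M:=\sup_{v'\in\mathcal V}\max_i v'_i<\infty$, which is finite by boundedness of the support. Since $y\ge 0$,
\[
\int v'\cdot y(v')\,dF(v') \le M\sum_i\int y_i(v')\,dF(v') \le M\eta\sum_i s_i,
\]
where the last inequality is \eqref{eq:S} with supplies $\eta s$.

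\textbf{Step 3 (use positivity of the mean).} The assumption that $F$ puts positive mass on types with $v_i>0$ for each good $i$ implies $\bar v_i>0$ for every $i$. Let $m:=\min_i\bar v_i>0$. Since $y(v)\ge 0$, we have $m\sum_i y_i(v)\le y(v)\cdot\bar v$. Combining with Steps 1 and 2,
\[
\sum_i y_i(v)\le \frac{M\eta\sum_i s_i}{m}
\]
for every $v\in\mathcal V$. Setting $\bar\eta:=m/\bigl(2M\sum_i s_i\bigr)$, every $\eta\le\bar\eta$ gives $\sum_i y_i(v)\le 1/2<1$, so \eqref{eq:P} is slack everywhere.

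The only delicate point is Step 3: it is exactly where the nondegeneracy assumption on $F$ is needed. Without positive mass on each good, some coordinate $\bar v_i$ could vanish, and then IC would place no restriction on how much of good $i$ a single type could be assigned. The rest of the argument is routine.
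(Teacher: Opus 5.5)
Your proof is correct. It follows the same basic logic as the paper's: an option that offers a lot of some good would attract imitators whose equilibrium utility is small because supply is small. Where you differ is in how you implement this.

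The paper argues by contradiction against a selected group of imitators. It fixes thresholds $\delta_i$ with $Z_i=F(\{v_i\ge\delta_i\})>0$. It then uses Markov's inequality to extract a positive-mass set of types with $v_i\ge\delta_i$ whose total allocation is at most $2\eta\sum_j s_j/Z_i$, so their utility is at most $\delta_i/(2N)$ once $\eta\le\min_i \delta_iZ_i/(4MN\sum_j s_j)$. Any report with $y_i\ge 1/N$ would give them at least $\delta_i/N$, violating IC. Hence $y_i<1/N$ coordinatewise, and summing gives slack.

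You instead integrate the IC constraint over the whole population of potential imitators. This compares the deviation payoff $\bar v\cdot y(v)$ directly with average equilibrium utility, which supply bounds immediately, with no selection step. Your route buys several things:
\begin{enumerate}
\item It dispenses with Markov's inequality and the auxiliary thresholds $\delta_i$.
\item It bounds $\sum_i y_i(v)$ directly rather than coordinate by coordinate.
\item It yields a larger admissible $\bar\eta$. Since $\bar v_i\ge\delta_iZ_i$ for any choice of $\delta_i$, your $\min_i\bar v_i/(2M\sum_j s_j)$ exceeds the paper's threshold by at least a factor $2N$.
\end{enumerate}

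The paper's version only invokes IC for a positive-mass subset of high-$v_i$ types. Yours uses it for every deviating type. In this model that difference costs nothing, and both arguments rely on the same measurability of $y$ already implicit in the supply constraint.
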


Intuitively, this is because when overall supply is sufficiently small, the designer cannot afford to offer any option that delivers some good with certainty. If she did, the mass of agents requesting such an option would be so large that supply constraints would be violated. This kind of extreme mismatch between demand and supply is plausible in settings like public housing lotteries, where units are exceptionally scarce relative to the number of applicants.\footnote{Under this unit-demand interpretation, \(y_i(v)\) is the probability that a type-\(v\) agent receives good \(i\). Given appropriate technical conditions, the exact law of large numbers allows such interim probabilities to be implemented by independent individual lotteries that satisfy exact aggregate feasibility; see \citet{sun2006exact}.\label{foot:sun}}

Proposition~\ref{prop:P_redundant} also implies that the set of feasible allocation rules is uniformly bounded for any supply $s$. A standard compactness argument then shows that an optimal mechanism exists.

\section{Examples}\label{sec:examples}

To preview the paper's intuitions, I begin with illustrative examples
featuring just two goods.

\begin{example}\label{example:ex3}
Fix any supplies $s_1,s_2>0$ and let values follow an absolutely
continuous distribution with full support on $[0,1]^2$.
Consider the following mechanism, which offers agents two options:
\[
\left\{\text{$q_1$ of good $1$}\right\}, \quad
\left\{\text{$q_2$ of good $2$}\right\}.
\]
The quantities $q_1,q_2$ are chosen so that the supply constraint
holds with equality when all agents pick their preferred option.
These quantities are uniquely pinned down by the value distribution
and the supplies.
\end{example}

\begin{figure}[h!]
  \centering
  \vspace*{-5.2mm}
  \begin{tikzpicture}[scale=0.75]
    \def\L{5}
    \def\slope{1.4}
    \pgfmathsetmacro{\xint}{\L/\slope}

    \fill[niceblue, opacity=0.3]
      (0,0) -- (0,\L) -- (\xint,\L) -- cycle;

    \fill[mustard, opacity=0.3]
      (0,0) -- (\xint,\L) -- (\L,\L) -- (\L,0) -- cycle;

    \draw[->] (0,0) -- (\L+0.1,0) node[right] {$v_1$};
    \draw[->] (0,0) -- (0,\L+0.1) node[above] {$v_2$};

    \pgfmathsetmacro{\Bx}{\xint/3}
    \pgfmathsetmacro{\By}{2*\L/3}
    \node[darkdarkgreen] at (\Bx,\By)
      {\fontsize{15}{15}\selectfont $q_2$ of $2$};

    \pgfmathsetmacro{\Ax}
      {(\L*\L - \xint*\xint/3)/(2*\L - \xint)}
    \pgfmathsetmacro{\Ay}
      {\L*(3*\L - 2*\xint)/(3*(2*\L - \xint))}
    \node[darkdarkgreen] at (\Ax,\Ay)
      {\fontsize{15}{15}\selectfont $q_1$ of $1$};
  \end{tikzpicture}
  \caption{Allocation under the mechanism in
    Example~\ref{example:ex3}.}
  \label{fg:ceeiexample2}
\end{figure}

Under this mechanism, agents for whom $q_1v_1>q_2v_2$ select the
former option, while those for whom $q_1v_1<q_2v_2$ select the
latter. As shown in Figure~\ref{fg:ceeiexample2}, these two sets of
types are separated by a ray from the origin defined by
\begin{equation}\label{eq:rayexample1}
\frac{v_1}{v_2}=\frac{q_2}{q_1}.
\end{equation}

Two features of this allocation are important. First, it is
Pareto-efficient among all allocation rules satisfying the supply
constraint~\eqref{eq:S}. Second, it depends only on the
\emph{ratio} of agents' values for goods $1$ and $2$, but not on
how large $v_1$ and $v_2$ are in absolute terms. Indeed, while the
designer would like to allocate the goods only to agents whose
$v_1$ and $v_2$ are high, she cannot do so in an
incentive-compatible way. She is therefore restricted to
conditioning the allocation on how much agents like one good
\emph{relative to the other}. This highlights a useful
distinction: an agent's \emph{absolute values}, $(v_1,v_2)$,
capture the overall intensity of need for the goods, while her
\emph{relative values}, which I define as
$(\tfrac{v_1}{v_1+v_2},\tfrac{v_2}{v_1+v_2})$, capture the
strength of her preference between the goods. Since all agents
with the same relative values always rank all offered options
the same way, no incentive-compatible mechanism can meaningfully
elicit absolute values among agents with the same profile of
relative values.

In fact, as I show in the course of this paper, this simple mechanism is optimal for a broad
class of value distributions, including the uniform distribution
on $[0,1]^2$. However, this is not always the case: for other value distributions, the designer
may want to offer a richer menu.

\begin{example}\label{example:ex2}
Let \(s_1=s_2\) and fix \(\varepsilon\in(0,1)\). An \(\varepsilon\)-share of agents have values distributed uniformly on \([0,1]^2\). A \((1-\varepsilon)/2\)-share of agents have values distributed uniformly on
\[
[1-\varepsilon,1]^2.
\]
Finally, two groups, each of mass \((1-\varepsilon)/4\), have values distributed uniformly on
\[
[0,1]\times[0,\varepsilon] \qquad\text{and}\qquad [0,\varepsilon]\times[0,1].
\]
Consider a mechanism that offers agents three options:
\[
\left\{\text{\(q_L\) of good \(1\)}\right\}, \qquad
\left\{\text{\(q_L\) of good \(2\)}\right\}, \qquad
\left\{\text{\(q_B\) of good \(1\) and \(q_B\) of good \(2\)}\right\},
\]
where \(q_L/2<q_B<q_L\). The quantities are scaled so that both supply constraints hold with equality when all agents pick their preferred option.
\end{example}

Under this mechanism, each agent can pick between a small amount of her favorite good and a bundle of the two goods that offers a larger total amount. Agents with strong relative preferences between the two goods pick the pure allocations, while those with narrow preference margins choose the bundle.

\begin{figure}[h!]
    \centering
    \begin{subfigure}[b]{0.4\linewidth}
        \centering
        \begin{tikzpicture}[scale=0.75]
            \pgfmathsetmacro{\L}{5}
            \pgfmathsetmacro{\eps}{0.65}
            \pgfmathsetmacro{\H}{\L-\eps}
            \fill[gray!20] (0,0) rectangle (\L,\L);
            \fill[brown, opacity=0.45] (0,0) rectangle (\L,\eps);
            \fill[brown, opacity=0.45] (0,0) rectangle (\eps,\L);
            \fill[darkdarkgreen!70, opacity=0.55] (\H,\H) rectangle ({\H+\eps},{\H+\eps});
            \draw[->] (0,0) -- (\L+0.1,0) node[right] {$v_1$};
            \draw[->] (0,0) -- (0,\L+0.1) node[above] {$v_2$};
        \end{tikzpicture}
        \caption{Value distribution in Example~\ref{example:ex2}.}
        \label{fig:example1}
    \end{subfigure}
    \hspace{1.3cm}
    \begin{subfigure}[b]{0.4\linewidth}
        \centering
        \begin{tikzpicture}[scale=0.75]
            \draw[->] (0,0) -- (5.1,0) node[right] {$v_1$};
            \draw[->] (0,0) -- (0,5.1) node[above] {$v_2$};
            \fill[niceblue, opacity=0.3] (0,5) -- (0,0) -- (2.5,5) -- cycle;
            \fill[mustard, opacity=0.3] (5,0) -- (5,2.5) -- (0,0) -- cycle;
            \fill[plum, opacity=0.3] (0,0) -- (2.5,5) -- (5,5) -- (5,2.5) -- cycle;
            \node[darkdarkgreen] at (0.95,4.0453) {\fontsize{13}{13}\selectfont $q_L$ of $2$};
            \node[darkdarkgreen] at (3.5333,0.7333) {\fontsize{13}{13}\selectfont $q_L$ of $1$};
            \node[darkdarkgreen, align=center] at (3.6,3.6) {\fontsize{14}{14}\selectfont $q_B$ of $2$ \\ $\text{and}$ \\ \fontsize{14}{14}\selectfont $q_B$ of $1$};
        \end{tikzpicture}
        \caption{Allocation in Example~\ref{example:ex2}.}
        \label{fg:betterexample}
    \end{subfigure}
\end{figure}

As I show in Appendix~\ref{sec:verifyex3}, for all sufficiently small \(\varepsilon>0\), the quantities \(q_L\) and \(q_B\) can be chosen so that this three-option menu is optimal. To see why this is the case, note that, for small \(\varepsilon\), the value distribution illustrated in Figure~\ref{fig:example1} concentrates its mass on two kinds of agents. The first are \emph{common-need} agents, who are close to indifferent about which good they get but have high values for both. The second are \emph{specialized-need} agents, who tend to value one good substantially more than the other. However, their highest values tend to be lower than those of the common-need agents.

Here too, all agents with the same relative values $(\tfrac{v_1}{v_1+v_2},\tfrac{v_2}{v_1+v_2})$ receive the same allocation. However, agents whose relative values are close together choose the bundle and thus receive higher total allocations. Since these agents also tend to have higher \emph{absolute} values $(v_1,v_2)$, offering a bundle gives the designer an incentive-compatible way of directing more goods to agents in greater need. More generally, doing so can help the designer if relative and absolute values are statistically related. In such cases, she can sometimes proxy for high absolute values by offering more attractive options to agents with certain relative preferences.

Note, however, that this allocation is not Pareto-efficient among all allocation rules satisfying the supply constraint~\eqref{eq:S}.\footnote{For the optimal choice of quantities, the allocation is nevertheless Pareto-efficient within the class of allocation rules satisfying both~\eqref{eq:S} and~\eqref{eq:IC}.} Indeed, agents who receive the bundle could profitably trade among themselves so that types above and below the 45-degree line in Figure~\ref{fg:betterexample} each receive only the good they prefer. This illustrates the designer's tradeoff: bundling reduces allocative efficiency, but helps screen agents by the strength of their relative preferences, allowing scarce resources to be targeted toward those with greater need.

\section{Absolute and relative values}\label{sec:renormalization}

Motivated by the preceding examples, I now formally separate absolute and relative values. Since the designer cannot elicit absolute values of agents who share the same profile of relative values, we can, without loss, identify types with the latter. To that end, let $\Gamma$ be the $(N-1)$-simplex of relative-value profiles:
\[
\Gamma \;:=\; \bigl\{\theta\in\mathbb{R}^N_{+}:\ \sum \theta_i=1\bigr\}.
\]
Let \(V\sim F\) denote the random value vector of an agent, and define the \(\Gamma\)-valued random variable \(\Theta:=V/\sum_j V_j\).\footnote{We can exclude the type \(V=\mathbf 0\) without loss of generality.} We map all types that are identical up to scaling to the same renormalized type $\theta\in \Gamma$. Let $G$ denote the distribution of $\Theta$, which is the push-forward of $F$ under the map $v\mapsto v/\sum_j v_j$. Throughout, we impose the following assumption on \(G\):

\begin{assumption}\label{ass:regong}
$G$ admits a density $g$ that is strictly positive a.e. on $\Gamma$. Moreover, \(g\) and its first weak derivatives along \(\Gamma\) are square-integrable.
\end{assumption}

\begin{figure}[h!]
    \centering
    \begin{subfigure}[b]{0.45\linewidth}
        \centering

\tdplotsetmaincoords{70}{30}
\begin{tikzpicture}[scale=0.7, tdplot_main_coords]

  \def\Lx{4}
  \def\Ly{4.7}
  \def\Lz{4}

  \def\Ax{\Lx+0.7}
  \def\Ay{\Ly+0.7}
  \def\Az{\Lz+0.7}

  \coordinate (B) at (1.3,1.1,\Lz);
\coordinate (R) at (3,3.1,\Lz);
  \coordinate (G) at (\Lx,1.0,0.5);

  \fill[gray!30, fill opacity=0.18] (0,0,0) -- (\Lx,0,0) -- (\Lx,\Ly,0) -- (0,\Ly,0) -- cycle;
  \fill[gray!30, fill opacity=0.18] (0,0,\Lz) -- (\Lx,0,\Lz) -- (\Lx,\Ly,\Lz) -- (0,\Ly,\Lz) -- cycle;
  \fill[gray!30, fill opacity=0.18] (0,0,0) -- (\Lx,0,0) -- (\Lx,0,\Lz) -- (0,0,\Lz) -- cycle;
  \fill[gray!30, fill opacity=0.18] (0,\Ly,0) -- (\Lx,\Ly,0) -- (\Lx,\Ly,\Lz) -- (0,\Ly,\Lz) -- cycle;
  \fill[gray!30, fill opacity=0.18] (0,0,0) -- (0,\Ly,0) -- (0,\Ly,\Lz) -- (0,0,\Lz) -- cycle;
  \fill[gray!30, fill opacity=0.18] (\Lx,0,0) -- (\Lx,\Ly,0) -- (\Lx,\Ly,\Lz) -- (\Lx,0,\Lz) -- cycle;

  \draw[thin] (0,0,0) -- (\Lx,0,0);
  \draw[thin] (\Lx,0,0) -- (\Lx,\Ly,0);
  \draw[thin] (\Lx,\Ly,0) -- (0,\Ly,0);
  \draw[thin] (0,\Ly,0) -- (0,0,0);

  \draw[thin] (0,0,\Lz) -- (\Lx,0,\Lz);
  \draw[thin] (\Lx,0,\Lz) -- (\Lx,\Ly,\Lz);
  \draw[thin] (\Lx,\Ly,\Lz) -- (0,\Ly,\Lz);
  \draw[thin] (0,\Ly,\Lz) -- (0,0,\Lz);

  \draw[thin] (0,0,0) -- (0,0,\Lz);
  \draw[thin] (\Lx,0,0) -- (\Lx,0,\Lz);
  \draw[thin] (\Lx,\Ly,0) -- (\Lx,\Ly,\Lz);
  \draw[thin] (0,\Ly,0) -- (0,\Ly,\Lz);

  \draw[->, thick] (0,0,0) -- (\Ax,0,0) node[below left] {$v_1$};
\draw[->, thick] (0,0,0) -- (0,\Ay,0) node[below, yshift=-4pt] {$v_2$};
  \draw[->, thick] (0,0,0) -- (0,0,\Az) node[above] {$v_3$};

  \draw[blue, very thick] (0,0,0) -- (B);
  \draw[red, very thick] (0,0,0) -- (R);
  \draw[green!60!black, very thick] (0,0,0) -- (G);

  \fill[blue] (B) circle (2.2pt);
  \fill[red] (R) circle (2.2pt);
  \fill[green!60!black] (G) circle (2.2pt);

\end{tikzpicture}
    \end{subfigure}
    \begin{subfigure}[b]{0.45\linewidth}
        \centering
\begin{tikzpicture}[scale=2.3, line join=round, line cap=round]
  \coordinate (e1) at (90:1);
  \coordinate (e2) at (210:1);
  \coordinate (e3) at (330:1);

  \path[use as bounding box] (-1.25,-0.95) rectangle (1.25,1.25);

  \fill[gray!30, opacity=0.7] (e1) -- (e2) -- (e3) -- cycle;

  \draw (e1) -- (e2) -- (e3) -- cycle;

  \node[above]       at (e1) {$e_3$};
  \node[below left]  at (e2) {$e_1$};
  \node[below right] at (e3) {$e_2$};

  \fill[blue]            (0.00, 0.45) circle (0.05);
  \fill[red]             (-0.05,0.12) circle (0.05);
  \fill[green!60!black]  (-0.45,-0.18) circle (0.05);
\end{tikzpicture}
 \vspace{-0.3cm}
    \end{subfigure}
\caption{The original value space \(\mathcal V\) and the renormalized type space \(\Gamma\) in the case with three types of goods, $N=3$. The renormalization \(v\mapsto v/\sum v_i\) collapses each ray in \(\mathcal V\) to a single relative-value profile in \(\Gamma\).}
\label{fig:renormex}
\end{figure}

While the designer cannot elicit absolute values, they are still important for her objective. We therefore let $\lambda:\Gamma\to\mathbb R_+$ be the conditional expectation of total value given the renormalized type:
\[
\lambda(\Theta)
=
\mathbb E \Big[\sum_j V_j \,\Big|\, \Theta\Big].
\]
Intuitively, $\lambda(\theta)$ is the expected total value associated with agents
whose types $v$ were mapped to $\theta$.\footnote{Related renormalizations appear, for example, in \cite{weitzman1977price} and \cite{dworczak2021redistribution}. There, values are normalized by agents' marginal utility of money, so the analogue of \(\lambda\) is the expected marginal utility of money among agents with a given willingness to pay for a good.} Since $\mathcal V$ is bounded, $\lambda$ is also bounded on all of $\Gamma$. 

The renormalization has a clear economic interpretation. The type $\theta$ contains the minimal information needed to describe behavior and is the object that can be empirically identified from choices. By contrast, $\lambda(\theta)$ captures the expected \emph{scale} of values conditional on $\theta$, and therefore affects the problem only through the designer's objective. Economically, $\lambda$ encodes the designer's prior about how need (i.e., absolute value) varies across preference profiles, and is relevant only for the normative ranking of feasible allocations. We can use it to rewrite the designer's problem as follows:
\begin{problem}\label{prob:renormalized}
Choose an allocation rule $x:\Gamma\to \mathbb{R}^N_+$ to maximize weighted welfare
\begin{equation}\label{eq:Oprime}
\int_{\Gamma} \lambda(\theta)\,U(\theta)\,dG(\theta), \tag{O'}
\end{equation}
where $U(\theta)=x(\theta)\cdot \theta$, subject to:
\begin{equation}\label{eq:ICprime}
\theta\cdot x(\theta)\ \ge\ \theta\cdot x(\theta')
\ \ \  \text{for all } \  \ \theta,\theta'\in\Gamma,\tag{IC'}
\end{equation}
\begin{equation}\label{eq:supply'}
\int_{\Gamma} x(\theta)\,dG(\theta) \ \le\ s.\tag{S'}
\end{equation}
\end{problem}
Indeed, Problem \ref{prob:renormalized} is equivalent to the designer's original problem in the following sense:

\begin{lemma}\label{lem:renorm-wlog} For any feasible allocation rule $y:\mathcal V\to\mathbb R_+^N$, there exists $x:\Gamma\to \mathbb{R}_+^N$ satisfying 
\begin{equation}\label{eq:constructx}
x(\theta)\;=\;\mathbb{E}\left[\,y(V)\mid \Theta=\theta\,\right] \quad \text{ a.e.}
\end{equation}
that is feasible in Problem~\ref{prob:renormalized}. Moreover, welfare from \(y\) equals the renormalized welfare from \(x\):
\begin{equation}\label{eq:welfarerenromequal}
\int_{\mathcal V} v\cdot y(v)\,dF(v)
\;=\;
\int_\Gamma \lambda(\theta)\, \theta\cdot x(\theta) \,dG(\theta).
\end{equation}
Conversely, for any feasible $x$ in Problem~\ref{prob:renormalized}, the allocation rule given by \(y(\mathbf 0)=0\) and \(y(v):=x(v/\sum v_i)\) for \(v\neq\mathbf 0\) is feasible for the original problem, and the two allocation rules satisfy \eqref{eq:welfarerenromequal}.
\end{lemma}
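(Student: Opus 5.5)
The plan is to prove the forward direction first, relying on two facts: IC makes the allocation depend on $v$ only through its ray, and the welfare identity then comes from conditioning on $\Theta$.

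\textbf{Forward direction.} Fix a feasible $y$. I will not use a bare conditional expectation. Instead I will take a regular conditional distribution $\mu_\theta$ of $V$ given $\Theta=\theta$, which exists because $\mathcal V$ is a bounded Borel subset of $\mathbb R^N$. I then set $x(\theta):=\int y(v)\,d\mu_\theta(v)$. This is well defined because feasible $y$ are uniformly bounded, as noted after Proposition~\ref{prop:P_redundant}. It satisfies \eqref{eq:constructx}, and $x\ge 0$.

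The supply constraint \eqref{eq:supply'} follows from the tower property: $\int x\,dG=\mathbb E[y(V)]\le s$.

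For \eqref{eq:ICprime}, the key observation is that for $v\neq\mathbf 0$, \eqref{eq:IC} divided by $\sum_j v_j$ reads $\theta\cdot y(v)\ge\theta\cdot y(v')$ for all $v'$, where $\theta=v/\sum_j v_j$. Applying this to two types $v,w$ on the same ray gives $\theta\cdot y(v)=\theta\cdot y(w)$. So $\theta\cdot y(\cdot)$ takes a common value $u(\theta)=\max_{v'}\theta\cdot y(v')$ on the ray $\{\Theta=\theta\}$. Since $\mu_\theta$ is supported on that ray for $G$-a.e.\ $\theta$, I get
\[
\theta\cdot x(\theta)=u(\theta)\ge \theta\cdot y(v')\quad\text{for all } v'.
\]
Integrating $v'$ against $\mu_{\theta'}$ yields $\theta\cdot x(\theta)\ge\theta\cdot x(\theta')$ for a.e.\ $\theta$ and every $\theta'$.

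\textbf{Main obstacle.} The difficulty is that IC must hold for \emph{all} $\theta$, not merely a.e. I will handle this by redefining $x$ on the $G$-null exceptional set $E$. For $\theta\in E$, I let $x(\theta)$ be a maximizer of $\theta\cdot z$ over the closure of $\{x(\theta'):\theta'\notin E\}$. This set is bounded, so the closure is compact and a maximizer exists. If necessary, I also require $\theta\cdot x(\theta)\ge\theta\cdot x(\theta')$ for $\theta'\in E$, which holds automatically because each such $x(\theta')$ lies in the same compact set. Points off $E$ still prefer their own bundle, since by the a.e.\ IC and continuity of the linear function their utility dominates every element of the closure. This modification does not change \eqref{eq:constructx} a.e., supply, or welfare.

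\textbf{Welfare identity.} Since $v\cdot y(v)=(\sum_j v_j)\,u(\Theta)$ and $u(\Theta)$ is $\Theta$-measurable,
\[
\mathbb E[V\cdot y(V)]=\mathbb E\big[u(\Theta)\,\mathbb E[\textstyle\sum_j V_j\mid\Theta]\big]=\int\lambda\,\theta\cdot x\,dG .
\]
This is \eqref{eq:welfarerenromequal}, using $u(\theta)=\theta\cdot x(\theta)$ a.e. The type $\mathbf 0$ carries no mass, so it can be ignored throughout.

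\textbf{Converse.} Given a feasible $x$, set $y(v)=x(v/\sum_j v_j)$. Then \eqref{eq:IC} for $v\neq\mathbf 0$ is \eqref{eq:ICprime} multiplied by $\sum_j v_j>0$, and for $v=\mathbf 0$ it is trivial. Supply holds because $\int y\,dF=\int x\,dG$ by the definition of the push-forward, and $F$ puts no mass on $\mathbf 0$. For welfare, $v\cdot y(v)=(\sum_j v_j)\,\Theta\cdot x(\Theta)$, and conditioning on $\Theta$ as above gives \eqref{eq:welfarerenromequal}. This step only requires measurability of $x$.
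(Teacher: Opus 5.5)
Your proof is correct and follows essentially the paper's route: incentive compatibility makes $\theta\cdot y(\cdot)$ constant and maximal on each ray, so the conditional expectation attains $\sup_{v'}\theta\cdot y(v')$ for a.e.\ $\theta$. A null-set repair by an argmax over a compact set then delivers \eqref{eq:ICprime} everywhere, and supply and welfare follow from the tower property. The differences are only technical. You use a regular conditional distribution, so $\theta\cdot x(\theta')\le u(\theta)$ holds for every $\theta'$ by linearity, and you repair over the closure of the range of $x$. The paper instead takes a version $\bar x$ of $\mathbb E[y(V)\mid\Theta]$, uses $\bar x\in K:=\overline{\operatorname{co}}\{y(v):v\in\mathcal V\}$ a.e., and repairs over $K$; this gives $\theta\cdot x(\theta)=\max_{z\in K}\theta\cdot z$ for all $\theta$ and hence incentive compatibility in one line.
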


\section{The pure option mechanism}\label{sec:pureoption}

As Example~\ref{example:ex3} shows, the optimal mechanism sometimes takes a particularly simple form: it offers one pure option for each good, and each agent chooses her favorite. In this section, I formally define this kind of mechanism.
\begin{definition}
An allocation rule $x:\Gamma\to\mathbb R_+^N$ is a \textbf{pure option mechanism} with a vector of \textbf{pure options} $q=(q_1,\dots,q_N)\in\mathbb R_{++}^N$ if for every $\theta\in\Gamma$,
\[
x(\theta)=q_i e_i
\ \  \text{for some } 
i\in\argmax_{j} \theta_j q_j,\quad 
\text{and} \qquad
\int_\Gamma x(\theta)\,dG(\theta)=s.
\]
\end{definition}

That is, agents face a menu with $N$ options,
\[
\left\{\text{$q_1$ of good $1$}\right\}, \ \ \ \left\{\text{$q_2$ of good $2$}\right\}, \ \ \ \dots, \ \ \ 
\left\{\text{$q_N$ of good $N$}\right\},
\]
where the quantities $q_i$ are chosen so that all the supply constraints bind when everyone chooses their favorite one. In the appendix, I show that such a market-clearing menu is unique:
\begin{fact}\label{fact:puremechs}
The pure option mechanism exists and is unique up to tie-breaking for a null set of types. 
\end{fact}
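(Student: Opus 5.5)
Reparametrize the menu by $p_i=\log q_i$, since only ratios matter for choices. Write $D_i(q)=G(\{\theta:\theta_i q_i>\max_{j\ne i}\theta_j q_j\})$ for the mass choosing option $i$. Ties have $G$-measure zero: by Assumption~\ref{ass:regong}, $G$ is absolutely continuous on $\Gamma$, and each tie set $\{\theta_i q_i=\theta_j q_j\}$ is a hyperplane section of $\Gamma$. Hence tie-breaking is irrelevant up to a null set, and market clearing reads
\[
q_i D_i(q)=s_i\quad\text{for all } i .
\]
Since $D_i$ is homogeneous of degree zero in $q$, I will define the map $\Phi_i(q)=q_iD_i(q)$ and look for a solution of $\Phi(q)=s$.

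\textbf{Existence.} Reduce to prices on a normalized set. Let $r_i=1/q_i$ be the ``price'' of good $i$ for a unit budget; agents choose $\argmax_i \theta_i/r_i$. Normalize $r\in\Delta$ (the simplex) and solve for a scale. The cleanest route is a Gale--Nikaido / excess-demand argument. Define excess demand
\[
z_i(r)=D_i(r)/r_i - s_i/c(r),
\]
where $c(r)$ is chosen so that $\sum_i r_i z_i=0$, i.e.\ $c=\sum_i s_ir_i$ (a Walras law, since $\sum_i D_i=1$). This $z$ is continuous on the interior of $\Delta$. Continuity of $D$ follows from absolute continuity of $G$, because boundaries of choice regions move continuously and carry no mass.

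As some $r_i\to 0$, good $i$ becomes infinitely cheap. A positive-measure set of types with $\theta_i>0$ (guaranteed by the assumption that $F$ gives positive mass to $v_i>0$, together with $g>0$ a.e.) then chooses $i$, so $D_i$ stays bounded away from zero. Hence $z_i\to\infty$, which is the standard boundary condition. Debreu's theorem on excess demand then gives an interior zero. At that zero, $q_i=c/r_i$ satisfies $q_iD_i=s_i$, so the menu clears.

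\textbf{Uniqueness.} Suppose $q$ and $q'$ both clear, and normalize their ratios by setting $t_i=q'_i/q_i$. Let $I=\argmax_i t_i$, and suppose the $t_i$ are not all equal, so $I\neq$ all goods. Compare the regions choosing goods in $I$. Under $q'$, options in $I$ improve weakly most relative to $q$, so the set choosing some $i\in I$ weakly expands: $\sum_{i\in I}D_i(q')\ge\sum_{i\in I}D_i(q)$. Each option in $I$ is also scaled up by the common factor $t^*>1$ relative to the others, after normalizing $\min_i t_i=1$, with $t^*=\max_i t_i$. Then
\[
\sum_{i\in I}q'_iD_i(q')=t^*\sum_{i\in I}q_iD_i(q')
\]
holds, but supply equality says both sides sum to $\sum_{i\in I}s_i$.

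The main obstacle is making this comparison strict, i.e.\ showing the $I$-region strictly expands or, more carefully, that within $I$ the per-good demands $D_i(q')\ge D_i(q)$ for each $i\in I$. The per-good inequality holds because, for $i\in I$, the ratio $q'_i/q'_j\ge q_i/q_j$ for every $j$, so anyone choosing $i$ under $q$ still does so under $q'$. Then $s_i=t^*q_iD_i(q')\ge t^*q_iD_i(q)=t^*s_i>s_i$, a contradiction. So all $t_i$ are equal. But a common scaling $t\neq1$ leaves $D$ unchanged and breaks market clearing, so $t=1$ and $q'=q$.

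The delicate point is that the per-good inequality needs no strict region expansion, only weak monotonicity plus $s_i>0$; I expect this to go through cleanly. The remaining technical care is in the continuity and boundary behavior used for existence.
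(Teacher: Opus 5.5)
\textbf{Uniqueness: the normalization is not allowed.} As written, your uniqueness step does not go through. The problem is the normalization ``$\min_i t_i=1$''. Rescaling $q'$ by $1/\min_k t_k$ leaves $D(q')$ unchanged, but it turns the market-clearing identity $q'_iD_i(q')=s_i$ into $q'_iD_i(q')=s_i/\min_k t_k$. So you cannot use both $t^*>1$ and $s_i=t^*q_iD_i(q')$ in the same chain.

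Without rescaling, your (correct) observation that $D_i(q')\ge D_i(q)$ for $i\in\arg\max_k t_k$ yields only $\max_k t_k\le 1$. That is not a contradiction: the goods in $\arg\max_k t_k$ alone cannot rule out, say, all $t_k<1$ with $t$ non-constant.

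What is missing is the mirror argument on $\arg\min_k t_k$. For such $i$, $q'_i/q'_j\le q_i/q_j$ for every $j$, so any type strictly choosing $i$ under $q'$ also strictly chooses $i$ under $q$. Hence $D_i(q')\le D_i(q)$, and
$s_i=t_iq_iD_i(q')\le t_iq_iD_i(q)=t_is_i$, i.e.\ $\min_k t_k\ge 1$. The two bounds force $t\equiv 1$ at once, which also makes your final common-scaling step unnecessary.

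\textbf{Existence: a smaller slip in the boundary condition.} When several prices vanish at different rates, a particular $D_i$ need not stay bounded away from zero. What does hold is the following. Take any sequence $r^n$ in the open simplex converging to a boundary point $r$. Since $G(\{\theta_k>0\})=1$ for every $k$, you get $\sum_{k:\,r_k=0}D_k(r^n)\to 1$, and hence $\max_k z_k(r^n)\to\infty$. Together with continuity, Walras' law, and the lower bound $z_i\ge -s_i/\min_j s_j$, this is exactly what the Debreu-type theorem needs.

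\textbf{Comparison with the paper.} Once repaired, your proof is correct but follows a different route. The paper writes $y_i=\log(1/q_i)$ and minimizes the potential $\Psi(y)=\int_\Gamma\max_j\{\log\theta_j-y_j\}\,dG+\sum_j s_je^{y_j}$. Its partial derivatives are $s_ie^{y_i}-m_i(y)$, so market-clearing vectors are exactly its critical points. Strict convexity (from the exponential term) plus coercivity then give existence and uniqueness simultaneously, with no fixed-point theorem.

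Your approach is the classical general-equilibrium one: excess demand and a Debreu/Gale--Nikaido argument for existence, and a gross-substitutes comparison for uniqueness. It makes the monotonicity behind uniqueness economically transparent. However, it needs heavier machinery for existence and separate arguments for the two halves. Both proofs rest on the same input: indifference sets are $G$-null.
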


I now describe the allocation induced by this mechanism, that is, I characterize which types choose which pure option. To do so, I first introduce a
partial order comparing how strongly different types favor a particular good
relative to the other goods.

\begin{definition}\label{def:partialorder}
Take $\theta, \theta' \in \Gamma$ with $\theta_i,\theta_i'>0$. We say $\theta$ is closer to vertex $e_i$ than $\theta'$, denoted by $\theta \succ_i \theta'$, if for all $k\neq i$:
\[
\frac{\theta_k}{\theta_i} \leq \frac{\theta_k'}{\theta_i'}.
\]
\end{definition}

Intuitively, $\theta \succ_i \theta'$ means that $\theta$ values good $i$ relatively more than does $\theta'$, compared to every other good.

\begin{figure}[h!]
        \centering
  \begin{tikzpicture}[scale=2.5, line join=round, line cap=round]
    \coordinate (ei) at (90:1);
    \coordinate (ej) at (210:1);
    \coordinate (ek) at (330:1);

    \path[use as bounding box] (-1.25,-0.95) rectangle (1.25,1.25);

    \coordinate (th)  at (0,0.35);
    \coordinate (thp) at (0.20,-0.05);

    \path[name path=sideIEJ] (ei)--(ej);
    \path[name path=sideIEK] (ei)--(ek);

    \path[name path=rayJ] (ej)--($(ej)!3!(thp)$);
    \path[name path=rayK] (ek)--($(ek)!3!(thp)$);

    \path[name intersections={of=rayJ and sideIEK, by={Pj}}];
    \path[name intersections={of=rayK and sideIEJ, by={Pk}}];
    \fill[niceblue, opacity=0.35] (ei) -- (Pk) -- (thp) -- (Pj) -- cycle;

    \node[above]       at (ei) {$e_i$};
    \node[below left]  at (ej) {$e_j$};
    \node[below right] at (ek) {$e_k$};
    \draw[darkred, thick] (ej) -- (Pj);
    \draw[darkred, thick] (ek) -- (Pk);
    \fill (th)  circle (0.02) node[below] {$\theta$};
    \fill (thp) circle (0.02) node[below] {$\theta'$};
    \draw[thick] (ei)--(ej)--(ek)--cycle;
  \end{tikzpicture}
            \vspace*{-4.2mm} 
            \caption{Types in the shaded area are closer to $e_i$ than $\theta'$, i.e., $\theta \succ_i \theta'$.} \label{fig:preciorder}
    \end{figure}

\begin{fact}
Let $q=(q_1,\dots,q_N)$ be the unique vector of pure options in the pure option mechanism, and let $\theta^0\in\Gamma^\circ$ denote the type who is indifferent among all of them:
\[
\theta^0
\;:=\;
\left(
\frac{1/q_1}{\sum_{k=1}^N 1/q_k},\ 
\frac{1/q_2}{\sum_{k=1}^N 1/q_k},\ 
\dots,
\frac{1/q_N}{\sum_{k=1}^N 1/q_k}
\right).
\]
Define
\[
\Gamma_i:=\{\theta:\ \theta\succ_i\theta^0\}.
\]
Then every type $\theta\in\Gamma_i^\circ$ receives $x(\theta)=q_i e_i.$
\end{fact}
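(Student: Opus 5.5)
Since every type in $\Gamma_i^\circ$ chooses option $i$ strictly, the plan is to show that for $\theta\in\Gamma_i^\circ$, $i$ is the \emph{unique} maximizer of $j\mapsto \theta_j q_j$. The definition of the pure option mechanism then forces $x(\theta)=q_ie_i$, with no tie-breaking.

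First, I would rewrite the condition $\theta\succ_i\theta^0$ in terms of $q$. By construction $\theta^0_k\propto 1/q_k$, so $\theta^0_k/\theta^0_i = q_i/q_k$ for every $k$. The relation $\theta\succ_i\theta^0$ (which requires $\theta_i>0$, and $\theta^0_i>0$ holds automatically) is therefore equivalent to
\[
\frac{\theta_k}{\theta_i}\le \frac{q_i}{q_k}\quad\text{for all }k\neq i,
\qquad\text{i.e.}\qquad \theta_k q_k\le \theta_i q_i .
\]
So $\Gamma_i=\{\theta\in\Gamma:\theta_i>0,\ \theta_iq_i\ge\theta_kq_k\ \forall k\ne i\}$. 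This is exactly the set of types for whom option $i$ is weakly optimal.

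Second, I would identify the interior. The set $\Gamma_i$ is a closed polytope in the simplex, cut out by the finitely many linear inequalities $\theta_iq_i-\theta_kq_k\ge 0$, together with the simplex constraints. Its relative interior in $\Gamma$ is where the defining inequalities $\theta_iq_i>\theta_kq_k$ all hold strictly. The only care needed is the boundary of $\Gamma$ itself. Points with some $\theta_k=0$, $k\neq i$, can still lie in $\Gamma_i^\circ$ when the interior is taken relative to $\Gamma$, but the strict inequality $\theta_iq_i>0=\theta_kq_k$ holds there anyway. The argument goes through under either reading of the interior.

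Third, at any point where all inequalities are strict, $\argmax_j\theta_jq_j=\{i\}$. The definition of the pure option mechanism then gives $x(\theta)=q_ie_i$. This uses Fact~\ref{fact:puremechs} only to make $q$ well defined.

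I expect no real obstacle. The one step needing care is the claim that interior points satisfy all inequalities strictly. Suppose, to the contrary, that $\theta\in\Gamma_i^\circ$ has $\theta_iq_i=\theta_kq_k$ for some $k\ne i$. Moving $\theta$ slightly in the feasible direction $e_k-e_i$ within $\Gamma$ makes $\theta_kq_k>\theta_iq_i$, which exits $\Gamma_i$. This contradicts interiority.
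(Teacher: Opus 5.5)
Your proof is correct and follows the same route as the paper: rewrite $\theta\succ_i\theta^0$ as $\theta_kq_k\le\theta_iq_i$ using $\theta^0_k/\theta^0_i=q_i/q_k$, then use strictness on $\Gamma_i^\circ$ to make option $i$ the unique maximizer. Your perturbation along $e_k-e_i$ adds a justification of the strict inequality on the interior, which the paper's proof simply asserts.
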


\begin{proof}
Let $q=(q_1,\dots,q_N)$ be the unique vector of pure options. By definition of $\theta^0$, we have $\theta_i^0 q_i=\theta_k^0 q_k$ for all $i,k$. Then ${\theta_k^0}/{\theta_i^0}={q_i}/{q_k}$ for all $i,k$. Now, fix $\theta\in\Gamma_i^\circ$. Since $\theta\succ_i\theta^0$, for every $k\neq i$, $\frac{\theta_k}{\theta_i}<\frac{\theta_k^0}{\theta_i^0}=\frac{q_i}{q_k}.$ Multiplying through by $\theta_i q_k$ gives $\theta_k q_k<\theta_i q_i$ for all $k\neq i$, so option $i$ uniquely maximizes $\theta_j q_j$ among all pure options. Thus $x(\theta)=q_i e_i$ in the pure option mechanism.
\end{proof}

That is, the mechanism's allocation can be described as follows. Let $q$ be the unique market-clearing vector of pure-option quantities. These quantities determine a unique type $\theta^0$ that is indifferent among all options. Up to tie-breaking on a null set, option~$i$ is then chosen exactly by the types who like good $i$ more than $\theta^0$, in the sense that they lie in the direction of vertex~$e_i$ from $\theta^0$ according to the partial order defined above. Denote this region by $\Gamma_i$. Since every type lies in the direction of some vertex from $\theta^0$, the regions $\{\Gamma_i\}_{i=1}^N$ then form a partition of $\Gamma$, up to the null set of types who are indifferent between two or more pure options (Figure~\ref{fg:CEEIsplit}).

\begin{figure}[h!]
              \vspace*{-4.2mm}
  \centering
\begin{tikzpicture}[scale=2.5, line join=round, line cap=round]
  \coordinate (ei) at (90:1);
  \coordinate (ej) at (210:1);
  \coordinate (ek) at (330:1);

  \path[use as bounding box] (-1.25,-0.95) rectangle (1.25,1.25);

  \coordinate (th)  at (0,0.35);
  \coordinate (thp) at (0.20,-0.05);

  \path[name path=sideIEJ] (ei)--(ej);
  \path[name path=sideIEK] (ei)--(ek);
  \path[name path=sideJEK] (ej)--(ek);

  \path[name path=rayJ] (ej)--($(ej)!3!(thp)$);
  \path[name path=rayK] (ek)--($(ek)!3!(thp)$);
  \path[name path=rayI] (ei)--($(ei)!3!(thp)$);

  \path[name intersections={of=rayJ and sideIEK, by={Pj}}];
  \path[name intersections={of=rayK and sideIEJ, by={Pk}}];
  \path[name intersections={of=rayI and sideJEK, by={Pi}}];

  \fill[darkgreen, opacity=0.35] (ei) -- (Pk) -- (thp) -- (Pj) -- cycle;
  \fill[mustard,   opacity=0.25] (ej) -- (Pi) -- (thp) -- (Pk) -- cycle;
  \fill[plum, opacity=0.20] (ek) -- (Pj) -- (thp) -- (Pi) -- cycle;

  \node[above]       at (ei) {$e_1$};
  \node[below left]  at (ej) {$e_2$};
  \node[below right] at (ek) {$e_3$};

  \draw[grey, thick] (ej) -- (Pj);
  \draw[grey, thick] (ek) -- (Pk);
  \draw[grey, thick] (ei) -- (Pi);

  \fill (thp) circle (0.03) node[below,left] {};
    \fill (0.24,-0.18) node[below,left] {$\theta^0$};

    \fill (-0.65,-0) node[below,left, font=\large, mustard] {$\Gamma_2$};
    \fill (0.6,0.6) node[below,left, font=\large, darkgreen] {$\Gamma_1$};
    \fill (1,-0.05) node[below,left, font=\large, plum] {$\Gamma_3$};

  \draw[thick] (ei)--(ej)--(ek)--cycle;
\end{tikzpicture}
            \vspace*{-4.2mm}
            \captionsetup{width=0.6\linewidth}
            \caption{Each region $\Gamma_i$ contains types who receive $q_i$ of good $i$ under the pure option mechanism.}
            \label{fg:CEEIsplit}
\end{figure}

Throughout the paper, the indifferent type \(\theta^0\) will serve as a
benchmark for the strength of agents' relative preferences. Thus, an agent's
``pickiness'' should generally be understood in terms of the strength of her
preferences among the market-clearing pure options \(q_i e_i\). When supplies
and value distributions are symmetric, all market-clearing pure options offer
the same quantity, so pickiness is captured directly by how unequal an agent's
per-unit values are across goods. In asymmetric environments, however, this
comparison must account for differences in supply and demand: if one good is
substantially scarcer or more desirable, its market-clearing option will be smaller, so an agent must place a correspondingly higher per-unit value on
that good to be indifferent between all pure options.

Although the pure option mechanism is most easily described through the menu introduced above, it also admits several alternative implementations. These connect my analysis to mechanisms studied in the literature and illustrate how it could be implemented in practice.

\begin{definition}\label{def:pure_implementations}
We define three implementations.
\begin{enumerate}[label=(\roman*)]
\item  A \textbf{competitive equilibrium with equal incomes} (CEEI) is a
vector of prices \(p\in\mathbb R_+^N\) and an allocation rule
\(x:\Gamma\to\mathbb R_+^N\) such that the supply constraints
\eqref{eq:supply'} bind for all goods and all types choose utility-maximizing
allocations subject to their unit budget constraint:
\[
  \text{for all }\theta \in \Gamma, \quad
  x(\theta)\in \argmax_{z \in \mathbb{R}^N_+}
  \left\{ \theta\cdot z : \ \  z\cdot p \leq 1  \right\}.
\]
Intuitively, a CEEI gives every agent one unit of artificial currency, posts
per-unit market-clearing prices \(p\), and lets everyone buy their favorite
bundle \(z\).

\item In the \textbf{representative endowment economy}, every agent is endowed
with an equal share \(s\) of the total supply. \textbf{A Walrasian equilibrium} of this economy is a pair
\((p,x)\), where \(p\in\mathbb R_+^N\) is a price vector and
\(x:\Gamma\to\mathbb R_+^N\) is an allocation rule, such that markets clear and
every type chooses a utility-maximizing bundle whose cost does not exceed the
market value of her endowment:
\[
x(\theta)\in
\argmax_{z\in\mathbb R_+^N}
\left\{
\theta\cdot z:\ p\cdot z\le p\cdot s
\right\}.
\]

\item A \textbf{choice-based lottery} is a game where each type \(\theta\in\Gamma\) chooses a
good. The supply of each good is then allocated uniformly at random among the
agents who selected it. Thus, each agent who chooses \(i\)
receives
\[
x(\theta)\;=\;\frac{s_i}{m_i}\,e_i,
\]
with \(s_i/m_i=+\infty\) if \(m_i=0\), where \(m_i\) is the mass of agents choosing good \(i\).
\end{enumerate}
\end{definition}

\begin{fact}\label{fact:pure_implementations}
The pure option mechanism can be implemented as a CEEI, as a Walrasian
equilibrium of the representative endowment economy, and, under the unit-demand
interpretation, as a pure-strategy Nash equilibrium of the choice-based lottery.
Conversely, each of these implementations induces the pure option mechanism, up to tie-breaking for a null set of types.
\end{fact}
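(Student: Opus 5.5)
The plan is to take the pure option mechanism with quantities $q$ (which exists and is unique up to null tie-breaking by Fact~\ref{fact:puremechs}) and exhibit each implementation explicitly. Then, conversely, each implementation's equilibrium conditions will force market-clearing pure options, and uniqueness will close the argument.

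\emph{CEEI.} Set $p_i := 1/q_i$. With a unit budget, a type $\theta$ solves a linear program over the simplex $\{z\ge 0: p\cdot z\le 1\}$, whose vertices are $e_i/p_i = q_i e_i$. So the optimum is attained at $q_ie_i$ for $i\in\argmax_j \theta_j q_j$, and it is unique whenever the argmax is a singleton, which holds off a $G$-null set since $G$ has a density. Thus the pure option allocation is utility-maximizing for every type. Supply binds by the definition of the pure option mechanism, so $(p,x)$ is a CEEI.

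Conversely, in any CEEI we first need $p\gg 0$. If $p_i=0$ for some $i$, then every type with $\theta_i>0$ would demand unboundedly much of good $i$. That set has positive $G$-mass under Assumption~\ref{ass:regong}, contradicting market clearing. With $p\gg0$, a.e.\ type picks the vertex $e_i/p_i$. So $x$ is a pure option allocation with $q_i=1/p_i$ that clears all markets, and by Fact~\ref{fact:puremechs} it coincides with the pure option mechanism up to null tie-breaking.

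\emph{Representative endowment economy.} Walrasian equilibria are invariant to rescaling prices. Given CEEI prices $p$, rescale to $\tilde p := p/(p\cdot s)$, so that $\tilde p\cdot s=1$. The budget sets then coincide with the CEEI budget sets, and the same allocation is an equilibrium. Conversely, in a Walrasian equilibrium we have $p\ne0$. The argument above (positive mass of types valuing each good) gives $p\gg 0$, hence $p\cdot s>0$, and normalizing $p\cdot s=1$ returns a CEEI. The converse then follows from the CEEI case.

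\emph{Choice-based lottery.} Given the pure option mechanism, let each type choose $i\in\argmax_j\theta_jq_j$. The induced mass choosing $i$ is $m_i = s_i/q_i$, because supply binds. So the realized quantity $s_i/m_i$ equals $q_i$. In a continuum, a single agent's deviation does not change the masses $m_j$, so a deviation yields $\theta_jq_j\le\theta_iq_i$ and this is a Nash equilibrium.

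Conversely, take any pure-strategy equilibrium. If some $m_i=0$, then $s_i/m_i=\infty$, and every type with $\theta_i>0$ strictly prefers $i$. That is a positive mass, a contradiction. So all $m_i>0$, and setting $q_i:=s_i/m_i$ gives a finite positive vector. Each type best-responds by choosing $\argmax_j \theta_j q_j$, and $\int x\,dG=\sum_i m_i q_i e_i = s$. This is exactly a market-clearing pure option menu, so uniqueness applies. Under the unit-demand interpretation we also need $q_i\le 1$ for these to be probabilities; I would either take this as part of that interpretation or appeal to Proposition~\ref{prop:P_redundant}.

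The main obstacle is the converse direction, specifically ruling out boundary cases: zero prices, or options chosen by nobody. These are handled by the positive-mass argument using the a.e.\ positive density $g$. The same property of $g$ makes ties null, so that ``up to tie-breaking'' is justified.
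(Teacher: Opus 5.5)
Your proposal is correct and follows the paper's proof essentially step for step. It sets $p_i=1/q_i$ for the CEEI, uses the positive-mass argument to rule out zero prices and empty lotteries, and invokes the uniqueness in Fact~\ref{fact:puremechs} for each converse.

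One small fix is needed in the forward endowment-economy step. Rescaling prices leaves Walrasian budget sets unchanged, so what you actually need is $p\cdot s=1$ for the CEEI prices. This holds because market clearing gives $p\cdot s=\sum_i s_i/q_i=\sum_i G(\Gamma_i)=1$, exactly as the paper notes, so your $\tilde p$ is just $p$.
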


Intuitively, the CEEI and representative-endowment implementations give all agents identical budgets to spend on goods: in the former, this budget is assigned directly, while in the latter it is the market value of their endowment \(s\). Each agent then spends her whole budget on the good with the highest ``bang per buck,'' \(\theta_i/p_i\), and so both implementations induce pure allocations. Since these are also incentive compatible and market clearing, Fact~\ref{fact:puremechs} implies that they coincide with the pure option mechanism. The choice-based lottery is similar: any pure equilibrium induces a pure, incentive-compatible, market-clearing allocation and therefore implements the pure option mechanism by the same fact.

\section{Optimality of the pure option mechanism}\label{sec:CEEI}

I now characterize when the pure option mechanism is welfare-maximizing; I focus on presenting the result and its high-level intuitions; the technical details underlying the proof are discussed in Section~\ref{sec:proofdetails}. To state the result, I first introduce several objects.

\paragraph{Supply costs.} These values capture the marginal welfare gain of relaxing each supply constraint at the pure option allocation. To construct them, define
\[
M_i:=\int_{\Gamma_i}  g(\theta)\, d\theta, \quad A_i:=\int_{\Gamma_i} \theta_i  g(\theta)\lambda(\theta)\, d\theta,
\]
so that $M_i$ is the mass of agents choosing option $i$ and $A_i$ is the designer's total value of giving each of them a unit of good $i$. For $i\neq j$, define
\[
T_{ij}:=\int_{\Gamma_i\cap\Gamma_j}  g(\theta)\,\theta_i\, d\sigma(\theta) \Big/ \sqrt{q_i^2+q_j^2-\tfrac1N(q_i-q_j)^2},
\]
where $d\sigma$ denotes $(N-2)$-dimensional Hausdorff measure on $\Gamma_i\cap\Gamma_j$. Note that $M_i,A_i>0$ and $T_{ij}\geq0$ for every $i\neq j$. We use these objects to construct the following matrix $J\in\mathbb{R}^{N\times N}$:
\[
J_{ii}=M_i+q_i\sum_{j\neq i}T_{ij},
\qquad
J_{ij}=-\,q_j\,T_{ij}\ \ (i\neq j).
\]

\begin{definition}\label{def:supply_costs}
The vector of \textbf{supply costs} $c = (c_1, c_2, \dots, c_N)$ is given by
\[
c \;=\; J^{-1}A, \quad \text{where} \quad A:=\begin{pmatrix}A_1 & \cdots & A_N\end{pmatrix}^{\top}.
\]
\end{definition}
\begin{fact}\label{fact:supply_cost_properties}
The supply costs $c=J^{-1}A$ are well-defined and strictly positive.
\end{fact}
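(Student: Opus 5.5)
The plan is to show that $J$ is a nonsingular M-matrix, that is, a Z-matrix whose inverse is entrywise nonnegative with strictly positive diagonal. Positivity of $c=J^{-1}A$ then follows at once from $A>0$. By construction $J$ has nonpositive off-diagonal entries, since $J_{ij}=-q_jT_{ij}\le 0$, and positive diagonal entries, since $M_i>0$ and $T_{ij}\ge0$. I take as given the positivity of $M_i$ and $A_i$ and the nonnegativity of $T_{ij}$ noted before the definition. One further preliminary is that each $T_{ij}$ is finite. I would get this from Assumption~\ref{ass:regong}: because $g$ and its weak derivatives are square-integrable, $g$ has an $L^2$ trace on the $(N-2)$-dimensional Lipschitz pieces $\Gamma_i\cap\Gamma_j$, and $\theta_i\le 1$ there. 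Hence $J$ is a well-defined real matrix.

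The key step is a \emph{symmetry identity} for the boundary terms. On the common face $\Gamma_i\cap\Gamma_j$, types are indifferent between options $i$ and $j$, so $\theta_iq_i=\theta_jq_j$. The normalizing denominator $\sqrt{q_i^2+q_j^2-\tfrac1N(q_i-q_j)^2}$ is symmetric in $(i,j)$, and the integration domain and $g$ are the same for both orderings. It follows that
\[
q_iT_{ij}\;=\;\frac{\int_{\Gamma_i\cap\Gamma_j} g(\theta)\,\theta_iq_i\,d\sigma(\theta)}{\sqrt{q_i^2+q_j^2-\tfrac1N(q_i-q_j)^2}}\;=\;q_jT_{ji}.
\]
I would use this to compute $q^\top J$. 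For each column $j$,
\[
\sum_i q_iJ_{ij}=q_jM_j+q_j^2\sum_{i\ne j}T_{ji}-\sum_{i\neq j}q_iq_jT_{ij}
= q_jM_j+q_j\sum_{i\neq j}\bigl(q_jT_{ji}-q_iT_{ij}\bigr)=q_jM_j>0 .
\]
So $J^\top$ is a Z-matrix and the strictly positive vector $q\in\mathbb R^N_{++}$ satisfies $J^\top q>0$ componentwise. By the standard characterization of M-matrices (existence of a positive vector mapped to a positive vector), $J^\top$, and hence $J$, is a nonsingular M-matrix. This also gives invertibility, so $c$ is well-defined.

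To conclude positivity, I would use that the inverse of a nonsingular M-matrix is entrywise nonnegative with strictly positive diagonal entries; the diagonal claim follows, for instance, from $(J^{-1})_{ii}J_{ii}\ge 1$, obtained by expanding $JJ^{-1}=I$ at entry $(i,i)$ and using $J_{ik}\le0$ together with $(J^{-1})_{ki}\ge0$. Then
\[
c_i=\sum_k (J^{-1})_{ik}A_k\ \ge\ (J^{-1})_{ii}A_i>0 .
\]
I expect the main obstacle to be the symmetry identity $q_iT_{ij}=q_jT_{ji}$, that is, spotting the right weighting vector ($q$ rather than $\mathbf 1$). Everything after it is standard M-matrix theory. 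A secondary technical point is the finiteness of $T_{ij}$ via the trace argument, which is where the Sobolev part of Assumption~\ref{ass:regong} is used.
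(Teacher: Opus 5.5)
Your proof is correct, but it reaches the M-matrix property by a different and slightly more roundabout route than the paper.

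\textbf{The paper's route.} The paper factors $J=HD$ with $D=\operatorname{diag}(q_1,\dots,q_N)$, $H_{ii}=M_i/q_i+\sum_{j\neq i}T_{ij}$ and $H_{ij}=-T_{ij}$. It notes that $H$ is a strictly row-diagonally dominant Z-matrix with row slack $M_i/q_i$, so $H^{-1}\ge 0$. Then $c=D^{-1}H^{-1}A\gg 0$, because a nonnegative invertible matrix has a positive entry in every row.

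\textbf{Comparison.} In your terms, the paper certifies the M-matrix property with the \emph{right} vector $(1/q_1,\dots,1/q_N)^\top$. For that vector, $\sum_j J_{ij}/q_j=M_i/q_i$ follows directly from the definitions of $J_{ii}$ and $J_{ij}$: the $T_{ij}$ terms cancel identically. You instead use the \emph{left} vector $q$. There the cancellation needs the extra interface identity $q_iT_{ij}=q_jT_{ji}$, which comes from $q_i\theta_i=q_j\theta_j$ on $\Gamma_i\cap\Gamma_j$. So the step you flag as the main obstacle can be avoided; only the sign pattern of $J$ and $M_i>0$ are needed.

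\textbf{What your route adds.} Since $q_jM_j=s_j$, you have in fact shown $q^\top J=s^\top$. This is Euler's relation for the degree-one homogeneity of supply use in $q$, since the regions depend only on ratios of the $q_i$. The paper's identity $J\,(1/q_1,\dots,1/q_N)^\top=(M_i/q_i)_i$ instead reflects conservation of the total mass of agents across regions.

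\textbf{Remaining steps.} Your final step via $(J^{-1})_{ii}J_{ii}\ge 1$ is an equally valid replacement for the paper's row-positivity argument. Your remark on the finiteness of $T_{ij}$ via the Sobolev trace makes explicit what the paper leaves to its standing convention on interface integrals.
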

I discuss the intuition for this construction in Section~\ref{sec:proofdetails}. Note, however, that $c$ is pinned down entirely by the primitives $(\lambda,g,s)$: directly through the distribution $g$ and the weight function $\lambda$, and indirectly through the market-clearing pure options $q$, which are themselves determined by $g$ and $s$.

\paragraph{Chosen-good component.} For each type~\(\theta\), define its chosen-good component \(\theta_*(\theta)\) as the coordinate corresponding to its preferred pure option:
\[
\theta_*(\theta):=\theta_i
\quad\text{for }\theta\in\Gamma_i^\circ.
\]
On indifference boundaries, use any measurable tie-breaking rule.

\paragraph{Rent measure.} Define the following signed measure on the type space $\Gamma$:
\begin{equation}\label{eq:signedmeas}
\mu(A) = \int_{A\cap \Gamma} \theta_*\ \Bigl[
(\lambda-\textstyle\sum_j c_j) \, g
+
\operatorname{div}\Bigl(\bigl( c-\theta \textstyle\sum_j c_j \bigr)\, g \Bigr)
\Bigr]\,d\theta
-\int_{A\cap \partial\Gamma}\theta_* \,
\bigl( c-\theta \textstyle\sum_j c_j \bigr)\, g \cdot\nu \,d\sigma,
\end{equation}

where $\nu$ is the outward unit conormal to $\partial\Gamma$ in the hyperplane containing $\Gamma$; the divergence is also taken within this hyperplane. 

Intuitively, $\mu$ captures the marginal welfare gain from increasing the size of a type's preferred option by a unit. It incorporates several effects: the direct utility to the recipient weighted by $\lambda$, the supply cost of the additional resources required, and the effects arising from tightening local incentive-compatibility constraints. Thus, $\mu$ places negative mass on types whose allocations the designer would want to \emph{decrease} after accounting for all these effects, and positive mass on those whose allocations she would want to \emph{increase}. In this sense, the rent measure plays a role analogous to virtual values in one-dimensional screening.

\paragraph{Monotone transport plan.} The optimality conditions will be stated in terms of the shape of this rent measure $\mu$. To formalize them, we use the following concept:

\begin{definition}\label{def:monotone-transport}
Let $\rho,\tau$ be measures on some $\Omega \subset \mathbb{R}^N$ with
$\rho(\Omega)=\tau(\Omega)$, and let $\succeq$ be a partial order on $\Omega$.
A $\succeq$-\textbf{monotone transport plan} from $\rho$ to $\tau$ is a finite
nonnegative Borel measure $\pi$ on $\Omega\times\Omega$, supported on
$\{(x,y):x\preceq y\}$, with first marginal $\rho$ and second marginal $\tau$;
that is,
\[
\pi(A\times \Omega)=\rho(A),\qquad
\pi(\Omega\times A)=\tau(A)
\quad \text{for all Borel }A\subseteq \Omega.
\]
\end{definition}

Intuitively, a transport plan specifies how one measure can be moved onto another, while allowing mass to move only upward according to the partial order \(\succeq\).

We can now state the main result of this section:

\begin{theorem}
\label{thm:rent-measure-characterization}
The pure option mechanism is optimal if and only if there exists a tuple
\((\tilde\mu_1,\dots,\tilde\mu_N)\) of finite signed Borel measures, with
\(\tilde\mu_i\) defined on \(\Gamma_i\), such that the following two
conditions hold.

\begin{enumerate}
\item \textbf{Preprocessing.}
For every continuous, convex \(U:\Gamma\to\mathbb R_+\),
\[
\int_{\Gamma}\frac{U(\theta)}{\theta_*(\theta)}\,d\mu(\theta)
\le
\sum_{i=1}^N
\int_{\Gamma_i}\frac{U(\theta)}{\theta_i}\,d\tilde\mu_i(\theta).
\]

\item \textbf{Regionwise transport.}
For each \(i\), let \(\tilde\mu_i^+\) and \(\tilde\mu_i^-\) denote the
positive and negative parts of \(\tilde\mu_i\). Then there exists a
\(\succ_i\)-monotone transport plan from \(\tilde\mu_i^-\) to
\(\tilde\mu_i^+\).
\end{enumerate}
\end{theorem}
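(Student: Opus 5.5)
The plan is a Lagrangian duality argument in the spirit of \cite{daskalakis}, applied to Problem~\ref{prob:renormalized}.

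\textbf{Reformulation in terms of $U$.} By Lemma~\ref{lem:renorm-wlog} it suffices to work with the renormalized problem. Extend $U(\theta)=\theta\cdot x(\theta)$ to be $1$-homogeneous on $\mathbb R^N_+$. Then \eqref{eq:ICprime} is equivalent to $U$ being convex with $x=\nabla U$ a.e. Nonnegativity of $x$ means $U$ is nondecreasing along every coordinate direction.

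\textbf{Lagrangian and the rent measure.} Attach multipliers $c$ to the supply constraints \eqref{eq:supply'}. The Lagrangian is
\[
\int\lambda U\,dG+c\cdot\Bigl(s-\int\nabla U\,dG\Bigr).
\]
The key computation is to integrate $\int c\cdot\nabla U\,g$ by parts on the simplex. Because $U$ is homogeneous, $\nabla U$ splits into a tangential gradient along $\Gamma$ plus the component $U\sum_j c_j$ (Euler's identity). This produces exactly the bulk density and boundary term in \eqref{eq:signedmeas}, up to the weight $\theta_*$. The result is
\[
L(U,c)=c\cdot s+\int\frac{U}{\theta_*}\,d\mu .
\]
The supply costs $c=J^{-1}A$ should be the unique multipliers for which the pure option allocation is stationary. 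I would verify this by perturbing $q$ in each direction $e_i$ while keeping the menu market-clearing. The perturbation moves the boundaries $\Gamma_i\cap\Gamma_j$, which generates the surface terms $T_{ij}$ (with the Jacobian normalization shown). Setting first-order welfare changes equal to $c$-weighted supply changes gives $Jc=A$, and Fact~\ref{fact:supply_cost_properties} supplies existence.

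\textbf{Sufficiency.} Assume the preprocessing and transport conditions hold. For any feasible convex $U$, preprocessing gives
\[
\int\frac{U}{\theta_*}\,d\mu\le\sum_i\int_{\Gamma_i}\frac{U}{\theta_i}\,d\tilde\mu_i .
\]
On $\Gamma_i$, the function $U(\theta)/\theta_i=U(\theta/\theta_i)$ is the utility of a type normalized so that its $i$-th coordinate is $1$. Moving $\theta$ toward $e_i$ in $\succ_i$ only lowers the other coordinates of $\theta/\theta_i$. Since $x\ge0$, $U(\theta)/\theta_i$ is therefore nonincreasing in $\succ_i$. The transport plan moves negative mass to $\succ_i$-higher positive mass, so each term $\int U/\theta_i\,d\tilde\mu_i\le0$.

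Hence every feasible $U$ satisfies $L(U,c)\le c\cdot s$. For the pure option $U^0$ we have $U^0/\theta_i\equiv q_i$ on $\Gamma_i$. Complementary slackness, with $\mu(\Gamma_i)$ and the $\tilde\mu_i$ having zero net mass, then gives equality. Weak duality closes the argument, using $c>0$ and that the supply constraints bind at $U^0$.

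\textbf{Necessity and main obstacle.} For necessity I would use a strong-duality and Hahn–Banach argument. Optimality of $U^0$ means $\int (U/\theta_*)\,d\mu\le 0$ over the cone of admissible perturbations. This cone is the set of convex, coordinatewise-monotone, $1$-homogeneous functions minus multiples of $U^0$. Separating in the dual of continuous functions, via a Strassen-type theorem for the partial orders $\succ_i$, yields measures $\tilde\mu_i$ that dominate $\mu$ on convex functions (preprocessing) and are $\succ_i$-monotonically transportable.

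The main obstacle is this step. It needs a closedness/compactness argument for the cone of feasible $U$, which I would get from the uniform bounds implied by Proposition~\ref{prop:P_redundant}. It also needs the regionwise decomposition: showing the dual measure can be chosen to split across the $\Gamma_i$ without cross-region transport, which exploits that $U^0/\theta_*$ is constant on each region. The integration by parts also relies on Assumption~\ref{ass:regong} so that the divergence term is a genuine finite measure.
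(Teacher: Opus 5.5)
Your sufficiency direction is correct and matches the paper's argument. The paper writes $\int_{\Gamma_i}(U/\theta_i)\,d\tilde\mu_i\le 0$ as $\int_{R_i}B_iU\,d\gamma_i\le 0$, with $\gamma_i=(\theta_i\theta_i')^{-1}\pi_i$ and $B_iU(\theta,\theta')=\theta_iU(\theta')-\theta_i'U(\theta)$, but the inequality is the same. The problems are in necessity. The smaller one first: ``optimality of $U^0$ means $\int (U/\theta_*)\,d\mu\le 0$'' presupposes that $U^0$ maximizes the Lagrangian at $c=J^{-1}A$. Your perturbation computation identifies the multiplier only \emph{if} one exists, and Fact~\ref{fact:supply_cost_properties} says only that $Jc=A$ is solvable. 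You need existence separately. One way: the value function $P(r)=\sup\{\int\lambda Ug\,d\theta:\ U\in\mathcal U,\ \int x_Ug\,d\theta\le r\}$ is finite, increasing and concave, since $U\mapsto x_U$ is linear. So it has a supergradient $c^*\ge 0$ at $s\gg 0$, and $U^0$ maximizes the Lagrangian at $c^*$. Perturbing along $U_{q+th}$ then forces $Jc^*=A$.

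The main gap is the strong-duality step, which you flag but propose to close with the wrong tool. $B_iU$ vanishes on the diagonal of $R_i$ for every $U$, so the dualized constraint \eqref{eq:pseudomono} has no Slater point. Separating $(0,1)$ from the cone $\{((-B_iU-w_i)_i,\int U/\theta_*\,d\mu-r):U\in K,\ w_i\ge 0,\ r\ge 0\}$ in $\prod_iC(R_i)\times\mathbb R$ therefore requires proving directly that $(0,1)$ is not in its closure. (Working in $\prod_iC(R_i)$ is also what makes the multipliers measures on ordered pairs, i.e.\ transports after reweighting.) Showing this needs a scale-free error bound: if $\bar U\in K$ satisfies $B_i\bar U\le\delta$ on every $R_i$, then $\|\bar U-U\|_\infty\le C\delta$ for some $U\in\mathcal U$.

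Proposition~\ref{prop:P_redundant} cannot supply this bound. It applies to allocations that are exactly incentive compatible and supply-feasible. Along the approximating sequence, supply has been priced out and \eqref{eq:pseudomono} holds only up to $\delta_n$, so the $U_n$ need not be bounded. For unbounded $U_n$, compactness gives proximity to $\mathcal U$ only of order $o(\|U_n\|_\infty)$, not the $O(\delta_n)$ needed to conclude $\limsup_n\int U_n/\theta_*\,d\mu\le 0$.

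The missing idea is a repair construction. The $1$-homogeneous extension $W$ of $\bar U$ satisfies $W(x)\le W(y)+N(N-1)\delta$ whenever $0\le x\le y$ and $\sum_l y_l\le 1$. Its monotone hull $W^\uparrow(y)=\sup\{W(x):0\le x\le y\}$ is sublinear and coordinatewise monotone, lies within $N(N-1)\delta$ of $\bar U$ on $\Gamma$, and satisfies \eqref{eq:pseudomono} exactly.

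Once the $\gamma_i$ exist, the regionwise split comes from complementary slackness, not just from $U^0/\theta_*$ being constant on each region. Each $\gamma_i$ concentrates on $\{B_iU^0=0\}$. There, off $\{\theta_i=\theta_i'=0\}$, both points lie in a common $\Gamma_j$ with $\theta_i/\theta_j=\theta_i'/\theta_j'$. Hence $\tfrac{\theta_i}{\theta_j}B_jU=B_iU$, and that mass can be reassigned to constraint $j$.
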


Thus, the pure option mechanism is optimal if and only if the rent measure
can be transformed into a collection of region-specific auxiliary measures
whose positive and negative parts are suitably positioned relative to each
other. The preprocessing step corresponds to ironing in one-dimensional
mechanism design (see \cite{rochet1998ironing} and \cite{daskalakis} for
related discussions). Once the rent measure has been preprocessed into
\((\tilde\mu_i)_{i=1}^N\), the second condition requires that, on each region
\(\Gamma_i\), the negative part of \(\tilde\mu_i\) can be transported onto
its positive part by moving mass only in the direction toward vertex \(e_i\),
in the sense of the partial order \(\succ_i\).

\begin{figure}[h!]
  \centering
\begin{tikzpicture}[scale=2.5, line join=round, line cap=round]

  \coordinate (ei) at (90:1);
  \coordinate (ej) at (210:1);
  \coordinate (ek) at (330:1);

  \path[use as bounding box] (-1.25,-0.95) rectangle (1.25,1.25);

  \coordinate (thp) at (0.20,-0.05); 

  \path[name path=sideIEJ] (ei)--(ej);
  \path[name path=sideIEK] (ei)--(ek);
  \path[name path=sideJEK] (ej)--(ek);

  \path[name path=rayJ] (ej)--($(ej)!3!(thp)$);
  \path[name path=rayI] (ei)--($(ei)!3!(thp)$);
  \path[name path=rayK] (ek)--($(ek)!3!(thp)$);

  \path[name intersections={of=rayJ and sideIEK, by={Pj}}];
  \path[name intersections={of=rayI and sideJEK, by={Pi}}];
  \path[name intersections={of=rayK and sideIEJ, by={Pk}}];

  \def\blobplot{
    plot[smooth cycle, tension=0.95] coordinates {
      (0.33, 0.08)
      (0.36,-0.10)
      (0.16,-0.30)
      (-0.10,-0.32)
      (-0.38,-0.22)
      (-0.66, 0.10)
      (-0.48, 0.36)
      (-0.14, 0.38)
      (0.18, 0.32)
    }
  }

  \begin{scope}

    \begin{scope}
      \clip (ei)--(Pk)--(thp)--(Pj)--cycle;

      \path[draw=none, fill=darkgreen, opacity=0.22, even odd rule]
        (-2,-2) rectangle (2,2)
        \blobplot;

      \path[draw=none, fill=darkgreen, opacity=0.50] \blobplot;
    \end{scope}

    \begin{scope}
      \clip (ej)--(Pi)--(thp)--(Pk)--cycle;

      \path[draw=none, fill=mustard, opacity=0.20, even odd rule]
        (-2,-2) rectangle (2,2)
        \blobplot;

      \path[draw=none, fill=mustard, opacity=0.48] \blobplot;
    \end{scope}

    \begin{scope}
      \clip (ek)--(Pj)--(thp)--(Pi)--cycle;

      \path[draw=none, fill=plum, opacity=0.18, even odd rule]
        (-2,-2) rectangle (2,2)
        \blobplot;

      \path[draw=none, fill=plum, opacity=0.46] \blobplot;
    \end{scope}

  \end{scope}

  \draw[grey, thick] (Pi) -- (thp);
  \draw[grey, thick] (Pj) -- (thp);
  \draw[grey, thick] (Pk) -- (thp);

  \node[above]       at (ei) {$e_1$};
  \node[below left]  at (ej) {$e_2$};
  \node[below right] at (ek) {$e_3$};

  \fill (thp) circle (0.03);

  \draw[thick] (ei)--(ej)--(ek)--cycle;
      \fill (0.24,-0.18) node[below,left] {$\theta^0$};

    \fill (-0.65,-0) node[below,left, font=\large, mustard] {$\tilde \mu_2$};
    \fill (0.6,0.6) node[below,left, font=\large, darkgreen] {$\tilde \mu_1$};
    \fill (1,-0.05) node[below,left, font=\large, plum] {$\tilde \mu_3$};

\end{tikzpicture}

            \vspace*{-4.2mm} 
            \captionsetup{width=0.7\linewidth}
            \caption{An example where the regionwise transport condition is satisfied. Within each region $\Gamma_i$, the negative part of $\tilde\mu_i$ (darker shading) can be transported onto the positive part (lighter shading) by moving mass toward vertex $e_i$.}
            \label{fg:measureexample}
\end{figure}

This happens when the types the designer would like to reward more are precisely the more selective ones---those with stronger relative preferences for their chosen pure option. This foreshadows the sufficient conditions in the next subsection: when higher total values $\lambda(\theta)$ are associated with more extreme relative preferences, the transport condition is easier to satisfy, and the pure option mechanism is more likely to be optimal.

\subsection{Sufficient conditions for optimality}\label{sec:verifiable}

We now derive a sufficient condition from Theorem~\ref{thm:rent-measure-characterization} for the case of equal supplies and exchangeable distributions. To state it, we first introduce a stochastic monotonicity notion.

\begin{definition}\label{def:stoch_mono_cond}
Let $X$ be an $\mathcal X$-valued random variable and $Y$ be a real-valued random variable. Let $\succeq$ be a partial order on $\mathcal X$.
Fix an event $E$ with $\mathbb P(E)>0$. For any $t$ with $\mathbb P(Y\ge t,\ E)>0$, let $\mathcal L(X\mid Y\ge t,\ E)$ denote the conditional law of $X$ given $\{Y\ge t\}\cap E$.

Then $X$ is $\succeq$-\textbf{stochastically decreasing in $Y$ conditional on $E$} if for all
such $t,t'$ for which $t>t'$,
\[
\mathcal L(X\mid Y\ge t',\ E)\ \quad \text{$\succeq$-stochastically dominates}\quad
\mathcal L(X\mid Y\ge t,\ E).
\]
\end{definition}
That is, conditional on \(E\), higher values of \(Y\) are associated with conditional distributions of \(X\) that are lower in the sense of first-order stochastic dominance with respect to \(\succeq\).

\begin{proposition}\label{prop:exchcondition}
Assume \(s_1=...=s_N\) and let \(g\) and \(\lambda\) be exchangeable. Then the pure option mechanism is optimal if, for every \(i\), the random vector
\[
\left(\frac{\Theta_1}{\Theta_i},\dots,\frac{\Theta_N}{\Theta_i}\right)
\]
is \(\ge\)-stochastically decreasing in \(\lambda(\Theta)\Theta_i\) conditional on $\Theta_i\ge \Theta_j$ for all $j\neq i$.
\end{proposition}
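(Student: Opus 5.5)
Under the symmetry assumptions, the plan is to verify the two conditions of Theorem~\ref{thm:rent-measure-characterization} directly. Symmetry first fixes the objects. With \(s_1=\dots=s_N\) and \(g,\lambda\) exchangeable, Fact~\ref{fact:puremechs} gives equal pure options \(q_1=\dots=q_N\). Hence \(\theta^0\) is the barycenter, and \(\Gamma_i=\{\theta:\theta_i\ge\theta_j\ \forall j\}\). The matrix \(J\) and the vector \(A\) are then invariant under permutations, so the supply costs satisfy \(c_1=\dots=c_N=:c_0\) and \(\sum_j c_j=Nc_0\).

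I then compute the rent measure regionwise. On \(\Gamma_i^\circ\), \(\theta_*=\theta_i\), and the vector field is \(c-\theta\sum_j c_j=c_0(\mathbf 1-N\theta)\). This field is a radial field pointing toward the barycenter, because \(\mathbf 1/N-\theta\) is tangent to \(\Gamma\). The key step is to rewrite \(\mu\) restricted to \(\Gamma_i\) as a one-dimensional virtual-value expression along rays emanating from \(e_i\), or equivalently in the coordinates \(r=(\theta_k/\theta_i)_{k\ne i}\).

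For the preprocessing condition, I would take \(\tilde\mu_i\) to be \(\mu|_{\Gamma_i}\) plus boundary corrections. Specifically, I would move the surface terms on the internal boundaries \(\Gamma_i\cap\Gamma_j\) produced by integrating the divergence by parts within \(\Gamma_i\). I would then check that the resulting inequality holds for convex \(U\): the internal boundary terms should cancel or have a sign by symmetry and continuity of \(U/\theta_*\) across the ties \(\theta_i=\theta_j\). If they do not cancel exactly, a more careful choice is needed. In that case I would use the convexity of \(U\) to push the leftover boundary mass, which is nonnegative when \(c_0>0\) by Fact~\ref{fact:supply_cost_properties}, into the regions in a dominated way.

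Finally, I expect the main obstacle to be regionwise transport. After dividing by \(\theta_i\) and integrating by parts, \(\tilde\mu_i\) should reduce to a measure of the form \(d\tilde\mu_i=\theta_i(\lambda-Nc_0)\,dG+\) (a divergence term that integrates \(\mathbf 1\)-weighted mass away from \(e_i\)). Monotone transport of \(\tilde\mu_i^-\) onto \(\tilde\mu_i^+\) toward \(e_i\) is, by Strassen's theorem, equivalent to \(\int h\,d\tilde\mu_i\ge0\) for every nonnegative \(\succ_i\)-increasing \(h\). This is an upper-set, or first-order dominance, condition: for \(\ge\)-decreasing upper sets \(B\) in the \(r\)-coordinates, we need \(\tilde\mu_i(B)\ge0\).

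I would obtain this condition from the hypothesis by a layer-cake argument in \(t\). Write \(\int_B\theta_i\lambda\,dG=\int_0^\infty \mathbb P(\lambda\Theta_i\ge t,\Theta\in B,E_i)\,dt\). By the stochastic decrease, the conditional probability of \(B\) given \(\lambda\Theta_i\ge t\) increases in \(t\). This increase shows that \(\theta_i\lambda\) puts relatively more weight on sets close to \(e_i\) than the baseline \(\theta_i\,dG\) does. The proportion of that weight is fixed by \(Nc_0\), since \(\tilde\mu_i(\Gamma_i)=0\) reflects market clearing \(c=J^{-1}A\). Matching the total-mass normalization with this comparison is the delicate step, and I would verify it first on the upper sets \(\{\theta_k/\theta_i\le a_k\}\).
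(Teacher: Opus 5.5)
Your skeleton matches the paper's. You take $\tilde\mu_i=\mu|_{\Gamma_i}$, reduce transport via Strassen's theorem to $\mu_i(C)\ge 0$ for closed $\succ_i$-upper sets $C\subseteq\Gamma_i$, and bring in the hypothesis through a layer-cake argument.

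The preprocessing hedging is unnecessary. $\mu$ puts no mass on the interfaces, so the restriction gives preprocessing with equality. Moreover, in the symmetric case the field $c_0(\mathbf 1-N\theta)g$ is tangent to every interface $\{\theta_i=\theta_k\}$, so no interface flux ever arises.

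The genuine gap is the step you flag as delicate. For every $\succ_i$-upper set $C\subseteq\Gamma_i$, your layer-cake argument yields
\[
\int_C\lambda\theta_i\,dG\ge N\bar A\,G(C),
\]
where $\bar A$ is the common value of the $A_i$. This is a comparison against $dG$, not $\theta_i\,dG$. The constant is $\mathbb E[\lambda(\Theta)\Theta_i\mid\Theta\in\Gamma_i]=N\bar A$. That constant equals $c_0$ because each row of $J$ sums to $M_i=1/N$ when the $q_i$ coincide, a computation you never make.

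The density of $\mu_i$, however, carries the term $-Nc_0\,\theta_i g$. The inequality $\int_C\lambda\theta_i\,dG\ge Nc_0\int_C\theta_i\,dG$ already fails for $C=\Gamma_i$, since $\theta_i>1/N$ on $\Gamma_i$ away from $\theta^0$. So the divergence term is not a loose correction that a total-mass normalization can absorb. It is exactly what converts the $\theta_i\,dG$ baseline into the $dG$ baseline, and your proposal gives no way to control it on an arbitrary upper set.

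The missing idea is to apply the divergence theorem on $C$ (assumed to have Lipschitz boundary) to the field $(\theta-\theta^0)g$ with test function $\theta_i$. Since $\nabla_H\theta_i\cdot(\theta-\theta^0)=\theta_i-\tfrac1N$, the surface term on $\partial\Gamma$ in the definition of $\mu$ cancels against the corresponding part of the boundary integral. One obtains
\[
\mu_i(C)=\int_C\lambda\,\theta_i\,g\,d\theta-N\bar A\int_C g\,d\theta-N^2\bar A\int_{\partial C\cap\Gamma^\circ}\theta_i\,g\,(\theta-\theta^0)\cdot\nu_C\,d\sigma,
\]
where $\nu_C$ is the outward conormal of $C$.

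Your layer-cake bound makes the first two terms jointly nonnegative. The flux term is nonnegative because closed $\succ_i$-upper sets in $\Gamma_i$ are stable under small dilations away from the barycenter. Indeed, $\theta+t(\theta-\theta^0)\succ_i\theta$ exactly because $\theta_k/\theta_i\le 1=\theta^0_k/\theta^0_i$ on $\Gamma_i$. Hence $\theta-\theta^0$ points weakly into $C$, and $(\theta-\theta^0)\cdot\nu_C\le 0$ on $\partial C\cap\Gamma^\circ$. This is where your observation that the field is radial toward the barycenter has to be used.

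Finally, the divergence theorem needs Lipschitz boundaries, so you must still extend to arbitrary closed upper sets. One way is to approximate from above by finite unions of ratio boxes $\{\theta_k\le a_k\theta_i\ \forall k\ne i\}$ and use continuity from above of the finite measure $\mu_i$. Checking single boxes alone would not suffice for Strassen's theorem when $N\ge 3$.
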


Intuitively, the condition says that the pure option mechanism is optimal when agents with higher values for their favorite good tend to be \emph{more selective}: conditional on good $i$ being their favorite, higher realizations of the weighted value $\lambda(\Theta)\Theta_i$ are associated with smaller ratios $(\Theta_j/\Theta_i)_{j\neq i}$ in the sense of stochastic dominance. This echoes the intuition from Example~\ref{example:ex2}. There, distorting the pure option menu by introducing mixtures was beneficial precisely because \emph{less} selective agents had higher cardinal values. Under the condition in Proposition~\ref{prop:exchcondition}, the opposite is true, and such distortions are counterproductive. Note that symmetry makes the notion of selectivity especially direct: all market-clearing pure options offer the same quantity, so relative preferences are captured simply by the ratios of values.

\begin{remark}
The stochastic monotonicity condition in Proposition~\ref{prop:exchcondition}
is related to those in \cite{haghpanah2021pure} and
\cite{yang2025costly}, who study a monopolist with access to what is essentially a wasteful screening instrument. Under a similar condition, this instrument is less costly to high-value agents, and thus can only serve to prevent misreporting from low- to high-value types. At the optimum, however, incentive constraints bind in the opposite direction: high-value types want to imitate low-value types, and thus the instrument is counterproductive and should not be used. While the proof technique used to show my result is different, the condition arises for a similar economic reason: bundling, as opposed to letting agents choose which good they want, acts as a wasteful screening instrument, and the designer is concerned about low-value types pretending to have high values. Under the right stochastic monotonicity condition, however, high-value agents tend to be pickier, so bundling is less attractive to them than to the agents who try to imitate them. Consequently, this instrument screens ``in the wrong direction''.
\end{remark}

Proposition~\ref{prop:exchcondition} also yields the following corollary, which provides a closed-form condition that can be verified directly from the marginal distribution of values:

\begin{corollary}\label{cor:iid_reverse_hazard_pure}
Assume \(s_1=...=s_N\), and suppose that the unnormalized values \(V_1,\dots,V_N\) are i.i.d.\ with common distribution
\(F_M\). Suppose $F_M$ is supported on \([0,\bar v]\) and admits a density
\(f_M\) that is positive on \((0,\bar v)\). If
\begin{equation}\label{eq:reverse_hazard_elasticity}
x\,\frac{f_M(x)}{F_M(x)}
\quad\text{is non-increasing on }(0,\bar v),
\end{equation}
then the pure option mechanism is optimal.
\end{corollary}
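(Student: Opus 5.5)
The plan is to verify the hypothesis of Proposition~\ref{prop:exchcondition}. The symmetry requirements are immediate. With \(s_1=\dots=s_N\) and \(V_1,\dots,V_N\) i.i.d., the distribution of \(V\) is exchangeable. Hence \(g\), the density of \(\Theta=V/\sum_j V_j\), is exchangeable, and so is \(\lambda(\theta)=\mathbb E[\sum_j V_j\mid\Theta=\theta]\). It remains to show that, conditional on \(\Theta_i\ge\Theta_j\) for all \(j\), the ratio vector \(R:=(\Theta_j/\Theta_i)_{j}=(V_j/V_i)_j\) is \(\ge\)-stochastically decreasing in \(W:=\lambda(\Theta)\Theta_i\). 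By symmetry it suffices to treat \(i=1\).

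The first step is to compute \(\lambda\) and \(W\) explicitly. Conditional on \(\{V_1=\max_j V_j\}\), write \(V_j=V_1R_j\) with \(R_j\in[0,1]\). Then \(\Theta_1=1/\sum_j R_j\) and \(\sum_j V_j=V_1\sum_j R_j\). Since \(\Theta\) is a function of \(R\), we get
\[
W=\lambda(\Theta)\Theta_1=\mathbb E[V_1\mid R].
\]
So the target statement becomes: \(R\) is stochastically decreasing in \(m(R):=\mathbb E[V_1\mid R]\) on the event \(V_1=\max_j V_j\).

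The joint density of \((V_1,R)\) on this event is \(f_M(v_1)\,v_1^{N-1}\prod_{j\ne1}f_M(v_1r_j)\), with \(r\in[0,1]^{N-1}\). The marginal density of \(R\) is
\[
h(r)=\int_0^{\bar v} f_M(v)\,v^{N-1}\prod_{j\neq 1} f_M(vr_j)\,dv .
\]

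The second step is to establish a monotone likelihood ratio property. The key lemma I would aim for is that the family of conditional densities of \(V_1\) given \(R=r\) is decreasing in MLR order as each \(r_j\) increases. In particular, \(m(r)\) is coordinatewise non-increasing in \(r\). The MLR claim reduces to log-supermodularity in \((v,-r_j)\) of \(f_M(vr_j)\), i.e. \(\frac{\partial}{\partial r_j}\log f_M(vr_j)=v f_M'(vr_j)/f_M(vr_j)\) non-increasing in \(v\). This says that \(x f_M'(x)/f_M(x)\) is non-increasing, which is a log-concavity-type condition on \(f_M(e^t)\). Condition~\eqref{eq:reverse_hazard_elasticity} concerns \(F_M\) rather than \(f_M\), so I would first try to work with distribution functions instead. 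The plan is to rewrite the event \(\{V_1\ge t\}\) and the conditioning in terms of \(\Pr(V_j\le V_1r_j\mid V_1)=F_M(V_1r_j)\). Then I use that \(\log F_M(e^t)\) is concave: its derivative is \(xf_M(x)/F_M(x)\), non-increasing by~\eqref{eq:reverse_hazard_elasticity}. This is exactly the needed log-supermodularity of \((v,r)\mapsto F_M(vr)\) in \((\log v,-\log r)\). This should give the stochastic ordering of \(V_1\) given upper-orthant events in \(R\).

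The third step, which I expect to be the main obstacle, converts this into the precise form of Definition~\ref{def:stoch_mono_cond}. There the conditioning is on \(\{W\ge t\}\), and dominance is in the multivariate \(\ge\) order on \(R\). The chain of reasoning would run as follows. Since \(m\) is non-increasing, the set \(\{m(R)\ge t\}\) is a lower set in \(R\). Moreover, \(R\) restricted to \(\{V_1=\max_j V_j\}\) should have an MTP\(_2\)/associated density \(h\), because it is a mixture over \(V_1\) of product densities that are MLR-ordered in the common parameter. Conditioning an associated vector on shrinking lower sets yields stochastically smaller laws. Showing association of \(h\) from the reverse-hazard condition alone (rather than a density condition) is the delicate part. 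If it fails directly, my fallback is to work with the characterization in Theorem~\ref{thm:rent-measure-characterization}. In the proof of Proposition~\ref{prop:exchcondition} only transports along rays toward \(e_1\) are presumably used, which would reduce the requirement to the one-dimensional ordering delivered in the second step.
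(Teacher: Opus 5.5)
Your symmetry reduction, the identity $W=\mathbb E[V_1\mid R]$, and the observation that $\log F_M(e^u)$ is concave are all correct. The route built on them has a genuine gap, however. You try to verify the hypothesis of Proposition~\ref{prop:exchcondition} literally, i.e.\ that $R$ is stochastically decreasing in $W=m(R)$, and neither ingredient of that route is available.

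\emph{Monotonicity of $m$.} You planned to get this from an MLR property of $V_1\mid R=r$, which needs $xf_M'(x)/f_M(x)$ to be non-increasing. Condition \eqref{eq:reverse_hazard_elasticity} does not give this: it allows downward jumps of $f_M$ (e.g.\ piecewise-power $F_M$), and these break MLR. Your $F_M$-based substitute only controls conditioning on orthant events, not on $\{R=r\}$.

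\emph{Conditioning on lower sets.} The claim that conditioning on shrinking lower sets yields stochastically smaller laws is false in general, even for independent coordinates. Take $X$ uniform on $[0,1]^2$, $L=\{x_1\le\tfrac12\}\cup\{x_2\le\tfrac12\}$ and $L'=\{x_1\le\tfrac12\}\subseteq L$. Then $\mathbb P(X_2\ge\tfrac12\mid L')=\tfrac12>\tfrac13=\mathbb P(X_2\ge\tfrac12\mid L)$.

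\emph{The fallback.} This misreads the proof of Proposition~\ref{prop:exchcondition}. That proof needs $\mu(C)\ge0$ for \emph{every} closed $\succ_1$-upper set $C\subseteq\Gamma_1$ (via Theorem~\ref{thm:stoch_order_i}), and for $N\ge3$ these are arbitrary lower sets of the ratio vector. The radial direction $\theta-\theta^0$ is used only to sign the boundary term, not to reduce the problem to rays.

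The missing idea is that you never need to condition on $W$. Let $E_1:=\{V_1\ge V_j\ \forall j\}$. The hypothesis of Proposition~\ref{prop:exchcondition} is used only to obtain the covariance inequality $\mathbb E[W\mathbf 1_C\mid E_1]\ge\mathbb P[C\mid E_1]\,\mathbb E[W\mid E_1]$ for such $C$. Since $C$ and $E_1$ are $\Theta$-measurable and $W=\mathbb E[V_1\mid\Theta]$, the tower property lets you replace $W$ by the unnormalized $V_1$.

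Given $V_1=k$ and $E_1$, the $V_j$ ($j\neq1$) are independent with law $F_M$ truncated to $[0,k]$. Hence $\mathbb P[R_j\le t\mid V_1=k,E_1]=F_M(tk)/F_M(k)$, which is non-decreasing in $k$ exactly because $\log F_M(e^u)$ is concave. This is the computation you already have. Conditional independence across $j$ upgrades it to stochastic monotonicity of the whole vector $R$ in $V_1$. So $k\mapsto\mathbb P[\Theta\in C\mid V_1=k,E_1]$ is non-decreasing.

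The one-variable covariance inequality then gives $\mathbb E[V_1\mathbf 1_C\mid E_1]\ge\mathbb P[C\mid E_1]\,\mathbb E[V_1\mid E_1]$. By exchangeability this is \eqref{eq:upper_mean_ineq_final}, and the rest of the proof of Proposition~\ref{prop:exchcondition} applies unchanged. This is the paper's argument.
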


The condition is satisfied for several standard distribution families:
\begin{example}\label{ex:iid_reverse_hazard_examples}
Assume \(s_1=...=s_N\), and suppose \(V_1,\dots,V_N\) are i.i.d. on
\([0,\bar v]\). Then the pure option mechanism is optimal if the common
distribution \(F_M\) is any of the following:
\begin{enumerate}
\item the uniform distribution, \(F_M(x)=\frac{x}{\bar v}\);
\item a power distribution, \(F_M(x)=\left(\frac{x}{\bar v}\right)^\alpha\) with \(\alpha> \frac32\);
\item a truncated exponential distribution,
\(F_M(x)=\frac{1-\exp(-\beta x)}{1-\exp(-\beta \bar v)}\) with \(\beta>0\).
\end{enumerate}
\end{example}

\subsection{Stability of the optimality of the pure option mechanism}\label{sub:whyoftenopt}

The preceding results show that the pure option mechanism is optimal for a
non-trivial set of primitives. In the symmetric i.i.d.\ case, for example, the
reverse-hazard-rate condition in
Corollary~\ref{cor:iid_reverse_hazard_pure} is relatively permissive. While it
is harder to state equally simple sufficient conditions in the general
asymmetric case, the next result shows that the optimality of the pure option mechanism is not a
knife-edge implication of symmetry. To make this statement precise, I first define a notion of stability.

\begin{definition}\label{def:stability}
Fix baseline primitives $(\lambda^*,g^*,s^*)$. A property of the
primitives is \textbf{stable at \((\lambda^*,g^*,s^*)\)} if there exists
\(\varepsilon>0\) such that the property holds for every admissible primitive
vector \((\lambda,g,s)\) satisfying
\[
\lambda=e^\ell\lambda^*,
\qquad
g=\frac{e^h g^*}{\int_\Gamma e^h g^*\,dm},
\qquad\text{and}\qquad
\|\ell\|_{C^1(\Gamma)}
+
\|h\|_{C^2(\Gamma)}
+
\|s-s^*\|
<\varepsilon.
\]
\end{definition}
The log-scale parametrization preserves the positivity of \(\lambda\) and \(g\), while the normalization ensures that \(g\) remains a
probability density.

I now show that if the optimality conditions underlying
Theorem~\ref{thm:rent-measure-characterization} hold with appropriately defined \emph{strict slack}, then the optimality of the pure option mechanism is stable. To formulate this notion of slack, we introduce the following definition:

\begin{definition}\label{def:upper-set-slack}
Fix primitives \((\lambda,g,s)\), and let \(\Gamma_i\) and \(\mu\) denote
the associated pure-option regions and rent measure. A set
\(C\subseteq\Gamma_i\) is an \(\succ_i\)-\textbf{upper set} if $\theta\in C$, $\theta\prec_i\theta'$, and $\theta'\in\Gamma_i$ imply that $\theta'\in C$.

The primitives \((\lambda,g,s)\) have \textbf{upper-set slack} if there
exists \(\eta>0\) such that, for every \(i\) and every closed
\(\succ_i\)-upper set \(C\subseteq\Gamma_i\),
\begin{equation}\label{eq:BUS}
\mu(C)
\ge
\eta
\min\left\{
\alpha(C),
\alpha\bigl(\Gamma_i\setminus C\bigr)
\right\},
\tag{BUS}
\end{equation}
where \(\alpha:=m+\sigma\), \(m\) is the
\((N-1)\)-dimensional Lebesgue measure on \(\Gamma\), and \(\sigma\) is the
\((N-2)\)-dimensional Hausdorff measure on \(\partial\Gamma\).
\end{definition}

This upper-set formulation arises from Strassen's theorem, which links the
existence of monotone transports to positivity on all upper sets. We then get
the following result:

\begin{proposition}\label{prop:neighbourhood-perturb}
Fix baseline primitives \((\lambda^*,g^*,s^*)\), where \(g^*\) satisfies
Assumption~\ref{ass:regong}, \(\lambda^*\in C^1(\Gamma)\),
\(g^*\in C^2(\Gamma)\), and \(s^*\gg0\). If
\((\lambda^*,g^*,s^*)\) has upper-set slack, then the optimality of the pure
option mechanism is stable at \((\lambda^*,g^*,s^*)\).
\end{proposition}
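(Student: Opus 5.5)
The plan is to verify the two conditions of Theorem~\ref{thm:rent-measure-characterization} for every perturbed primitive \((\lambda,g,s)\) in a small neighbourhood, taking the trivial preprocessing \(\tilde\mu_i:=\mu|_{\Gamma_i}\). With this choice the preprocessing inequality holds with equality, because \(U(\theta)/\theta_*(\theta)=U(\theta)/\theta_i\) on \(\Gamma_i^\circ\). It therefore suffices to produce, for each \(i\), a \(\succ_i\)-monotone transport from \(\mu_i^-\) to \(\mu_i^+\), where \(\mu_i:=\mu|_{\Gamma_i}\).

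By a Strassen-type theorem on the compact partially ordered set \((\Gamma_i,\succ_i)\), whose order is closed, such a transport exists if and only if \(\mu_i(\Gamma_i)=0\) and \(\mu(C)\ge 0\) for every closed \(\succ_i\)-upper set \(C\subseteq\Gamma_i\). The total-mass identity should hold automatically for every admissible primitive. The supply costs \(c=J^{-1}A\) are constructed (Definition~\ref{def:supply_costs}) exactly as the multipliers that make the first-order conditions for the \(N\) option sizes \(q_i\) hold. Concretely, I would check that \(\mu(\Gamma_i)\) equals the derivative of the Lagrangian in \(q_i\), and that the equations \(Jc=A\) set these derivatives to zero. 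This step is a divergence-theorem computation, and I take it as given that it is carried out in Section~\ref{sec:proofdetails}.

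Next, I would show that \(\mu\) depends continuously on the primitives in the following sense. Transport every perturbed region \(\Gamma_i\) onto the baseline region \(\Gamma_i^*\) by a piecewise-affine homeomorphism \(\Phi\) of \(\Gamma\). Take \(\Phi\) to fix the vertices and move \(\theta^0\) to \(\theta^{0*}\), defined piecewise on simplicial cells of the fan. Such a \(\Phi\) maps \(\succ_i\)-upper sets of \(\Gamma_i\) to \(\succ_i^*\)-upper sets of \(\Gamma_i^*\), because the partial order is defined by ratios and cones at \(e_i\) relative to \(\theta^0\). I would then show that
\[
\sup_{C}\;\bigl|\mu(C)-\mu^*(\Phi(C))\bigr|\le \omega(\varepsilon)\,\min\{\alpha^*(\Phi(C)),\alpha^*(\Gamma_i^*\setminus\Phi(C))\}
\]
with \(\omega(\varepsilon)\to0\). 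This rests on three continuity facts:
\begin{enumerate}[label=(\roman*)]
\item By the implicit function theorem applied to market clearing (uniqueness is Fact~\ref{fact:puremechs}), \(q\) and hence \(\theta^0\) depend \(C^1\) on \((h,s)\).
\item The quantities \(M_i,A_i,T_{ij}\) are continuous in \(C^0\) perturbations of \(\lambda\) and \(g\) and in \(\theta^0\), and \(J\) stays invertible, so \(c\) moves continuously.
\item The density of \(\mu\) involves \(\lambda g\), \(g\), and first derivatives of \(g\). It is therefore continuous under \(\|\ell\|_{C^1}+\|h\|_{C^2}\), since the \(C^2\) norm on \(h\) controls \(\nabla g\) uniformly. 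The boundary term is likewise continuous uniformly.
\end{enumerate}

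The main obstacle is obtaining the bound in the relative form \(\omega\cdot\min\{\alpha(C),\alpha(\Gamma_i\setminus C)\}\) rather than an absolute \(\omega\), since upper-set slack is only proportional to the smaller of the two pieces. The key device is the exact total-mass identity, which gives \(\mu(C)=-\mu(\Gamma_i\setminus C)\) for both the perturbed and the baseline measure. We may therefore estimate whichever of \(C\) or its complement has smaller \(\alpha\)-measure. On that piece, the difference of densities is uniformly \(O(\varepsilon)\) in sup norm with respect to \(\alpha\), including the Jacobian of \(\Phi\) and the boundary part. This gives the required bound with \(\omega(\varepsilon)=O(\varepsilon)\).

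Choosing \(\varepsilon\) so that \(\omega(\varepsilon)<\eta/2\), with \(\alpha\) comparable to \(\alpha^*\circ\Phi\) up to a factor near one, condition \eqref{eq:BUS} for the baseline gives \(\mu(C)\ge(\eta/2)\min\{\alpha(C),\alpha(\Gamma_i\setminus C)\}\ge0\) for every closed upper set \(C\). Strassen's theorem then yields the monotone transport, and Theorem~\ref{thm:rent-measure-characterization} gives optimality.
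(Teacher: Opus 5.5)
Your architecture matches the paper's. You use trivial preprocessing $\tilde\mu_i=\mu|_{\Gamma_i}$, Strassen's theorem on each region, and the balance identity $\mu(\Gamma_i)=0$ from Lemma~\ref{lemma:measure-objective}. You then pull the moving regions back to the baseline ones and control the interior and boundary densities uniformly in $L^\infty$. Finally, you use the balance trick to turn that control into the relative bound $\omega\min\{\alpha(C),\alpha(\Gamma_i\setminus C)\}$, which is Lemma~\ref{lem:pulled-back-density-control-US}.

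The step that fails, for $N\ge3$, is the identification map. The order $\succ_i$ does not depend on the primitives; only the region $\Gamma_i$ moves. For $\Phi(C)$ to be an upper set for every closed upper $C\subseteq\Gamma_i$, you need $\Phi^{-1}$ to be $\succ_i$-monotone, and a piecewise-affine $\Phi$ generally is not. In ratio coordinates $r=(\theta_k/\theta_i)_{k\neq i}$, the order is coordinatewise and $\Gamma_i$ is the box $\prod_{k\neq i}[0,q_i/q_k]$. An affine map of $H$, however, becomes linear-fractional in $r$.

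Take a corner cell $\operatorname{conv}(e_i,\{P_{ik}\}_{k\neq i})$, where $P_{ik}$ is the corner of $\Gamma_i$ on the edge $[e_i,e_k]$. The forced affine piece is
\[
r\mapsto \frac{\tilde L r}{1+\mathbf 1\cdot(I-\tilde L)r},\qquad \tilde L=\operatorname{diag}(\ell_k)_{k\neq i},
\]
with $\ell_k>1$ if and only if $q_i/q_k<q_i^*/q_k^*$. Its inverse has $\partial r_j/\partial r_k^*=-(r_j^*/\ell_j)(1-1/\ell_k)/(\cdot)^2<0$ at interior points whenever $\ell_k>1$. So two comparable baseline points differing only in $r_k^*$ pull back to incomparable points. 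Consequently $\Phi$ maps the principal upper set of one preimage to a set that is not upper, and you cannot invoke \eqref{eq:BUS} at $\Phi(C)$. Monotonicity on every region's corner cell would require $q_i/q_k\ge q_i^*/q_k^*$ for all $i\neq k$, which means $q\propto q^*$.

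The fan from $\theta^0$ fails too. For $N=3$, $q^*=(1,1,1)$, $q_i/q_j=1$ and $q_i/q_k=2$, the inverse of the affine piece on $\operatorname{conv}(e_i,P_{ij},\theta^0)$ satisfies $\partial r_j/\partial r_k^*=4(r_j^*-1)/(4-r_k^*)^2<0$. The justification you give ("the order is defined by ratios and cones") does not survive this computation.

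The repair is to use a projective map rather than a piecewise-affine one:
\[
\Phi(\theta)=\frac{D\theta}{\mathbf 1\cdot D\theta},\qquad D=\operatorname{diag}(q_l/q_l^*).
\]
This map has the properties you need:
\begin{itemize}
\item It preserves the faces of $\Gamma$, sends $\theta^0$ to $\theta^{0*}$, and maps each $\Gamma_i$ onto $\Gamma_i^*$.
\item It multiplies every ratio $\theta_k/\theta_i$ by the constant $\frac{q_i^*/q_k^*}{q_i/q_k}$. It is therefore a diagonal linear map in ratio coordinates, and hence a $\succ_i$-order isomorphism on every region at once.
\item Its interior and boundary Jacobians tend uniformly to $1$ as $q\to q^*$.
\end{itemize}
This is the inverse of the paper's $\Phi_i^p$ (Fact~\ref{fact:region-identification-maps}). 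With this map substituted, the rest of your argument goes through as in Lemmas~\ref{lem:BUS-open-moving}--\ref{lem:primitive-ball-to-US-ball}.
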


Note that, as the primitives are
perturbed, the market-clearing pure options $q_i$ and the regions \(\Gamma_i\)
generally change. The result then says that the corresponding pure option mechanism,
with its perturbed quantities and regions, is optimal.

It is also worth noting that the upper-set slack condition is not excessively demanding. For example, it holds in the uniform symmetric environment:
\begin{example}\label{ex:uniform-cube-BUS}
Assume \(s_1=...=s_N\), and suppose unnormalized values $V$ are distributed uniformly on $[0,1]^N$. Then the associated primitives $(\lambda,g,s)$ have upper-set slack.
\end{example}

\subsection{Why is the pure option mechanism often optimal?}

The preceding results show that the pure option mechanism is optimal for a
broad class of primitives and that this optimality is robust to perturbations. This raises the question: why is such a simple mechanism often exactly optimal? To see this, note first that
it is essentially the only mechanism that is both Pareto-efficient, given the
available supply, and incentive compatible.

\begin{proposition}\label{prop:PE_impl_implies_PO}
Suppose the allocation rule \(x\) is Pareto-efficient subject to the supply constraint \eqref{eq:supply'}, meaning that there is no allocation rule \(\tilde x\) satisfying \eqref{eq:supply'} such that
\[
\theta\cdot\tilde x(\theta)
\ge
\theta\cdot x(\theta)
\quad\text{for all }\theta,
\]
with strict inequality for a positive mass of types. If \(x\) also satisfies \eqref{eq:ICprime}, then it coincides almost everywhere with the pure option mechanism. That is, if \(q\) is the unique market-clearing vector of pure options, then
\[
x(\theta)=q_i e_i
\qquad
\text{for almost every }\theta\in\Gamma_i.
\]
\end{proposition}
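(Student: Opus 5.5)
The plan is to show that Pareto efficiency forces \(x\) to be \emph{pure} and to \emph{exhaust supply}, then combine this with \eqref{eq:ICprime} to obtain a pure, market-clearing menu, and finally invoke the uniqueness in Fact~\ref{fact:puremechs}. The tool for the first step is a first-welfare-theorem-style supporting-price argument. The set of allocation rules satisfying \eqref{eq:supply'} is convex. Pareto efficiency of \(x\), together with \(G\) having a positive density, lets me separate the aggregate "better-than" set from the supply set. This should give a price vector \(p\in\mathbb R^N_+\), \(p\neq0\), with the following property: almost every type's bundle maximizes \(\theta\cdot z\) over \(\{z\ge0:\ p\cdot z\le p\cdot x(\theta)\}\). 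Moreover, \(\int x\,dG=s\) in every coordinate with \(p_i>0\).

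From this I will show \(p\gg0\). If \(p_i=0\), any type with \(\theta_i>0\) could costlessly add good \(i\), contradicting individual optimality; such types have positive mass. Given \(p\gg0\), a type maximizing \(\theta\cdot z/p\cdot z\) spends only on goods in \(\argmax_j \theta_j/p_j\). For almost every \(\theta\) this argmax is a singleton, since ties occur on a \(G\)-null set of hyperplanes. Hence \(x(\theta)\) is a multiple of a single \(e_i\) a.e. Supply binds for every good because \(p\gg0\): otherwise leftover supply could be handed out to a positive mass of types for a strict Pareto improvement.

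Next I use \eqref{eq:ICprime}. Write \(x(\theta)=a(\theta)e_{i(\theta)}\). By IC, all types choosing good \(i\) receive the same quantity \(q_i:=\sup\{a(\theta): i(\theta)=i\}\), up to null sets. Otherwise the type with the smaller amount would mimic the type with the larger amount.

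I also need every good to be chosen by a positive mass of agents, which makes \(q\in\mathbb R^N_{++}\). This follows because supply \(s_i>0\) binds. Then IC requires each \(\theta\) to pick \(i\in\argmax_j\theta_jq_j\): a type preferring another offered option would deviate. Any good chosen by only a null set can be handled the same way, since it contributes zero to the supply integral, which contradicts binding supply.

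So \(x\) is a pure option mechanism with market-clearing vector \(q\), and Fact~\ref{fact:puremechs} gives \(x(\theta)=q_ie_i\) a.e. on \(\Gamma_i\).

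\textbf{Main obstacle.} The hardest step is the separation argument in an infinite-dimensional allocation space with measure-theoretic "a.e." qualifiers. I would first try a direct argument avoiding prices. Suppose a positive-mass set \(B\) of types receives mixed bundles, or types choose goods inconsistently with a common exchange rate. Then I find two positive-mass sets \(B_1,B_2\) whose marginal rates of substitution between some goods \(i,k\) differ strictly, with \(B_1\) holding good \(k\) and \(B_2\) holding good \(i\). A small swap of \(k\) for \(i\) at an intermediate rate, financed through the aggregate constraint, strictly improves both groups. This local-trade argument yields the common price ratios \(p_i/p_k\) directly.
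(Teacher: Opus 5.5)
Your proof is correct in outline, and it reaches purity by a different route than the paper.

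\textbf{How the two routes compare.} The paper uses no prices. It first shows every supply constraint binds, by handing out a uniform amount of any slack good. It then proves purity by an explicit trade. If the set \(M\) of types holding positive amounts of both goods \(i\) and \(j\) has positive mass, it splits \(M\) at a threshold \(t\) of \(\theta_i/\theta_j\) into two positive-mass pieces. The low-ratio piece trades good \(i\) for good \(j\) at rate \(t\), and the high-ratio piece trades the other way, scaled so aggregate use is unchanged. This strictly helps every affected type. Your fallback local-trade argument is essentially this. Your IC step (a common quantity \(q_i\) per good, choices in \(\argmax_j\theta_jq_j\), then uniqueness from Fact~\ref{fact:puremechs}) matches the paper's final step. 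The supporting-price route buys more structure: it exhibits \(x\) as a competitive allocation with incomes \(p\cdot x(\theta)\). Adding IC then forces \(p_j\propto 1/q_j\), so incomes are equal, which ties the result directly to the CEEI implementation. The cost is a separation argument plus a selection step, both of which the paper's short swap avoids.

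\textbf{Two points about your version.} First, the separation is finite-dimensional, so the obstacle you flag is not the real one. In \(\mathbb R^N\), separate the convex set of aggregate bundles \(\int z\,dG\) (over rules that weakly improve every type and strictly improve a positive mass) from \(s-\mathbb R^N_+\). A countable dense set of rational bundles then localizes the conclusion to almost every type.

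Second, this separation yields only \emph{expenditure minimization}: any bundle a type strictly prefers costs at least \(p\cdot x(\theta)\). It does not yield utility maximization. Upgrading needs a cheaper point, and there is none for types with \(p\cdot x(\theta)=0\) when some \(p_i=0\). So your derivation of \(p\gg0\) assumes exactly the optimality that fails in that case.

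The fix is to run the argument on positive-expenditure types:
\begin{enumerate}
\item If \(p_i=0\), pick \(k\) with \(p_k>0\). Since the \(k\)-th supply binds and \(s_k>0\), a positive mass of types has \(p\cdot x(\theta)>0\).
\item For almost every such type, \(\theta_i>0\). The bundle \(\tfrac12x(\theta)+Ke_i\) is then strictly cheaper and, for large \(K\), strictly preferred, contradicting expenditure minimization.
\item Once \(p\gg0\), the bundle \(0\) is a cheaper point for every type with \(x(\theta)\neq0\). Expenditure minimization then gives utility maximization, and purity a.e. follows as you describe.
\end{enumerate}
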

To see why, first note that Pareto efficiency requires exhausting supply: any leftover amount could be distributed uniformly without violating incentives. Second, assigning bundles to a positive mass of types cannot be Pareto-efficient, since their recipients could trade so that each good goes to those who value it relatively more. Any Pareto-efficient allocation must therefore be pure almost everywhere, which pins it down as the pure option mechanism.

Any welfare improvement over the pure option mechanism must consequently involve Pareto-inefficient distortions. However, all such distortions (other than discarding supply) rely on mixed allocations and thus reward agents who are \emph{less picky}. If these agents tend to have \emph{lower} values, directing rents toward them is counterproductive, and the designer does better by simply adhering to the Pareto-efficient outcome. The pure option mechanism is therefore optimal for an entire ``direction'' of Pareto weights \(\lambda\), as illustrated in Figure~\ref{fig:direction}.

\begin{figure}[h!]
    \centering
    \begin{subfigure}[b]{0.45\linewidth}
        \centering
\begin{tikzpicture}[scale=4.2]
  \coordinate (O) at (0,0);
  \coordinate (B) at (0,0.3);
  \coordinate (K) at (0.5,0.5);
  \coordinate (D) at (0.7,0);

  \path[fill=niceblue, fill opacity=0.4]
    (O) -- (B) -- (K)
    to[out=-45,in=90] (D)
    -- cycle;

  \draw[orange, line width=1.2pt] (0,1) -- (1,0);
  \node[text=orange, align=center, font=\small] at (0.28,1)
    {Pareto\\frontier};

  \draw[darkred, line width=1.2pt]
    (0,0.7) -- (1.02,0.29);

  \fill[darkdarkgreen] (K) circle(0.7pt);

  \node[text=darkgreen, align=center, font=\small] at (0.75,0.59)
    {Pure option\\allocation};

  \node[align=center, font=\small] at (0.3,0.17)
    {Feasible\\region};

  \draw[->] (O) -- (1.1,0) node[right] {$U_{\text{flexible}}$};
  \draw[->] (O) -- (0,1.1) node[above] {$U_{\text{picky}}$};
\end{tikzpicture}
        \caption{Pareto weights skew towards picky agents.}
        \label{fig:dirofintegr}
    \end{subfigure}
    \hfill
    \begin{subfigure}[b]{0.45\linewidth}
        \centering
\begin{tikzpicture}[scale=4.2]
  \coordinate (O) at (0,0);
  \coordinate (B) at (0,0.3);
  \coordinate (K) at (0.5,0.5);
  \coordinate (D) at (0.7,0);

  \path[fill=niceblue, fill opacity=0.4]
    (O) -- (B) -- (K)
    to[out=-45,in=90] (D)
    -- cycle;

  \draw[orange, line width=1.2pt] (0,1) -- (1,0);
  \node[text=orange, align=center, font=\small] at (0.26,1)
    {Pareto\\frontier};

  \draw[darkred, line width=1.2pt]
    (0.42,1.05) -- (0.75,0);

  \fill[darkred] (0.684,0.2) circle(0.7pt);
  \fill[darkdarkgreen] (K) circle(0.7pt);

  \node[text=darkgreen, align=center, font=\small] at (0.82,0.59)
    {Pure option\\allocation};

  \node[align=center, font=\small] at (0.3,0.17)
    {Feasible\\region};

  \draw[->] (O) -- (1.1,0) node[right] {$U_{\text{flexible}}$};
  \draw[->] (O) -- (0,1.1) node[above] {$U_{\text{picky}}$};
\end{tikzpicture}
        \caption{Pareto weights skew towards flexible agents.}
        \label{fig:bdry}
    \end{subfigure}
    \caption{Moving away from the pure option mechanism can reward flexible agents, but not picky ones. The designer therefore prefers the Pareto-efficient allocation when her Pareto weights favor picky agents, but may distort it when they favor flexible agents.}
    \label{fig:direction}
\end{figure}

More broadly, the pure option mechanism is often optimal because the designer has very limited instruments for reallocating rents across agents; this distinguishes my model from environments with transfers. In the optimal taxation model of \cite{mirrlees1971exploration}, for example, a Pareto-efficient allocation is optimal only under knife-edge social preferences because the designer can generally distort it to direct rents in her preferred direction: if she wants to help low-productivity agents, she taxes labor and redistributes; if she wants to help high-productivity agents, she imposes lump-sum taxes and subsidizes labor. Here, the designer has no such freedom. Without transfers, she can only redistribute indirectly, by offering bundles that target flexible types, and has no analogous device for redistributing rents in the opposite direction.

\section{Two kinds of goods}\label{sec:2good}
I now characterize the optimal mechanism with two kinds of goods. Under the reparametrization from Section~\ref{sec:renormalization}, each type \(\theta=(1-\theta_2,\theta_2)\) is fully described by the scalar \(\theta_2\), making the type space one-dimensional. This yields a particularly simple representation of incentive-compatible allocation rules: every allocation rule satisfying \eqref{eq:ICprime} can be decomposed into a nonnegative combination of binary threshold rules \(x^z\), indexed by \(z\in[0,1]\):
\begin{equation}\label{eq:thresholdrulee}
x^z(1-t,t)=
\begin{cases}
(z,0), & t<z,\\[3pt]
(0,1-z), & t\geq z.
\end{cases}
\end{equation}
That is, \(x^z\) is induced by a two-option menu offering either \(z\) units of good \(1\) or \(1-z\) units of good \(2\); the cutoff type \(\theta=(1-z,z)\) is indifferent between these options. Define
\[
\varphi(z)
:=
\left( z\,\mathbb P(\Theta_2\le z), \
(1-z)\,\mathbb P(\Theta_2\ge z) \right),
\quad \text{and} \quad
w(z)
:=
\mathbb E\bigl[\lambda(\Theta)\max\{z\Theta_1,(1-z)\Theta_2\}\bigr].
\]
Then \(\varphi(z)\) is the supply vector used by the threshold rule \(x^z\), and \(w(z)\) is the welfare it generates. We then get the following characterization of the optimum:
\begin{theorem}\label{th:2good_asym_menu}
For any $a,b\in (0,1)$ such that \(a<\theta_2^0<b\), let $m_a,m_b>0$ be the unique weights satisfying
\begin{equation}\label{eq:asym_supply_clearing_weights}
m_a\varphi(a)+m_b\varphi(b)=s.
\end{equation}
The pure option mechanism is optimal if and only if, for every such $a,b$, we have:
\begin{equation}\label{eq:asym_no_profitable_bundle}
m_a w(a)+m_b w(b)
\le
\frac{s_1}{\varphi_1(\theta_2^0)}w(\theta_2^0).
\end{equation}
Otherwise, there is an optimal mechanism that offers two pure options and one bundle:
\begin{equation}\label{eq:menumixed}
\left\{\text{$r_1$ of good $1$}\right\}, \qquad
\left\{\text{$r_2$ of good $2$}\right\}, \qquad
\left\{\text{$b_1$ of good $1$ and $b_2$ of good $2$}\right\},
\end{equation}
where the quantities are chosen so that supply constraints bind when agents choose their favorite option.
\end{theorem}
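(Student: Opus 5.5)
The plan is to exploit the one-dimensional structure and reduce Problem~\ref{prob:renormalized} to a linear program over nonnegative measures on $[0,1]$. By the threshold decomposition stated just before the theorem, every allocation rule satisfying \eqref{eq:ICprime} can be written (up to a null set of types) as $x=\int_{[0,1]} x^z\,dm(z)$ for some finite nonnegative Borel measure $m$. Conversely, every such mixture is incentive compatible, since it is a conic combination of menus.

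Under this representation, both the objective \eqref{eq:Oprime} and the supply constraint \eqref{eq:supply'} are linear in $m$. By Fubini they become
\[
\int w(z)\,dm(z)\quad\text{and}\quad \int \varphi(z)\,dm(z)\le s .
\]
The designer's problem is therefore to maximize a continuous linear functional over measures subject to two moment inequalities. I would first establish existence of an optimal $m$. Total mass is bounded because $\varphi_1(z)+\varphi_2(z)$ is bounded away from zero, so weak-$*$ compactness applies, and $w$ and $\varphi$ are continuous under Assumption~\ref{ass:regong}.

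Next I would show that both supply constraints bind at some optimum. If, say, good $i$ had slack, add a small mass on $z=1$ or $z=0$. The threshold rule $x^1$ gives every agent one unit of good $1$, and $x^0$ gives every agent one unit of good $2$. This strictly raises welfare because $\lambda>0$ and $\Theta_i>0$ with positive probability, and it uses only good $i$. Note that the endpoints $z=0,1$ must be admitted here, since $\varphi(0)=(0,1)$ and $\varphi(1)=(1,0)$.

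With both constraints holding as equalities, the feasible set is a compact convex set of measures cut out by two moment equalities. By the Rogosinski/Richter (Carathéodory-type) theorem, its extreme points are supported on at most two points, and a linear objective attains its maximum at an extreme point. So there is an optimal $m$ supported on at most two points.

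The remaining work is to read off these atomic solutions. Since $\varphi_1(z)=z\,\mathbb P(\Theta_2\le z)$ is strictly increasing and $\varphi_2(z)=(1-z)\,\mathbb P(\Theta_2\ge z)$ is strictly decreasing, the direction $\varphi(z)/\|\varphi(z)\|$ moves monotonically from $e_2$ to $e_1$. Because $x^{\theta_2^0}$ is the market-clearing pure option rule, $\varphi(\theta_2^0)$ is the unique point on the curve that is proportional to $s$.

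Hence a single atom satisfying $m\varphi(z)=s$ must sit at $z=\theta_2^0$ with weight $s_1/\varphi_1(\theta_2^0)$, and this is exactly the pure option mechanism. A genuine two-atom solution $\{a,b\}$ with $m_a\varphi(a)+m_b\varphi(b)=s$ must straddle this direction, so $a<\theta_2^0<b$. The weights $m_a,m_b$ are then unique because $\varphi(a)$ and $\varphi(b)$ are linearly independent.

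It follows that the pure option mechanism is optimal if and only if no straddling pair does better, which is exactly \eqref{eq:asym_no_profitable_bundle}. Pairs with $a=0$ or $b=1$ need a separate argument: either cover them by continuity of $w$, $\varphi$ and of the weights in $(a,b)$, or observe that they are limits of interior pairs.

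When the inequality fails, the optimum is a two-atom mixture $m_a x^a+m_b x^b$, and one computes directly what each type receives:
\begin{itemize}
\item types with $t<a$ get $(m_a a+m_b b,\,0)$;
\item types with $a\le t<b$ get the bundle $(m_b b,\ m_a(1-a))$;
\item types with $t\ge b$ get $(0,\ m_a(1-a)+m_b(1-b))$.
\end{itemize}
This is the three-option menu \eqref{eq:menumixed}, and its supply constraints bind by construction.

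I expect the main obstacle to be the rigorous passage from measures to at most two atoms. This requires:
\begin{itemize}
\item justifying that the decomposition is exact, including the tie-breaking conventions and the nonnegativity of $m$;
\item ensuring the maximum is attained at an extreme point of an infinite-dimensional set (for example, via Krein--Milman together with Bauer's maximum principle, or by a direct Carathéodory argument applied to the pushforward of $m$ under $z\mapsto(\varphi(z),w(z))$ in $\mathbb R^3$);
\item handling the endpoint atoms cleanly.
\end{itemize}
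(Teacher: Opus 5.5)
Your argument is correct, but it reaches the two-atom structure by a different route from the paper's.

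\textbf{The paper's route.} The paper dualizes the measure LP \eqref{eq:primalLP}. Conic strong duality gives multipliers $(c_1^*,c_2^*)$. Complementary slackness then places every optimizer on the contact set $E=\{z:c_1^*\varphi_1(z)+c_2^*\varphi_2(z)=w(z)\}$. Comparing $x^z$ with $(1-z)x^0$ (resp.\ $zx^1$) shows $0,1\notin E$. Finally, Carath\'eodory for cones in $\mathbb R^2$, applied to $s\in\operatorname{cone}\{\varphi(z):z\in E\}$, yields an optimal $m_a\delta_a+m_b\delta_b$ with $a,b\in(0,1)$.

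\textbf{Your route.} You apply Bauer's maximum principle on the weak-$*$ compact set $\{\nu\ge0:\int\varphi\,d\nu=s\}$, whose extreme points have at most two atoms. The fact you need here is the extreme-point characterization of moment sets, proved by perturbing along three disjoint pieces of the support. It is not Richter's existence theorem, which would give three atoms once $w$ is also matched. You then handle the endpoints by continuity, and that step is sound. On $[0,\theta_2^0)\times(\theta_2^0,1]$ the vectors $\varphi(a),\varphi(b)$ stay linearly independent, because $\varphi(0)=(0,1)$, $\varphi(1)=(1,0)$ and $\varphi_1/\varphi_2$ is strictly increasing. Hence $m_a$, $m_b$ and $m_aw(a)+m_bw(b)$ are continuous there, and the inequality for interior pairs extends to boundary pairs.

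\textbf{What each buys.} Your route is more elementary and avoids infinite-dimensional duality. The paper's route yields more information. It shows that no optimal measure charges $z\in\{0,1\}$, so in the bundling case both cutoffs are interior and each of the three options is chosen by a positive mass of types. Its multipliers are also reused later, e.g.\ in verifying Example~\ref{example:ex3}.

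\textbf{The one thing your version leaves open.} An extreme-point optimizer with $a=0$ or $b=1$ is not excluded. In that case, e.g.\ $r_1=b_1$, the pure good-$1$ option is dominated by the bundle, and the menu has the form \eqref{eq:menumixed} only formally. To rule this out you would need the paper's observation: at the dual prices, $x^z$ for small $z>0$ delivers more welfare than $(1-z)x^0$ at strictly lower cost.
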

Thus, the optimal mechanism can take one of two forms: it is either the pure
option mechanism, or it offers two pure options together with a mixed bundle,
as in Example~\ref{example:ex2}. 

While the proof is in the appendix, I explain the core logic behind the result. The threshold-rule decomposition turns the designer's problem into a linear
program over nonnegative measures on \(z\in[0,1]\). Each cutoff uses supplies \(\varphi(z)\) and generates welfare \(w(z)\). Since there are only two supply constraints, there always exists a solution that places weight on at most two cutoffs, say \(a\le b\); the resulting allocation rule then has at most three allocation levels: a pure good-\(1\) option, a bundle for the types between the cutoffs, and a
pure good-\(2\) option. 

The pure option mechanism corresponds to placing all the weight on a single cutoff
\(z=\theta_2^0\), and generates welfare equal to the right-hand side of
\eqref{eq:asym_no_profitable_bundle}. As explained above, there always exists an optimal menu that exhausts both supplies and is either the pure option mechanism or of the form in \eqref{eq:menumixed}. Thus, to check whether the pure option mechanism is optimal, it is enough to compare it with supply-exhausting menus of the latter form. Each such menu is pinned down by a pair of types \((1-a,a)\) and \((1-b,b)\) with \(a<\theta_2^0<b\), as in Figure~\ref{fg:2dbundle}; these types are indifferent between the bundle and one of the pure options. In the threshold-rule representation, the menu combines the binary threshold rules \(x^a\) and \(x^b\) with the weights \(m_a\) and \(m_b\) defined in \eqref{eq:asym_supply_clearing_weights}, and generates welfare \(m_aw(a)+m_bw(b)\).\footnote{The ordering \(a<\theta_2^0<b\) is equivalent to the existence of such weights. Intuitively, one threshold rule uses more of good~\(1\) than the pure option mechanism, while the other uses relatively more of good~\(2\), allowing their combination to exhaust both supplies.} Condition~\eqref{eq:asym_no_profitable_bundle} therefore ensures that no such menu improves upon the pure option mechanism.

\begin{figure}[h!]
        \centering
        \begin{subfigure}[b]{0.4\linewidth}
        \centering
\begin{tikzpicture}[scale=0.8]

  \def\L{6.2}     
  \def\xm{2*\L/5} 
  \def\hA{2.03}   
  \def\hB{2.56}   

  \fill[plum, opacity=0.35]
    (0,0) rectangle ({\xm},{\hA});

  \fill[darkdarkgreen!70, opacity=0.45]
    ({\xm},0) rectangle (\L,{\hB});

  \draw[dashed, thick] ({\xm},0) -- ({\xm},{\hA});

  \draw[thick] (0,0) -- (\L,0);

  \draw[thick] (0,-0.18) -- (0,0.18);
  \draw[thick] (\L,-0.18) -- (\L,0.18);

  \draw[thick] ({\xm},-0.15) -- ({\xm},0.15);

  \node[below] at (0,-0.18) {$(1,0)$};
  \node[below] at (\L,-0.18) {$(0,1)$};
  \node[below] at ({\xm},-0.15) {$(\theta_1^0,\theta_2^0)$};

\end{tikzpicture}
            \caption{Figure \ref{fig:2dpure}: The pure option mechanism}
            \label{fig:2dpure}
        \end{subfigure}
        \hspace{1cm}
        \begin{subfigure}[b]{0.4\linewidth}
            \centering
\begin{tikzpicture}[scale=0.8]

  \def\L{6.2}         
  \def\xm{2*\L/5}     
  \def\xa{\L/5}       
  \def\xb{7*\L/10}    
  \def\hA{1.70}       
  \def\hB{2.35}       
  \def\hM{2.95}       

  \fill[plum, opacity=0.35]
    (0,0) rectangle ({\xa},{\hA});

  \fill[plum, opacity=0.35]
    ({\xa},0) rectangle ({\xb},{\hM/3});

  \fill[darkdarkgreen!70, opacity=0.45]
    ({\xa},{\hM/3}) rectangle ({\xb},{\hM});

  \fill[darkdarkgreen!70, opacity=0.45]
    ({\xb},0) rectangle (\L,{\hB});

  \draw[dashed, thick] ({\xa},0) -- ({\xa},{\hA});
  \draw[dashed, thick] ({\xb},0) -- ({\xb},{\hB});

  \draw[thick] (0,0) -- (\L,0);

  \draw[thick] (0,-0.18) -- (0,0.18);
  \draw[thick] (\L,-0.18) -- (\L,0.18);

  \draw[thick] ({\xm},-0.15) -- ({\xm},0.15);

  \draw[thick] ({\xa},-0.15) -- ({\xa},0.15);
  \draw[thick] ({\xb},-0.15) -- ({\xb},0.15);

  \node[below] at ({\xa},-0.15) {$(1-a,a)$};
  \node[below] at ({\xb},-0.15) {$(1-b,b)$};

\end{tikzpicture}

            \caption{Figure \ref{fg:2dbundle}: A bundling mechanism }
            \label{fg:2dbundle}
        \end{subfigure}
    \end{figure}

When the two cutoffs satisfy \(a<1/2<b\), the mixed bundle delivers more in
total than either pure option, as in Example~\ref{example:ex2}. This is the
natural case when imbalances in the supply and demand for the two goods are
limited. More substantial imbalances can shift the market-clearing indifferent
type \(\theta_2^0\) far enough from \(1/2\) that both cutoffs lie on the same
side of \(1/2\). The bundle is then larger than one pure option but smaller
than the other. In either case, the cutoffs lie on opposite sides of
\(\theta_2^0\), so the bundle targets agents with narrow preference margins
after correcting for the relative scarcity of the two goods.

The optimality condition for the pure option mechanism simplifies further in
the symmetric case. There, it is without loss of generality to consider symmetric bundling mechanisms whose two cutoffs are mirror images of one another.

\begin{corollary}\label{cor:2good_symmetric_two_options_theta}
Suppose \(g\) and \(\lambda\) are exchangeable and \(s_1=s_2\). Then the pure option mechanism is optimal if and only if, for every
\(z\in[0,1/2)\),
\begin{equation}\label{eq:foralltau_cor}
\mathbb E\left[
\lambda(\Theta)\bigl(\min_i\Theta_i-z\bigr)
\ \middle|\ 
\min_i\Theta_i\ge z
\right]
\le
(1-2z)\,
\mathbb E\left[
\lambda(\Theta)\max_i\Theta_i
\right].
\end{equation}
In particular, this holds if \(t\,\lambda(1-t,t)\) is non-decreasing in
\(t\in[1/2,1]\).
\end{corollary}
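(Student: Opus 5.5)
The plan is to specialize Theorem~\ref{th:2good_asym_menu} to the symmetric case. The key steps are to reduce to mirror-image cutoffs and then rewrite \eqref{eq:asym_no_profitable_bundle} in terms of $\min_i\Theta_i$.

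\emph{Step 1: the benchmark.} Under exchangeability and $s_1=s_2$, the market-clearing pure options are equal, so $\theta^0=(1/2,1/2)$. Since $g$ has a density, $\mathbb P(\Theta_2\le 1/2)=1/2$, hence $\varphi(1/2)=(1/4,1/4)$. The pure option mechanism is then the threshold rule $x^{1/2}$ with weight $4s_1$. Scaling the weight to $2s_1$ gives, per unit of $x^{1/2}+x^{1/2}$, supply $1/2$ of each good and welfare $\mathbb E[\lambda\max_i\Theta_i]$.

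\emph{Step 2: symmetric cutoffs are without loss.} Theorem~\ref{th:2good_asym_menu} rests on a linear program over nonnegative measures on cutoffs $z\in[0,1]$. The reflection $z\mapsto 1-z$ maps feasible measures to feasible measures with the same welfare, because $w(z)=w(1-z)$ and $\varphi(1-z)$ is $\varphi(z)$ with coordinates swapped. Averaging any optimal measure with its reflection therefore yields an optimal symmetric measure.

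I would then check that the pure option mechanism is optimal if and only if no symmetric two-point measure on $\{z,1-z\}$ with $z<1/2$ beats it. The concern is that the averaged measure may have more than two support points. This is the step I expect to need the most care. My plan is to note that, once symmetry is imposed, the two supply constraints coincide, so the program has a single constraint. An optimum can then be taken on a single symmetric pair, since each pair $\{z,1-z\}$ is a feasible ``atom.''

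\emph{Step 3: the algebra.} Sum $x^z$ and $x^{1-z}$ and write $t=\theta_2$. The combined rule gives $(1,0)$ for $t<z$, gives $(1-z,1-z)$ for $z\le t<1-z$, and gives $(0,1)$ for $t\ge 1-z$. Let $p:=\mathbb P(\min_i\Theta_i\ge z)$. Using $\max_i\Theta_i-(1-z)=z-\min_i\Theta_i$, the welfare of this rule is
\[
\mathbb E[\lambda\max_i\Theta_i]+\mathbb E\bigl[\lambda(\min_i\Theta_i-z)\mathbf 1\{\min_i\Theta_i\ge z\}\bigr].
\]
Its supply of each good is
\[
\tfrac{1-p}{2}+(1-z)p=\tfrac{1+p(1-2z)}{2}.
\]
Optimality of the pure option mechanism means that welfare per unit of supply does not exceed that of the benchmark. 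That is,
\[
\mathbb E[\lambda\max_i\Theta_i]+\mathbb E\bigl[\lambda(\min_i\Theta_i-z)\mathbf 1\{\min_i\Theta_i\ge z\}\bigr]\le\bigl(1+p(1-2z)\bigr)\mathbb E[\lambda\max_i\Theta_i].
\]
Cancelling $\mathbb E[\lambda\max_i\Theta_i]$ and dividing by $p>0$ gives exactly \eqref{eq:foralltau_cor}.

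\emph{Step 4: the sufficient condition.} Set $M=\max_i\Theta_i\in[1/2,1]$ and $h:=\lambda M$; by hypothesis and exchangeability, $h$ is a non-decreasing function of $M$ alone. On the event $\{\min_i\Theta_i\ge z\}$ we have $M\le 1-z$, and $\lambda(\min_i\Theta_i-z)=h\,(1-M-z)/M$. The ratio $(1-M-z)/M$ is decreasing in $M$, so it is at most its value $1-2z$ at $M=1/2$. Hence the left side of \eqref{eq:foralltau_cor} is at most $(1-2z)\,\mathbb E[h\mid M\le 1-z]$. Finally, $h$ is non-decreasing in the scalar $M$, and conditioning on a lower set of $M$ can only lower its mean. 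Therefore $\mathbb E[h\mid M\le 1-z]\le\mathbb E[h]$, which proves the claim.
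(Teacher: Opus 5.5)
Your proposal is correct and follows essentially the same route as the paper. Both arguments symmetrize the threshold-measure program, reduce it to a single-constraint problem solved by one mirror pair $\{z,1-z\}$, compare welfare per unit of supply against the $z=1/2$ benchmark, and, for the sufficient condition, use $\lambda(\min_i\Theta_i-z)\le(1-2z)\lambda\max_i\Theta_i$ on $\{\min_i\Theta_i\ge z\}$ together with the fact that conditioning a non-decreasing function of $\max_i\Theta_i$ on a lower set lowers its mean. The only cosmetic difference is that you compute the welfare and supply of $x^z+x^{1-z}$ directly, rather than through $w(1-z)$ and $\varphi(1-z)$.
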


To interpret the condition, begin with the symmetric pure option mechanism in Figure~\ref{fig:2dsymmpure} and consider the following modification. Fix symmetric cutoff types \((1-z,z)\) and \((z,1-z)\), and offer every type between them a bundle containing \(2s(1-z)\) of each good, as in Figure~\ref{fig:2dsymmbundle}. The cutoff types are indifferent between the bundle and their corresponding pure options. For a type in the middle region, this modification raises utility from \(2s\max_i\Theta_i\) to \(2s(1-z)\), a gain of $2s(\min_i\Theta_i-z).$

\begin{figure}[h!]
    \centering
    \begin{subfigure}[b]{0.42\linewidth}
        \centering
\begin{tikzpicture}[xscale=0.7, yscale=0.8]

  \def\L{6.2}      
  \def\xm{\L/2}    
  \def\hP{2.30}    

  \fill[plum, opacity=0.35]
    (0,0) rectangle ({\xm},{\hP});

  \fill[darkdarkgreen!70, opacity=0.45]
    ({\xm},0) rectangle (\L,{\hP});

  \draw[dashed, thick] ({\xm},0) -- ({\xm},{\hP});

  \draw[decorate, decoration={brace, amplitude=6pt}, thick]
    (-0.35,0) -- (-0.35,\hP)
    node[midway,left=8pt] {$2s$};

  \draw[thick] (0,0) -- (\L,0);

  \draw[thick] (0,-0.18) -- (0,0.18);
  \draw[thick] (\L,-0.18) -- (\L,0.18);

  \draw[thick] ({\xm},-0.15) -- ({\xm},0.15);

  \node[below] at (0,-0.18) {$(1,0)$};
  \node[below] at (\L,-0.18) {$(0,1)$};
  \node[below] at ({\xm},-0.15) {$(\theta_1^0,\theta_2^0)$};

\end{tikzpicture}
        \caption{Figure \ref{fig:2dsymmpure}: Symmetric pure option mechanism}
        \label{fig:2dsymmpure}
    \end{subfigure}
    \hspace{0.7cm}
    \begin{subfigure}[b]{0.42\linewidth}
      \centering
\begin{tikzpicture}[xscale=0.7, yscale=0.8]

  \def\L{6.2}         
  \def\xm{\L/2}       
  \def\xa{\L/4}       
  \def\xb{3*\L/4}     
  \def\hP{2.30}       
  \def\hM{3.31}       

  \fill[plum, opacity=0.35]
    (0,0) rectangle ({\xa},{\hP});

  \fill[plum, opacity=0.35]
    ({\xa},0) rectangle ({\xb},{\hM/2});

  \fill[darkdarkgreen!70, opacity=0.45]
    ({\xa},{\hM/2}) rectangle ({\xb},{\hM});

  \fill[darkdarkgreen!70, opacity=0.45]
    ({\xb},0) rectangle (\L,{\hP});

  \draw[dashed, thick] ({\xa},0) -- ({\xa},{\hP});
  \draw[dashed, thick] ({\xb},0) -- ({\xb},{\hP});

  \draw[decorate, decoration={brace, amplitude=6pt}, thick]
    (-0.35,0) -- (-0.35,\hP)
    node[midway,left=8pt] {$2s$};

  \draw[decorate, decoration={brace, mirror, amplitude=6pt}, thick]
    (\L+0.35,0) -- (\L+0.35,{\hM/2})
    node[midway,right=8pt] {$2s(1-z)$};

  \draw[thick] (0,0) -- (\L,0);

  \draw[thick] (0,-0.18) -- (0,0.18);
  \draw[thick] (\L,-0.18) -- (\L,0.18);

  \draw[thick] ({\xm},-0.15) -- ({\xm},0.15);

  \draw[thick] ({\xa},-0.15) -- ({\xa},0.15);
  \draw[thick] ({\xb},-0.15) -- ({\xb},0.15);

  \node[below] at ({\xa},-0.15) {$(1-z,z)$};
  \node[below] at ({\xb},-0.15) {$(z,1-z)$};

\end{tikzpicture}
        \caption{Figure \ref{fig:2dsymmbundle}: Symmetric bundling mechanism}
        \label{fig:2dsymmbundle}
    \end{subfigure}
    \label{fig:2dsymmetric}
\end{figure}
The average gain conditional on belonging to this region is therefore the left-hand side of \eqref{eq:foralltau_cor}, multiplied by \(2s\). The modification, however, uses \(4s(1-z)\) units in total, rather than \(2s\), and thus requires an additional $2s(1-2z)$ units. Under the pure option mechanism, the shadow value of one unit of either good is $\mathbb E\![\lambda(\Theta)\max_i\Theta_i],$
since it can be used to increase the corresponding pure option. The average cost of the additional supply is therefore the right-hand side of \eqref{eq:foralltau_cor}, multiplied by \(2s\). Hence, the condition says precisely that, for every pair of symmetric cutoffs, the benefit of introducing the bundle does not exceed its additional supply cost.

The sufficient monotonicity condition has a natural interpretation. As in Example~\ref{example:ex2}, the bundle directs rewards toward agents with less extreme relative preferences. If these agents have lower values for their preferred good, as measured by \(t\lambda(1-t,t)\), doing so is counterproductive. Importantly, however, the opposite monotonicity of \(t\lambda(1-t,t)\) is \emph{not} sufficient to conclude that the designer should introduce the bundle. This is because bundling is an intrinsically distortionary screening device: to direct rents toward less extreme types, the mechanism must give them some of the good they value less and finance this by reducing other agents' allocations of their preferred good. Thus, even if less extreme relative preferences predict higher expected total values, this relationship must be strong enough to compensate for the resulting inefficiency.

\section{Observable labels}
\label{sec:observable-groups}

The baseline model assumes that agents differ only in their privately known
values. In many applications, however, the designer also observes
characteristics that are relevant for targeting, such as income, household
size, age, or disability status. I now extend the model
by allowing allocations to depend on such observable labels.

Suppose that each agent belongs to an observable group
\(r\in\{1,\dots,R\}\), $R\geq 2$, where group \(r\) has population share
\(\pi_r>0\) with $\sum_{r=1}^R \pi_r=1.$ Conditional on belonging to group \(r\), the agent's renormalized type
\(\theta\in\Gamma\) is distributed according to \(G_r\). Let
\(\lambda_r:\Gamma\to\mathbb R_+\) denote the expected scale of values
conditional on the observable group and renormalized type. Throughout, I impose
the following groupwise analogue of the regularity assumptions in the
baseline model.

\begin{assumption}
\label{ass:observable-groups}
For every group \(r\) and every nonempty \(B\subseteq\{1,\dots,N\}\), define
\[
\Theta^B:=\frac{(\Theta_i)_{i\in B}}{\sum_{i\in B}\Theta_i},
\qquad
\lambda_{r,B}(\Theta^B)
:=
\mathbb E\left[
\lambda_r(\Theta)\sum_{i\in B}\Theta_i
\ \middle|\ \Theta^B,r
\right].
\]
The distribution of \(\Theta^B\) admits a density \(g_{r,B}\) that is
strictly positive a.e., with \(g_{r,B}\) and its first weak
derivatives square-integrable. Moreover, \(\lambda_{r,B}\) is measurable,
bounded, and strictly positive a.e.
\end{assumption}

Because the group label is observable, the designer may offer a different
allocation rule to each group. Let \(x_r:\Gamma\to\mathbb R_+^N\) denote the
allocation rule offered to group \(r\). The designer's problem is therefore as follows.
\begin{problem}
\label{prob:observable-groups}
Choose allocation rules
\((x_1,\dots,x_R)\) to maximize
\begin{equation}
\label{eq:observable-group-objective}
\sum_{r=1}^R
\pi_r
\int_\Gamma
\lambda_r(\theta)\,
\theta\cdot x_r(\theta)\,
dG_r(\theta),
\tag{L-O}
\end{equation}
subject to within-group incentive compatibility,
\begin{equation}
\label{eq:observable-group-IC}
\theta\cdot x_r(\theta)
\ge
\theta\cdot x_r(\theta')
\qquad
\text{for all }\theta,\theta'\in\Gamma
\text{ and every }r,
\tag{L-IC}
\end{equation}
and the common supply constraint
\begin{equation}
\label{eq:observable-group-supply}
\sum_{r=1}^R
\pi_r
\int_\Gamma x_r(\theta)\,dG_r(\theta)
\le s.
\tag{L-S}
\end{equation}
\end{problem}

\paragraph{Relation to the baseline problem.}
The observable-group problem can be decomposed into two stages, in the spirit
of \citet{akbarpour2024redistributive}. In the \emph{inner stage}, for each
group \(r\), the designer fixes the total supply vector
\(s^r\in\mathbb R_+^N\) assigned to that group and allocates it optimally
among its members, subject to within-group incentive compatibility
\eqref{eq:observable-group-IC}. Indeed, let $W(\lambda,g,s)$ denote the value of the baseline problem when the weight function is \(\lambda\), the type density is \(g\), and the supply vector is
\(s\).\footnote{By Proposition~\ref{prop:P_redundant}, this problem's value is attained
when \(s\gg0\); if some coordinates are zero, Assumption~\ref{ass:observable-groups} ensures the same argument applies after restricting
the problem to the goods with strictly positive supplies.} Then the aggregate welfare obtained by optimally allocating the supply \(s^r\) within group \(r\) is given by
\begin{equation}
\label{eq:group-value-function}
W_r(s^r)
:=
\pi_r
W\left(
\lambda_r,
g_r,
s^r/\pi_r
\right).
\end{equation}
In the \emph{outer stage}, the designer splits the total supply across
groups, anticipating that each group's assigned supply will then be allocated
optimally within that group. The outer problem is therefore
\begin{equation}
\label{eq:supply-splitting-problem}
\max_{s^1,\dots,s^R\in\mathbb R_+^N}
\sum_{r=1}^R W_r(s^r)
\quad
\text{subject to}
\quad
\sum_{r=1}^R s^r\leq s.
\end{equation}

\paragraph{Optimal supply split.}
Since each inner problem is an instance of the baseline model, we now focus
on the outer problem of splitting supply across groups. For every group \(r\)
and good \(i\), define
\[
\eta_{ri}
:=
\int_\Gamma
\lambda_r(\theta)\theta_i\,dG_r(\theta)
=
\mathbb E[V_i\mid r],
\qquad \text{and} \qquad
\overline v_{ri}
:=
\sup_{\theta\in\Gamma}
\lambda_r(\theta)\theta_i
=\sup_{\theta\in\Gamma}
\mathbb E[V_i\mid r,\Theta=\theta].
\]
Thus, \(\eta_{ri}\) is group \(r\)'s average value for good \(i\), while
\(\overline v_{ri}\) is the highest conditional expected value for good \(i\)
among the relative-value profiles in group \(r\). We then get the following result:

\begin{theorem}\label{thm:3}
Let \((s^1,\dots,s^R)\) solve \eqref{eq:supply-splitting-problem}. Then 
every group \(r\) is either \textbf{active} and gets \(s^r\gg0\), or \textbf{excluded} and gets \(s^r=0\). Moreover:
\begin{enumerate}
\item[(i)] group $r$ is excluded if $\max_{k\neq r}\eta_{ki}
\geq
\overline v_{ri},$ for every good $i$, with at least one strict inequality,
\item[(ii)] group $r$ is active if there exists a nonempty set of
goods $B\subseteq\{1,\dots,N\}$ such that
\begin{equation}
\label{eq:comparative-advantage-participation-difference}
\sum_{i\in B}
s_i
\left(
\eta_{ri}
-
\max_{l\neq r}\overline v_{l i}
\right)
>
\sum_{i\notin B}
s_i
\left(
\max_{l\neq r}\overline v_{l i}
-
\max_{k\neq r}\eta_{ki}
\right).
\end{equation}
\end{enumerate}
\end{theorem}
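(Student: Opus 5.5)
Since \eqref{eq:supply-splitting-problem} maximizes a sum of concave, nondecreasing, positively homogeneous functions over a polyhedron, I will argue through its Lagrangian. First I record the properties of $W_r$ that I need. (a) Each $W_r$ is concave and positively homogeneous of degree one. Homogeneity holds because scaling all allocations scales both supply and welfare. Concavity holds because the feasible set of the baseline problem is convex and the objective is linear. (b) The directional behaviour at zero and the superdifferential are bounded as follows. For any feasible rule in group $r$, welfare from good $i$ is at most $\overline v_{ri}$ times the amount of good $i$ used, so $W_r(s^r)\le \sum_i \overline v_{ri}s^r_i$. Conversely, handing each member of group $r$ the uniform bundle $s^r/\pi_r$ is IC and yields exactly $\sum_i \eta_{ri}s^r_i$, so $W_r(s^r)\ge\sum_i\eta_{ri}s^r_i$. (c) Strict monotonicity comes from the pure option mechanism within the group. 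For $s^r\gg0$, $W_r$ is differentiable (or has a superdifferential) with $\partial_i W_r\ge \eta_{ri}$, since the added good can be spread uniformly, and $\partial_iW_r\le\overline v_{ri}$.

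Next I take KKT conditions for the outer problem. There exist multipliers $p\in\mathbb R^N_+$ such that $p\in\partial W_r(s^r)$ whenever $s^r$ is interior, and in general $\sum_i p_i s_i^r = W_r(s^r)$ by homogeneity (Euler). Moreover $W_r(t)\le p\cdot t$ for all $t\ge0$. The active/excluded dichotomy is the delicate step. I plan to show that if $s^r\neq0$ but $s^r_i=0$ for some $i$, then moving a small amount $\epsilon$ of good $i$ from some other group to $r$ strictly improves welfare. The reason is that the marginal value of adding good $i$ to a group lacking it is essentially $\overline v_{ri}$-like: a new pure option $q e_i$ attracts the types with the highest $\theta_i\lambda_r$ relative to their current utility, and this option is infinitely valuable per unit in the limit, because types near $e_i$ have current utility near zero while $g_{r,\{i,j\}}>0$ a.e. by Assumption~\ref{ass:observable-groups}. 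Hence the one-sided derivative of $W_r$ in direction $e_i$ at a boundary point exceeds $\overline v_{li}\ge\partial_i W_l$ for any donor $l$. I expect this boundary derivative argument to be the main obstacle. I would make it precise by restricting to the face of goods $B$ with positive supply, using the $\lambda_{r,B}$, $g_{r,B}$ reduction in the assumption, and computing the gain from offering a tiny $e_i$ option to a region of types with $\theta_i/\theta_j$ large.

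For (i), suppose $s^r\neq0$, so $s^r\gg0$ by the dichotomy. Reallocate all of $s^r$ to the groups $k$ attaining $\max_k\eta_{ki}$, good by good. The loss is at most $\sum_i\overline v_{ri}s^r_i$ by (b). The gain is at least $\sum_i \max_k\eta_{ki}s^r_i$, using concavity plus homogeneity, which give superadditivity $W_k(s^k+t)\ge W_k(s^k)+W_k(t)\ge W_k(s^k)+\eta_k\cdot t$. Under the hypothesis this is a strict improvement, because $s^r\gg0$ hits the strict coordinate. Hence $s^r=0$.

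For (ii), suppose $s^r=0$. Then the remaining groups share all of $s$, and I compare this with a deviation that gives group $r$ the supply $\epsilon$ of each good in $B$ and of the goods outside $B$ alike. More simply, I would give group $r$ the entire endowment $s$ and all others nothing; since the problem is concave in the split, it suffices to compare endpoints. The others' total welfare is at most $\sum_i \max_{l\ne r}\overline v_{li}s_i$. Giving $s_B$ to $r$ and leaving $s_{B^c}$ to the others yields at least $\sum_{i\in B}\eta_{ri}s_i$ plus, for the others, at least $\sum_{i\notin B}\max_k\eta_{ki}s_i$, via a uniform allocation to the best group. This deviation exceeds the upper bound exactly when \eqref{eq:comparative-advantage-participation-difference} holds. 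Therefore any optimum has $s^r\ne0$, and by the dichotomy $s^r\gg0$.
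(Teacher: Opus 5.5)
\textbf{Parts (i) and (ii) are fine; the gap is in the dichotomy.} In (i), your superadditivity step $W_k(s^k+t)\ge W_k(s^k)+\eta_k\cdot t$ is the paper's ``add a common bundle to group $k(i)$'' (take the max over $k\neq r$). In (ii), the comparison you finally make is exactly the paper's; drop the remark about ``comparing endpoints,'' which is not a valid argument. The genuine gap is in the active/excluded dichotomy, on which both parts rely: (i) needs $s^r\gg0$ to make the strict coordinate bite, and (ii) needs it to upgrade $s^r\neq0$ to $s^r\gg0$.

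\textbf{Your reason for the infinite marginal value does not produce it.} Types who switch to a new option $qe_i$ gain at most $\lambda_r\theta_i\le\overline v_{ri}$ per unit of good $i$. So ``types near $e_i$ have near-zero utility'' only explains why a tiny option attracts a positive mass $G_r(D_q)$, where $D_q=\{\theta:q\theta_i>U(\theta)\}$. That follows from $U\le\overline M(1-\theta_i)$, with $\overline M$ bounding total allocations, together with $g_r>0$.

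The missing idea is that the switchers \emph{release} a non-vanishing quantity of the other goods, which can be rebated. Let $B=\{k:s^r_k>0\}$, and suppose some type receives $a_k>0$ of each $k\in B$. Then IC gives $U(\theta)\ge\underline M\sum_{k\in B}\theta_k$ with $\underline M=\min_{k\in B}a_k/|B|$. Since also $U(\theta)\le(\sum_{k\in B}\theta_k)\sum_{k\in B}x_k(\theta)$, every type holds $\sum_{k\in B}x_k(\theta)\ge\underline M$. This includes types near $e_i$ that value these goods at almost nothing.

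The new option therefore uses $\pi_r q\,G_r(D_q)$ units of good $i$ but frees at least $\pi_r\underline M\,G_r(D_q)$ units of goods in $B$. Adding the freed supply back to every option as a common bundle preserves IC. It gains at least $\pi_r\underline M\min_{k\in B}\eta_{rk}\,G_r(D_q)$. The ratio of gain to good $i$ used is at least $\underline M\min_{k\in B}\eta_{rk}/q\to\infty$.

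\textbf{The donor bound $\partial_iW_l\le\overline v_{li}$ from (c) is false.} The marginal value of good $i$ includes the value of the other goods it frees, so $\partial_iW_l(s^l)$ can be arbitrarily large when $s^l_i$ is small. Showing that the recipient's marginal value exceeds $\overline v_{li}$ would therefore not suffice. You need the recipient's right-derivative to be $+\infty$. After that, you only need the donor's marginal loss to be finite, which holds because $t\mapsto W_l(s^l+te_i)$ is finite and concave near $0$ when $s^l_i>0$.

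Alternatively, your own KKT relations finish the argument with no donor at all. They give $W_r(t)\le p\cdot t$ and $W_r(s^r)=p\cdot s^r$, so $W_r(s^r+\varepsilon e_i)\le W_r(s^r)+p_i\varepsilon$, which contradicts an infinite right-derivative directly. Without the freed-supply-and-rebate step, however, the dichotomy is not established, and with it neither is the strictness in (i) nor the conclusion $s^r\gg0$ in (ii).
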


Thus, an observable group may be excluded altogether, but any group that is served receives a positive amount of every good. To see why, suppose a group receives some goods but none of good \(j\). Because the group contains agents who value almost exclusively good \(j\), these agents must instead receive bundles of goods they value very little. Even a small allocation of good \(j\) could then replace those bundles and free substantial quantities of the other goods.

Part~$(i)$ gives a sufficient condition for exclusion: for every good, even the highest value in group \(r\) is no greater than the average value in some other group. Any supply assigned to \(r\) can then be reallocated, good by good, to other groups without reducing welfare, making it strictly suboptimal to serve \(r\). The following example illustrates
when this can occur.
\begin{example}
\label{ex:specialists-exclude-generalist}
Consider \(N\) goods and \(N+1\) groups, indexed by
\(0,1,\dots,N\). Fix \(H>L>0\), \(\delta>0\), and a small
\(\varepsilon>0\), and suppose that $H-L>\frac{\delta}{2}.$ In every group, an \(\varepsilon\)-share of agents have values distributed
uniformly on $[0,L]^N.$ For the remaining \(1-\varepsilon\) share, values are independent. In group \(0\), they satisfy
\[
V_j\mid r=0
\sim
\operatorname{Unif}[L,L+\delta]
\qquad
\text{for every }j,
\]
whereas in each group \(i=1,\dots,N\), they satisfy
\[
V_i\mid r=i
\sim
\operatorname{Unif}[H,H+\delta],
\qquad
V_j\mid r=i
\sim
\operatorname{Unif}[L,L+\delta]
\quad
\text{for every }j\neq i.
\]
Then, for all sufficiently small \(\varepsilon>0\), group \(0\) is excluded.
\end{example}
The small uniform component ensures that every group satisfies Assumption~\ref{ass:observable-groups}. Note also that the
highest value in group \(0\) is \(L+\delta\), while the average value in
specialist group \(i\) converges to \(H+\delta/2>L+\delta\) as
\(\varepsilon\downarrow0\). Thus, for all sufficiently small
\(\varepsilon>0\), the specialist groups dominate group \(0\) good by good.

Part~$(ii)$, in turn, gives a sufficient condition for activity. A group is guaranteed to be active when it has a sufficiently strong \emph{comparative advantage} in some set of goods \(B\). The left-hand side of \eqref{eq:comparative-advantage-participation-difference} measures the value of this advantage, while the right-hand side bounds the cost of supplying the group with the remaining goods. If the former exceeds the latter, exclusion cannot be optimal.

\begin{example}
\label{ex:parametric-comparative-advantage}
Suppose there are \(N\) groups and \(N\) goods. Fix \(H>L>0\),
\(\delta>0\), and a small \(\varepsilon>0\). In each group \(r\), an
\(\varepsilon\)-share of agents have values distributed uniformly on $[0,L]^N.$ Among the remaining agents, values are independently distributed according to
\[
V_r\mid r
\sim
\operatorname{Unif}[H,H+\delta],
\qquad
V_i\mid r
\sim
\operatorname{Unif}[L,L+\delta]
\quad
\text{for every }i\neq r.
\]
Suppose, moreover, that
\begin{equation}
\label{eq:parametric-comparative-advantage-gap}
\frac{\delta}{2}\sum_{i=1}^N s_i
<
(H-L)\min_r s_r.
\end{equation}
Then, for all sufficiently small \(\varepsilon>0\), every group is active.
\end{example}

When \(\varepsilon=0\), each group \(r\) has a
comparative advantage in good \(r\). Taking \(B=\{r\}\), condition
\eqref{eq:comparative-advantage-participation-difference} reduces to $(H-L)s_r
>
\frac{\delta}{2}\sum_{i=1}^N s_i,$ which follows from
\eqref{eq:parametric-comparative-advantage-gap}. Since the inequality is
strict, it continues to hold for all sufficiently small
\(\varepsilon>0\). Thus, each group's advantage in its specialized good is
large enough to justify serving it.

\paragraph{Competitive equilibrium with \emph{unequal} incomes.}
One natural way to incorporate observables into the mechanism is to adapt the
CEEI mechanism from the baseline model by adjusting agents' budgets of
artificial currency according to their labels. This produces a competitive
equilibrium with \emph{unequal} incomes, where all agents can use their
group-specific budgets to purchase goods at common, market-clearing prices.
Indeed, such mechanisms have been used in practice: for instance, Feeding
America's Choice System lets food banks bid for food donations using artificial
currency, with banks serving larger populations receiving higher budgets
\citep{prendergast2017food,prendergast2022allocation}. I now define the CEUI
mechanism formally:

\begin{definition}
\label{def:CEUI}
A \textbf{competitive equilibrium with unequal incomes} (CEUI) consists of a
common price vector \(p\in\mathbb R_{++}^N\), group-specific budgets
\(b_1,\dots,b_R\geq0\), and group-specific allocation rules
\(x_r:\Gamma\to\mathbb R_+^N\) such that markets clear,
\[
\sum_{r=1}^R
\pi_r
\int_\Gamma x_r(\theta)\,dG_r(\theta)
=
s,
\]
and every type chooses an optimal bundle subject to her group's budget:
\[
x_r(\theta)
\in
\argmax_{z\in\mathbb R_+^N}
\left\{
\theta\cdot z:
p\cdot z\leq b_r
\right\}
\qquad
\text{for every }\theta\in\Gamma
\text{ and every }r.
\]
\end{definition}
However, I show that this way of accounting for observables is generically suboptimal. To state this
result, I use a notion of genericity analogous to Definition~\ref{def:stability}.
\begin{definition}
A property holds \textbf{generically at \(((\lambda_r^*,g_r^*)_{r=1}^R,s^*)\)} if there exists \(\varepsilon>0\) such that the set of perturbations $
\bigl((l_r,h_r)_{r=1}^R,\Delta s\bigr)$ satisfying
\[
\lambda_r=e^{l_r}\lambda_r^*,
\quad
g_r=
\frac{e^{h_r}g_r^*}
{\int_\Gamma e^{h_r}g_r^*\,dm},
\quad
s=s^*+\Delta s>0,
 \quad \text{and} \quad
\sum_{r=1}^R
\left(
\|l_r\|_{C^1(\Gamma)}
+
\|h_r\|_{C^2(\Gamma)}
\right)
+
\|\Delta s\|
<\varepsilon
\]
and inducing admissible primitive vectors for which the property holds contains an open and dense subset of this neighborhood.\footnote{Openness and density are understood with respect to the product topology on $[C^1(\Gamma)\times C^2(\Gamma)]^R\times\mathbb R^N$.}
\end{definition}

\begin{proposition}\label{prop:generic-common-pseudomarket}
A CEUI mechanism (with at least two active groups) is not generically
optimal at any primitives \(((\lambda_r^*,g_r^*)_{r=1}^R,
s^*)\) satisfying Assumption~\ref{ass:observable-groups}, with
\(g_r^*\in C^2(\Gamma)\), \(\lambda_r^*\in C^1(\Gamma)\),
and \(s^*\gg 0\).
\end{proposition}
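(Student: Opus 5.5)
The plan is to show that optimality of a CEUI imposes a knife-edge equality across groups, which an open and dense set of perturbations breaks. First I will characterize what a CEUI looks like within each group. Given common prices \(p\) and budget \(b_r>0\), every type in group \(r\) spends the whole budget on a good maximizing \(\theta_i/p_i\). So group \(r\) receives a pure option mechanism with options \(q^r_i=b_r/p_i\), that is, \(q^r=(b_r/b_1)\,q^1\). The key consequence is that the indifferent types \(\theta^{0,r}\), with components proportional to \(1/q^r_i\propto p_i\), \emph{coincide across all active groups}. They equal \(p/\sum_j p_j\) no matter what the group-specific densities are.

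Second, I will derive necessary conditions for a CEUI to be optimal in Problem~\ref{prob:observable-groups}. By the two-stage decomposition, optimality requires two things. Each group's split \(s^r\) must solve the outer problem~\eqref{eq:supply-splitting-problem}, and within each group the pure option mechanism must be optimal for \((\lambda_r,g_r,s^r/\pi_r)\). Every active group gets a pure option mechanism, and Theorem~\ref{thm:3} gives \(s^r\gg0\). I will therefore use the first-order conditions of the outer problem: the supergradients of the concave \(W_r\) at the active groups' \(s^r\) must share a common element, the vector of multipliers on \eqref{eq:observable-group-supply}. At a pure option optimum, the envelope logic of Section~\ref{sec:proofdetails} identifies the marginal value of supply with the supply costs \(c^r=J_r^{-1}A_r\) from Definition~\ref{def:supply_costs}. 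So I would aim to show that \(W_r\) is differentiable at \(s^r\) with gradient \(c^r\), for example from smooth dependence of \(q^r\) and \(\Gamma_i^r\) on supply. Optimality then forces \(c^r=c^{r'}\) for any two active groups. Combined with the first step, a CEUI optimum with two active groups \(r,r'\) must satisfy two conditions at once:
\[
\theta^{0,r}=\theta^{0,r'}\quad\text{and}\quad c^r=c^{r'}.
\]
The pure-option allocations are market clearing, so \(\theta^{0,r}\) is a function of \(g_r\) and \(s^r/\pi_r\). Fixing the split, the common indifferent type already pins down the supply ratios, and equal supply costs then add roughly \(N-1\) further restrictions. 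These restrictions involve \(\lambda_r\), which does not enter the market-clearing conditions at all.

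Third comes genericity. Density of the failure set follows by perturbing \(\lambda_r\) through \(l_r\). The pure-option regions \(\Gamma^r_i\) and the quantities \(q^r\) depend only on \((g_r,s^r)\), while \(A^r_i=\int_{\Gamma^r_i}\theta_i g_r\lambda_r\) moves linearly with \(\lambda_r\). A small bump in \(l_r\) supported inside one region \(\Gamma^r_i\) therefore changes \(c^r\) in the direction \(J_r^{-1}e_i\) while leaving \(c^{r'}\) unchanged. This happens at every candidate split, so the equality fails after an arbitrarily small perturbation. I also need the set of CEUI-suboptimal primitives to contain an open dense set, so openness must hold too. For this I would show that the CEUI equilibrium \((p,b)\) and the quantities \(c^r\) vary continuously in the primitives; Fact~\ref{fact:puremechs} gives uniqueness, and the implicit function theorem applied to market clearing gives continuity. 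A strict violation \(\|c^r-c^{r'}\|>0\) at every CEUI then persists nearby.

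The main obstacle is the second step: justifying that \(c^r\) is genuinely the gradient of \(W_r\), or at least that any CEUI optimum forces equality of some supergradient that the \(\lambda_r\)-bump moves. There is also a subtlety about the quantifiers. A CEUI is a family indexed by budgets, so the argument must rule out every budget vector simultaneously. I would handle this by noting that the CEUI equilibrium given budgets is unique, so one can minimize over the compact set of normalized budgets. If differentiability proves delicate, my fallback is a direct local improvement: shift \(\epsilon\) units of good \(i\) from group \(r'\) to group \(r\), and use the directional derivatives \(c^r_i\) and \(c^{r'}_i\), which should be obtainable by perturbing the pure options alone.
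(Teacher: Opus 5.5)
\paragraph{Steps 1 and 2 are sound.} Your first two steps match the paper. In a CEUI every active group faces the pure-option vector \(b_r q\) for a common \(q\). Optimality then forces \(c^{r}(q)=c^{r'}(q)\) across active groups, since the outer first-order conditions require equal gradients of the \(W_r\). Both of your routes to differentiability of \(W_r\) work. In the first, a concave function touched from below at \(s\) by the differentiable welfare of the perturbed pure-option family has superdifferential \(\{c^r\}\). In the second, the lower bounds \(W_r'(s;d)\ge c^r\cdot d\) in every direction force the same conclusion. The paper obtains this via the Lagrangian and Lemma~\ref{lem:pure_option_derivatives}. Your openness argument via compactness is also the paper's.

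\paragraph{The gap is density.} Your argument is pointwise. At a given candidate \((q,b)\), a bump in \(l_r\) supported in \(\Gamma_i(q)^\circ\) moves \(c^r(q)\) in direction \(J^r(q)^{-1}e_i\) and breaks the equality there. The candidates, however, are not isolated. For fixed \((g_r)\) and \(s\), the market-clearing pairs \((q,b)\) with active set \(\mathcal A\) form a continuum, generically of dimension \(|\mathcal A|-1\). After your perturbation, the equality \(c^{r_0}(q')=c^{r}(q')\) may simply hold at some other candidate \(q'\). An argument of this local type would equally ``prove'' nongenericity for a system with as many unknowns as equations, where zeros merely move under perturbation rather than disappear. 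Your remark about minimizing over a compact set of normalized budgets only yields openness once strict violation holds everywhere; it does not produce that global strict violation.

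Two ingredients are missing. First, a dimension count. The conditions
\[
\bigl(c^{r_0}(q)-c^r(q)\bigr)_{r\in\mathcal A\setminus\{r_0\}}=0, \qquad \sum_{r\in\mathcal A}s^r(q,b_r)=s
\]
are \(|\mathcal A|N\) equations in \(N-1+|\mathcal A|\) unknowns. This system is strictly overdetermined when \(|\mathcal A|,N\ge 2\). In your \(\lambda\)-only setup, a single pair's cost equality already suffices: it is \(N\) equations in the \(N-1\) free coordinates of \(q\).

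Second, a parametric transversality (Sard-type) theorem. This converts your surjectivity computation into the global statement you need: for a residual, hence dense, set of nearby primitives, the map has no zero anywhere on the compact candidate set. That set needs uniform bounds \(b_r\le\bar b_r\) and \(q_i\ge\underline q\) so the maps are \(C^1\) there, with its faces handled stratum by stratum. Your bump calculation is exactly the surjectivity input the paper uses. It has to be fed into this theorem rather than applied candidate by candidate.
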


Importantly, the result allows both prices and group-specific budgets to vary
with the perturbation. Nevertheless, it shows that the optimality of CEUI mechanisms is not
robust to small changes in the primitives. Intuitively, this is because the designer generally benefits from tailoring the relative quantities of different goods offered to each group to its comparative advantages. Group-specific budgets allow her to scale all quantities offered to a group, but common prices force the vectors of quantities offered to be proportional across groups.

\section{Discussion}\label{sec:discussion}

\paragraph{Implications for market design.}
The main lesson of my paper is that, in settings without transfers, the
welfare-maximizing mechanism depends on how agents' absolute and relative
values covary. When higher-value agents tend to be more selective, mechanisms
offering only pure options are likely to be optimal. When they tend to be less
selective, the designer can sometimes improve upon these mechanisms by offering
mixed allocations to reward agents willing to give up choice.

This logic informs design choices in a range of settings. Consider, for instance,
the allocation of food donations across food banks, studied by
\citet{prendergast2017food,prendergast2022allocation} and
\citet{altmann2023choice}: Feeding America, a national nonprofit organization whose network includes more
than 200 food banks, allocates donated food by allowing participating banks to
bid for it using an artificial currency. This currency
is distributed according to a formula based on the population each food bank
serves. However, this measure of need is imperfect because food banks also
differ in the generosity of local donors operating outside the system. My results suggest that if banks receiving fewer local donations also have weaker preferences over the
composition of the food they receive, the system could target them by letting banks bid on broad, pre-packaged assortments alongside individual donations. Since accepting a fixed composition is less
attractive than choosing items separately, these bundles would trade at a discount relative to their components. Banks with greater unmet need but weaker compositional preferences could then obtain more food with their
budgets, while more selective banks could continue to bid item by item. If
this relationship between unmet need and flexibility is weak or reversed,
however, the item-by-item pseudomarket is likely to be optimal.

The results in Section~\ref{sec:observable-groups} also
suggest that observable differences across food banks should affect more than
their artificial-currency budgets. For example, banks in agricultural regions
may receive ample fresh produce from local growers but relatively little
shelf-stable food. If this leads them to value shelf-stable items more highly
than banks in other regions, they have a ``comparative advantage'' in those items,
in the sense of Theorem~\ref{thm:3}. The optimal mechanism may then reserve
part of its stock of shelf-stable food for such banks, rather than responding
to this observable difference only by adjusting their overall budgets.

My analysis also speaks to the design of mechanisms for allocating affordable housing. This environment differs from the baseline model because eligible applicants and
vacant units arrive over time, and housing authorities often use dynamic
mechanisms such as waitlists or repeated lotteries. Nevertheless, I show in Appendix~\ref{app:waitlist} that the baseline model provides a reduced-form description of the steady states of such dynamic mechanisms when we interpret \(y_i(v)\) as an applicant's total discounted allocation. Under this interpretation, several
mechanisms used in practice correspond to the pure option mechanism. One
example is a repeated choice-based lottery, the dynamic analogue of the
lottery in Definition~\ref{def:pure_implementations}: in each period, a
separate lottery is held for each kind of development, applicants choose one,
and the available vacancies of each kind are allocated randomly among those who entered the corresponding lottery.\footnote{
Tokyo's Toei housing uses repeated choice-based lotteries, with a separate lottery for each application district; see \url{https://www.to-kousya.or.jp/kouei/toeibosyu/index_teiki.html}.}
Another example is a system with one waitlist for each development where, in equilibrium, more
oversubscribed developments come with longer waits.

Other mechanisms used in practice instead reward applicants who are willing
to accept less choice. For example, certain systems let applicants enter
several development-specific lotteries, so applicants willing to accept more
locations have more opportunities to receive housing.\footnote{For example,
the Amsterdam housing lottery allows applicants to enter two draws per week;
see
\url{https://www.wooninfo.nl/nieuws/2013/04/nieuw-een-woning-via-loting/}.}
Similarly, waitlist systems often allow applicants to reject offers, but
sometimes at the cost of losing priority or leaving the list. These mechanisms
do not necessarily implement the precise mixed option characterized in the
model, but they illustrate how willingness to accept a less tailored
allocation can be used as a screening instrument. Indeed, the literature on housing allocation has recognized the tradeoff between choice and targeting; however, it has focused on comparing extreme mechanisms assigning applicants without choice and letting them select a particular development \citep{arnosti2020design,waldinger2021targeting}. My paper instead
considers a general class of mechanisms and shows that intermediate designs
offering both routes may be optimal. For instance, public-housing systems that
use waitlists could offer both development- or location-specific lists and a
priority list on which applicants must accept any suitable unit that becomes
available; this priority list would be shorter in equilibrium. Applicants
could therefore choose between retaining control over where they are housed
and receiving housing sooner. Similarly, a repeated-lottery system could let
applicants choose between a lottery for a particular development and a pooled
lottery with better odds.

These mechanisms could be combined with
priorities based on observable characteristics, as outlined in
Section~\ref{sec:observable-groups}. Importantly, however, they would also
target applicants with high \emph{unobserved need} that is difficult to
document. For example, applicants in precarious informal living arrangements
may be unable to demonstrate the full extent of their housing insecurity but,
if this makes them more willing to give up choice, can reveal their need
through their willingness to accept any suitable unit in exchange for faster
assignment.

\paragraph{Future directions.}
This paper focuses on identifying conditions under which the pure option
mechanism is optimal. To establish these results, I develop a strong-duality
framework that could also be used to certify the optimality of other candidate
mechanisms. A natural next step would therefore be to use it to
extend the analysis beyond the pure option mechanism, for example by
characterizing the optimal mechanism within a restricted class of menus, such
as those that add a single mixed bundle to the pure options.

Another direction would be to allow agents to have heterogeneous endowments of the goods, which they can choose to bring into the mechanism and trade with others. Such a model would describe, for example, public-housing settings in which existing tenants can apply to relocate while retaining the right to remain in their current unit. Formally, this extension could be accommodated by adding agents' endowments to the designer's supply and imposing individual-rationality constraints that give each agent the option of keeping her endowment. This extension is particularly natural in the present framework. Once the participation constraints are incorporated through a Lagrangian, their shadow values simply augment the corresponding types' weights \(\lambda\). Under the appropriately chosen shadow values, the endowment problem therefore reduces to an instance of the baseline problem.

Finally, one could extend the model by relaxing the assumption of linear utility. Linearity is
natural when allocations represent probabilities or discounted assignments.
When agents receive physical quantities of several goods, however,
complementarities and substitutabilities may matter. While sufficiently strong
complementarities could overturn the precise optimality result for the pure option
mechanism, I conjecture that the broader logic behind my results would survive:
Pareto-efficient mechanisms are likely to remain optimal in a non-trivial set of
cases. Even in these richer environments, the absence of transfers leaves the
designer with only a limited set of instruments for redistributing rents: if
departures from Pareto efficiency shift rents only toward one group of agents,
the efficient mechanism remains optimal when the designer instead favors the other.

\section{Proof of Theorem \ref{thm:rent-measure-characterization}}\label{sec:proofdetails}
I now present the key steps in the proof of Theorem~\ref{thm:rent-measure-characterization} along with additional technical intuition; supporting facts and lemmas are shown in the appendix.

\paragraph{Characterizing incentive compatibility.} I first characterize incentive-compatible allocation rules in the renormalized problem using the partial order introduced in Definition \ref{def:partialorder}.

\begin{proposition}\label{thm:icchar}
A function \(U:\Gamma\to\mathbb R_+\) is the indirect utility function of some allocation rule \(x:\Gamma\to\mathbb R_+^N\) satisfying \eqref{eq:ICprime} if and only if \(U\) is convex
and satisfies the following condition:
\begin{equation}\label{eq:pseudomono}
\text{for every $i$ and every $\theta,\theta'$ in $\Gamma$ such that $\theta \succ_i \theta'$}, \quad
\frac{U(\theta')}{\theta_i'}\ \ge\ \frac{U(\theta)}{\theta_i}.
\tag{R}
\end{equation}
\end{proposition}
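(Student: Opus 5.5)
The plan is to pass from the simplex \(\Gamma\) to the cone \(\mathbb R^N_+\) by homogeneous extension. Then \eqref{eq:ICprime} with nonnegative allocations becomes the statement that the extended utility is a \emph{sublinear, coordinatewise nondecreasing} function, and \eqref{eq:pseudomono} turns out to be exactly that monotonicity.

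For \(U:\Gamma\to\mathbb R_+\), define \(\hat U(v):=(\sum_j v_j)\,U\!\bigl(v/\sum_j v_j\bigr)\) for \(v\neq \mathbf 0\), and \(\hat U(\mathbf 0)=0\). This is the perspective of \(U\), so \(\hat U\) is convex and positively homogeneous of degree one if and only if \(U\) is convex. By homogeneity, \(\hat U(\theta/\theta_i)=U(\theta)/\theta_i\). If \(\theta\succ_i\theta'\), then \(\theta/\theta_i\le \theta'/\theta_i'\) coordinatewise, since both vectors have \(i\)-th coordinate equal to \(1\). Hence \eqref{eq:pseudomono} says precisely that \(\hat U\) is nondecreasing along pairs of points that share a common \(i\)-th coordinate equal to \(1\).

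\textbf{Necessity.} If \(x\) satisfies \eqref{eq:ICprime}, then \(U(\theta)=\sup_{\theta'}\theta\cdot x(\theta')\). This makes \(\hat U(v)=\sup_{\theta'} v\cdot x(\theta')\) a supremum of linear functionals with nonnegative coefficients. Such a supremum is convex, homogeneous and coordinatewise nondecreasing, which gives convexity of \(U\) and, by the observation above, \eqref{eq:pseudomono}.

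\textbf{Sufficiency, step 1 (monotonicity).} First I upgrade \eqref{eq:pseudomono} to full monotonicity of \(\hat U\) on the cone. It suffices to treat \(w=v+te_k\) with \(t>0\) and chain such single-coordinate increases.
\begin{itemize}
\item If \(v_i>0\) for some \(i\neq k\), rescale so that \(v_i=1\). Then \(v\) and \(w\) share \(i\)-th coordinate \(1\) and \(v\le w\), so \(v/\sum v\succ_i w/\sum w\). Condition \eqref{eq:pseudomono} then yields \(\hat U(v)\le\hat U(w)\).
\item If \(v\) is supported only on coordinate \(k\), I order the chain so that coordinate \(k\) is raised last, after other coordinates have become positive. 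Alternatively, I use the partial order \(\succ_k\) directly, since raising other coordinates at fixed \(v_k\) moves the type away from \(e_k\).
\end{itemize}

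\textbf{Sufficiency, step 2 (construction).} I set \(x(\theta)\) to be a subgradient of \(\hat U\) at \(\theta\). At such a point Euler's identity for homogeneous functions gives \(\theta\cdot x(\theta)=\hat U(\theta)=U(\theta)\). The subgradient inequality combined with homogeneity gives \(\hat U(\theta')\ge x(\theta)\cdot\theta'\), which is exactly \eqref{eq:ICprime}. Monotonicity of a convex function forces every subgradient at an interior point of the cone to be nonnegative, so the allocation lies in \(\mathbb R^N_+\).

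\textbf{Main obstacle: the boundary of \(\Gamma\).} At boundary points a convex function on a polytope may have empty subdifferential, with unbounded slopes or even an upward jump. I plan to handle this face by face.
\begin{itemize}
\item Take \(\theta\) in the relative interior of a face \(\Gamma_S=\{\theta_j=0,\ j\notin S\}\). Restrict \(\hat U\) to the subcone \(\mathbb R^S_+\), where it is still convex, homogeneous and nondecreasing. Take a subgradient \(x_S\) of this restriction at \(\theta\); it exists because \(\theta\) is in the relative interior, and it is nonnegative by monotonicity. Set \(x_j(\theta)=0\) for \(j\notin S\).
\item For any \(\theta'\), let \(P_S\theta'\) zero out the coordinates outside \(S\). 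Monotonicity gives \(\hat U(\theta')\ge\hat U(P_S\theta')\). Combined with the face subgradient inequality, this gives \(\hat U(\theta')\ge x_S\cdot P_S\theta'=x(\theta)\cdot\theta'\).
\end{itemize}
This yields IC at every point without needing any semicontinuity of \(U\) across faces. The delicate parts are checking that monotonicity established on the interior extends to points on lower-dimensional faces, and choosing \(x\) measurably, for example via a measurable selection of subgradients. I expect this boundary bookkeeping to be where the real work lies.
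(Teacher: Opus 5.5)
Your argument is correct. Its first half matches the paper's proof: the paper also passes to the homogeneous extension $\tilde U$ on $\mathbb R^N_+$ and derives coordinatewise monotonicity from \eqref{eq:pseudomono} by the same rescaling trick (a coordinate $r_i>0$ with $i\neq j$, and $\alpha\succ_i\beta$).

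Two small remarks on your Step 1. First, neither of your suggested fixes covers the comparison $ae_k\le(a+t)e_k$, since there is no other coordinate to raise. It follows at once from homogeneity and $U\ge 0$, and so does $\hat U(\mathbf 0)\le\hat U(w)$. Second, monotonicity needs no separate extension to the faces. Condition \eqref{eq:pseudomono} applies to boundary types with $\theta_i,\theta_i'>0$, so your argument already works on the closed cone.

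Where you genuinely diverge from the paper is the construction of $x$.
\begin{itemize}
\item The paper first shows that $\tilde U$ is Lipschitz: monotonicity plus subadditivity give $\tilde U(v)\le\tilde U(w)+M\|v-w\|_1$. Hence $v\mapsto\tilde U(v_+)$ is a closed sublinear function. The paper then invokes the support-function theorem to write $U(\theta)=\max_{z\in X}\theta\cdot z$ for a compact convex $X\subseteq[0,M]^N$. Any measurable maximizer is then IC and nonnegative, at interior and boundary points alike.
\item You instead produce menu items pointwise. Each is a zero-padded subgradient of $\hat U$ restricted to the face whose relative interior contains $\theta$. Euler's identity gives $\theta\cdot x(\theta)=U(\theta)$, and the inequality $\hat U(\theta')\ge\hat U(P_S\theta')$ gives IC.
\end{itemize}

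Your route is more elementary: it uses only the existence of subgradients of a finite convex function on an open set. It also never needs continuity of $U$ at $\partial\Gamma$, which is not assumed and which the paper must derive first.

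The paper's route, in exchange, delivers global objects that it reuses later. Lipschitz continuity and the compact menu $X$ feed Corollary~\ref{lem:utility-space-corr} and the repair lemma.

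Two further properties are easy to obtain in your construction. Boundedness is immediate, since $x_k(\theta)\le U(e_k)$ by sublinearity. Measurability follows by pasting measurable selections of the closed-graph subdifferential maps over the finitely many relative interiors of faces.
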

Incentive-compatible allocation rules must produce convex indirect utility functions $U$, as they are maxima of affine functions of $\theta$, i.e. $U(\theta)=\max_{\theta'\in\Gamma}\ \theta\cdot x(\theta').$ Condition \eqref{eq:pseudomono} additionally restricts how fast indirect utility $U(\theta)$ can grow as $\theta$ moves towards the vertex $e_i$. To understand why \eqref{eq:pseudomono} is necessary for incentive compatibility, fix any good $i$ and two types such that $\theta \succ_i \theta'$. Note that normalizing $U(\theta)$ by $\theta_i$ gives
\[
\frac{U(\theta)}{\theta_i} \ = \ \sum_{k\neq i} \frac{\theta_k}{\theta_i} \ x_k(\theta) \ + \ x_i(\theta).
\]
We can then equivalently think of type-$\theta$ agents as maximizing their scaled utilities $U(\theta)/\theta_i$. Recall also that by the definition of the $\succ_i$-order, all the ratios $\theta_k'/\theta_i'$ are higher for $\theta'$ than for $\theta$. This implies that type \(\theta'\) can always guarantee a higher scaled
indirect utility than type \(\theta\). Indeed, by reporting \(\theta\), she
can obtain \(x(\theta)\), so
\[
\frac{U(\theta')}{\theta_i'}
\ge
\sum_{k\neq i}\frac{\theta_k'}{\theta_i'}x_k(\theta)+x_i(\theta)
\ge
\sum_{k\neq i}\frac{\theta_k}{\theta_i}x_k(\theta)+x_i(\theta)
=
\frac{U(\theta)}{\theta_i},
\]
where the second inequality follows from
\(\theta_k'/\theta_i'\ge\theta_k/\theta_i\) for every \(k\neq i\). As it turns out, convexity of $U(\theta)$ and \eqref{eq:pseudomono} are also sufficient for incentive compatibility.\footnote{While working with \eqref{eq:ICprime} in the simplex representation $\Gamma$ is more analytically convenient in my setting, one can also characterize it in terms of an indirect utility $\tilde U:\mathbb R_+^N\to\mathbb R$ defined on unnormalized values $v$. In a quasilinear model with transfers, an allocation rule is incentive compatible if and only if its indirect utility $\tilde U$ is convex and non-decreasing in each coordinate \citep{ROCHET1987191}.
Without transfers, incentive compatibility additionally forces $\tilde U$ to be positively homogeneous of degree one: for all $v\in\mathbb R_+^N$ and $k>0$, $\tilde U(kv)=k\,\tilde U(v)$ \citep{lahr2024extreme}.}

\paragraph{Expressing the problem in utility space.} It will be useful to reformulate the designer's problem in terms of indirect utility functions because, as shown in Proposition~\ref{thm:icchar}, incentive compatibility can be characterized in terms of their shape. Indeed, writing the objective as an integral over weighted indirect utilities, rather than weighted allocations, is standard in the multidimensional screening literature.\footnote{See, for instance, \cite{armstrong1996multiproduct, rochet1998ironing, manelli2006bundling,daskalakis2013mechanism,daskalakis}.} The reformulation relies on the following result.

\begin{corollary}\label{lem:utility-space-corr}
Every indirect utility function \(U\) implemented by some allocation rule
satisfying \eqref{eq:ICprime} is Lipschitz continuous. Let \(\nabla_H U\) denote the gradient of \(U\) intrinsic to the hyperplane containing the simplex \(\Gamma\) and, at every
\(\theta\in\Gamma^\circ\) where \(\nabla_H U(\theta)\) exists, define
\begin{equation}\label{eq:x-from-U}
x_U(\theta)
:=
\nabla_H U(\theta)
-
\mathbf 1\bigl(\nabla_H U(\theta)\cdot\theta-U(\theta)\bigr).
\end{equation}
Then
\(x_U(\theta)\in\mathbb R_+^N\), \(x_U\) is uniformly bounded, and
\(U(\theta)=\theta\cdot x_U(\theta)\) almost everywhere. Furthermore, every
allocation rule satisfying \eqref{eq:ICprime} and implementing \(U\)
coincides with \(x_U\) a.e.
\end{corollary}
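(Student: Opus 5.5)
The plan is to work with the homogeneous extension $\tilde U(v):=\sum_j v_j\,U(v/\sum_j v_j)$ on $\mathbb R^N_+$, which turns the simplex geometry into standard convex analysis. Since $x$ satisfies \eqref{eq:ICprime}, $\tilde U(v)=\sup_{\theta'}v\cdot x(\theta')$ is a supremum of linear functions of $v$. It is therefore convex, positively homogeneous of degree one, and nondecreasing in each coordinate, because every $x(\theta')\ge 0$. The first step is a uniform bound on allocations. By Proposition~\ref{prop:P_redundant} and the remark following it, feasible allocation rules are uniformly bounded, so $x(\theta')\in[0,K]^N$ for some $K$. Then $\tilde U$ is a supremum of linear functions with gradients in $[0,K]^N$, hence globally $K\sqrt N$-Lipschitz, and its restriction $U$ to $\Gamma$ is Lipschitz. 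If I want the statement for arbitrary IC rules rather than only feasible ones, I would instead use $U\le \max_\theta U<\infty$ (finiteness follows from convexity being finite on a compact set) together with nonnegativity of $x$. Since $\theta\cdot x(\theta')\le U(\theta)$ for all $\theta$, testing $\theta$ near the vertices $e_i$ gives $x_i(\theta')\le U(e_i)$, and this bounds all allocations.

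Next I would identify $x_U$ with the Euclidean gradient of $\tilde U$. By Rademacher's theorem (or Alexandrov, since $U$ is convex), $U$ is differentiable at a.e.\ $\theta\in\Gamma^\circ$, and at such points $\tilde U$ is differentiable at $v=\theta$ by homogeneity. Write $p:=\nabla\tilde U(\theta)$. Euler's identity gives $p\cdot\theta=\tilde U(\theta)=U(\theta)$, and the tangential component of $p$ along the hyperplane $\{\sum v_j=1\}$ equals $\nabla_H U(\theta)$. Decomposing $p=\nabla_H U(\theta)+\alpha\mathbf 1$ (with $\nabla_H U\cdot\mathbf 1=0$ and $\theta\cdot\mathbf 1=1$), Euler's identity gives $\alpha=U(\theta)-\nabla_H U(\theta)\cdot\theta$. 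This is exactly \eqref{eq:x-from-U}, so $x_U(\theta)=p$ and $U(\theta)=\theta\cdot x_U(\theta)$. Since $\tilde U$ is nondecreasing, $p\ge0$, and the Lipschitz bound gives $|p|\le K\sqrt N$, so $x_U$ is nonnegative and uniformly bounded.

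Finally, uniqueness comes from the envelope/subgradient argument. For any IC $x$ implementing $U$, we have $\tilde U(v)\ge v\cdot x(\theta)$ for all $v$, with equality at $v=\theta$. So $x(\theta)$ is a subgradient of $\tilde U$ at $\theta$. Wherever $\tilde U$ is differentiable, the subdifferential is the singleton $\{\nabla\tilde U(\theta)\}$, which forces $x(\theta)=x_U(\theta)$ a.e. The main obstacle is the step relating the intrinsic gradient on the hyperplane to the full gradient of $\tilde U$. I would verify full differentiability of $\tilde U$ at $\theta$ from intrinsic differentiability of $U$ by writing $\tilde U(\theta+h)=(1+\mathbf 1\cdot h)\,U\bigl((\theta+h)/(1+\mathbf 1\cdot h)\bigr)$ and expanding to first order.
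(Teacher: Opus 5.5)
Your proof is correct, but it gets there by a more direct route than the paper.

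\textbf{The paper's route.} The paper deduces the corollary from a stronger statement, Lemma~\ref{lem:utility-space-formulation}, whose only hypothesis is that $U\ge0$ is convex and satisfies \eqref{eq:pseudomono}. It extends $U$ homogeneously and derives coordinatewise monotonicity of $\tilde U$ from \eqref{eq:pseudomono}. It then gets Lipschitz continuity from monotonicity plus sublinearity, namely $\tilde U(v)\le\tilde U(w)+M\|v-w\|_1$ with $M=\max_iU(e_i)$. Next it uses the support-function representation of sublinear functions to write $U(\theta)=\max_{z\in X}\theta\cdot z$ with $X\subseteq[0,M]^N$. Finally, $x_U$ is identified as the unique maximizer by projecting onto the tangent space, and nonnegativity comes from $X\subseteq\mathbb R^N_+$.

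\textbf{Your route.} You work with the implementing rule itself, so that $\tilde U(v)=\sup_{\theta'}v\cdot x(\theta')$ with $x(\theta')\in[0,\max_iU(e_i)]^N$. Your vertex bound $x_i(\theta')\le U(e_i)$ is the right argument. The detour through Proposition~\ref{prop:P_redundant} only covers feasible rules and can be dropped. Monotonicity, convexity and the Lipschitz bound then come for free. You identify $x_U$ as the Euclidean gradient of $\tilde U$ through the first-order expansion and Euler's identity, and you get nonnegativity from monotonicity.

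\textbf{What each buys.} Your argument is more elementary: it never uses \eqref{eq:pseudomono} or the support-function theorem. The paper's lemma gives more, because the same argument also proves the sufficiency half of Proposition~\ref{thm:icchar}: every nonnegative convex $U$ satisfying \eqref{eq:pseudomono} is implementable. Your argument cannot give this, since it presupposes an implementing rule. The uniqueness step is the same subgradient/envelope argument in both proofs. In both, it in fact yields $x(\theta)=x_U(\theta)$ at every interior point where $\nabla_HU$ exists, not merely almost everywhere.
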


Consequently, Problem~\ref{prob:renormalized} can be written as
\begin{equation}\label{eq:U-objective}
\sup_{U\in\mathcal U}
\int_\Gamma \lambda(\theta)U(\theta)g(\theta)d\theta
\quad
\text{subject to}
\quad
\int_\Gamma x_U(\theta)g(\theta)d\theta\le s,
\end{equation}
where
\begin{equation}\label{eq:UL}
\mathcal U
:=
\left\{
U\in C(\Gamma):
U\geq 0,\ 
U\text{ is convex and satisfies }\eqref{eq:pseudomono}
\right\}.
\end{equation}
Next, I incorporate the supply constraint in \eqref{eq:U-objective} into the objective using the supply costs from Definition~\ref{def:supply_costs}, and explain the economic intuition behind their construction.

\paragraph{Supply costs: construction and intuition.} Consider an exercise in which the designer can allocate any amount of the \(N\) goods, but must pay per-unit costs \(c=(c_1,\dots,c_N)\) for them. Starting from the pure option mechanism with quantities \(q=(q_1,\dots,q_N)\), ask: what cost vector would make the designer indifferent about marginally perturbing these quantities? To pin down those costs, choose any good \(k\) and consider increasing \(q_k\) by \(\epsilon\), holding the other quantities fixed. To first order, this perturbation has two effects, illustrated in Figure~\ref{fg:multiplierpic}. First, agents in \(\Gamma_k\) who chose option \(k\) before continue to do so, but now receive a higher quantity. This raises their utility and incurs an additional cost of \(c_k\,\epsilon\) per agent. Second, the perturbation induces some agents who previously chose \(q_j\), \(j\neq k\), to switch to \(q_k\). For each such agent, the designer pays \(c_k q_k\) instead of \(c_j q_j\). The direct welfare effects of these switchers are not first-order: both their mass and their welfare change are of order \(\epsilon\). As \(\epsilon\) becomes small, the sum of these effects, divided by \(\epsilon\), converges to:
\[
A_k - c_k M_k + \sum_{j\neq k} T_{kj}\,(c_j q_j - c_k q_k).
\]
Thus, the system $Jc = A$ defining the supply costs captures precisely the first-order conditions ensuring that such perturbations are not beneficial.

\begin{figure}[h!]
  \centering
\begin{tikzpicture}[scale=2.5, line join=round, line cap=round]
  \coordinate (ei) at (90:1);
  \coordinate (ej) at (210:1);
  \coordinate (ek) at (330:1);

  \path[use as bounding box] (-1.25,-0.95) rectangle (1.25,1.25);

  \coordinate (thp) at (0.20,-0.05); 

  \path[name path=sideIEJ] (ei)--(ej);
  \path[name path=sideIEK] (ei)--(ek);
  \path[name path=sideJEK] (ej)--(ek);

  \path[name path=rayJ] (ej)--($(ej)!3!(thp)$);
  \path[name path=rayI] (ei)--($(ei)!3!(thp)$);
  \path[name path=rayK] (ek)--($(ek)!3!(thp)$);

  \path[name intersections={of=rayJ and sideIEK, by={Pj}}];
  \path[name intersections={of=rayI and sideJEK, by={Pi}}];
  \path[name intersections={of=rayK and sideIEJ, by={Pk}}];

  \coordinate (th1) at ($(ek)!1.1!(thp)$);

  \path[name path=rayI1] (ei)--($(ei)!3!(th1)$);
  \path[name path=rayJ1] (ej)--($(ej)!3!(th1)$);

  \path[name intersections={of=rayI1 and sideJEK, by={Qi}}];
  \path[name intersections={of=rayJ1 and sideIEK, by={Qj}}];

  \fill[darkgreen, opacity=0.50]
    (th1) -- (Qi) -- (Pi) -- (thp) -- (Pj) -- (Qj) -- cycle;

  \draw[darkdarkgreen, very thick] (ei) -- (Qi);
  \draw[darkdarkgreen, very thick] (ej) -- (Qj);

  \draw[black, very thick] (ei) -- (Pi);
  \draw[black, very thick] (ej) -- (Pj);

  \draw[grey, thick] (ek) -- (Pk);

  \node[above]       at (ei) {$e_i$};
  \node[below left]  at (ej) {$e_j$};
  \node[below right] at (ek) {$e_k$};

  \fill (thp) circle (0.03);
  \fill (th1) circle (0.03);

  \draw[thick] (ei)--(ej)--(ek)--cycle;
    \fill[plum, opacity=0.50] (ek) -- (Pj) -- (thp) -- (Pi) -- cycle;
\end{tikzpicture}
            \vspace*{-4.2mm} 
            \captionsetup{width=0.6\linewidth}
            \caption{First-order effects of increasing the pure option $q_k$. Agents in the violet region receive more of good $k$; agents in the green region switch from other goods to $k$.}
            \label{fg:multiplierpic}
\end{figure}

I now write $U_{\mathrm{pure}}$ for the indirect utility from the pure option mechanism. The following result shows that we can verify the optimality of $U_{\mathrm{pure}}$ using the Lagrangian relaxation of~\eqref{eq:U-objective}, with the multipliers on the supply constraints set to $c=J^{-1}A$.

\begin{proposition}\label{prop:supply_cost_characterization}
\(U_{\mathrm{pure}}\) solves \eqref{eq:U-objective} if and only if
\(U_{\mathrm{pure}}\) solves
\begin{equation}\label{eq:problem-supplycost}
  \max_{U\in\mathcal U}
\left\{
\int_\Gamma \lambda(\theta)U(\theta)g(\theta)d\theta
-
c\cdot\left(\int_\Gamma x_U(\theta)g(\theta)d\theta-s\right)
\right\},
\qquad
c=J^{-1}A.
\end{equation}
\end{proposition}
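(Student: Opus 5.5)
The argument has two directions. The "if" direction is weak duality. Suppose \(U_{\mathrm{pure}}\) maximizes the Lagrangian \eqref{eq:problem-supplycost} over \(\mathcal U\). Take any \(U\in\mathcal U\) that is feasible for \eqref{eq:U-objective}. Because \(c>0\) (Fact~\ref{fact:supply_cost_properties}) and \(\int x_U g\le s\), the penalty term \(-c\cdot(\int x_U g - s)\) is nonnegative. Hence the objective of \(U\) is at most its Lagrangian value, which is at most the Lagrangian value of \(U_{\mathrm{pure}}\). The pure option mechanism clears the market exactly, \(\int x_{U_{\mathrm{pure}}}g=s\), so its Lagrangian value equals its objective. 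Therefore \(U_{\mathrm{pure}}\) solves \eqref{eq:U-objective}.

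The "only if" direction is the substantive part. I would proceed in three steps.

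\textbf{Step 1: show the problem is a concave program satisfying a Slater condition.} The set \(\mathcal U\) is a convex cone. The objective and the map \(U\mapsto\int x_U g\) are linear in \(U\): by \eqref{eq:x-from-U}, \(x_U\) is linear in \((U,\nabla_H U)\). For an interior point, take the pure option mechanism scaled by \(1/2\), which uses strictly less than \(s\) of every good. Since the problem is linear with finitely many constraints, Lagrangian duality in the form of a separating hyperplane in \(\mathbb R^{N+1}\) gives the following. If \(U_{\mathrm{pure}}\) is optimal, there is some \(\hat c\ge0\) such that \(U_{\mathrm{pure}}\) maximizes the Lagrangian with multiplier \(\hat c\), and complementary slackness holds. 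I would apply the separation to the convex image set \(\{(\int\lambda Ug,\ s-\int x_Ug)\}\), so that no infinite-dimensional topology is needed.

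\textbf{Step 2: identify \(\hat c=c\).} Because \(U_{\mathrm{pure}}\) maximizes the \(\hat c\)-Lagrangian over the cone \(\mathcal U\), every admissible one-sided perturbation must have nonpositive first-order effect. Take the family of pure option menus with quantities \(q+\epsilon e_k\). These lie in \(\mathcal U\) for small \(\epsilon\) of either sign, since they are pure option IC allocations; they merely fail to clear the market, which is irrelevant in the Lagrangian. So the Lagrangian is differentiable along this two-sided direction, and its derivative must vanish.

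Computing that derivative reproduces exactly the heuristic in the text:
\[
A_k-\hat c_kM_k+\sum_{j\ne k}T_{kj}(\hat c_jq_j-\hat c_kq_k)=0.
\]
The three pieces come from different sources. The term \(A_k\) is the direct welfare gain. The term \(-\hat c_kM_k\) is the extra supply used by agents already choosing option \(k\). The sum comes from the switching mass near \(\Gamma_k\cap\Gamma_j\). Those switchers have only a second-order welfare effect, because they were indifferent, but a first-order supply effect.

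Collecting the \(N\) equations gives \((J\hat c)_k=A_k\) for each \(k\). Since \(J\) is invertible by Fact~\ref{fact:supply_cost_properties}, \(\hat c=J^{-1}A=c\).

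The main obstacle is this derivative computation. I need the boundary flux: the mass of types crossing \(\Gamma_k\cap\Gamma_j\) per unit \(\epsilon\). This requires computing the normal velocity of the indifference hyperplane \(\theta_jq_j=\theta_k(q_k+\epsilon)\) inside the simplex. I expect that velocity to yield the factor \(\theta_k/\sqrt{q_k^2+q_j^2-\frac1N(q_k-q_j)^2}\) appearing in \(T_{kj}\). To justify differentiating under the integral, I would use the continuity and integrability of \(g\) on \((N-2)\)-dimensional slices, which Assumption~\ref{ass:regong} provides through its trace theorem.

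One further point needs care: the shadow price must not be degenerate. If \(U_{\mathrm{pure}}\) were optimal but a multiplier \(\hat c_0\) on the objective vanished, the dual argument would break. The Slater point from Step 1 rules this out.
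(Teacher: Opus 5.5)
Your proposal is correct and follows essentially the same route as the paper. The ``if'' direction is weak duality. For the ``only if'' direction, a finite-dimensional duality argument produces a multiplier $\hat c\ge0$ for which $U_{\mathrm{pure}}$ maximizes the Lagrangian, and that multiplier is then pinned down as $J^{-1}A$ by differentiating along the pure-option family $U_{q+th}$ and using the invertibility of $J$.

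The only difference is cosmetic. The paper obtains the multiplier as a supergradient of the concave value function $P(r)=\sup\{\int\lambda Ug:\ U\in\mathcal U,\ \int x_Ug\le r\}$ at the interior point $s\gg0$, which is equivalent to your Slater-plus-separation argument.
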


\paragraph{Measure representation of the objective.} I now rewrite the objective in \eqref{eq:problem-supplycost} using a signed measure and show that this measure nets to zero on each option-specific region \(\Gamma_i\). Since \(U_{\mathrm{pure}}(\theta)/\theta_*(\theta)\) is constant on each \(\Gamma_i\), this zero-mass property implies that the pure option mechanism has value zero in this representation.
\begin{lemma}\label{lemma:measure-objective}
Let  \(c=J^{-1}A\). Up to the constant term \(c\cdot s\), the objective in \eqref{eq:problem-supplycost} can be written as
\begin{equation}\label{eq:measureobj}
  \int_{\Gamma}\frac{U(\theta)}{\theta_*(\theta)} d\mu(\theta),
\end{equation}
where the signed measure $\mu$ is given by \eqref{eq:signedmeas}. Moreover, \(\mu\) satisfies the following balance conditions:
\begin{equation}\label{eq:gammai-balance}
\text{for all $i$,} \quad \mu(\Gamma_i)=0.
\end{equation}
Consequently,
\begin{equation}\label{eq:primalzeropure}
\int_\Gamma \frac{U_{\mathrm{pure}}(\theta)}{\theta_*(\theta)}\,d\mu(\theta)
=
\sum_{i=1}^N q_i\,\mu(\Gamma_i)=0.
\end{equation}
\end{lemma}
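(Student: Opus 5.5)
The plan is to rewrite the Lagrangian objective in terms of $U$ alone, integrate by parts on $\Gamma$, and then read the balance conditions off the definition $Jc=A$.

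\emph{Step 1: express $x_U$ through $U$.} By Corollary~\ref{lem:utility-space-corr}, a.e. we have $x_U=\nabla_H U-\mathbf 1(\nabla_H U\cdot\theta-U)$. Hence
\[
c\cdot x_U=\bigl(c-\theta\textstyle\sum_j c_j\bigr)\cdot\nabla_H U+U\textstyle\sum_j c_j .
\]
The vector field $c-\theta\sum_j c_j$ sums to zero across coordinates, so it is tangent to the hyperplane containing $\Gamma$. The objective in \eqref{eq:problem-supplycost} therefore equals $c\cdot s$ plus
\[
\int_\Gamma \Bigl[\bigl(\lambda-\textstyle\sum_j c_j\bigr)U g-\bigl(c-\theta\textstyle\sum_j c_j\bigr)g\cdot\nabla_H U\Bigr]d\theta .
\]

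\emph{Step 2: integrate by parts.} $U$ is Lipschitz, and by Assumption~\ref{ass:regong} $g$ lies in $H^1$, so the divergence theorem on the simplex (within the hyperplane) applies. It turns the gradient term into
\[
\int_\Gamma U\operatorname{div}\bigl((c-\theta\textstyle\sum_j c_j)g\bigr)d\theta-\int_{\partial\Gamma}U\bigl(c-\theta\textstyle\sum_j c_j\bigr)g\cdot\nu\,d\sigma .
\]
Writing $U=\theta_*\cdot(U/\theta_*)$ and comparing with \eqref{eq:signedmeas} gives \eqref{eq:measureobj}. The only care needed is that $\theta_*$ is bounded away from zero on each $\Gamma_i$, since $\theta_i\ge\theta^0_i>0$ there, so $U/\theta_*$ is well defined and bounded.

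\emph{Step 3: balance conditions.} This is the main step. I would compute $\mu(\Gamma_i)$ by applying the identity of Step 2 to the test function $U_i:=\theta_i\mathbf 1_{\Gamma_i}$ localized to $\Gamma_i$. Equivalently, integrate $\theta_i\cdot[\dots]$ over $\Gamma_i$ and use the divergence theorem on $\Gamma_i$ itself.

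The boundary of $\Gamma_i$ splits into two parts. The part on $\partial\Gamma$ cancels exactly against the explicit boundary term in $\mu$. The remaining part is the interior faces $\Gamma_i\cap\Gamma_j$, which contribute flux terms of the form $\int_{\Gamma_i\cap\Gamma_j}\theta_i g\,(c-\theta\sum c)\cdot n_{ij}\,d\sigma$.

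On $\Gamma_i\cap\Gamma_j$ we have $\theta_iq_i=\theta_jq_j$, so the face normal within the hyperplane is proportional to the projection of $q_ie_i-q_je_j$. The norm of that projection is $\sqrt{q_i^2+q_j^2-\tfrac1N(q_i-q_j)^2}$, which is exactly the normalizer in $T_{ij}$. On this face $\theta\cdot(q_ie_i-q_je_j)=0$, so the $\theta\sum c$ part of the field drops out of the flux. The flux then reduces to $T_{ij}(c_iq_i-c_jq_j)$, with orientation to be checked.

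The interior part needs one more identity. Because $\theta_i$ is linear, the product rule gives $\theta_i\operatorname{div}(Fg)=\operatorname{div}(\theta_iFg)-g\,F\cdot\nabla_H\theta_i$. With $F=c-\theta\sum c$ one has $F\cdot\nabla_H\theta_i=c_i-\theta_i\sum c$, so the volume term is $-\int_{\Gamma_i}g\,(c_i-\theta_i\sum c)$. Adding the $(\lambda-\sum c)\theta_i g$ term, the $\theta_i\sum c$ pieces cancel and the volume contribution is $A_i-c_iM_i$.

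Altogether, $\mu(\Gamma_i)=A_i-c_iM_i+\sum_{j\ne i}T_{ij}(c_jq_j-c_iq_i)=A_i-(Jc)_i$, which vanishes because $c=J^{-1}A$ (well defined by Fact~\ref{fact:supply_cost_properties}). The main obstacle I expect is the sign and orientation bookkeeping of the face flux, together with justifying the divergence theorem on the polyhedral pieces $\Gamma_i$ with $H^1$ density, which should follow by density of smooth functions and the trace theorem.

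\emph{Step 4: conclude.} $U_{\mathrm{pure}}=q_i\theta_i$ on $\Gamma_i$, so $U_{\mathrm{pure}}/\theta_*=q_i$ there. Then \eqref{eq:primalzeropure} follows immediately from the balance conditions, since the faces $\Gamma_i\cap\Gamma_j$ are $\mu$-null: $\mu$ has no singular part on interior faces.
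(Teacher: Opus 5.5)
Your proposal is correct and follows the paper's proof essentially step for step. You substitute the formula for $x_U$, integrate by parts on $\Gamma$ via Theorem~\ref{th:divonH}, then apply the same theorem on each $\Gamma_i$ with $\eta=\theta_i$ and $X=(c-\theta\sum_j c_j)g$. The $\partial\Gamma$ part cancels against the boundary term of $\mu$, and the interface fluxes evaluated using $q_i\theta_i=q_j\theta_j$ give $\mu(\Gamma_i)=A_i-(Jc)_i=0$.

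Two small points need tidying, neither a gap:
\begin{itemize}
\item The Step~2 identity does not apply literally to the test function $\theta_i\mathbf 1_{\Gamma_i}$, since that function is not Lipschitz. However, your ``equivalent'' route through the divergence theorem on $\Gamma_i$ is exactly what the paper does.
\item Your intermediate flux $T_{ij}(c_iq_i-c_jq_j)$ corresponds to the inward conormal. The outward one gives $T_{ij}(c_jq_j-c_iq_i)$, which is the sign your final formula correctly uses.
\end{itemize}
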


The rent measure \(\mu\) assigns zero mass to the indifference boundaries
between distinct pure-option regions. Hence, on each region \(\Gamma_i\),
we have \(\theta_*(\theta)=\theta_i\) for \(\mu\)-almost every \(\theta\), and
therefore
\[
\frac{U_{\mathrm{pure}}(\theta)}{\theta_*(\theta)}=q_i
\qquad
\text{for \(\mu\)-almost every }\theta\in\Gamma_i.
\]
Thus, at the pure option mechanism, the rent measure weights the quantity of
the good chosen by each type. It therefore captures the designer's marginal value of increasing each
type's chosen pure option, after accounting for supply costs and how such
changes propagate through local incentive constraints. In this sense, it plays a role
analogous to a virtual value in one-dimensional screening. For a
general mechanism, however, \(U(\theta)/\theta_i\) need not equal the amount
of good \(i\) allocated to type \(\theta\). Instead, it is the amount of good
\(i\) that would give the type the same indirect utility.\footnote{
One could ask why the objective is not written in terms of the allocation rule $x$, as is typically done in one-dimensional mechanism design. Indeed, one could integrate \eqref{eq:measureobj} by parts to obtain a representation involving the allocation rule \(x\). However, because \(x\) is a vector field, such a representation is not unique: it depends on a choice of vector-valued flows which, intuitively, correspond to sets of paths in the type space \(\Gamma\) along which one integrates by parts. Then, when optimizing over \(x\) to maximize such an expression, one implicitly accounts only for the effects of perturbing \(x\)
that propagate through local IC constraints along these paths. In general, this can lose important information about effects propagating through other local IC constraints.

Representing the designer's objective in terms of \(U\) avoids this issue: since \(U\) is a scalar potential, the objective can be rewritten in terms of \(U\) without having to select paths along which indirect utility is integrated. As a result, this representation encodes information about effects propagating through \emph{all} local IC constraints.}

\paragraph{The designer's problem as a linear program.}
We have so far established that the pure option mechanism is optimal if and only if
\(U_{\mathrm{pure}}\) solves
\[
\max_{U\in K}
\int_\Gamma \frac{U}{\theta_*}\,d\mu
\quad
\text{subject to}\quad \eqref{eq:pseudomono},
\quad \text{where}\quad
K
:=
\left\{
U\in C(\Gamma):
U\ge 0
\text{ and }
U\text{ is convex}
\right\}.
\]
To derive the transport-based optimality conditions, I reformulate this problem as an infinite-dimensional linear program and use a duality approach. To that end, define, for each $i$,
\[
R_i
:=
\left\{
(\theta,\theta')\in\Gamma\times\Gamma:
\theta_k\theta_i'\ge \theta_k'\theta_i
\text{ for every } k\neq i
\right\},
\]
which is the closed, cross-multiplied version of the order
\(\theta\prec_i\theta'\). Since \(R_i\) is a closed subset of the compact set
\(\Gamma\times\Gamma\), it is compact. Now, for \(U\in C(\Gamma)\), define
\[
(B_iU)(\theta,\theta')
:=
\theta_i U(\theta')-\theta_i' U(\theta),
\qquad
(\theta,\theta')\in R_i.
\]
The constraint \(B_iU\le0\) on \(R_i\) is the
cross-multiplied version of \eqref{eq:pseudomono}; unlike the ratio form, it
remains well-defined on the boundary of the simplex. This lets us write our
problem as
\begin{equation}\label{eq:primal}
\sup_{U\in K}
\left\{
\int_\Gamma \frac{U}{\theta_*}\,d\mu:
B_iU\le0\text{ on }R_i
\text{ for every }i
\right\}.
\end{equation}
Since \(K\) is a convex cone, this is a linear program. Then, the optimality of
\(U_{\mathrm{pure}}\) in problem \eqref{eq:primal} is equivalent to the existence of dual multipliers for the constraints~\eqref{eq:pseudomono}. 
\begin{definition}\label{def:dualcert}
A \textbf{dual certificate for \(U_{\mathrm{pure}}\)} is a tuple of finite
nonnegative Borel measures $(\gamma_1,\dots,\gamma_N)$ such that $\gamma_i\in\mathcal M_+(R_i),$ and, for every \(U\in K\),
\begin{equation}\label{eq:dualcert}
\int_\Gamma \frac{U}{\theta_*}\,d\mu
\le
\sum_{i=1}^N
\int_{R_i}
(B_iU)(\theta,\theta')\,d\gamma_i(\theta,\theta').
\end{equation}
\end{definition}

\begin{proposition}\label{prop:dualcert}
The pure option mechanism is optimal if and only if a dual certificate for
\(U_{\mathrm{pure}}\) exists.
\end{proposition}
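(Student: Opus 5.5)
The plan is to prove the two directions separately. Sufficiency is a weak-duality computation. Necessity is a strong-duality statement for the conic linear program \eqref{eq:primal}, obtained from a Hahn--Banach separation argument.

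\emph{Sufficiency.} Suppose a dual certificate $(\gamma_i)$ exists, and take any feasible $U$ in \eqref{eq:primal}. Then $U\in K$ and $B_iU\le0$ on $R_i$. Since each $\gamma_i$ is nonnegative, \eqref{eq:dualcert} gives $\int U/\theta_*\,d\mu\le\sum_i\int B_iU\,d\gamma_i\le0$. By Lemma~\ref{lemma:measure-objective}, $U_{\mathrm{pure}}$ attains value $0$, and it is feasible by Proposition~\ref{thm:icchar}. So $U_{\mathrm{pure}}$ solves \eqref{eq:primal}. Combining this with Proposition~\ref{prop:supply_cost_characterization} and the reduction to $K$ with constraint \eqref{eq:pseudomono}, the pure option mechanism is optimal.

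\emph{Necessity.} Assume $U_{\mathrm{pure}}$ is optimal, so the value of \eqref{eq:primal} is $0$. Work in $C(\Gamma)$ and the spaces $C(R_i)$, all of which are Banach spaces with duals equal to finite signed measures, since $\Gamma$ and $R_i$ are compact. Define
\[
\mathcal C:=\Bigl\{\bigl(\textstyle\int U/\theta_*\,d\mu-t,\ (B_iU+h_i)_{i}\bigr): U\in K,\ t\ge0,\ h_i\in C(R_i),\ h_i\ge0\Bigr\}\subset\mathbb R\times\prod_i C(R_i).
\]
This is a convex cone. Optimality says exactly that $\mathcal C$ contains no point $(r,(f_i))$ with $r>0$ and every $f_i\le0$. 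Equivalently, $\mathcal C$ does not meet the convex set $\{r>0\}\times\prod_i\{f_i\le 0\}$, and that set has nonempty interior (the cone of nonpositive continuous functions has interior point $-1$). By the geometric Hahn--Banach theorem there is a nonzero functional $(\alpha,(\gamma_i))$ that is $\le0$ on $\mathcal C$ and $\ge0$ on the open set.
\begin{itemize}
\item Nonnegativity on the open set forces $\alpha\ge0$ and $\gamma_i\le0$. Write $\gamma_i=-\tilde\gamma_i$ with $\tilde\gamma_i\in\mathcal M_+(R_i)$.
\item Nonpositivity on $\mathcal C$ then reads $\alpha\int U/\theta_*\,d\mu\le\sum_i\int B_iU\,d\tilde\gamma_i$ for every $U\in K$.
\item If $\alpha>0$, divide through by $\alpha$ to obtain the certificate.
\end{itemize}

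\emph{Main obstacle: the case $\alpha=0$.} This degenerate case has to be excluded by a constraint qualification, and I expect it to be the hardest step. With $\alpha=0$ we would have $\sum_i\int B_iU\,d\tilde\gamma_i\ge0$ for all $U\in K$ with $(\tilde\gamma_i)\ne0$. I would look for a Slater-type point $U_0\in K$ with $B_iU_0<0$ strictly on all of $R_i$, which would give a contradiction.

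Strictness fails on the diagonal, and more generally on pairs of $R_i$ that are equivalent under $\succ_i$, because $B_iU$ vanishes there for every $U$. So I would handle this in one of two ways. The first option is to quotient out these degenerate pairs, replacing $R_i$ by $R_i$ minus a neighbourhood of the equivalence set, and then use a limiting argument. The second option is to show directly that $\tilde\gamma_i$ can be taken to vanish on equivalent pairs, since those pairs contribute nothing to $\int B_iU\,d\tilde\gamma_i$ anyway. For the candidate point I would try $U_0(\theta)=\|\theta\|_2$, or a strictly convex, strictly "pseudo-monotone'' function such as $\max_j\theta_j$ smoothed. The aim is to verify that $\theta_iU_0(\theta')<\theta_i'U_0(\theta)$ whenever $\theta\prec_i\theta'$ strictly, uniformly away from equivalent pairs.

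An alternative route is to avoid Slater altogether. Since the value is finite ($0$) and the feasible set has compact bounded sections (the utilities are uniformly Lipschitz by Corollary~\ref{lem:utility-space-corr}), one can prove closedness of the image cone $\{(\int U/\theta_*\,d\mu,(B_iU)_i)\}+(\mathbb R_-\times\prod_i C_+)$ and apply the standard conic strong duality theorem. I would try the Slater route first and fall back on closedness via Arzel\`a--Ascoli if it fails.
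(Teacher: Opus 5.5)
Your sufficiency direction is the paper's weak-duality argument, and your framework for necessity (Hahn--Banach in \(\mathbb R\times\prod_i C(R_i)\), then Riesz) is also the paper's. The gap is the step you flag yourself, and none of your three remedies closes it.

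\textbf{Why the separation step fails.} Since \(B_iU(\theta,\theta)=0\) for every \(U\), each element of your \(\mathcal C\) has \(i\)-th component \(h_i\ge0\) on the diagonal of \(R_i\). So \(\mathcal C\) misses the open set \(\{r>0\}\times\prod_i\{f_i<0\}\) whether or not \(U_{\mathrm{pure}}\) is optimal. The separation therefore carries no information about optimality. Hahn--Banach is free to return the degenerate functional \(\alpha=0\), \(\tilde\gamma_1=\delta_{(\theta,\theta)}\), \(\tilde\gamma_i=0\) for \(i\neq1\).

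\emph{Your second option} (discarding mass on equivalent pairs) does not rescue this. With \(\alpha=0\), test against \(U_0\equiv1\), for which \(B_iU_0=\theta_i-\theta_i'<0\) on \(R_i\) off the diagonal and off \(\{\theta_i=\theta_i'=0\}\). This only shows that \(\tilde\gamma\) is concentrated on the degenerate pairs. That is no contradiction, because nothing forces the separating functional to be nondegenerate.

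\emph{Your first option} (excising a neighbourhood of the degenerate pairs) enlarges the feasible set. Because the program is conic, the relaxed value is either \(0\) or \(+\infty\). Proving it is \(0\) means showing that optimality survives a weakening of \eqref{eq:pseudomono}. That is more than your hypothesis provides, and you give no argument for it. Moreover, if it did hold, the relaxed multipliers would already be a certificate, so no limit would be needed.

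\emph{The Arzel\`a--Ascoli fallback} also fails. Corollary~\ref{lem:utility-space-corr} controls only utilities satisfying \eqref{eq:pseudomono} exactly. Points of the closure instead come from \(U_n\in K\) with merely \(B_iU_n\le\delta_n\to0\). These carry no sup-norm bound: you can add \(nU_{\mathrm{pure}}\) to \(U_n\) and \(-nB_iU_{\mathrm{pure}}\ge0\) to \(h_i\), leaving the image point unchanged because \(\int U_{\mathrm{pure}}/\theta_*\,d\mu=0\). Normalizing by \(\|U_n\|_\infty\) then sends the objective to \(0\) and loses the contradiction.

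\textbf{What is missing.} You need an error bound for approximate feasibility that does not depend on the size of \(U\). This is the paper's Lemma~\ref{lem:R-repair}: if \(\bar U\in K\) satisfies \(B_i\bar U\le\delta\) on every \(R_i\), then some \(U\in\mathcal U\) has \(\|\bar U-U\|_\infty\le N(N-1)\delta\). The paper proves it in three steps:
\begin{enumerate}
\item extend \(\bar U\) homogeneously to \(\mathbb R^N_+\);
\item show the extension is coordinatewise increasing up to an additive \(O(\delta)\) error;
\item take its coordinatewise monotone hull.
\end{enumerate}
With this lemma, any sequence in \(\mathcal C\) whose constraint coordinates tend to \(0\) is uniformly close to feasible utilities, whose objective is \(\le0\) by optimality. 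Hence the point with objective coordinate \(1\) and vanishing constraint coordinates lies outside \(\overline{\mathcal C}\). The paper then strictly separates this \emph{point} from the closed cone \(\overline{\mathcal C}\), which contains \(0\). This automatically makes the objective coefficient strictly positive, so no Slater condition is ever needed. Positivity of the remaining functional and the Riesz representation theorem then give the measures \(\gamma_i\).
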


We will soon relate these dual certificates to the transports in Theorem \ref{thm:rent-measure-characterization}. Note, however, that while these transports happen only within some region $\Gamma_i$, the dual certificates from Proposition~\ref{prop:dualcert} may place weight on pairs of types drawn from different regions $\Gamma_i$ and $\Gamma_j$. The next lemma shows that such cross-region constraints are redundant: any dual certificate can be replaced by one that operates entirely within each region.

\begin{lemma}
\label{lem:regionwise_exact_certificate}
Suppose a dual certificate for $U_{\mathrm{pure}}$ exists. Then there also exists another dual certificate $(\bar\gamma_1,\dots,\bar\gamma_N)$ such that, for every $j$, $\bar\gamma_j\in\mathcal M_+(R_j)$ and $\operatorname{supp}(\bar\gamma_j)
\subseteq
R_j\cap(\Gamma_j\times\Gamma_j).$
\end{lemma}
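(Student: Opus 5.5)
The plan is to use complementary slackness at \(U_{\mathrm{pure}}\) to locate the support of any dual certificate, and then to rewrite each cross-region constraint as an equivalent constraint of the region's own type.

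\textbf{Step 1 (complementary slackness).} Let \((\gamma_1,\dots,\gamma_N)\) be a dual certificate. Evaluate \eqref{eq:dualcert} at \(U=U_{\mathrm{pure}}\in K\). By \eqref{eq:primalzeropure} the left-hand side is \(0\). Since \(U_{\mathrm{pure}}\) is incentive compatible, Proposition~\ref{thm:icchar} gives \(B_iU_{\mathrm{pure}}\le0\) on \(R_i\), so the right-hand side is \(\le0\). Hence
\[
\int_{R_i}(B_iU_{\mathrm{pure}})\,d\gamma_i=0 \quad\text{for every } i,
\]
and \(\gamma_i\) is concentrated on the zero set \(Z_i:=\{(\theta,\theta')\in R_i: B_iU_{\mathrm{pure}}(\theta,\theta')=0\}\).

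\textbf{Step 2 (identifying \(Z_i\)).} Write \(U_{\mathrm{pure}}(\theta)=\max_k q_k\theta_k\). Take \((\theta,\theta')\in R_i\), so that \(\theta'\succ_i\theta\) and \(\theta_k\theta_i'\ge\theta_k'\theta_i\) for all \(k\neq i\). Suppose \(\theta'\in\Gamma_k\), meaning the maximum for \(\theta'\) is attained at \(k\).
\begin{itemize}
\item If \(k=i\), then \(U(\theta')/\theta_i'=q_i\), while \(U(\theta)/\theta_i\ge q_i\) with equality iff \(\theta\in\Gamma_i\).
\item If \(k\neq i\), then
\[
\frac{U(\theta)}{\theta_i}\ \ge\ q_k\frac{\theta_k}{\theta_i}\ \ge\ q_k\frac{\theta_k'}{\theta_i'}\ =\ \frac{U(\theta')}{\theta_i'}.
\]
Equality throughout forces \(\theta\in\Gamma_k\) and \(\theta_k/\theta_i=\theta_k'/\theta_i'=:\alpha\).
\end{itemize}
So, up to the cross-multiplied treatment of points with \(\theta_i=0\) or \(\theta_i'=0\), the set \(Z_i\) consists of pairs lying in \(\Gamma_i\times\Gamma_i\), together with pairs lying in \(\Gamma_k\times\Gamma_k\), \(k\neq i\), that share a common ratio \(\alpha\). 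On ties between regions I would use the paper's measurable tie-breaking; pairs for which both points lie on a common boundary can be assigned to either region.

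\textbf{Step 3 (rewriting cross-region pairs).} Take a pair in \(Z_i\cap(\Gamma_k\times\Gamma_k)\) with \(k\neq i\). For every \(l\),
\[
\frac{\theta_l}{\theta_k}=\frac{\theta_l}{\theta_i}\cdot\frac{1}{\alpha}\ \ge\ \frac{\theta_l'}{\theta_i'}\cdot\frac{1}{\alpha}=\frac{\theta_l'}{\theta_k'},
\]
so \(\theta\prec_k\theta'\) and the pair lies in \(R_k\). Moreover \(\theta_k=\alpha\theta_i\) and \(\theta_k'=\alpha\theta_i'\), so \(B_kU=\alpha\,B_iU\) for every \(U\). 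Now define
\[
\bar\gamma_k:=\gamma_k\big|_{Z_k\cap(\Gamma_k\times\Gamma_k)}+\sum_{i\neq k}\alpha^{-1}\,\gamma_i\big|_{Z_i\cap(\Gamma_k\times\Gamma_k)}.
\]
This is supported in \(R_k\cap(\Gamma_k\times\Gamma_k)\). It leaves the right-hand side of \eqref{eq:dualcert} unchanged for every \(U\), because each \(\gamma_i\) was already concentrated on \(Z_i\), which is covered by the pieces above. Hence \((\bar\gamma_1,\dots,\bar\gamma_N)\) is again a dual certificate.

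\textbf{Step 4 (finiteness and boundary pairs).} It remains to check that \(\bar\gamma_k\) is finite, i.e.\ that \(\alpha^{-1}\) is bounded on the relevant support. On \(\Gamma_k\) we have \(q_k\theta_k\ge q_j\theta_j\) for all \(j\), so \(\theta_k\ge(\min_j q_j/\max_j q_j)/N>0\); since \(\theta_i\le1\), this gives \(\alpha\ge\theta_k\) bounded away from zero. I expect the main obstacle to be the degenerate pairs where \(\theta_i\) or \(\theta_i'\) vanishes, where the ratio argument of Step 2 is unavailable and one must argue directly with the cross-multiplied inequalities. I would split these pairs into cases:
\begin{itemize}
\item \(\theta_i=0<\theta_i'\): then \(B_iU_{\mathrm{pure}}=-\theta_i'U_{\mathrm{pure}}(\theta)<0\), since \(U_{\mathrm{pure}}>0\) everywhere, so these pairs carry no \(\gamma_i\)-mass.
\item \(\theta_i'=0<\theta_i\): then \(B_iU_{\mathrm{pure}}=\theta_iU_{\mathrm{pure}}(\theta')>0\), contradicting \(B_iU_{\mathrm{pure}}\le0\) on \(R_i\), so no such pairs exist in \(R_i\).
\item \(\theta_i=\theta_i'=0\): then \(B_i\equiv0\), and these pairs can simply be discarded from the certificate.
\end{itemize}
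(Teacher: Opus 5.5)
Your proposal is correct and follows the paper's proof essentially step for step. The steps are: complementary slackness at $U_{\mathrm{pure}}$; discarding pairs with $\theta_i=\theta_i'=0$; showing that every remaining contact pair lies in some $\Gamma_k\times\Gamma_k$ with $\theta_k/\theta_i=\theta_k'/\theta_i'$; and reassigning that mass to constraint $k$ with weight $\theta_i/\theta_k$, which is bounded because $\theta_k$ is bounded away from zero on $\Gamma_k$. The one point to tighten is that the pieces $Z_i\cap(\Gamma_k\times\Gamma_k)$ in your formula for $\bar\gamma_k$ overlap on shared boundaries, so they must be made disjoint to avoid double-counting mass; the paper does this explicitly with the sets $S_{ij}$.
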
 

\paragraph{Deriving preprocessing and transport conditions.}
The final step converts the regionwise dual certificates into the
preprocessing and transport conditions in Theorem
\ref{thm:rent-measure-characterization}.

\begin{lemma}\label{lem:certificate-transport-equivalence}
A regionwise dual certificate exists if and only if there exists a tuple
\((\tilde\mu_1,\dots,\tilde\mu_N)\) of finite signed measures, with
\(\tilde\mu_i\) defined on \(\Gamma_i\), satisfying the preprocessing and
regionwise transport conditions in
Theorem~\ref{thm:rent-measure-characterization}.
\end{lemma}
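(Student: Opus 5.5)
The plan is to identify the tuple \((\tilde\mu_1,\dots,\tilde\mu_N)\) with the marginals of the regionwise certificate. Given a regionwise dual certificate \((\bar\gamma_1,\dots,\bar\gamma_N)\) with \(\operatorname{supp}\bar\gamma_i\subseteq R_i\cap(\Gamma_i\times\Gamma_i)\), I would first change variables so that the integrand becomes a difference of ratios. On \(R_i\cap(\Gamma_i\times\Gamma_i)\), away from the face \(\theta_i=0\) (which meets \(\Gamma_i\) only at null or boundary pieces), write
\[
(B_iU)(\theta,\theta')=\theta_i\theta_i'\left(\frac{U(\theta')}{\theta_i'}-\frac{U(\theta)}{\theta_i}\right).
\]
Define the reweighted plan \(d\pi_i:=\theta_i\theta_i'\,d\bar\gamma_i\), which is a finite nonnegative measure on \(\{(\theta,\theta'):\theta\prec_i\theta'\}\), possibly after extending the order to its closure. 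Let \(\rho_i\) and \(\tau_i\) be its first and second marginals, and set \(\tilde\mu_i:=\tau_i-\rho_i\). Then for every \(U\in K\) the right-hand side of \eqref{eq:dualcert} equals \(\sum_i\int_{\Gamma_i}\frac{U}{\theta_i}\,d\tilde\mu_i\), so \eqref{eq:dualcert} is exactly the preprocessing inequality. Positivity \(U\ge0\) is already built into \(K\), and continuity and convexity match the class in Theorem~\ref{thm:rent-measure-characterization}.

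For the transport condition, \(\pi_i\) is a \(\succ_i\)-monotone transport from \(\rho_i\) to \(\tau_i\), but we need one from \(\tilde\mu_i^-\) to \(\tilde\mu_i^+\). Write the Jordan decomposition \(\rho_i=\tilde\mu_i^-+\kappa\) and \(\tau_i=\tilde\mu_i^++\kappa\), where \(\kappa=\rho_i\wedge\tau_i\). By Strassen's theorem, monotone transport from \(\rho\) to \(\tau\) (equal masses) is equivalent to \(\tau(C)\ge\rho(C)\) for all closed upper sets \(C\). This condition is invariant under subtracting the common part \(\kappa\) from both measures, so it passes to \((\tilde\mu_i^-,\tilde\mu_i^+)\). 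Strassen then gives the desired plan. The masses agree because \(\tilde\mu_i(\Gamma_i)=0\).

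For the converse, take \((\tilde\mu_i)\) with monotone plans \(\pi_i\) from \(\tilde\mu_i^-\) to \(\tilde\mu_i^+\), and set \(d\bar\gamma_i:=d\pi_i/(\theta_i\theta_i')\). The identity above then turns the regionwise transport plus preprocessing into \eqref{eq:dualcert}, with \(\bar\gamma_i\) supported in \(R_i\cap(\Gamma_i\times\Gamma_i)\). The main obstacle is finiteness of \(\bar\gamma_i\), since \(\theta_i\) can vanish on \(\Gamma_i\). It cannot vanish at interior points of \(\Gamma_i\) away from \(e_i\)'s opposite face, however. Since \(\theta^0\in\Gamma^\circ\), every \(\theta\in\Gamma_i\) satisfies \(\theta_k/\theta_i\le\theta^0_k/\theta^0_i\), so \(\theta_i\) is bounded below by a positive constant on \(\Gamma_i\). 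This gives finiteness in both directions, and the relation \(\succ_i\) is closed there and coincides with \(R_i\). I would check this lower bound first, because it is what makes the whole reweighting legitimate.
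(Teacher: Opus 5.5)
Your proposal is correct and follows essentially the same route as the paper: reweight the certificate by $\theta_i\theta_i'$, take $\tilde\mu_i$ as the difference of the two marginals so that preprocessing is the certificate inequality itself, transfer $\tilde\mu_i(C)\ge 0$ on closed upper sets to the pair $(\tilde\mu_i^-,\tilde\mu_i^+)$ via Strassen, and divide by $\theta_i\theta_i'$ for the converse. Summing $\theta_k/\theta_i\le\theta_k^0/\theta_i^0$ over $k\ne i$ gives $\theta_i\ge\theta_i^0>0$ on all of $\Gamma_i$, which is exactly the paper's finiteness step and also shows that the face $\theta_i=0$ never meets $\Gamma_i$, so your earlier remarks that $\theta_i$ might vanish there are unnecessary.
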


The equivalence reflects two ways of representing the same dual object. Each
measure \(\bar\gamma_i\) in a regionwise certificate is supported on binding
constraints between ordered pairs \(\theta\prec_i\theta'\) within the same
region \(\Gamma_i\). For a general \(U\), the difference associated with such
a pair is
\[
(B_iU)(\theta,\theta')
=
\theta_i\theta_i'
\left(
\frac{U(\theta')}{\theta_i'}
-
\frac{U(\theta)}{\theta_i}
\right).
\]
The multiplier measure can therefore be interpreted as moving mass from
\(\theta\) to \(\theta'\), that is, upward in the \(\succ_i\)-order.
Reweighting it by \(\theta_i\theta_i'\) ensures that the difference between
its second and first marginals multiplies the normalized utility
\(U(\theta)/\theta_i\). The resulting marginal difference defines the
region-specific auxiliary measure \(\tilde\mu_i\). Summing these regional
contributions gives the preprocessing inequality. The first and second
marginals describe the mass sent and received within each region and, after
common mass is cancelled, yield a transport from \(\tilde\mu_i^-\) to
\(\tilde\mu_i^+\).

By Proposition~\ref{prop:dualcert} and
Lemma~\ref{lem:regionwise_exact_certificate}, the pure option mechanism is
optimal if and only if a regionwise dual certificate exists. Lemma
\ref{lem:certificate-transport-equivalence} therefore proves Theorem
\ref{thm:rent-measure-characterization}.

\paragraph{Interpreting the transport condition.}
For each region \(\Gamma_i\), the positive part \(\tilde\mu_i^+\) places
weight on types whose normalized utilities the designer would like to raise,
after preprocessing and after accounting for supply costs and how such changes
propagate through local IC constraints. Conversely, \(\tilde\mu_i^-\) places
weight on types whose normalized utilities the designer would like to reduce.
The designer cannot, however, choose \(U\) freely:
Proposition~\ref{thm:icchar} shows that feasible allocation rules must
generate indirect utilities satisfying certain shape restrictions. In
particular, \eqref{eq:pseudomono} limits how quickly \(U(\theta)\) can
increase as \(\theta\) moves toward the vertices of \(\Gamma\).

The transport condition in
Theorem~\ref{thm:rent-measure-characterization} formally captures the idea
that, after possibly ironing the rent measure into the region-specific
auxiliary measures, the positive part \(\tilde\mu_i^+\) lies closer to vertex
\(e_i\) than the negative part \(\tilde\mu_i^-\) on each region
\(\Gamma_i\). When this condition holds, the designer wants to raise utility
as much as possible for types close to \(e_i\); the best feasible way to do
so is to make \(U(\theta)\) increase as rapidly as
\eqref{eq:pseudomono} permits as one moves toward that vertex. The pure
option indirect utility \(U_{\mathrm{pure}}\) is exactly the extremal
utility profile that does this: it makes the constraints
\eqref{eq:pseudomono} bind on each pure option region.

\paragraph{Relation to techniques in the literature.}
My proof builds on strong-duality methods for multidimensional screening. As
in \cite{kleiner2019strong} and \cite{kleiner2022optimal}, who study
multi-product monopoly pricing and multidimensional delegation, respectively,
I formulate the designer's problem as an infinite-dimensional linear program
and use measures as dual multipliers. The constraint \eqref{eq:pseudomono} I dualize, however, is specific to the present problem, and thus existing strong-duality results cannot be applied. In Proposition~\ref{prop:dualcert}, I therefore establish a new strong duality result. Its
proof uses the linear-programming approach in
\cite{kleiner2019strong}, but relies on a different repair
lemma, Lemma~\ref{lem:R-repair}, which shows that a convex utility profile
that approximately satisfies \eqref{eq:pseudomono} can be uniformly
approximated by one that satisfies it exactly.

The resulting condition in
Theorem~\ref{thm:rent-measure-characterization} also relates to the
stochastic-dominance certificates developed by
\cite{daskalakis2013mechanism,daskalakis} for the multi-good monopoly
problem. In particular, \cite{daskalakis2013mechanism} provide a dominance
condition for the optimality of grand bundling that is phrased in terms of a
signed measure similar to mine. In both of our approaches the objective is rewritten as an integral against a signed measure, and the optimality of an extremal indirect-utility profile is certified
through a stochastic-dominance comparison. However, several features of my environment require a different construction. First, in my problem, types live on a simplex, and
the planner maximizes weighted welfare rather than revenue. Second, feasibility is
governed by aggregate supply constraints,
so the signed measure must also incorporate the supply costs. Most
importantly, the source of extremality is different. In
\cite{daskalakis2013mechanism}, it comes from unit caps on allocations. Here,
it comes from \eqref{eq:pseudomono}, which bounds how quickly
\(U(\theta)\) can grow as \(\theta\) approaches a vertex. This is why the
objective representation in Lemma~\ref{lemma:measure-objective} involves the
transformed term \(U(\theta)/\theta_*(\theta)\), rather than \(U(\theta)\)
alone, and why the resulting transport comparison uses the problem-specific
orders \(\succ_i\).

\appendix

\titleformat{\subsection}[runin]
  {\normalfont\bfseries}
  {\thesubsection.}
  {0.5em}
  {}
  [.]
\titlespacing*{\subsection}
  {0pt}
  {1ex plus .2ex minus .1ex}
  {0.75em}

\setstretch{1.1} 

\section{Waitlist interpretation}\label{app:waitlist}

I now show that the baseline model from Section \ref{sec:model} can be interpreted as a reduced-form description of a steady-state waitlist setting.

\paragraph{Waitlist model.} Consider a stationary waitlist environment in discrete time. There are $N$ different kinds of goods indexed by $i\in \{1,\dots,N\}$ with $N\geq 2$. The designer possesses a fixed mass of each, with the supplies given by $s=(s_1,s_2,\dots,s_N)> 0$. In each period, a unit mass of new agents arrives; each of them has a profile of values $v= (v_1,v_2,\dots,v_N)$ for the goods. The values are private information and, in each arriving cohort, are distributed according to $F$ with support on a bounded set $\mathcal{V}\subset \mathbb{R}^N_{+}$. This distribution puts no mass on \(\mathbf 0\) and assigns positive mass to agents with \(v_i>0\) for every good \(i\). Agents remain in the system from one period to the next with probability
\(\delta\in[0,1)\). The departure process is exogenous and independent of
the agent's type, report, and allocation history.

Agents may enjoy a good in every period after entering the system.
Let \(\mathcal A:=\{\varnothing,1,\dots,N\}\) denote the set of possible 
one-period assignments for an agent, where \(\varnothing\)
denotes receiving no good. A contingent lifetime assignment path is an
element \(a=(a_0,a_1,a_2,\dots)\in\mathcal A^{\mathbb N_0}\), where \(a_t\)
is the good enjoyed in the \(t\)-th period after entry, conditional on the
agent still being in the system.

Since the agent survives through period \(t\) with probability \(\delta^t\),
her expected lifetime utility from the path \(a\) is
\[
\sum_{t=0}^{\infty}\delta^t v_{a_t},
\]
with the convention \(v_{\varnothing}:=0\). If the contingent assignment
path is random, with distribution
\(Z\in\Delta(\mathcal A^{\mathbb N_0})\), her expected lifetime utility is
\[
\mathbb E_{a\sim Z}\left[
\sum_{t=0}^{\infty}\delta^t v_{a_t}
\right].
\]

\paragraph{Mechanisms.} We restrict attention to stationary mechanisms that admit a steady state. That is, the same mechanism is applied to every entering cohort, and the induced cross-sectional distribution of active agents, assignment histories, and occupied goods is constant over time. Using stationarity and the Revelation Principle, we can assume that the designer selects a stationary allocation rule $z:\mathcal V\to \Delta(\mathcal A^{\mathbb N_0})$. She chooses it to maximize steady-state welfare
\begin{equation}\label{eq:SSwelf}
    \max_z
    \int
    \mathbb E_{a\sim z(v)}\left[
        \sum_{t=0}^{\infty}\delta^t v_{a_t}
    \right]
    dF(v),\tag{O''}
\end{equation}
subject to incentive compatibility and supply constraints:
\begin{equation}\label{eq:ICww}
    \mathbb E_{a\sim z(v)}\left[
        \sum_{t=0}^{\infty}\delta^t v_{a_t}
    \right]
    \geq
    \mathbb E_{a\sim z(v')}\left[
        \sum_{t=0}^{\infty}\delta^t v_{a_t}
    \right]
    \quad \text{for all }v,v',\tag{IC''}
\end{equation}
\begin{equation}\label{eq:Sww}
    \sum_{t=0}^{\infty}
    \delta^t
    \int
    \mathbb E_{a\sim z(v)}\left[
        \mathbf 1\{a_t=i\}
    \right]
    dF(v)
    \leq s_i
    \quad \text{for every }i.\tag{S''}
\end{equation}
The supply constraint says that, in the steady-state period, the total mass of agents assigned good \(i\) cannot exceed the stock \(s_i\). To see why it takes this form, note that in any calendar period there is a unit mass of newly arrived agents, a mass \(\delta\) of agents who entered one period earlier, a mass \(\delta^2\) of agents who entered two periods earlier, and so on. The sum therefore represents the steady-state use of good \(i\).

\paragraph{Reduction to the static model.} The following result reduces the setting to the baseline model.

\begin{proposition}\label{prop:waitlist-reduction}
For any feasible stationary allocation rule
\(z:\mathcal V\to\Delta(\mathcal A^{\mathbb N_0})\), the reduced-form
allocation rule \(y:\mathcal V\to\mathbb R_+^N\) given by
\begin{equation}\label{eq:waitlist-reduced-form}
y_i(v)
=
\mathbb E_{a\sim z(v)}\left[
\sum_{t=0}^{\infty}
\delta^t\mathbf 1\{a_t=i\}
\right]
\quad\text{for every \(i\) and \(v\)}
\end{equation}
is feasible in the baseline model. Moreover, welfare from \(z\) equals
welfare from \(y\):
\begin{equation}\label{eq:waitlist-welfare-equality}
\int_{\mathcal V}
\mathbb E_{a\sim z(v)}\left[
\sum_{t=0}^{\infty}\delta^t v_{a_t}
\right]dF(v)
=
\int_{\mathcal V}v\cdot y(v)\,dF(v).
\end{equation}
Conversely, for any feasible \(y\) in the baseline model satisfying
\begin{equation}\label{eq:waitlist-individual-feasibility}
\sum_{i=1}^N y_i(v)
\leq
\frac{1}{1-\delta}
\quad\text{for every }v,
\end{equation}
the stationary allocation rule that permanently assigns an agent reporting
\(v\) to good \(i\) with probability \((1-\delta)y_i(v)\), and to no good
with the remaining probability, is feasible in the dynamic problem. Its
reduced-form allocation is \(y\), and the two allocation rules satisfy
\eqref{eq:waitlist-welfare-equality}.
\end{proposition}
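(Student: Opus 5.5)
The proof has two directions: a forward reduction from dynamic to static rules, and a converse construction of a stationary rule from a static one. Both rest on linearity of expectation and Tonelli's theorem, which is legitimate because every summand is nonnegative.

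\emph{Forward direction.} Fix a feasible stationary rule \(z\) and define \(y\) by \eqref{eq:waitlist-reduced-form}. Since \(y_i(v)\le \sum_t\delta^t=1/(1-\delta)\), each \(y_i\) is finite and nonnegative. Writing \(v_{a_t}=\sum_i v_i\mathbf 1\{a_t=i\}\) (the term for \(\varnothing\) vanishes) and exchanging expectation with the nonnegative series gives
\[
\mathbb E_{a\sim z(v')}\Bigl[\sum_t\delta^t v_{a_t}\Bigr]=v\cdot y(v')
\]
for every pair \(v,v'\). Three consequences follow:
\begin{enumerate}
\item Taking \(v'=v\) and integrating against \(F\) yields the welfare identity \eqref{eq:waitlist-welfare-equality}.
\item Applying the identity on both sides of \eqref{eq:ICww} shows it is literally \eqref{eq:IC} for \(y\).
\item Exchanging the sum over \(t\) with the integral over \(F\) (Tonelli) turns the left side of \eqref{eq:Sww} into \(\int y_i\,dF\), so \eqref{eq:Sww} becomes \eqref{eq:S}.
\end{enumerate}
Measurability of \(y\) follows from measurability of \(z\) as a kernel; I will note this without further argument.

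\emph{Converse direction.} Take \(y\) feasible in the baseline model and satisfying \eqref{eq:waitlist-individual-feasibility}. Let \(z(v)\) be the distribution over constant paths \(a=(i,i,\dots)\) with probability \((1-\delta)y_i(v)\), and the all-\(\varnothing\) path with the remaining probability \(1-(1-\delta)\sum_i y_i(v)\). Condition \eqref{eq:waitlist-individual-feasibility} is used exactly here, to make this remaining probability nonnegative, so \(z(v)\) is a valid distribution. The reduced form of \(z\) is then
\[
(1-\delta)y_i(v)\sum_t\delta^t=y_i(v),
\]
so by the identity above, \eqref{eq:ICww}, \eqref{eq:Sww} and the welfare equality are inherited from \eqref{eq:IC}, \eqref{eq:S} and the definition.

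\emph{Stationarity.} It remains to check that this \(z\) is a stationary mechanism admitting a steady state. Because the same rule is applied to every cohort, exits are exogenous, and the model's accounting of steady-state usage is built into \eqref{eq:Sww}, the cross-sectional distribution of occupied goods is time-invariant.

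I do not expect a serious obstacle. The only delicate points are the interchange of sums and integrals, handled by nonnegativity, and the role of \eqref{eq:waitlist-individual-feasibility}, which is needed solely to make the converse lottery well defined.
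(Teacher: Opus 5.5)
Your proposal is correct and follows the paper's proof essentially step for step. The forward direction rests on the identity $\mathbb E_{a\sim z(v')}[\sum_t\delta^t v_{a_t}]=v\cdot y(v')$ and the Tonelli interchange in \eqref{eq:Sww}, and the converse uses the same permanent-assignment lottery, with \eqref{eq:waitlist-individual-feasibility} ensuring valid probabilities. The only difference is cosmetic: the paper bounds $\sum_i y_i(v)\le 1/(1-\delta)$ rather than each coordinate, which shows the forward $y$ also satisfies \eqref{eq:waitlist-individual-feasibility}. That extra fact is used for the equivalence remark after the proposition but is not needed for the statement itself.
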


Thus, the dynamic problem is equivalent to the baseline problem augmented
with \eqref{eq:waitlist-individual-feasibility}. This additional bound is
analogous to the probability constraint discussed in Subsection~\ref{rem:model-dis} and thus, by
Proposition~\ref{prop:P_redundant}, is redundant when supply is sufficiently
scarce. In the waitlist setting, it suffices that supply be too scarce
for any feasible mechanism to offer any option with zero wait.

\begin{proof}
\((\Rightarrow)\) Fix a feasible stationary allocation rule \(z\) and define
\(y\) by \eqref{eq:waitlist-reduced-form}. Since an agent can receive at most
one good in each period,
\[
\sum_{i=1}^N y_i(v)
=
\mathbb E_{a\sim z(v)}\left[
\sum_{t=0}^{\infty}\delta^t
\sum_{i=1}^N\mathbf 1\{a_t=i\}
\right] 
\leq
\sum_{t=0}^{\infty}\delta^t
=
\frac{1}{1-\delta},
\]
which gives \eqref{eq:waitlist-individual-feasibility}. Moreover, for every
true type \(v\) and report \(v'\),
\[
\mathbb E_{a\sim z(v')}\left[
\sum_{t=0}^{\infty}\delta^t v_{a_t}
\right]
=
\sum_{i=1}^N v_i y_i(v')
=
v\cdot y(v').
\]
Thus, incentive compatibility of \(z\) implies \eqref{eq:IC}, and integrating
the preceding equality at \(v'=v\) gives
\eqref{eq:waitlist-welfare-equality}. Finally, for every good \(i\),
\[
\sum_{t=0}^{\infty}\delta^t
\int_{\mathcal V}
\mathbb E_{a\sim z(v)}
\left[\mathbf 1\{a_t=i\}\right]dF(v)
=
\int_{\mathcal V}y_i(v)\,dF(v),
\]
so \eqref{eq:Sww} implies \eqref{eq:S}. Hence, \(y\) is feasible in the
baseline model.

\((\Leftarrow)\) Conversely, fix a feasible \(y\) satisfying
\eqref{eq:waitlist-individual-feasibility}. Upon entry, permanently assign an
agent reporting \(v\) to good \(i\) with probability
\((1-\delta)y_i(v)\), and to no good with the remaining probability.
Equation~\eqref{eq:waitlist-individual-feasibility} ensures that these
probabilities are well defined. The resulting reduced-form allocation
satisfies
\[
\mathbb E_{a\sim z(v)}\left[
\sum_{t=0}^{\infty}\delta^t\mathbf 1\{a_t=i\}
\right]
=
(1-\delta)y_i(v)\sum_{t=0}^{\infty}\delta^t
=
y_i(v).
\]
The identities established above therefore imply that \(z\) is feasible in
the dynamic problem and that \(z\) and \(y\) satisfy
\eqref{eq:waitlist-welfare-equality}.
\end{proof}

\begin{remark}
Under this reduction, distinct waitlist and lottery mechanisms can yield equivalent reduced-form allocations, in the spirit of \citet{arnosti2020design}. Consider, for example, a good-specific waitlist, in which each agent joins the waitlist for one good and receives it once she reaches the front, or a repeated choice-based lottery, in which each waiting agent chooses one good-specific lottery to enter in each period and receives that good if she wins. When supply is sufficiently scarce, both mechanisms implement reduced-form allocations that correspond to the pure option mechanism in the baseline model.
\end{remark}

\section{Omitted proofs: technical preliminaries}

\subsection{Strassen's theorem}\label{sec:Strassen}

\begin{definition}
Let $\succeq$ be a partial order on $\Omega$. A set $C \subseteq \Omega$ is an $\succeq$-\textbf{upper set} if $\theta\in C, \ \theta\preceq \theta'$ implies $\theta'\in C$. A function $\eta:\Omega\to \mathbb{R}$ is $\succeq$-\textbf{increasing} if $\theta\preceq \theta'$ implies $\eta(\theta')\ge \eta(\theta)$.
\end{definition}

The following is a special case of Strassen's theorem stated in \cite{fritz2018antisymmetry}:

\begin{theorem}[\cite{strassen1965existence, kellerer1984duality, edwards1978existence}]\label{thm:stoch_order_i}
Let $\rho,\tau$ be measures on some $\Omega \subset \mathbb{R}^N$ with $\rho(\Omega)=\tau(\Omega)$ and let 
$\succeq$ be a partial order on $\Omega$ such that the set
$
\left\{ (x,y) \in \Omega\times \Omega : \ x\succeq y \right\}
$
is closed in $\Omega\times \Omega$. Then $\tau$ $\succeq$-\textbf{stochastically dominates} $\rho$ if and only if any of the following conditions holds: 
\begin{enumerate}
\item $\rho(C)\le \tau(C)$ for every closed $\succeq$-upper set $C\subseteq \Omega$.
\item For every bounded, lower semicontinuous, $\succeq$-increasing $\eta:\Omega\to\mathbb{R}$, $\int_\Omega \eta\,d\rho \;\le\; \int_\Omega \eta\,d\tau.$
\item There exists a $\succeq$-monotone transport plan from $\rho$ to $\tau$.
\end{enumerate}
\end{theorem}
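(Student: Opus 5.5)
Since the statement is a classical result quoted from \cite{strassen1965existence, kellerer1984duality, edwards1978existence} (in the form given in \cite{fritz2018antisymmetry}), my plan is to take ``\(\tau\) \(\succeq\)-stochastically dominates \(\rho\)'' to mean condition 2, and then prove the cycle of implications \(3\Rightarrow 2\Rightarrow 1\Rightarrow 2\Rightarrow 3\). Throughout I use the closedness of the graph \(\{(x,y):x\succeq y\}\). In every application in this paper the measures are finite Borel measures on a compact subset of \(\Gamma\) or \(\Gamma\times\Gamma\), so I will argue in that setting.

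\emph{\(3\Rightarrow 2\).} Let \(\pi\) be a monotone transport plan from \(\rho\) to \(\tau\). For any bounded \(\succeq\)-increasing \(\eta\),
\[
\int\eta\,d\tau-\int\eta\,d\rho=\int\bigl(\eta(y)-\eta(x)\bigr)\,d\pi(x,y)\ge 0,
\]
because \(\pi\) is supported on pairs with \(x\preceq y\).

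\emph{\(2\Rightarrow 1\).} Let \(C\) be a closed upper set. Its complement \(\Omega\setminus C\) is an open lower set, so \(\mathbf 1_C\) is increasing but only upper semicontinuous. I would approximate it from above by bounded, lower semicontinuous, increasing functions, and then pass to the limit using monotone convergence. Concretely, take the open \(\varepsilon\)-neighbourhood of \(C\), saturate it upward, and show via the closed graph and compactness that these sets are open upper sets decreasing to \(C\).

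\emph{\(1\Rightarrow 2\).} Every bounded, lower semicontinuous, increasing \(\eta\) has a layer-cake representation
\[
\eta=\min\eta+\int_{\min\eta}^{\max\eta}\mathbf 1_{\{\eta>t\}}\,dt,
\]
and each level set \(\{\eta>t\}\) is an open upper set. So it suffices to show \(\rho(O)\le\tau(O)\) for open upper sets \(O\). For this I would use inner regularity: any compact \(K\subseteq O\) has an upward saturation \(\uparrow K\) that is a closed upper set (by the closed graph and compactness), and \(\uparrow K\subseteq O\). Taking the supremum of \(\rho(\uparrow K)\le\tau(\uparrow K)\) over \(K\) gives the claim.

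\emph{\(2\Rightarrow 3\).} This is the main obstacle, and I would prove it by linear-programming duality in the spirit of Kantorovich. Consider the convex set of couplings of \(\rho\) and \(\tau\); it is weak\(^*\)-compact by tightness. I want a coupling with \(\pi(S)=\pi(\Omega\times\Omega)\), where \(S=\{x\preceq y\}\) is closed, or equivalently a coupling attaining zero in
\[
\inf_\pi\int c\,d\pi,\qquad c=\mathbf 1_{S^c},
\]
where \(c\) is lower semicontinuous. Kantorovich duality gives
\[
\min_\pi\int c\,d\pi=\sup\Bigl\{\int\varphi\,d\tau-\int\psi\,d\rho:\ \varphi(y)-\psi(x)\le c(x,y)\Bigr\}.
\]
For a feasible pair I would replace \(\psi\) by the increasing function \(\hat\varphi(x):=\sup_{y\succeq x}\varphi(y)\) and then use condition 2. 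The delicate step is keeping the relevant functions lower semicontinuous and bounded. If this step fails, I would fall back on Strassen's original Hahn–Banach argument: show that \(\tau\) lies in the weak\(^*\)-closed convex set of second marginals of \(\{\pi\ge 0:\ \pi_1=\rho,\ \pi(S^c)=0\}\), using the fact that its support functional evaluated at \(f\) equals \(\int\hat f\,d\rho\) with \(\hat f\) the increasing envelope above.

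Since \(2\Rightarrow 1\) and \(1\Rightarrow 2\) have already been shown, this completes the cycle.
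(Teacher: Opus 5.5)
The paper does not prove this statement; it imports it from the literature via Fritz's formulation. Your self-contained argument for compact $\Omega$ is therefore a different route, and compactness covers every use in the paper, where $\Omega=\Gamma_i$. Your steps $3\Rightarrow2$ and $1\Rightarrow2$ (layer cake, plus inner regularity using the closed upper sets ${\uparrow}K$) are correct, and the overall architecture is sound. Two steps, however, fail as written.

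\textbf{The step $2\Rightarrow1$.} The upward saturation ${\uparrow}C_\varepsilon$ of the open neighbourhood need not be open. Closedness of the order gives that ${\uparrow}K$ is closed for compact $K$, but says nothing about ${\uparrow}O$ for open $O$. For the paper's orders $\succ_i$, which are coordinatewise in ratio coordinates, it happens to be open, but not for a general closed order. So $\mathbf 1_{{\uparrow}C_\varepsilon}$ need not be lower semicontinuous, and condition 2 cannot be invoked.

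Use instead the largest upper set contained in $C_\varepsilon$, namely $U_\varepsilon:=\Omega\setminus{\downarrow}(\Omega\setminus C_\varepsilon)$. Since $\Omega\setminus C_\varepsilon$ is compact, its lower saturation is closed, so $U_\varepsilon$ is an open upper set. Because $C$ is upper, $C\subseteq U_\varepsilon\subseteq C_\varepsilon$. Continuity from above then gives $\rho(C)\le\tau(C)$. Alternatively, skip this step: once you have $2\Rightarrow3$, the implication $3\Rightarrow1$ is immediate, since a plan supported on $\{x\preceq y\}$ satisfies $\pi(C\times\Omega)\le\pi(\Omega\times C)$ for every Borel upper set $C$.

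\textbf{The step $2\Rightarrow3$.} The envelope $\hat\varphi(x)=\sup_{y\succeq x}\varphi(y)$ is order-\emph{decreasing}, not increasing, because $x\preceq x'$ implies ${\uparrow}x'\subseteq{\uparrow}x$. This is not cosmetic. With your sign convention, applying condition 2 to an increasing function yields $\int\hat\varphi\,d\tau-\int\hat\varphi\,d\rho\ge0$, which bounds nothing.

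The correct step uses $\varphi\le\hat\varphi\le\psi$, which follows from reflexivity and from dual feasibility on $S$. Then apply condition 2 to $-\hat\varphi$, obtaining $\int\varphi\,d\tau-\int\psi\,d\rho\le\int\hat\varphi\,d\tau-\int\hat\varphi\,d\rho\le0$.

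This requires $\hat\varphi$ to be upper semicontinuous, and establishing that resolves the delicate step you flag. First restrict the Kantorovich dual to $\varphi,\psi\in C_b$, which is legitimate for a lower semicontinuous cost bounded below. Then take $x_n\to x$ and maximizers $y_n\succeq x_n$ of $\varphi$ on the compact sets ${\uparrow}x_n$. Any limit point $y$ of $(y_n)$ satisfies $y\succeq x$ by closedness, so $\limsup\hat\varphi(x_n)\le\hat\varphi(x)$.

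The dual value is then $0$. An optimal coupling exists by lower semicontinuity of $\pi\mapsto\int\mathbf 1_{S^c}\,d\pi$ on the compact set of couplings. It has $\pi(S^c)=0$, so it is the required monotone transport plan.
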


\subsection{Differential geometry facts}

Let $H$ denote the $(N-1)$-dimensional hyperplane containing the simplex $\Gamma$:
\[
H := \big\{ \theta \in \mathbb{R}^N: \ \sum \theta_i =1 \big\}.
\]
Note that the tangent space to $H$ at any $\theta\in H$ is
\[
T H := \bigl\{v\in\mathbb{R}^N \colon \sum v_i=0\bigr\}.
\]
Let us also define the intrinsic gradient for this surface:
\begin{definition}\label{def:intrinsic-gradient}
Let $\eta:H\to\mathbb{R}$ and fix $\theta\in H$. The \textbf{intrinsic
gradient} $\nabla_H \eta(\theta)\in TH$ is the unique vector such that
\[
D_v \eta(\theta)=\nabla_H \eta(\theta)\cdot v
\qquad\text{for all }v\in TH.
\]
\end{definition}
I use the following divergence theorem on the affine hyperplane $H$,
which follows from Green's formula in $\mathbb{R}^{N-1}$
(see, e.g., \cite{rodrigues1987obstacle}).
\begin{theorem}\label{th:divonH}
Let $\Omega\subset H$ be a bounded, open set such that $\partial\Omega$ is Lipschitz. Let $\eta:\overline{\Omega}\to\mathbb{R}$ be Lipschitz. Fix a tangent vector field $X:\Omega\to\mathbb{R}^N$, $X(\theta) \in T H$, such that $X \in H^{1}\big(\overline \Omega;\,TH\big)$. Then
\begin{equation}\label{eq:intbypartsdiv}
\int_{\Omega} \nabla_H \eta(\theta)\cdot X(\theta)\,dV_H(\theta)
+
\int_{\Omega} \eta(\theta)\,\operatorname{div} X(\theta)\,dV_H(\theta)
=
\int_{\partial\Omega} \eta(\theta)\,X(\theta)\cdot\nu(\theta) \, dS_{\partial\Omega}(\theta),
\end{equation}
where $dV_H$ denotes the $(N-1)$--dimensional surface measure on $H$, $dS_{\partial\Omega}$ denotes the $(N-2)$--dimensional surface measure on $\partial\Omega$, and $\nu$ is the outward unit conormal along $\partial\Omega$. Finally, $\operatorname{div} X(\theta)$ is the divergence taken in the $(N-1)$-dimensional subsurface $H$.
\end{theorem}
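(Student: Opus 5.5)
The statement to prove is the divergence theorem on \(H\) (Theorem~\ref{th:divonH}). The plan is to reduce it to the Euclidean Green's formula in \(\mathbb R^{N-1}\) through an affine isometry, and then to justify the low-regularity version (Lipschitz \(\eta\), \(H^1\) field \(X\), Lipschitz boundary) by density.

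\textbf{Step 1: flatten \(H\).} Fix a base point \(\theta^\ast\in H\) and an orthonormal basis \(b_1,\dots,b_{N-1}\) of \(TH\). Define the affine map
\[
\Phi:\mathbb R^{N-1}\to H,\qquad \Phi(u)=\theta^\ast+\sum_k u_k b_k .
\]
This map is an isometry onto \(H\), so it has the following consequences.
\begin{itemize}
\item It pushes Lebesgue measure on \(\mathbb R^{N-1}\) to \(dV_H\).
\item It maps Lipschitz domains to Lipschitz domains, and it pushes the \((N-2)\)-dimensional Hausdorff measure on \(\partial\tilde\Omega\) to \(dS_{\partial\Omega}\), where \(\tilde\Omega:=\Phi^{-1}(\Omega)\).
\end{itemize}
Set \(\tilde\eta:=\eta\circ\Phi\) and let \(\tilde X\) have coordinates \(\tilde X_k:=(X\circ\Phi)\cdot b_k\).

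\textbf{Step 2: check that the three terms transform correctly.} By Definition~\ref{def:intrinsic-gradient}, \((\nabla\tilde\eta)_k = D_{b_k}\eta = \nabla_H\eta\cdot b_k\). Because \(X\in TH\) and \(\nabla_H\eta\in TH\), expanding both in the orthonormal basis gives
\[
\nabla\tilde\eta\cdot\tilde X=\nabla_H\eta\cdot X .
\]
The intrinsic divergence is the trace of the tangential derivative, so \(\operatorname{div}\tilde X=(\operatorname{div}X)\circ\Phi\). The outward unit conormal \(\nu\) of \(\partial\Omega\) lies in \(TH\), and its coordinates in the basis \((b_k)\) are the Euclidean outward normal \(\tilde\nu\) of \(\tilde\Omega\). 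Hence \(\tilde X\cdot\tilde\nu=X\cdot\nu\). After these substitutions, \eqref{eq:intbypartsdiv} becomes exactly the Euclidean identity
\[
\int_{\tilde\Omega}\nabla\tilde\eta\cdot\tilde X+\tilde\eta\,\operatorname{div}\tilde X=\int_{\partial\tilde\Omega}\tilde\eta\,\tilde X\cdot\tilde\nu .
\]

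\textbf{Step 3: prove the Euclidean identity.} For smooth \(\tilde\eta\) and \(\tilde X\) on a Lipschitz domain, this is the Gauss--Green theorem applied to the field \(\tilde\eta\tilde X\), using
\[
\operatorname{div}(\tilde\eta\tilde X)=\nabla\tilde\eta\cdot\tilde X+\tilde\eta\operatorname{div}\tilde X .
\]
Citing \cite{rodrigues1987obstacle} or any standard reference suffices for this smooth case.

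\textbf{Step 4: pass to the stated regularity (the main obstacle).} On a bounded Lipschitz domain, \(C^\infty(\overline{\tilde\Omega})\) is dense in \(H^1\). Choose approximations as follows:
\begin{itemize}
\item \(\tilde X^n\to\tilde X\) in \(H^1\);
\item \(\tilde\eta^n\to\tilde\eta\) uniformly with \(\nabla\tilde\eta^n\to\nabla\tilde\eta\) in \(L^2\). This is possible because a Lipschitz function is \(W^{1,\infty}\subset H^1\) on a bounded domain.
\end{itemize}
The two volume terms then converge:
\begin{itemize}
\item \(\nabla\tilde\eta^n\cdot\tilde X^n\to\nabla\tilde\eta\cdot\tilde X\) in \(L^1\), as a product of two \(L^2\)-convergent sequences;
\item \(\tilde\eta^n\operatorname{div}\tilde X^n\to\tilde\eta\operatorname{div}\tilde X\) in \(L^1\), since \(\tilde\eta^n\) converges uniformly and \(\operatorname{div}\tilde X^n\) converges in \(L^2\).
\end{itemize}
For the boundary term, the trace operator \(H^1(\tilde\Omega)\to L^2(\partial\tilde\Omega)\) is continuous on Lipschitz domains, so \(\tilde X^n\cdot\tilde\nu\to\tilde X\cdot\tilde\nu\) in \(L^2(\partial\tilde\Omega)\). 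The traces of \(\tilde\eta^n\) converge uniformly. Together these give convergence of the boundary integral.

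The delicate point is precisely this trace continuity for Lipschitz (rather than smooth) boundaries, together with \(\nu\) being defined only \(\sigma\)-a.e. Both are standard: Gagliardo's trace theorem covers the first, and Rademacher's theorem applied to the local boundary graphs covers the second. I would cite them rather than reprove them.
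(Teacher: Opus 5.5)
Your proposal is correct and takes the same route as the paper. The paper gives no separate proof: it says the identity follows from Green's formula in $\mathbb R^{N-1}$ via the affine identification of $H$ (citing \cite{rodrigues1987obstacle}), and your flattening, term-by-term transformation and density argument spell that reduction out. The one loose justification is the uniform approximation of $\tilde\eta$. It does not follow from $H^1$-density alone, but you get it by extending $\tilde\eta$ to a Lipschitz function on $\mathbb R^{N-1}$ and mollifying. You could also drop it, since $H^1$-convergence of both factors and of their traces already controls every term.
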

Throughout, boundary and interface integrals involving $g$ are
understood using its Sobolev trace on the relevant surface.

\subsection{General lemmas}

\begin{lemma}
\label{lem:pure_option_derivatives}
For \(q'\in\mathbb R_{++}^N\), let
\(U_{q'}(\theta):=\max_i\theta_iq_i'\). Then, at \(q\), for every
\(h\in\mathbb R^N\),
\[
\left.\frac{d}{dt}\right|_{t=0}
\int_\Gamma\lambda U_{q+th}g\,d\theta
=
A\cdot h,
\quad \text{and, for every \(j\),} \quad
\left.\frac{d}{dt}\right|_{t=0}
\int_\Gamma(x_{U_{q+th}})_j g\,d\theta
=
\sum_iJ_{ij}h_i.
\]
\end{lemma}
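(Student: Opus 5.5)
The plan is to differentiate directly, splitting each integral over the regions \(\Gamma_i\) of the pure option mechanism at \(q\) and tracking how these regions move when \(q\) is perturbed to \(q+th\). For the first identity, note that \(U_{q'}(\theta)=\max_i\theta_iq_i'\) is, for a.e. \(\theta\), differentiable in \(q'\) at \(q\). Indeed, ties \(\theta_iq_i=\theta_jq_j\) occur only on the null set \(\bigcup_{i\neq j}\Gamma_i\cap\Gamma_j\). Off this set,
\[
\frac{d}{dt}\Big|_{t=0} U_{q+th}(\theta)=\theta_i h_i\qquad\text{for }\theta\in\Gamma_i^\circ .
\]
The difference quotients are bounded by \(\max_i\theta_i|h_i|\le\|h\|_\infty\), since a maximum of linear functions is Lipschitz in the coefficients. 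Since \(\lambda\) is bounded and \(g\) integrable, dominated convergence gives
\[
\frac{d}{dt}\Big|_{t=0}\int\lambda U_{q+th}g=\sum_i h_i\int_{\Gamma_i}\theta_i\lambda g=A\cdot h .
\]

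For the second identity, I will use that \(U_{q'}\) is the indirect utility of the pure menu \(\{q_i'e_i\}\). By Corollary~\ref{lem:utility-space-corr}, \(x_{U_{q'}}=q_i'e_i\) a.e. on \(\Gamma_i(q')\), the region where option \(i\) is chosen at \(q'\). Hence
\[
\int (x_{U_{q'}})_j\,g=q_j'\,G\bigl(\Gamma_j(q')\bigr),
\]
and differentiating by the product rule yields \(h_jM_j+q_j\frac{d}{dt}G(\Gamma_j(q+th))\). The regions are polyhedral: \(\Gamma_j(q')=\{\theta:\theta_jq_j'\ge\theta_kq_k'\ \forall k\}\). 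So the derivative of their mass is a boundary-flux term over the facets \(\Gamma_j\cap\Gamma_k\) for \(k\neq j\) (a Reynolds transport / Hadamard formula). On the facet with \(k\), the defining function is \(\phi_{jk}=\theta_jq_j'-\theta_kq_k'\). Its \(t\)-derivative is \(\theta_jh_j-\theta_kh_k\), which on the facet, where \(\theta_k=\theta_jq_j/q_k\), equals \(\theta_j(h_j-q_jh_k/q_k)\). The normal velocity of the facet is this quantity divided by the norm of the intrinsic gradient of \(\phi_{jk}\) within \(H\). That gradient is the projection of \(q_je_j-q_ke_k\) onto \(TH\), with squared norm \(q_j^2+q_k^2-\tfrac1N(q_j-q_k)^2\). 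This is exactly the normalization appearing in \(T_{jk}\).

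Collecting terms gives
\[
\frac{d}{dt}G(\Gamma_j)=\sum_{k\neq j}\frac{\int_{\Gamma_j\cap\Gamma_k}g\,\theta_j\,d\sigma}{\sqrt{\cdots}}\Bigl(h_j-\frac{q_j}{q_k}h_k\Bigr)=\sum_{k\neq j}T_{jk}\Bigl(h_j-\frac{q_jh_k}{q_k}\Bigr).
\]
Multiplying by \(q_j\) and adding \(h_jM_j\) should reproduce the coefficients of \(J\). I will check the index convention carefully here, using the facet identity \(q_j\theta_j=q_k\theta_k\). This identity converts \(q_jT_{jk}\) into terms with \(q_k\) and \(\theta_k\), i.e. \(q_jT_{jk}=q_kT_{kj}\). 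So the \(h_k\) coefficient \(-q_j^2T_{jk}/q_k=-q_jT_{kj}\) is the entry \(J_{kj}\), and the \(h_j\) coefficient \(M_j+q_j\sum_k T_{jk}\) is \(J_{jj}\). This matches \(\sum_i J_{ij}h_i\).

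The main obstacle is rigor in the boundary-variation step, for two reasons. First, \(g\) is only \(H^1\), so facet integrals are understood via traces. Second, the facets meet \(\partial\Gamma\) and each other at lower-dimensional corners. My plan is to avoid a general Hadamard formula. I will write \(G(\Gamma_j(q+th))-G(\Gamma_j(q))\) as a sum over \(k\) of integrals over thin wedges between the old and new facets, which intersect only on sets of measure \(O(t^2)\). I will parametrize each wedge by facet point times normal offset, using the coarea formula. I then pass to the limit using continuity of \(L^2\) traces of \(H^1\) functions along parallel hyperplane slices, in the form \(\frac1t\int_0^t\|g(\cdot+s\nu)-\mathrm{tr}\,g\|_{L^1}ds\to0\). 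Corner contributions are \(o(t)\) by the boundedness of the density near those sets in the integrated sense.
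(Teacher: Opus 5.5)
Your proposal is correct and follows the same route as the paper. The first identity comes from Danskin plus dominated convergence, and the second from $\int_\Gamma (x_{U_{q'}})_j g\,d\theta=q_j'\,G(\Gamma_j(q'))$ together with the boundary-movement formula, using the intrinsic normal norm $\sqrt{q_j^2+q_k^2-\tfrac1N(q_j-q_k)^2}$.

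Two parts of your write-up fill in details the paper leaves implicit: the facet identity $q_jT_{jk}=q_kT_{kj}$, which shows the $h_k$-coefficient is $J_{kj}$, and your wedge/trace justification of the moving-boundary step for $g\in H^1$. On the corner terms, $g$ need not be bounded. Cauchy--Schwarz on sets of measure $O(t^2)$, combined with absolute continuity of $\int g^2$, already gives $o(t)$.
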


\begin{proof}
Since indifference boundaries have \(G\)-measure zero, Danskin's theorem and
dominated convergence give
\[
\left.\frac{d}{dt}\right|_{t=0}
\int_\Gamma\lambda U_{q+th}g\,d\theta
=
\sum_i h_i\int_{\Gamma_i}\theta_i\lambda g\,d\theta
=
A\cdot h.
\]
For the latter derivative, note that $\int_\Gamma(x_{U_{q'}})_j g\,d\theta
=
q_j'G(\Gamma_j(q')).$ The direct effect of changing \(q_j\) is \(M_j\). Along
\(\Gamma_i\cap\Gamma_j\), the separating surface is
\(q_i\theta_i=q_j\theta_j\), whose intrinsic normal has norm
\[
\left\|\nabla_H(q_i\theta_i-q_j\theta_j)\right\|
=
\sqrt{q_i^2+q_j^2-\tfrac1N(q_i-q_j)^2}.
\]
The boundary-movement formula therefore gives
\[
\frac{\partial}{\partial q_j}
\int_\Gamma(x_{U_{q'}})_j g\,d\theta
=
M_j+q_j\sum_{i\neq j}T_{ji}
=
J_{jj},
\quad \text{and, for \(i\neq j\),} \quad
\frac{\partial}{\partial q_i}
\int_\Gamma(x_{U_{q'}})_j g\,d\theta
=
-q_jT_{ij}
=
J_{ij}.
\]
\end{proof}

\section{Omitted proofs from Sections \ref{sec:model}---\ref{sec:CEEI}}

\subsection{Proof of Proposition \ref{prop:P_redundant}}
Let $M:=\sup\{v_j:v\in\mathcal V,\ j=1,\dots,N\}<\infty.$ For every \(i\), choose \(\delta_i>0\) such that $Z_i:=F(\{v\in\mathcal V:v_i\ge\delta_i\})>0,$ and set
\[
\bar\eta:=\min_i\frac{\delta_iZ_i}{4MN\sum_j s_j}.
\]
Fix \(\eta\leq\bar\eta\) and an allocation rule \(y\) feasible with supplies
\(\eta s\). For each \(i\), Markov's inequality gives
\[
F\Big(\Big\{v:\sum_jy_j(v)>
\frac{2\eta\sum_j s_j}{Z_i}\Big\}\Big)
\le\frac{Z_i}{2}.
\]
Consequently, the set
\[
S_i:=
\Big\{
v:v_i\ge\delta_i,\ 
\sum_jy_j(v)\le\frac{2\eta\sum_j s_j}{Z_i}
\Big\}
\]
has positive mass. Note also that every \(v\in S_i\) satisfies
\[
v\cdot y(v)
\le
M\sum_jy_j(v)
\le
\frac{2M\eta\sum_j s_j}{Z_i}
\le
\frac{\delta_i}{2N}.
\]
If \(y_i(v')\ge1/N\) for some \(v'\), every \(v\in S_i\) would strictly prefer
reporting \(v'\), since
\[
v\cdot y(v')
\ge v_i y_i(v')
\ge\frac{\delta_i}{N}
>
v\cdot y(v),
\]
contradicting incentive compatibility. Thus \(y_i(v)<1/N\) for every \(i\) and
\(v\), and consequently \(\sum_i y_i(v)<1\) for every \(v\). Hence
\eqref{eq:P} is slack.

\subsection{Proof of Lemma \ref{lem:renorm-wlog}}

Let \(y:\mathcal V\to\mathbb R_+^N\) be feasible in the original problem.
By the boundedness argument above, \(y\) is uniformly bounded. Define
\[
K:=\overline{\operatorname{co}}\{y(v):v\in\mathcal V\},
\qquad
\tilde U(\theta):=\max_{z\in K}\theta\cdot z,
\]
and let \(\bar x\) be a measurable version of
\(\mathbb E[y(V)\mid\Theta=\theta]\). Since \(K\) is compact, \(\tilde U\) is continuous. Moreover, because
\(z\mapsto\theta\cdot z\) is linear and continuous, passing to the convex hull
and its closure does not change the supremum, and hence, $\tilde U(\theta)
=
\sup_{v'\in\mathcal V}\theta\cdot y(v').$ Now, since \(y(V)\in K\) a.s. and \(K\) is closed and convex, its
conditional expectation satisfies
\(\bar x(\theta)=\mathbb E[y(V)\mid\Theta=\theta]\in K\) for a.e. \(\theta\).

Moreover, incentive compatibility implies that, for every realized
\(V=(\sum_iV_i)\Theta\) and every \(v'\in\mathcal V\), we have $\Theta\cdot y(V)\ge \Theta\cdot y(v').$ Hence $\Theta\cdot y(V)
=
\sup_{v'\in\mathcal V}\Theta\cdot y(v')
=
\tilde U(\Theta)$ a.s. Taking conditional expectations given \(\Theta=\theta\) therefore gives $\theta\cdot\bar x(\theta)
=
\mathbb E[\Theta\cdot y(V)\mid\Theta=\theta]
=
\tilde U(\theta)$ for a.e. $\theta$. Let us then take a Borel set of full \(G\)-measure \(B\subseteq\Gamma\) such that $\bar x(\theta)\in K$ and $\theta\cdot\bar x(\theta)=\tilde U(\theta)$ for every \(\theta\in B\), and make a measurable selection
\[
z^*(\theta)\in\argmax_{z\in K}\theta\cdot z
\quad \text{and define} \quad
x(\theta)
:=
\begin{cases}
\bar x(\theta),&\theta\in B,\\
z^*(\theta),&\theta\notin B.
\end{cases}
\]
Then, for every \(\theta\in\Gamma\), we have $x(\theta)\in K$ and $\theta\cdot x(\theta)=\tilde U(\theta)$. Therefore, for every \(\theta,\theta'\in\Gamma\),
\[
\theta\cdot x(\theta')
\le
\max_{z\in K}\theta\cdot z
=
\tilde U(\theta)
=
\theta\cdot x(\theta),
\]
so \(x\) is incentive compatible. Now, since \(x=\bar x\) \(G\)-almost everywhere, the tower property gives
\[
\int_\Gamma x(\theta)\,dG(\theta)
=
\mathbb E\!\left[\mathbb E[y(V)\mid\Theta]\right]
=
\int_{\mathcal V}y(v)\,dF(v)
\le s.
\]
Thus \(x\) is feasible in Problem~\ref{prob:renormalized}. Moreover, since
\(V=(\sum_iV_i)\Theta\),
\[
\int_{\mathcal V}v\cdot y(v)\,dF(v)
=
\mathbb E\big[\big(\sum V_i\big)\tilde U(\Theta)\big]
=
\int_\Gamma\lambda(\theta)\tilde U(\theta)\,dG(\theta)
=
\int_\Gamma
\lambda(\theta)\,\theta\cdot x(\theta)\,dG(\theta).
\]
Conversely, let \(x\) be feasible in Problem~\ref{prob:renormalized}, and
define $y(\mathbf 0):=0$ and $y(v):=x({v}/{\sum v_i})$ for $v\neq\mathbf 0.$ The supply constraint follows from the definition of \(G\). For
\(v,v'\neq\mathbf 0\), \eqref{eq:ICprime} gives
\[
v\cdot y(v)
=
\left(\sum v_i\right)
\frac{v}{\sum v_i}\cdot
x\left(\frac{v}{\sum v_i}\right)
\ge
\left(\sum v_i\right)
\frac{v}{\sum v_i}\cdot
x\left(\frac{v'}{\sum v_i'}\right)
=
v\cdot y(v').
\]
The cases involving \(\mathbf 0\) are immediate. Finally, because \(F\)
puts no mass on \(\mathbf 0\), the welfare equality follows from the
definition of \(\lambda\).

\subsection{Proof of Fact \ref{fact:puremechs}}

For \(q\in\mathbb R_{++}^N\), let
\[
x^q(\theta)\in
\argmax_{z\in\{q_1e_1,\dots,q_Ne_N\}}\theta\cdot z.
\]
Write \(y_i:=\log(1/q_i)\). Choosing option \(i\) is then equivalent to
maximizing \(\log\theta_i-y_i\). Define
\[
\Gamma_i(y)
:=
\left\{
\theta\in\Gamma:
\log\theta_i-y_i\ge\log\theta_j-y_j
\text{ for every }j
\right\},
\qquad
m_i(y):=G(\Gamma_i(y)).
\]
Aggregate demand for good \(i\) is \(e^{-y_i}m_i(y)\), so market clearing is
equivalent to
\begin{equation}\label{eq:marketcleariff}
m_i(y)=s_i e^{y_i}
\qquad\text{for every }i.
\end{equation}

Consider the potential
\begin{equation}\label{eq:Psi_def}
\Psi(y)
:=
\int_\Gamma\max_j\{\log\theta_j-y_j\}\,dG(\theta)
+
\sum_{j=1}^N s_j e^{y_j},
\end{equation}
with the convention \(\log0=-\infty\). For every \(y\), indifference between
options \(i\neq j\) occurs only on $\{\theta\in\Gamma:\theta_i=e^{y_i-y_j}\theta_j\},$ which is \(G\)-null. Hence, the maximizer is
unique \(G\)-almost everywhere. Since the integrand is \(1\)-Lipschitz in
each coordinate of \(y\), Danskin's theorem and dominated convergence give $\frac{\partial\Psi}{\partial y_i}(y)
=
-m_i(y)+s_i e^{y_i}.$ Thus, by \eqref{eq:marketcleariff}, the market-clearing vectors are precisely the critical points of \(\Psi\). It then suffices to show that \(\Psi\) has a unique critical point. I do so by proving that \(\Psi\) is strictly convex and attains a minimum.

Strict convexity follows because the first term in \(\Psi\) is convex and \(\sum_j s_je^{y_j}\) is strictly convex because \(s_j>0\) for every \(j\). Consequently, if a minimum is attained, it is unique. I now show it is attained. To that end, let $\overline y:=\max_i y_i,$ $\underline y:=\min_i y_i,$ and $\underline s:=\min_i s_i>0.$ Because \(\max_i\theta_i\ge 1/N\), we have $\Psi(y)
\ge
-\log N-\overline y+\underline s e^{\overline y},$
which tends to \(+\infty\) as \(\overline y\to+\infty\). If instead
\(\overline y\) remains bounded while \(\|y\|\to\infty\), then, along a
subsequence, \(y_i=\underline y\to-\infty\) for some fixed \(i\). Since
\(G(\{\theta_i>0\})=1\), there exists \(\delta>0\) such that $Z:=G(\{\theta_i>\delta\})>0.$ Therefore
\[
\int_\Gamma\max_j\{\log\theta_j-y_j\}\,dG
\ge
Z(\log\delta-\underline y)
+
(1-Z)(-\log N-\overline y)
\to+\infty.
\]
Thus, \(\Psi\) is coercive and attains a unique minimum.

\subsection{Proof of Fact \ref{fact:pure_implementations}}

Let $(p,x)$ be a CEEI. First, note $p_i>0$ for every $i$: if $p_i=0$, then every type with $\theta_i>0$ would demand arbitrarily large amounts of good $i$, contradicting feasibility. Now, define $q_i:=1/{p_i}$. For any bundle $z\ge 0$ satisfying $p\cdot z\le 1$,
\[
\theta\cdot z
=\textstyle\sum_{j} \theta_j z_j
\le
\Big(\max_j \frac{\theta_j}{p_j}\Big)\sum_{j=1}^N p_j z_j
\le
\max_j \frac{\theta_j}{p_j}
=
\max_j \theta_j q_j.
\]
This upper bound is attained by any pure option $q_i e_i$ with $i\in\argmax_j \theta_j q_j$. Thus, for every type, at least one pure option is utility-maximizing in the CEEI demand problem, and for all but a null set of types this pure option is unique. Since the CEEI allocation clears the market, the vector $q=(q_1,\dots,q_N)$ is a market-clearing vector of pure options. By Fact \ref{fact:puremechs}, this vector is uniquely pinned down, and the induced allocation coincides with the pure option mechanism up to tie-breaking among a null set of types. The argument for the representative endowment economy is analogous. 

Now, consider the choice-based lottery. In any pure-strategy Nash equilibrium, the mass $m_i$ of agents choosing each good $i$ must satisfy $m_i>0$; otherwise any agent with $\theta_i>0$ would deviate to good $i$ and receive an infinite allocation. Each agent who chose good $i$ receives $q_i:=s_i/m_i$ of it. In equilibrium, every type $\theta$ must be choosing a good $i$ that maximizes $\theta_i q_i$; otherwise, switching to a good $j$ with $\theta_j q_j>\theta_i q_i$ would be profitable, since the mass of a single agent is zero and such a deviation does not change $m_j$. Hence the equilibrium allocation coincides with the pure option mechanism with quantities $q=(q_1,\dots,q_N)$: by Fact~\ref{fact:puremechs}, this vector is uniquely pinned down, and the induced allocation assigns $q_i e_i$ to almost every type $\theta\in\Gamma_i$.

Conversely, let $q=(q_1,\dots,q_N)$ be the market-clearing vector of pure options. Setting $p_i=1/q_i$ implements the pure option mechanism as a CEEI. Since market clearing gives $\sum_i s_i/q_i=1$, the same prices also implement it as a Walrasian equilibrium of the representative endowment economy. Finally, in the choice-based lottery, if each type chooses a good maximizing $\theta_i q_i$, then $m_i=s_i/q_i$ and hence $s_i/m_i=q_i$. No type can profitably deviate, so this action profile is a pure-strategy Nash equilibrium.

\subsection{Proof of Fact \ref{fact:supply_cost_properties}}

Let \(D:=\operatorname{diag}(q_1,\ldots,q_N)\). A direct calculation gives $J=HD,$ where
\[
H_{ii}=\frac{M_i}{q_i}+\sum_{j\neq i}T_{ij},
\qquad
H_{ij}=-T_{ij}\leq 0
\quad (i\neq j).
\]
For every row \(i\),
\[
H_{ii}-\sum_{j\neq i}|H_{ij}|
=
\frac{M_i}{q_i}
>
0.
\]
Thus, \(H\) is a strictly row-diagonally dominant \(Z\)-matrix with
positive diagonal entries. Hence \(H\) is a nonsingular \(M\)-matrix, so
\(H^{-1}\geq 0\) entrywise. Since \(D\) is also invertible, \(J=HD\) is
invertible and $c=J^{-1}A=D^{-1}H^{-1}A.$ Moreover, \(H^{-1}A\) is strictly positive. Indeed, \(H^{-1}\) is
nonnegative and invertible, so each of its rows contains at least one
strictly positive entry. Since \(A\) is strictly positive, it follows that $H^{-1}A\gg 0.$ Finally, \(D^{-1}\) has strictly positive diagonal entries, and therefore $c=D^{-1}H^{-1}A\gg0.$

\subsection{Proof of Proposition \ref{prop:exchcondition}}

We verify the conditions of
Theorem~\ref{thm:rent-measure-characterization} by taking
\(\tilde\mu_i:=\mu_i:=\mu\restriction_{\Gamma_i}\) for every \(i\).
The rent measure \(\mu\) assigns zero mass to the indifference boundaries
between distinct pure-option regions. Moreover,
\(\theta_*=\theta_i\) \(\mu\)-almost everywhere on \(\Gamma_i\). Hence, for
every continuous, convex \(U:\Gamma\to\mathbb R_+\),
\[
\sum_{i=1}^N
\int_{\Gamma_i}\frac{U}{\theta_i}\,d\tilde\mu_i
=
\int_\Gamma\frac{U}{\theta_*}\,d\mu,
\]
so preprocessing holds with equality. Moreover, by the balance condition
\eqref{eq:gammai-balance}, \(\mu_i(\Gamma_i)=0\), so
\(\mu_i^+(\Gamma_i)=\mu_i^-(\Gamma_i)\). Since the order relation
\(\succ_i\) is closed on \(\Gamma_i\),
Theorem~\ref{thm:stoch_order_i} implies that it is enough to show
\(\mu_i(C)\ge0\) for every closed \(\succ_i\)-upper set
\(C\subseteq\Gamma_i\).

Under exchangeability and equal supplies, the pure option mechanism is symmetric, so \(q_1=...=q_N\), \(\theta^0=\frac1N\mathbf 1\), and all \(A_i\) are equal: \(A_i =:\bar A\). Moreover, the supply costs satisfy \(c=N\bar A\,\mathbf 1\) and \(\sum_j c_j=N^2\bar A\). Plugging this into \eqref{eq:signedmeas} gives, for any Borel set \(\Omega\subseteq\Gamma_i\),
\begin{equation}\label{eq:symmmeasure_new}
\mu_i(\Omega)
=
\int_{\Omega} \lambda\,\theta_i\, g \,d\theta
-N^2\bar A\int_{\Omega}\theta_i\big[\operatorname{div}\big((\theta-\theta^0) g\big)+ g \big]\,d\theta
+N^2\bar A\int_{\Omega\cap \partial \Gamma }\theta_i\,  g \,
(\theta-\theta^0)\cdot\nu\  d\sigma.
\end{equation}

I first show that for every \(\succ_i\)-upper set \(C\subseteq \Gamma_i\), we have
\begin{equation}\label{eq:upper_mean_ineq_final}
\int_C \lambda\,\theta_i\,g\,d\theta \;\ge\; N\bar A \int_C g\,d\theta.
\end{equation}
Indeed, since \(q_1=...=q_N\), the set \(\Gamma_i\) coincides up to a null set with \(\{\Theta_i\ge \Theta_j\ \forall j\neq i\}\). Also, by definition of \(\succ_i\), every \(\succ_i\)-upper set \(C\subseteq\Gamma_i\) can be written as \(C=\{\theta\in\Gamma_i:\ R_i(\theta)\in B\}\) for some \(\ge\)-lower set \(B\subseteq \mathbb R_+^N\). Since \(B^c\) is then \(\ge\)-upper, the assumed stochastic monotonicity implies that for every \(t\ge 0\),
\[
\mathbb P \left[\Theta\in C \,\middle|\, \lambda(\Theta)\Theta_i\ge t,\ \Theta\in\Gamma_i\right]
\ge
\mathbb P \left[\Theta\in C \,\middle|\, \Theta\in\Gamma_i\right].
\]
Integrating over \(t\) yields
\[
\mathbb E \left[\lambda(\Theta)\Theta_i\,\mathbf 1_{\{\Theta\in C\}} \,\middle|\, \Theta\in\Gamma_i\right]
\ge
\mathbb P \left[\Theta\in C \,\middle|\, \Theta\in\Gamma_i\right]
\mathbb E \left[\lambda(\Theta)\Theta_i \,\middle|\, \Theta\in\Gamma_i\right].
\]
Equivalently,
\[
\int_C \lambda\,\theta_i\,g\,d\theta
\ge
\frac{\int_{\Gamma_i}\lambda\,\theta_i\,g\,d\theta}{\int_{\Gamma_i}g\,d\theta}\int_C g\,d\theta.
\]
Under exchangeability, \(\int_{\Gamma_i}g\,d\theta=\frac1N\) and \(\int_{\Gamma_i}\lambda\,\theta_i\,g\,d\theta=\bar A\), giving \eqref{eq:upper_mean_ineq_final}.

Next, let \(C\subseteq\Gamma_i\) be a \(\succ_i\)-upper set with Lipschitz boundary. Applying Theorem \ref{th:divonH} to the tangent field \((\theta-\theta^0)g\) on \(C\), and using \(\nabla_H\theta_i=e_i-\frac1N\mathbf 1\), we obtain
\[
\int_C \theta_i\Big[\operatorname{div}\big((\theta-\theta^0)g\big)+g\Big]\,d\theta
=
\int_{\partial C}\theta_i\,(\theta-\theta^0)\cdot \nu_C\, g\,d\sigma
+\frac1N\int_C g\,d\theta,
\]
where \(\nu_C\) is the outward unit conormal to \(\partial C\). Substituting this into \eqref{eq:symmmeasure_new} gives
\[
\mu_i(C)
=
\int_C \lambda\,\theta_i\,g\,d\theta
-
N\bar A\int_C g\,d\theta
-
N^2\bar A\int_{\partial C\cap \Gamma^\circ}\theta_i\,(\theta-\theta^0)\cdot \nu_C\, g\,d\sigma.
\]
By \eqref{eq:upper_mean_ineq_final}, the sum of the first two terms is weakly positive. Since \(N^2\bar A>0\) and \(\theta_i g\geq 0\), it therefore suffices to show that $(\theta-\theta^0)\cdot \nu_C\leq 0$ for a.e. \(\theta\in\partial C\cap\Gamma^\circ\). Fix such a boundary point \(\theta\) and, for small \(t>0\), let
\[
\theta^t:=\theta+t(\theta-\theta^0)
=\theta^0+(1+t)(\theta-\theta^0).
\]
Since \(\theta\in\Gamma^\circ\), we have \(\theta^t\in\Gamma^\circ\) for all sufficiently small \(t>0\). Moreover, for every \(k\neq i\),
\[
\frac{\theta_k^t}{\theta_i^t}
=
\frac{\theta_k+t(\theta_k-\theta_k^0)}
{\theta_i+t(\theta_i-\theta_i^0)}
\leq
\frac{\theta_k}{\theta_i}
\quad\Longleftrightarrow\quad
\frac{\theta_k^0}{\theta_i^0}
\geq
\frac{\theta_k}{\theta_i}.
\]
The last inequality holds because \(\theta^0=\frac1N\mathbf 1\) and
\(\theta\in\Gamma_i\), so \(\theta_k/\theta_i\leq1\). Thus
\(\theta^t\succ_i\theta\), and the same inequalities imply that
\(\theta^t\in\Gamma_i\). Since \(C\) is closed and \(\succ_i\)-upper,
\(\theta\in C\) and hence \(\theta^t\in C\). Therefore
\(\theta-\theta^0\) points weakly inward at \(\theta\), which implies $(\theta-\theta^0)\cdot\nu_C\leq0$
wherever the outward conormal exists. Since \(\partial C\) is Lipschitz, it
exists almost everywhere. The boundary integral is therefore nonpositive,
and hence \(\mu_i(C)\geq0\).

I now extend the argument to all closed $\succ_i$-upper sets using the following lemma:

\begin{lemma}\label{lem:approx_upper_simple}
Fix \(i\). Let \(C\subseteq \Gamma_i\) be a closed \(\succ_i\)-upper set. Then there exists a decreasing sequence \((K_m)_{m\ge 1}\) of closed \(\succ_i\)-upper sets such that \(K_{m+1}\subseteq K_m\), \(\bigcap_{m\ge 1}K_m=C\), and each \(K_m\) is a finite union of polytopes in \(H\) defined by finitely many inequalities of the form \(\theta_k\le a\,\theta_i\) for \(k\neq i\).
\end{lemma}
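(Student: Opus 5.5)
Work in the ratio coordinates. On $\Gamma_i$ the map $\theta\mapsto R(\theta):=(\theta_k/\theta_i)_{k\neq i}$ is a homeomorphism onto a compact set $Q\subseteq[0,\bar r]^{N-1}$, where $\bar r$ is the value of the ratio on $\Gamma_i$ at $\theta^0$ (equal to $1$ in the symmetric case). The set $Q$ is itself a box of the form $\{r: 0\le r_k\le q_i/q_k\}$. Under this map, $\theta\succ_i\theta'$ corresponds exactly to $R(\theta)\le R(\theta')$ coordinatewise. Hence $\succ_i$-upper subsets of $\Gamma_i$ correspond to coordinatewise \emph{lower} sets $D:=R(C)\subseteq Q$, and $C$ is closed exactly when $D$ is closed. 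Inequalities $\theta_k\le a\theta_i$ become $r_k\le a$. It therefore suffices to approximate a closed lower set $D\subseteq Q$ from outside by a decreasing sequence of closed lower sets, each a finite union of boxes $\{r\in Q: r_k\le a_k \ \forall k\}$, and then pull back by $R^{-1}$. The pullback of such a box is the polytope in $H$ cut out by $\theta_k\le a_k\theta_i$ together with the defining inequalities of $\Gamma_i$, which are of the same form. Finite unions of such boxes are lower sets, and they remain lower after intersecting with $Q$ since $Q$ is itself a lower set.

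For the construction, fix $m$ and the dyadic grid of mesh $2^{-m}\bar r$ on $[0,\bar r]^{N-1}$. Let $\mathcal B_m$ be the finite set of closed grid cubes meeting $D$. For each cube $B\in\mathcal B_m$ with upper corner $u_B$, take the box $[0,u_B]:=\{r: r\le u_B\}$. Set
\[
D_m:=Q\cap\bigcup_{B\in\mathcal B_m}[0,u_B].
\]
This is a closed lower set and a finite union of boxes of the required form. It contains $D$: every $r\in D$ lies in some cube $B$, which then meets $D$, and $r\le u_B$.

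Monotonicity follows from the dyadic nesting. A level-$(m+1)$ cube $B'$ meeting $D$ sits inside a level-$m$ cube $B$ that also meets $D$, and $u_{B'}\le u_B$, so $D_{m+1}\subseteq D_m$.

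For the intersection, take $r\in\bigcap_m D_m$. For each $m$ there is a cube $B_m$ meeting $D$, say at a point $d_m$, with $r\le u_{B_m}\le d_m+2^{-m}\bar r\mathbf 1$. By compactness, pass to a subsequence with $d_m\to d$. Since $D$ is closed, $d\in D$, and taking limits gives $r\le d$. Since $D$ is a lower set, $r\in D$. Hence $\bigcap_m D_m=D$, and $K_m:=R^{-1}(D_m)$ works.

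The delicate step is the boundary behaviour of the ratio map. I take $\Gamma_i$ to be the closed region $\{\theta:\theta\succ_i\theta^0\}$, on which $\theta_i\ge\theta^0_i\cdot(\text{const})>0$. This gives $\theta_i$ bounded below on $\Gamma_i$, so $R$ is a genuine homeomorphism onto the compact set $Q$. With that, closedness and the compactness argument transfer between $\Gamma_i$ and $Q$. The only care needed is to verify this positivity and the box description of $Q$ from the definition of $\Gamma_i$.
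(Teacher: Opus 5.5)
Your proof is correct and takes essentially the same route as the paper's. Both pass to the ratio coordinates $(\theta_k/\theta_i)_{k\neq i}$, where closed $\succ_i$-upper sets become closed coordinatewise lower sets, cover these by finite unions of origin-anchored grid boxes, and then use compactness together with the lower-set property to show that the intersection is exactly $C$. The only difference is cosmetic: you get monotonicity directly from the nesting of the dyadic grid, whereas the paper uses a uniform $1/m$ grid and restores monotonicity by setting $K_m:=\bigcap_{n\le m}Q_i^{-1}(D_n)$.
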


\begin{proof}
Define
\[
Q_i:\Gamma_i\to\mathbb R_+^{N-1}, \qquad Q_i(\theta):=\Bigl(\frac{\theta_k}{\theta_i}\Bigr)_{k\neq i}.
\]
This map is injective on \(\Gamma_i\), and \(\theta'\succ_i\theta\) holds if and only if \(Q_i(\theta')\le Q_i(\theta)\) coordinatewise. Therefore \(C\subseteq\Gamma_i\) is \(\succ_i\)-upper if and only if \(Q_i(C)\) is a \(\ge\)-lower subset of \(\mathbb R_+^{N-1}\). Since \(C\) is closed and \(Q_i\) is continuous, \(Q_i(C)\) is compact.

Fix \(m\ge 1\). Let
\[
D_m:=\bigcup\Bigl\{[0,b+\tfrac1m\mathbf 1]:
b\in \tfrac1m\mathbb Z_+^{N-1},\
[0,b]\subseteq Q_i(C)\Bigr\},
\]
where \([0,b]:=\{r\in\mathbb R_+^{N-1}:0\le r\le b\}\). Then \(D_m\) is a closed \(\ge\)-lower set and a finite union of rectangles. Moreover, \(Q_i(C)\subseteq D_m\). To see this, fix \(r\in Q_i(C)\). Since \(Q_i(C)\) is lower, \([0,r]\subseteq Q_i(C)\). Let \(b\in \tfrac1m\mathbb Z_+^{N-1}\) be obtained by rounding each coordinate of \(r\) down to the nearest multiple of \(1/m\), so that
\(b\leq r\leq b+\tfrac1m\mathbf 1\). Hence
\([0,b]\subseteq[0,r]\subseteq Q_i(C)\), so the rectangle
\([0,b+\tfrac1m\mathbf 1]\) appears in the union defining \(D_m\). Since \(r\) belongs to this rectangle, \(r\in D_m\).

I now show that \(Q_i(C)=\bigcap_{m\ge 1}D_m\). Since \(Q_i(C)\subseteq D_m\) for all \(m\), it remains to show the reverse inclusion. Take any \(r\in \bigcap_m D_m\). For each \(m\), choose \(b_m\in \tfrac1m\mathbb Z^{N-1}\cap Q_i(C)\) such that \(r\le b_m+\tfrac1m\mathbf 1\). By compactness of \(Q_i(C)\), some subsequence of \((b_m)\) converges to \(\tilde b\in Q_i(C)\), and then \(r\le \tilde b\). Since \(Q_i(C)\) is lower, this implies \(r\in Q_i(C)\).

Now define \(C_m:=Q_i^{-1}(D_m)\) and set \(K_m:=\bigcap_{n=1}^m C_n\). Then \(K_{m+1}\subseteq K_m\), each \(K_m\) is closed and \(\succ_i\)-upper, and
\[
\bigcap_{m\ge 1}K_m
=
\bigcap_{m\ge 1}C_m
=
Q_i^{-1}\Bigl(\bigcap_{m\ge 1}D_m\Bigr)
=
Q_i^{-1}(Q_i(C))
=
C.
\]
Finally, each \(C_m\) is a finite union of sets of the form
\[
\Bigl\{\theta\in\Gamma_i:\ \frac{\theta_k}{\theta_i}\le b_k \ \text{for all }k\neq i\Bigr\}
=
\Bigl\{\theta\in\Gamma_i:\ \theta_k\le b_k\,\theta_i \ \text{for all }k\neq i\Bigr\},
\]
hence a finite union of polytopes in \(H\). The same is therefore true of each \(K_m\). 
Moreover, in ratio coordinates each \(K_m\) is a finite union of
axis-aligned rectangles anchored at the origin and sharing a common
full-dimensional rectangle. Hence it has Lipschitz boundary, and since
\(Q_i^{-1}\) is a smooth diffeomorphism on \(Q_i(\Gamma_i)\), the same is
true of \(K_m\) as a subset of \(H\).

\end{proof}

Let \(C\subseteq\Gamma_i\) be any closed \(\succ_i\)-upper set. By Lemma \ref{lem:approx_upper_simple}, there exists a decreasing sequence of closed \(\succ_i\)-upper sets \(K_m\downarrow C\), each a finite union of polytopes in \(H\), each with a Lipschitz boundary. By the above, \(\mu_i(K_m)\ge 0\) for every \(m\). Since \(\mu_i\) is finite, continuity from above gives \(\mu_i(C)=\lim_{m\to\infty}\mu_i(K_m)\ge 0\). Thus, condition~1 of Theorem \ref{thm:stoch_order_i} holds, so \(\mu_i^+\) \(\succ_i\)-stochastically dominates \(\mu_i^-\). Since this is true for every \(i\), Theorem \ref{thm:rent-measure-characterization} implies that the pure option mechanism is optimal.

\subsection{Proof of Corollary \ref{cor:iid_reverse_hazard_pure}}

The i.i.d.\ assumption implies that \(g\) and \(\lambda\) are exchangeable.
We verify directly the upper-set inequality used in the proof of
Proposition~\ref{prop:exchcondition}.

Fix \(i\), and write $E_i:=\{V_i\ge V_j\text{ for all }j\neq i\}.$ Conditional on \(V_i=k\) and \(E_i\), the coordinates \(\{V_j:j\neq i\}\) are
independent and each has distribution \(F_M\) truncated to \([0,k]\). Hence, for
\(j\neq i\) and \(t\in(0,1)\),
\[
\mathbb P\left[\frac{V_j}{V_i}\ge t\ \middle|\ V_i=k,\ E_i\right]
=
1-\frac{F_M(tk)}{F_M(k)}.
\]
It is therefore enough to show that \(F_M(tk)/F_M(k)\) is non-decreasing in
\(k\). Differentiating gives
\[
\frac{\partial}{\partial k}\frac{F_M(tk)}{F_M(k)}
=
\frac{F_M(tk)}{F_M(k)}
\left[
t\,\frac{f_M(tk)}{F_M(tk)}
-
\frac{f_M(k)}{F_M(k)}
\right]
=
\frac{F_M(tk)}{F_M(k)}
\frac{1}{k}
\left[
tk\,\frac{f_M(tk)}{F_M(tk)}
-
k\,\frac{f_M(k)}{F_M(k)}
\right].
\]
The bracketed term is nonnegative by \eqref{eq:reverse_hazard_elasticity}, since \(tk\le k\).
Thus \(V_j/V_i\) is \(\ge\)-stochastically decreasing in \(V_i\) conditional on
\(E_i\). Since the ratios are conditionally independent across \(j\neq i\), the vector
\[
\left(\frac{V_1}{V_i},\dots,\frac{V_N}{V_i}\right)
\]
is \(\ge\)-stochastically decreasing in \(V_i\) conditional on \(E_i\);
see Theorem~3.3.10 in \citet{müller2002comparison}.

Now, let \(C\subseteq\Gamma_i\) be any \(\succ_i\)-upper set. Since ${\Theta_j}/{\Theta_i}={V_j}/{V_i},$ membership in \(C\) is determined by a coordinatewise lower set of the ratio
vector \((V_1/V_i,\dots,V_N/V_i)\). The stochastic monotonicity from above implies that
\begin{equation}\label{eq:probthing}
k\longmapsto
\mathbb P[\Theta\in C\mid V_i=k,\ E_i]
\end{equation}
is non-decreasing. Moreover, notice that the conditional law of \(V_i\) given
\(\{V_i\ge t\}\cap E_i\) stochastically dominates its conditional law given
\(E_i\). Averaging \eqref{eq:probthing} under these two laws
therefore gives $\mathbb P[\Theta\in C\mid V_i\ge t,\ E_i]
\ge
\mathbb P[\Theta\in C\mid E_i]$ for every \(t\in[0,\bar v)\). Integrating over \(t\) then gives
\[
\mathbb E[V_i\mathbf 1_{\{\Theta\in C\}}\mid E_i]
\ge
\mathbb P[\Theta\in C\mid E_i]\,\mathbb E[V_i\mid E_i].
\]
Since \(E_i\) and \(C\) are determined by \(\Theta\) and $\mathbb E[V_i\mid \Theta]
=
\Theta_i\,\mathbb E\big[\sum V_j\mid \Theta\big]
=
\lambda(\Theta)\Theta_i,$ this gives
\[
\int_C \lambda(\theta)\theta_i g(\theta)\,d\theta
\ge
\frac{\int_{\Gamma_i}\lambda(\theta)\theta_i g(\theta)\,d\theta}
{\int_{\Gamma_i}g(\theta)\,d\theta}
\int_C g(\theta)\,d\theta.
\]
By exchangeability, \(\int_{\Gamma_i}g(\theta)\,d\theta=1/N\) and
\(\int_{\Gamma_i}\lambda(\theta)\theta_i g(\theta)\,d\theta=\bar A\), so
\[
\int_C \lambda(\theta)\theta_i g(\theta)\,d\theta
\ge
N\bar A\int_C g(\theta)\,d\theta,
\]
which was needed in the proof of Proposition~\ref{prop:exchcondition}.

\subsection{Proof of Proposition \ref{prop:neighbourhood-perturb}}

Write \(p^*:=(\lambda^*,g^*,s^*)\). I use a superscript \(*\) for objects
associated with \(p^*\). For any other primitive vector
\(p=(\lambda,g,s)\), I write \(\Gamma_i(p)\), \(q(p)\),
\(c(p)\), and \(\mu^p\).

Let \(\mathcal S_i(p)\) denote the collection of closed \(\succ_i\)-upper sets \(C\subseteq\Gamma_i(p)\). For \(p\) for which the vector of pure options \(q(p)\) is well defined, define the map \(\Phi_i^p:\Gamma_i^*\to\Gamma_i(p)\) by
\[
\Phi_i^p(\theta)=\theta^p,
\quad \text{where} \quad
r_{ik}^p(\theta)
:=
\frac{q_i(p)/q_k(p)}{q_i^*/q_k^*}
\frac{\theta_k}{\theta_i},
\quad k\neq i,
\qquad
\theta_k^p
=
\begin{cases}
\displaystyle
\frac{1}{1+\sum_{l\neq i}r_{il}^p(\theta)},
& k=i,\\[12pt]
\displaystyle
r_{ik}^p(\theta)\,
\frac{1}{1+\sum_{l\neq i}r_{il}^p(\theta)},
& k\neq i.
\end{cases}
\]
By construction, \(\Phi_i^{p^*}\) is the identity on \(\Gamma_i^*\). The
following fact follows from the definition of \(\Phi_i^p\) and a standard continuity argument.
\begin{fact}\label{fact:region-identification-maps}
Let \(p\) be any primitive vector for which \(q(p)\in\mathbb R^N_{++}\) is
well defined. Then $\Phi_i^p:\Gamma_i^*\to\Gamma_i(p)$ is a homeomorphism and an order isomorphism:
\[
\theta\succ_i\theta'
\quad\Longleftrightarrow\quad
\Phi_i^p(\theta)\succ_i\Phi_i^p(\theta'),
\quad \text{and thus} \quad
C\in\mathcal S_i^*
\quad\Longleftrightarrow\quad
\Phi_i^p(C)\in\mathcal S_i(p).
\]

Moreover, if \(q(p_n)\to q^*\), then $\Phi_i^{p_n}\to \mathrm{id}$ uniformly on \(\Gamma_i^*\), and the \((N-1)\)-dimensional interior Jacobians
and the \((N-2)\)-dimensional boundary Jacobians of \(\Phi_i^{p_n}\) converge
uniformly to \(1\).
\end{fact}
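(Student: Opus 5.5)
The plan is to conjugate everything to \emph{ratio coordinates}, where \(\Phi_i^p\) becomes a positive diagonal linear map. As in the proof of Lemma~\ref{lem:approx_upper_simple}, let \(Q_i(\theta):=(\theta_k/\theta_i)_{k\neq i}\) on \(\{\theta\in\Gamma:\theta_i>0\}\). It has the smooth inverse
\[
Q_i^{-1}(r)=\frac{e_i+\sum_{k\neq i}r_ke_k}{1+\sum_{k\neq i}r_k},
\]
which is exactly the formula defining \(\theta^p\).

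Put \(a_k(p):=\frac{q_i(p)/q_k(p)}{q_i^*/q_k^*}>0\) and \(S_{a}(r):=(a_kr_k)_{k\neq i}\). The definition then reads \(\Phi_i^p=Q_i^{-1}\circ S_{a(p)}\circ Q_i\). By the fact describing the pure option allocation, \(\Gamma_i(p)=\{\theta:\theta_k/\theta_i\le q_i(p)/q_k(p)\ \forall k\neq i\}\). Hence \(Q_i(\Gamma_i(p))\) is the box \(\prod_{k\neq i}[0,q_i(p)/q_k(p)]\). On \(\Gamma_i(p)\) we have \(\theta_i\) bounded away from \(0\), so \(Q_i\) is a diffeomorphism of the compact set \(\Gamma_i(p)\) onto this box. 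Since \(S_{a(p)}\) maps the box for \(p^*\) bijectively and linearly onto the box for \(p\), \(\Phi_i^p\) is a homeomorphism (indeed a bi-Lipschitz diffeomorphism) \(\Gamma_i^*\to\Gamma_i(p)\).

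For the order statement I use that \(\theta\succ_i\theta'\) iff \(Q_i(\theta)\le Q_i(\theta')\) coordinatewise. Multiplication by positive scalars preserves and reflects coordinatewise order, so \(\Phi_i^p\) is an order isomorphism. A set \(C\subseteq\Gamma_i^*\) is closed and \(\succ_i\)-upper relative to \(\Gamma_i^*\) iff \(\Phi_i^p(C)\) is closed and \(\succ_i\)-upper relative to \(\Gamma_i(p)\). This uses the homeomorphism property for closedness, and for upper-ness it uses that \(\Phi_i^p\) is an order isomorphism onto the whole region. This gives \(\mathcal S_i^*\leftrightarrow\mathcal S_i(p)\).

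For the limit statements, suppose \(q(p_n)\to q^*\). Then \(a(p_n)\to\mathbf 1\). The map \((a,\theta)\mapsto Q_i^{-1}(S_aQ_i(\theta))\) is \(C^\infty\) jointly on the compact set \([1/2,2]^{N-1}\times\Gamma_i^*\) (denominators are \(\ge1\)). Hence \(\Phi_i^{p_n}\to\mathrm{id}\) and \(D\Phi_i^{p_n}\to I\) uniformly, with the differential taken on \(TH\). The interior Jacobian is \(|\det D\Phi_i^{p_n}|_{TH}|\), which is continuous in the differential, so it converges uniformly to \(1\).

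The step needing most care is the \((N-2)\)-dimensional boundary Jacobian. Here I would check that \(\Phi_i^{p_n}\) maps each boundary face of \(\Gamma_i^*\) onto the corresponding face of \(\Gamma_i(p_n)\). There are two kinds of face. Simplex faces \(\{\theta_k=0\}\) correspond to \(r_k=0\), which \(S_a\) fixes. Interfaces \(\{r_k=q_i^*/q_k^*\}\) are sent to \(\{r_k=q_i(p_n)/q_k(p_n)\}\). Each face is a flat piece of an affine subspace of \(H\), or is mapped smoothly onto one, and the boundary is Lipschitz. The boundary Jacobian at a.e.\ point is therefore the \((N-2)\)-volume distortion of \(D\Phi_i^{p_n}\) restricted to the face's tangent space. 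That quantity is a continuous function of \(D\Phi_i^{p_n}\) and the (fixed, piecewise constant) tangent plane, so uniform convergence \(D\Phi_i^{p_n}\to I\) gives uniform convergence to \(1\) on each face, and hence on the finitely many faces together.
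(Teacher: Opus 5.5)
Your proposal is correct and takes the route the paper indicates. The paper only remarks that the fact ``follows from the definition of \(\Phi_i^p\) and a standard continuity argument,'' and your conjugation \(\Phi_i^p=Q_i^{-1}\circ S_{a(p)}\circ Q_i\) to a positive diagonal scaling of the ratio box (the same coordinates used in Lemma~\ref{lem:approx_upper_simple} and Lemma~\ref{lem:common-pseudomarket-smoothness}) is that argument made explicit, with the face-by-face Jacobian step correctly covering both \(\Gamma_i^*\cap\partial\Gamma\) and the interfaces \(\Gamma_i^*\cap\Gamma_k^*\).
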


Given a signed measure \(\nu^p\) on \(\Gamma\), define its pasted pullback
\(\hat\nu^p\) region by region, by requiring that for each \(i\),
\[
\hat\nu^p\restriction_{\Gamma_i^*}(C)
:=
\nu^p(\Phi_i^p(C)),
\qquad 
C\subseteq\Gamma_i^*\text{ Borel}.
\]
Equivalently, whenever \(C\subseteq\Gamma_i^*\), the notation
\(\hat\nu^p(C)\) means this \(i\)-th regionwise pullback. We now show three lemmas.

\begin{lemma}
\label{lem:BUS-open-moving}
Assume that \(\mu^*\) has upper-set slack with constant \(\eta>0\). Let
\(\nu^p\) be a regionwise balanced signed measure on \(\Gamma\), and suppose
that $\mu^*\ll\alpha$ and $\hat\nu^p\ll\alpha$ region by region. Define
\[
d_{\mathrm{US}}^p(\nu^p,\mu^*)
:=
\max_i
\sup\left\{
\frac{|(\hat\nu^p-\mu^*)(C)|}
{\min\{\alpha(C),\alpha(\Gamma_i^*\setminus C)\}}
:
C\in\mathcal S_i^*,\ 
\min\{\alpha(C),\alpha(\Gamma_i^*\setminus C)\}>0
\right\}.
\]
If $d_{\mathrm{US}}^p(\nu^p,\mu^*)<\eta,$ 
then, for every \(i\) and every \(D\in\mathcal S_i(p)\), $\nu^p(D)\ge0.$
\end{lemma}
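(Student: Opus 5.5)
The argument is a pullback through the order isomorphism, followed by a two-case comparison using the slack inequality \eqref{eq:BUS}. Fix \(i\) and \(D\in\mathcal S_i(p)\). By Fact~\ref{fact:region-identification-maps}, \(C:=(\Phi_i^p)^{-1}(D)\) belongs to \(\mathcal S_i^*\). By the definition of the pasted pullback, \(\nu^p(D)=\hat\nu^p(C)\). It therefore suffices to show that \(\hat\nu^p(C)\ge0\) for every \(C\in\mathcal S_i^*\).

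Write \(\kappa(C):=\min\{\alpha(C),\alpha(\Gamma_i^*\setminus C)\}\).

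\textbf{Case \(\kappa(C)>0\).} The definition of \(d_{\mathrm{US}}^p\) gives \(|(\hat\nu^p-\mu^*)(C)|\le d_{\mathrm{US}}^p\,\kappa(C)\). Upper-set slack gives \(\mu^*(C)\ge\eta\,\kappa(C)\). Combining these,
\[
\hat\nu^p(C)\ge \mu^*(C)-d_{\mathrm{US}}^p\kappa(C)\ge(\eta-d_{\mathrm{US}}^p)\,\kappa(C)>0.
\]

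\textbf{Case \(\kappa(C)=0\).} Either \(\alpha(C)=0\) or \(\alpha(\Gamma_i^*\setminus C)=0\).

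If \(\alpha(C)=0\), then \(\hat\nu^p\ll\alpha\) on \(\Gamma_i^*\) gives \(\hat\nu^p(C)=0\).

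If \(\alpha(\Gamma_i^*\setminus C)=0\), absolute continuity gives \(\hat\nu^p(C)=\hat\nu^p(\Gamma_i^*)\). This equals \(\nu^p(\Phi_i^p(\Gamma_i^*))=\nu^p(\Gamma_i(p))\), which is \(0\) by regionwise balance. In both subcases \(\hat\nu^p(C)\ge0\).

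Two points need care. First, the identity \(\hat\nu^p\restriction_{\Gamma_i^*}(\Gamma_i^*)=\nu^p(\Gamma_i(p))\) requires \(\Phi_i^p\) to map \(\Gamma_i^*\) onto \(\Gamma_i(p)\). This is exactly the homeomorphism statement in Fact~\ref{fact:region-identification-maps}. Second, "regionwise balanced" must be read as \(\nu^p(\Gamma_i(p))=0\) for each \(i\), the analogue of \eqref{eq:gammai-balance}.

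The only step I expect to be subtle is the degenerate case, where the slack ratio is undefined. The strict inequality \(d_{\mathrm{US}}^p<\eta\) gives nothing there, so the conclusion must come from absolute continuity with respect to \(\alpha\) (which includes the boundary measure \(\sigma\)) together with balance, as above. Apart from that, the lemma is a direct consequence of the order-isomorphism property and the slack inequality.
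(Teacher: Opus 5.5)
Your proposal is correct and follows the paper's proof essentially step for step. You pull \(D\) back to \(C=(\Phi_i^p)^{-1}(D)\in\mathcal S_i^*\) and handle the degenerate case \(\min\{\alpha(C),\alpha(\Gamma_i^*\setminus C)\}=0\) via \(\hat\nu^p\ll\alpha\) and regionwise balance. In the nondegenerate case you combine the \(d_{\mathrm{US}}^p\) bound with \eqref{eq:BUS}.
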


\begin{proof}
Fix \(i\) and \(D\in\mathcal S_i(p)\). By
Fact~\ref{fact:region-identification-maps}, \(C:=(\Phi_i^p)^{-1}(D)\) belongs to \(\mathcal S_i^*\), and \(\nu^p(D)=\hat\nu^p(C)\).  

If $\min\{\alpha(C),\alpha(\Gamma_i^*\setminus C)\}=0,$ then either \(\alpha(C)=0\) or
\(\alpha(\Gamma_i^*\setminus C)=0\). In the first case,
\(\hat\nu^p\ll\alpha\) gives $\hat\nu^p(C)=0.$ In the second case, regionwise balance gives $\hat\nu^p(C)
=
-\hat\nu^p(\Gamma_i^*\setminus C)
=
0$ again because \(\hat\nu^p\ll\alpha\). Hence \(\nu^p(D)\ge0\).

Now suppose $\min\{\alpha(C),\alpha(\Gamma_i^*\setminus C)\}>0.$ By the definition of \(d_{\mathrm{US}}^p\),
\[
\hat\nu^p(C)
\ge
\mu^*(C)
-
d_{\mathrm{US}}^p(\nu^p,\mu^*)
\min\{
\alpha(C),
\alpha(\Gamma_i^*\setminus C)
\}.
\]
Using \eqref{eq:BUS}, we obtain
\[
\hat\nu^p(C)
\ge
(\eta-d_{\mathrm{US}}^p(\nu^p,\mu^*))
\min\{
\alpha(C),
\alpha(\Gamma_i^*\setminus C)
\}
\ge 0.
\]
Since \(\nu^p(D)=\hat\nu^p(C)\), the claim follows.
\end{proof}

\begin{lemma}
\label{lem:pulled-back-density-control-US}
Let \(\nu^p\) be a signed measure on \(\Gamma\) that is balanced on each
\(\Gamma_i(p)\), and suppose that \(\mu^*\) is balanced on each
\(\Gamma_i^*\). For each \(i\), write
\[
m_i^*:=m\restriction_{\Gamma_i^*},
\qquad
\sigma_i^*:=\sigma\restriction_{\Gamma_i^*\cap\partial\Gamma},
\qquad
\alpha_i^*:=m_i^*+\sigma_i^*.
\]
Suppose that, region by region,
\[
\hat\nu^p-\mu^*
=
a_i\,m_i^*
+
b_i\,\sigma_i^*
\text{  on  }\Gamma_i^*,
\quad \text{with} \quad
\max_i
\left\{
\|a_i\|_{L^\infty(m_i^*)}
+
\|b_i\|_{L^\infty(\sigma_i^*)}
\right\}
\le r.
\]
Then \(d_{\mathrm{US}}^p(\nu^p,\mu^*)\le r\).
\end{lemma}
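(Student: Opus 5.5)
The plan is to bound the numerator in the definition of \(d_{\mathrm{US}}^p\) directly by the denominator. Fix \(i\) and \(C\in\mathcal S_i^*\) with \(\min\{\alpha(C),\alpha(\Gamma_i^*\setminus C)\}>0\). By the assumed decomposition, on \(\Gamma_i^*\) we have \(\hat\nu^p-\mu^*=a_i m_i^*+b_i\sigma_i^*\). Integrating over \(C\) gives
\[
\bigl|(\hat\nu^p-\mu^*)(C)\bigr|
\le \|a_i\|_{L^\infty(m_i^*)}\,m(C)+\|b_i\|_{L^\infty(\sigma_i^*)}\,\sigma(C\cap\partial\Gamma)
\le r\,\alpha(C).
\]
This already controls the quotient whenever the minimum in the denominator is \(\alpha(C)\).

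For the other case I will use balance. Since \(\nu^p(\Gamma_i(p))=0\) and \(\Phi_i^p\) maps \(\Gamma_i^*\) onto \(\Gamma_i(p)\) by Fact~\ref{fact:region-identification-maps}, we get \(\hat\nu^p(\Gamma_i^*)=\nu^p(\Gamma_i(p))=0\). Together with \(\mu^*(\Gamma_i^*)=0\), this gives \((\hat\nu^p-\mu^*)(\Gamma_i^*)=0\). Hence
\[
(\hat\nu^p-\mu^*)(C)=-(\hat\nu^p-\mu^*)(\Gamma_i^*\setminus C).
\]
The complement \(\Gamma_i^*\setminus C\) is Borel, so the same integration bound applies to it and gives \(|(\hat\nu^p-\mu^*)(C)|\le r\,\alpha(\Gamma_i^*\setminus C)\).

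Combining the two estimates yields
\[
|(\hat\nu^p-\mu^*)(C)|\le r\min\{\alpha(C),\alpha(\Gamma_i^*\setminus C)\}.
\]
Dividing by the minimum and taking the supremum over \(C\) and the maximum over \(i\) gives \(d_{\mathrm{US}}^p(\nu^p,\mu^*)\le r\).

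The only delicate point is bookkeeping. The restriction \(\alpha_i^*\) must agree with \(\alpha\) on subsets of \(\Gamma_i^*\), which holds because \(\sigma_i^*\) is \(\sigma\) restricted to \(\Gamma_i^*\cap\partial\Gamma\). The pulled-back measure must be the regionwise one defined above, so that the balance of \(\hat\nu^p\) on \(\Gamma_i^*\) follows from the bijectivity of \(\Phi_i^p\). If shared indifference boundaries lie in two regions, I would note that \(\alpha\)-null interfaces do not affect these estimates.
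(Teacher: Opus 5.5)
Your proposal is correct and follows the paper's proof essentially line by line. You bound $|(\hat\nu^p-\mu^*)(C)|$ by $r\,\alpha(C)$ using the density representation, then use regionwise balance to rewrite it as $-(\hat\nu^p-\mu^*)(\Gamma_i^*\setminus C)$ and apply the same bound to the complement. Your added observation that $\hat\nu^p(\Gamma_i^*)=\nu^p(\Gamma_i(p))=0$ because $\Phi_i^p$ maps $\Gamma_i^*$ onto $\Gamma_i(p)$ only spells out a step the paper asserts directly.
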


\begin{proof}
Fix \(i\) and \(C\in\mathcal S_i^*\). On \(\Gamma_i^*\), define
\(\Delta:=\hat\nu^p-\mu^*\). By the assumed density representation,
\[
|\Delta(C)|
\le
\|a_i\|_{L^\infty(m_i^*)}m_i^*(C)
+
\|b_i\|_{L^\infty(\sigma_i^*)}\sigma_i^*(C)
\le
r\alpha_i^*(C).
\]
Since \(\nu^p\) and \(\mu^*\) are regionwise balanced, $\hat\nu^p(\Gamma_i^*)=0$ and $\mu^*(\Gamma_i^*)=0.$ Hence \(\Delta(\Gamma_i^*)=0\), so $\Delta(C)=-\Delta(\Gamma_i^*\setminus C).$ Applying the same density bound to \(\Gamma_i^*\setminus C\) gives
\[
|\Delta(C)|
=
|\Delta(\Gamma_i^*\setminus C)|
\le
r\alpha_i^*(\Gamma_i^*\setminus C).
\]
Combining the two bounds, we get
\[
|\Delta(C)|
\le
r
\min\{
\alpha_i^*(C),
\alpha_i^*(\Gamma_i^*\setminus C)
\}.
\]
Since \(\alpha_i^*\) is the restriction of \(\alpha\) to \(\Gamma_i^*\), this is the bound appearing in \(d_{\mathrm{US}}^p\). Taking the supremum over
\(C\in\mathcal S_i^*\), and then the maximum over \(i\), gives
\(d_{\mathrm{US}}^p(\nu^p,\mu^*)\le r\).
\end{proof}

\begin{lemma}
\label{lem:primitive-ball-to-US-ball}
Consider primitives \(p=(\lambda,g,s)\) written as
\[
\lambda=e^l\lambda^*,
\qquad
g=\frac{e^h g^*}{\int_\Gamma e^h g^*\,dm},
\quad \text{and define} \quad
d_{\mathcal P}(p,p^*)
:=
\|l\|_{C^1(\Gamma)}
+
\|h\|_{C^2(\Gamma)}
+
\|s-s^*\|.
\] 
Then, for every \(r>0\), there exists \(\varepsilon>0\) such that, whenever
\(d_{\mathcal P}(p,p^*)<\varepsilon\), the objects
\[
q(p),\qquad c(p),\qquad \Gamma_i(p),\qquad \Phi_i^p,\qquad \mu^p
\]
are all well defined, \(q(p)\in\mathbb R^N_{++}\), and $d_{\mathrm{US}}^p(\mu^p,\mu^*)<r.$
\end{lemma}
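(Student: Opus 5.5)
The plan is to prove continuity in three layers: first the finite-dimensional objects $q(p)$ and $c(p)$, then the maps $\Phi_i^p$, and finally the densities of the pulled-back rent measure. The last layer reduces to Lemma~\ref{lem:pulled-back-density-control-US}: it suffices to write $\hat\mu^p-\mu^*$ on each $\Gamma_i^*$ as $a_i m_i^*+b_i\sigma_i^*$ with $\|a_i\|_\infty+\|b_i\|_\infty$ small. Both $\mu^p$ and $\mu^*$ are regionwise balanced by Lemma~\ref{lemma:measure-objective}, which holds at every admissible $p$. Regionwise balance is exactly the hypothesis that lemma needs.

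\textbf{Step 1: $q(p)$.} The market-clearing vector is the unique minimizer of the strictly convex potential $\Psi_p$ in \eqref{eq:Psi_def}. As $d_{\mathcal P}(p,p^*)\to0$, $g\to g^*$ uniformly, since $h\to0$ in $C^2$ and the normalization constant tends to $1$. Also $s\to s^*$. Hence $\Psi_p\to\Psi_{p^*}$ locally uniformly, and the coercivity bounds from the proof of Fact~\ref{fact:puremechs} are uniform near $p^*$. Minimizers therefore converge, so $q(p)\to q^*$ and $q(p)\in\mathbb R^N_{++}$. Better still, $\nabla\Psi_p$ is $C^1$ with positive-definite Hessian at $q^*$, so the implicit function theorem gives $q(p)$ Lipschitz in $p$.

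\textbf{Step 2: $c(p)$ and $\Phi_i^p$.} The quantities $M_i,A_i,T_{ij}$ are integrals of $g$ and $\lambda g\theta_i$ over $\Gamma_i(p)$ and over the interfaces $\Gamma_i(p)\cap\Gamma_j(p)$. These domains are cut out by the hyperplanes $q_i\theta_i=q_j\theta_j$, which move continuously with $q$. The integrands converge uniformly: $g\in C^2$ makes the traces converge, and $\lambda=e^l\lambda^*\to\lambda^*$ uniformly. So $J(p)\to J^*$, $A(p)\to A^*$, and by invertibility (Fact~\ref{fact:supply_cost_properties}) $c(p)\to c^*$. Fact~\ref{fact:region-identification-maps} then gives that $\Phi_i^p\to\mathrm{id}$ uniformly, with interior and boundary Jacobians tending to $1$. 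Since $\Phi_i^p$ is a smooth map whose coefficients are rational in $q(p)$, I would also record $C^1$ convergence $\Phi_i^p\to\mathrm{id}$.

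\textbf{Step 3: density estimates (main obstacle).} On $\Gamma_i(p)$ the interior density of $\mu^p$ is
\[
\theta_i\bigl[(\lambda-\textstyle\sum_j c_j(p))g+\operatorname{div}\bigl((c(p)-\theta\textstyle\sum_j c_j(p))g\bigr)\bigr],
\]
and the boundary density is $-\theta_i(c(p)-\theta\sum_j c_j(p))g\cdot\nu$. The divergence involves $\nabla g$. Since $h\to0$ in $C^2$, we have $\nabla g=g(\nabla h+\nabla\log g^*)$, so $g\to g^*$ and $\nabla g\to\nabla g^*$ uniformly, provided $g^*\in C^2$ as assumed. Pulling back through $\Phi_i^p$ composes these densities with $\Phi_i^p$ and multiplies by the Jacobians.

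The difference from the densities of $\mu^*$ then splits into three pieces:
\begin{itemize}
\item the change in coefficients, bounded by $|c(p)-c^*|$, $\|\lambda-\lambda^*\|_\infty$ and $\|g-g^*\|_{C^1}$;
\item the Jacobian factor minus $1$;
\item the composition error $f^*\circ\Phi_i^p-f^*$.
\end{itemize}
The composition error is controlled by uniform continuity of $f^*$, which is continuous on the compact set $\Gamma_i^*$ since $\lambda^*\in C^1$ and $g^*\in C^2$. All three pieces tend to zero uniformly, giving $\|a_i\|_\infty,\|b_i\|_\infty<r$ for $\varepsilon$ small.

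The delicate point is that the boundary part of $\mu^p$ lives on $\partial\Gamma\cap\Gamma_i(p)$, while $\sigma_i^*$ lives on $\partial\Gamma\cap\Gamma_i^*$. So I must check that $\Phi_i^p$ maps $\partial\Gamma\cap\Gamma_i^*$ onto $\partial\Gamma\cap\Gamma_i(p)$. It does, because $\Phi_i^p$ only rescales the ratios $\theta_k/\theta_i$ and so preserves $\theta_k=0$. I must also check that $\Phi_i^p$ preserves each face, so that the conormal $\nu$ is unchanged; this holds for the same reason. A second point is that $\mu^p$ places no mass on the interfaces between regions. Interfaces are therefore not part of the representation, and there is no mismatch there.

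With $r$ chosen, Lemma~\ref{lem:pulled-back-density-control-US} gives $d_{\mathrm{US}}^p(\mu^p,\mu^*)\le\max_i(\|a_i\|+\|b_i\|)<r$, which completes the argument.
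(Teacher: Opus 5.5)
Your proposal is correct and follows essentially the paper's proof: continuity of $q(p)$ (the paper simply invokes the implicit function theorem, where you also use the strictly convex potential $\Psi_p$), convergence $J(p)\to J^*$, $A(p)\to A^*$ and hence $c(p)\to c^*$, uniform convergence of $\Phi_i^p$ and its Jacobians from Fact~\ref{fact:region-identification-maps}, uniform convergence of the pulled-back interior and boundary densities, and finally Lemma~\ref{lem:pulled-back-density-control-US} via regionwise balance. Your explicit checks that $\Phi_i^p$ preserves each face $\{\theta_k=0\}$ (so the conormal is unchanged) and that $\mu^p$ charges no interfaces fill in details the paper leaves implicit; the one slip is writing $\nabla g$ through $\nabla\log g^*$, which presumes $g^*>0$ everywhere, but differentiating $g=e^hg^*/\int_\Gamma e^hg^*\,dm$ directly gives the same uniform convergence $\nabla g\to\nabla g^*$.
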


\begin{proof}
By the implicit function theorem applied to the market-clearing equations,
there is a \(d_{\mathcal P}\)-neighborhood of \(p^*\) on which the pure-option
vector \(q(p)\) is uniquely and continuously defined. Shrinking the
neighborhood if necessary, \(q(p)\in\mathbb R^N_{++}\). Hence the regions
\(\Gamma_i(p)\) and the identification maps \(\Phi_i^p:\Gamma_i^*\to\Gamma_i(p)\)
are well defined.

Now, let \(p_n=(\lambda^n,g^n,s^n)\) be any sequence such that
\(d_{\mathcal P}(p_n,p^*)\to0\). Then
\begin{equation}\label{eq:nicecontperturb}
\lambda^n\to\lambda^*
\quad\text{in }C^1(\Gamma),
\qquad
g^n\to g^*
\quad\text{in }C^2(\Gamma),
\qquad
s^n\to s^*.
\end{equation}
By the local continuity of the market-clearing solution,
\(q(p_n)\to q^*\). Hence, by
Fact~\ref{fact:region-identification-maps}, the maps
\(\Phi_i^{p_n}\) converge uniformly to the identity on \(\Gamma_i^*\), and
their interior and boundary Jacobians converge uniformly to \(1\).

The quantities defining the shadow-cost system are continuous in the
primitives. Indeed, after pulling all moving regions and faces back to the
fixed baseline regions and faces by \(\Phi_i^{p_n}\), the domains are fixed,
the Jacobians converge uniformly to \(1\), and the integrands converge
uniformly by \eqref{eq:nicecontperturb}. Thus, the shadow-cost matrices and right-hand sides converge: $J(p_n)\to J^*$ and $A(p_n)\to A^*.$ Since \(J^*\) is nonsingular, \(J(p_n)\) is nonsingular for all sufficiently
large \(n\), and
\[
c(p_n)=J(p_n)^{-1}A(p_n)\to (J^*)^{-1}A^*=c^*.
\]

For each \(i\), write the restriction of the rent measure \(\mu^p\) to
\(\Gamma_i(p)\) in density form as
\[
d\mu^p
=
\rho_i(p)\,dm
+
\beta_i(p)\,d\sigma
\ \text{on }\Gamma_i(p),
\]
where
\[
\rho_i(p)(\theta)
=
\theta_i
\Big[
\lambda(\theta)g(\theta)
+
\operatorname{div}_{\Gamma}
\big(
(c(p)-(\sum_j c_j(p))\theta)g(\theta)
\big)
-
(\sum_j c_j(p))g(\theta)
\Big],
\]
and, on the boundary faces, $\beta_i(p)(\theta)
=
-
\theta_i
\big(c(p)-(\sum_j c_j(p))\theta\big)g(\theta)\cdot\nu(\theta).$ Write \(\hat\mu^{p_n}\) for the pasted pullback of \(\mu^{p_n}\). That is, on
each baseline region \(\Gamma_i^*\), $\hat\mu^{p_n}(C)
:=
\mu^{p_n}(\Phi_i^{p_n}(C)),$ where $C\subseteq\Gamma_i^*$. By the area formula, on \(\Gamma_i^*\), $d\hat\mu^{p_n}
=
\hat\rho_i^{\,n}\,dm_i^*
+
\hat\beta_i^{\,n}\,d\sigma_i^*,$
where
\[
\hat\rho_i^{\,n}(\theta)
=
\rho_i(p_n)(\Phi_i^{p_n}(\theta))
J_{i,\mathrm{int}}^{p_n}(\theta),
\quad \text{and, \(\sigma_i^*\)-a.e.,} \quad
\hat\beta_i^{\,n}(\theta)
=
\beta_i(p_n)(\Phi_i^{p_n}(\theta))
J_{i,\partial}^{p_n}(\theta).
\]
Using the convergence of \(\lambda^n\), \(g^n\), \(q(p_n)\), \(c(p_n)\),
\(\Phi_i^{p_n}\), and the corresponding Jacobians, we obtain $\|\hat\rho_i^{\,n}-\rho_i(p^*)\|_{L^\infty(m_i^*)}
\to0$ and $\|\hat\beta_i^{\,n}-\beta_i(p^*)\|_{L^\infty(\sigma_i^*)}
\to0.$ Hence
\[
\hat\mu^{p_n}-\mu^*
=
a_i^n\,m_i^*
+
b_i^n\,\sigma_i^*
\ \ \text{on }\Gamma_i^*,
\ \  \text{with} \ \
\max_i
\left\{
\|a_i^n\|_{L^\infty(m_i^*)}
+
\|b_i^n\|_{L^\infty(\sigma_i^*)}
\right\}
\to0.
\]
Since rent measures are regionwise balanced, Lemma
\ref{lem:pulled-back-density-control-US} implies $d_{\mathrm{US}}^{p_n}(\mu^{p_n},\mu^*)\to0.$
\end{proof}

I now complete the proof. Let us apply Lemma~\ref{lem:primitive-ball-to-US-ball} with \(r=\eta/2\). After
shrinking \(\varepsilon>0\) if necessary, every primitive vector \(p\)
satisfying the displayed perturbation bound has
\[
q(p),\quad c(p),\quad \Gamma_i(p),\quad \Phi_i^p,\quad \mu^p
\]
well defined, with \(q(p)\in\mathbb R^N_{++}\), and satisfies $d_{\mathrm{US}}^p(\mu^p,\mu^*)<\eta/2.$ 
By Lemma~\ref{lem:BUS-open-moving}, it follows that, for every \(i\) and every
closed \(\succ_i\)-upper set \(D\subseteq\Gamma_i(p)\), $\mu^p(D)\ge0.$ Now fix \(i\), and write $\mu_i^p:=\mu^p\restriction_{\Gamma_i(p)}.$ Since the rent measure is regionwise balanced, $\mu_i^p(\Gamma_i(p))=0.$ Let \((\mu_i^p)^+\) and \((\mu_i^p)^-\) denote the positive and negative parts
of \(\mu_i^p\). Then $(\mu_i^p)^+(\Gamma_i(p))
=
(\mu_i^p)^-(\Gamma_i(p)).$ Moreover, for every closed \(\succ_i\)-upper set \(D\subseteq\Gamma_i(p)\),
\[
(\mu_i^p)^+(D)-(\mu_i^p)^-(D)
=
\mu_i^p(D)
=
\mu^p(D)
\ge0;
\quad \text{equivalently,}\quad
(\mu_i^p)^-(D)\le(\mu_i^p)^+(D).
\]
By Theorem~\ref{thm:stoch_order_i}, there exists a
\(\succ_i\)-monotone transport plan from \((\mu_i^p)^-\) to
\((\mu_i^p)^+\). Since \(i\) was arbitrary, this verifies the regionwise transport condition
in Theorem~\ref{thm:rent-measure-characterization} for the tuple
\((\tilde\mu_i^p)_{i=1}^N\) defined by
\(\tilde\mu_i^p:=\mu_i^p\).

It remains only to verify preprocessing. The rent measure \(\mu^p\) assigns
zero mass to the indifference boundaries between distinct pure-option
regions, and \(\theta_*^p=\theta_i\) \(\mu^p\)-almost everywhere on
\(\Gamma_i(p)\). Therefore, for every continuous, convex
\(U:\Gamma\to\mathbb R_+\),
\[
\sum_{i=1}^N
\int_{\Gamma_i(p)}
\frac{U}{\theta_i}\,d\tilde\mu_i^p
=
\int_\Gamma
\frac{U}{\theta_*^p}\,d\mu^p.
\]
Thus preprocessing holds with equality. The tuple
\((\tilde\mu_i^p)_{i=1}^N\) satisfies both conditions of
Theorem~\ref{thm:rent-measure-characterization}. Therefore the theorem
applies to the primitives \(p\), and the associated pure option mechanism is
optimal.

\subsection{Proof of Proposition \ref{prop:PE_impl_implies_PO}}

Fix an allocation rule \(x\) that is Pareto efficient subject to
\eqref{eq:supply'}. First, note that every supply constraint must bind;
otherwise, we could give every agent a common positive amount of a good whose
supply constraint is slack, strictly raising utility for almost every type
without violating supply.

Note also that \(x\) is pure almost everywhere. Indeed, suppose instead that, for
some \(i\neq j\), the set
\[
M:=\{\theta\in\Gamma:x_i(\theta)>0,\ x_j(\theta)>0\}
\]
has positive mass. Since \(G\) admits a density, choose \(t>0\) such that
both
\[
M^-:=M\cap\{\theta_i/\theta_j<t\},
\qquad
M^+:=M\cap\{\theta_i/\theta_j>t\}
\]
have positive mass, and define $m^-:=\int_{M^-}x_i\,dG>0$ and $m^+:=\int_{M^+}x_j\,dG>0.$

For sufficiently small \(\delta>0\), replace \(\delta x_i\) units of good
\(i\) with \(t\delta x_i\) units of good \(j\) on \(M^-\), and replace
\(t\delta(m^-/m^+)x_j\) units of good \(j\) with
\(\delta(m^-/m^+)x_j\) units of good \(i\) on \(M^+\). This preserves
aggregate supply and nonnegativity and strictly benefits every affected type,
because $t\theta_j-\theta_i>0$ on $M^-$ and $\theta_i-t\theta_j>0$ on $M^+$. This modified allocation rule is a Pareto improvement; contradiction. Thus, \(x\) is pure a.e.

We now show that any such allocation rule that satisfies \eqref{eq:ICprime} must coincide with the pure option mechanism a.e. For each \(i\), let
\[
S_i
:=
\{\theta\in\Gamma:x(\theta)=x_i(\theta)e_i,\ x_i(\theta)>0\}.
\]
Since the \(i\)-th supply constraint binds and \(s_i>0\), \(S_i\) has
positive mass. Incentive compatibility implies that, for some \(q_i^*>0\), $x(\theta)=q_i^*e_i$ for a.e. $\theta\in S_i.$ Indeed, outside the null face \(\{\theta_i=0\}\), any type receiving a
smaller quantity of good \(i\) would imitate one receiving a larger quantity. Next, define
\[
C_i
:=
\left\{
\theta\in\Gamma:
\theta_iq_i^*\ge\theta_kq_k^*
\text{ for every }k
\right\}.
\]
Incentive compatibility against reports in \(S_k\) implies
\(S_i\subseteq C_i\) up to a null set. Conversely, for a.e. type satisfying
\(\theta_iq_i^*>\theta_kq_k^*\) for every \(k\neq i\), incentive
compatibility requires \(x(\theta)=q_i^*e_i\). Otherwise, that type would
strictly prefer to report a type in \(S_i\). Since different \(C_i\) overlap
only on zero-measure indifference boundaries, we get that \(x\)
coincides almost everywhere with the pure-option allocation generated by
\(q^*:=(q_1^*,\dots,q_N^*)\).

Finally, because every supply constraint binds, $s_i=q_i^*G(C_i)$ for every $i$. Thus, \(q^*\) is a market-clearing vector of pure options. By the uniqueness
of the market-clearing pure-option vector established in
Fact~\ref{fact:puremechs}, \(q^*=q\). Consequently, all \(C_i\)
coincide a.e. with the pure-option regions \(\Gamma_i\).

\section{Omitted proofs from Section \ref{sec:2good}}

\subsection{Proof of Theorem \ref{th:2good_asym_menu}}

For each \(z\in[0,1]\), let \(x^z\) be the threshold rule in
\eqref{eq:thresholdrulee}.

\begin{fact}\label{fact:two-good-IC-representation}
Every incentive-compatible allocation rule \(x\) can be modified on a
\(G\)-null set, while preserving incentive compatibility, so that there
exists a finite positive measure \(\nu\) on \([0,1]\) satisfying
\begin{equation}\label{eq:measure_rep_xz_short}
x(1-t,t)
=
\int_{[0,1]}x^z(1-t,t)\,\nu(dz)
\qquad\text{for a.e. }t.
\end{equation}
Conversely, for every finite positive measure \(\nu\), the allocation rule
defined by \eqref{eq:measure_rep_xz_short} is
nonnegative and incentive compatible.
\end{fact}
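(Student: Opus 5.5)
The plan is to work in utility space with two goods, where $\Gamma$ is the segment parametrized by $t=\theta_2\in[0,1]$. By Proposition~\ref{thm:icchar}, an incentive-compatible rule corresponds to a convex $U(t)$ on $[0,1]$ satisfying \eqref{eq:pseudomono}. For $N=2$ that condition says that $U(t)/(1-t)$ is non-increasing as $t$ decreases toward $e_1$, and that $U(t)/t$ is non-increasing as $t$ increases toward $e_2$. For the allocation, Corollary~\ref{lem:utility-space-corr} gives $x_1(t)=U(t)-tU'(t)$ and $x_2(t)=U(t)+(1-t)U'(t)$, defined a.e. Nonnegativity of these two expressions is exactly \eqref{eq:pseudomono}, since $(U/t)'\ge0$ and $(U/(1-t))'\le 0$. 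Convexity gives $x_1$ non-increasing and $x_2$ non-decreasing, with $x_1'=-tU''$ and $x_2'=(1-t)U''$.

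\textbf{Representation.} I would set $\nu(dz):=U''(dz)/\bigl(z(1-z)\bigr)$ on $(0,1)$, where $U''$ is the (nonnegative) second-derivative measure of $U$. I then add atoms at the endpoints: at $z=1$, mass $x_1(1^-)$; at $z=0$, mass $x_2(0^+)$. The atom at $z=1$ is sensible because $x^1(1-t,t)=(1,0)$ for $t<1$ gives a constant good-$1$ allocation. Next, choose right-continuous versions of $x_1,x_2$ (this is the modification on a null set, and it preserves IC because $U$ is unchanged off a null set and we then use the subgradient selection everywhere). With these versions, I would check
\[
\int x^z_1(1-t,t)\,\nu(dz)=\int_{(t,1]} z\,\nu(dz)=x_1(1^-)+\int_{(t,1)}\frac{U''(dz)}{1-z}.
\]
This should equal $x_1(t)$ because $x_1(t)-x_1(1^-)=\int_{(t,1)}z\,U''(dz)/(z(1-z))\cdot(1-z)$... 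Here I must get the weights right. From $dx_1=-z\,U''(dz)$ we have $x_1(t)=x_1(1^-)+\int_{(t,1)} z\,U''(dz)$, so the density must be $\nu(dz)=U''(dz)$ up to the factor coming from $x^z_1=z$. That means $\nu=U''/1$ on the interior would give $\int z\,U''$, matching. Checking the second coordinate: $x^z_2=1-z$ for $z\le t$, and $dx_2=(1-z)U''$, which also matches. So I would simply take $\nu=U''$ on $(0,1)$ plus the endpoint atoms, and I expect the first computation above to be corrected to this.

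\textbf{Finiteness and the converse.} The main obstacle is showing that $\nu$ is finite and handling the endpoints. The total interior mass is $U'(1^-)-U'(0^+)$, which is finite because $x$ is bounded (Corollary~\ref{lem:utility-space-corr}), and $U'=x_2-x_1$. The endpoint atoms are finite for the same reason. The version issue at jump points of $U'$, where each $x^z$ has its jump at $t=z$, is handled by the a.e. qualifier. For the converse, I would note that each $x^z$ is IC and nonnegative. A positive mixture of them has utility $\int \max\{z\theta_1,(1-z)\theta_2\}\,\nu(dz)$, and incentive compatibility follows by linearity: for each $z$, type $\theta$ weakly prefers its own $x^z$ component, so integrating these inequalities against $\nu$ shows the mixture is IC. Nonnegativity is immediate, and measurability in $t$ follows from Fubini.
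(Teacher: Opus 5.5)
Your proposal is correct and is essentially the paper's proof: since $U'=x_2-x_1=\Delta$, your interior measure $U''$ is exactly the paper's $\mu=d\Delta$, your endpoint atoms $x_2(0^+)$ and $x_1(1^-)$ are the paper's endpoint masses (taking the right-continuous version), and integrating $dx_1=-z\,U''(dz)$ and $dx_2=(1-z)\,U''(dz)$ is the paper's envelope-plus-Tonelli step. The one deviation is in the converse, where you get incentive compatibility more directly by integrating each threshold rule's IC inequality against $\nu$, rather than re-invoking the one-dimensional characterization; separately, both derivative signs in your aside on \eqref{eq:pseudomono} are reversed (in fact $(U/t)'=-x_1/t^2\le0$ and $(U/(1-t))'=x_2/(1-t)^2\ge0$), though that aside is not needed because $x\ge0$ is given.
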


\begin{proof}
Suppose first that \(x\) is incentive compatible. Write
\[
U(1-t,t):=(1-t)x_1(1-t,t)+tx_2(1-t,t),
\qquad
\Delta(t):=x_2(1-t,t)-x_1(1-t,t).
\]
By the one-dimensional IC characterization of \cite{myerson}, after modifying
\(x\) on a \(G\)-null set, we may take \(\Delta\) to be right-continuous and
non-decreasing, with
\[
U(1-t,t)=U(1,0)+\int_0^t\Delta(s)\,ds.
\]
Let \(\mu\) be the finite positive measure satisfying $\mu((0,t])=\Delta(t)-\Delta(0),$ with $\mu(\{0\})=0.$ The envelope formula and Tonelli's theorem then give
\[
x_1(1-t,t)
=
x_1(0,1)+\int_{(t,1]}z\,\mu(dz),
\qquad
x_2(1-t,t)
=
x_2(1,0)+\int_{[0,t]}(1-z)\,\mu(dz).
\]
Since \(x\ge0\), the endpoint terms are nonnegative. Hence, defining $\nu:=\mu+x_2(1,0)\delta_0+x_1(0,1)\delta_1,$ we obtain, almost everywhere,
\begin{equation}\label{eq:measure_rep_x1x2_short}
x_1(1-t,t)=\int_{(t,1]}z\,\nu(dz),
\qquad
x_2(1-t,t)=\int_{[0,t]}(1-z)\,\nu(dz).
\end{equation}
This is equivalent to \eqref{eq:measure_rep_xz_short}.

Conversely, suppose that \(x\) has the form
\eqref{eq:measure_rep_xz_short} for some finite positive measure \(\nu\).
Then \(x\ge0\), and
\[
\Delta(t)
=
\nu([0,t])-\int_{[0,1]}z\,\nu(dz) \quad \text{is non-decreasing.}
\]
Moreover,
\[
U(1-t,t)
=
\int_{[0,1]}
\max\{z(1-t),(1-z)t\}\,\nu(dz)
=
U(1,0)+\int_0^t\Delta(s)\,ds.
\]
The one-dimensional IC characterization therefore implies that \(x\) is
incentive compatible.
\end{proof}

Fact \ref{fact:two-good-IC-representation} and Fubini's theorem let us rewrite the designer's
problem as 
\begin{equation}\label{eq:primalLP}
W^*
=
\sup_{\nu\ge0}
\left\{
\int_{[0,1]}w(z)\,\nu(dz):
\int_{[0,1]}\varphi_1(z)\,\nu(dz)\le s_1,\quad
\int_{[0,1]}\varphi_2(z)\,\nu(dz)\le s_2
\right\}.
\end{equation}
We now show that \eqref{eq:primalLP} admits an optimizer, derive its dual,
and establish that both supply constraints bind at the optimum. To that end, let
\[
\kappa:=\min_{z\in[0,1]}
\{\varphi_1(z)+\varphi_2(z)\}>0.
\]
Every feasible measure \(\nu\) in \eqref{eq:primalLP} satisfies
\[
\kappa\nu([0,1])
\le
\int_{[0,1]}(\varphi_1+\varphi_2)\,d\nu
\le
s_1+s_2,
\]
so the feasible measures have uniformly bounded total mass. Since
\([0,1]\) is compact, measures with uniformly bounded total mass form a
weak-* compact set. Moreover, because \(\varphi_1\) and \(\varphi_2\) are
continuous, nonnegativity and the two supply constraints define a weak-*
closed subset of this set. The feasible set of \eqref{eq:primalLP} is
therefore weak-* compact. Since \(w\) is continuous, the objective
\(\nu\mapsto\int w\,d\nu\) is weak-* continuous, so a primal optimizer
exists.

To derive the dual, attach multipliers \(c_1,c_2\ge0\) to the two supply
constraints. The Lagrangian is
\[
c_1s_1+c_2s_2
+
\int_{[0,1]}
\bigl[
w(z)-c_1\varphi_1(z)-c_2\varphi_2(z)
\bigr]\,\nu(dz).
\]
Its supremum over all positive measures \(\nu\) is finite if and only if $c_1\varphi_1(z)+c_2\varphi_2(z)\ge w(z)$ for every $z\in[0,1].$ Indeed, if this inequality failed at some \(z\), placing arbitrarily large
mass at \(z\) would make the Lagrangian arbitrarily large. The dual is
therefore
\[
\inf_{c_1,c_2\ge0}
\left\{
c_1s_1+c_2s_2:
c_1\varphi_1(z)+c_2\varphi_2(z)\ge w(z)
\quad\text{for every }z\in[0,1]
\right\}.
\]
Since the zero measure is strictly feasible, conic duality gives strong duality and attainment of the dual optimum \citep{cones}. Also, both primal constraints bind at any optimum: otherwise, adding
mass at \(z=1\) or \(z=0\) would use only the slack good and strictly
increase welfare.

Now, let \((c_1^*,c_2^*)\) be dual optimal. Evaluating the dual constraint at
\(z=1\), where \(\varphi(1)=(1,0)\), and at \(z=0\), where
\(\varphi(0)=(0,1)\), gives $c_1^*\ge w(1)>0$ and $c_2^*\ge w(0)>0.$ We may therefore define \(\rho:=c_1^*/c_2^*>0\). Define the corresponding
contact set by
\[
E
:=
\left\{
z\in[0,1]:
c_1^*\varphi_1(z)+c_2^*\varphi_2(z)=w(z)
\right\}.
\]
Every primal optimizer is supported on \(E\). Indeed, if \(\nu^*\) is primal
optimal, then strong duality and the fact that both supply constraints bind
give
\[
0
=
c_1^*s_1+c_2^*s_2-\int_{[0,1]}w(z)\,\nu^*(dz)
=
\int_{[0,1]}
\bigl[
c_1^*\varphi_1(z)+c_2^*\varphi_2(z)-w(z)
\bigr]\,\nu^*(dz).
\]
Since the integrand is nonnegative by dual feasibility, it must vanish
\(\nu^*\)-a.e., so \(\nu^*\) is supported on \(E\).

I now show that \(E\subset(0,1)\). Fix
\(0<z<1/(1+\rho)\). Then
\[
\rho\varphi_1(z)+\varphi_2(z)
=
(1-z)+\bigl(\rho z-(1-z)\bigr)
\mathbb P(\Theta_2\le z)
<
1-z.
\]
Moreover, \(x^z\) gives every type at least the utility generated by
\((1-z)x^0\), and strictly more to a positive mass of types. Hence $w(z)>(1-z)w(0)
>
\bigl[\rho\varphi_1(z)+\varphi_2(z)\bigr]w(0).$ On the other hand, dual feasibility and \(c_1^*=\rho c_2^*\) imply $c_2^*
\bigl[\rho\varphi_1(z)+\varphi_2(z)\bigr]
\ge w(z).$
Since \(\rho\varphi_1(z)+\varphi_2(z)>0\), the preceding inequalities give
\(c_2^*>w(0)\). The dual constraint is therefore strict at \(z=0\), so
\(0\notin E\). Similarly, fix \(1/(1+\rho)<z<1\). Then $\rho\varphi_1(z)+\varphi_2(z)
=
\rho z+\bigl((1-z)-\rho z\bigr)
\mathbb P(\Theta_2\ge z)
<
\rho z.$ Moreover, \(x^z\) gives every type at least the utility generated by
\(zx^1\), and strictly more to a positive mass of types. Therefore, $w(z)>zw(1)
>
\frac{w(1)}{\rho}
\bigl[\rho\varphi_1(z)+\varphi_2(z)\bigr].$ Combining this with dual feasibility gives
\(c_2^*>w(1)/\rho\), and hence \(c_1^*>w(1)\). The dual constraint is
therefore strict at \(z=1\), so \(1\notin E\), which tells us \(E\subset(0,1)\).

Finally, if \(\nu^*\) is any primal optimizer, its support is contained in
\(E\), and the binding supply constraints give
\[
s
=
\int_E\varphi(z)\,\nu^*(dz)
\in
\operatorname{cone}\{\varphi(z):z\in E\}.
\]
By Carath\'eodory's theorem for cones in \(\mathbb R^2\), there exist
\(a,b\in E\), \(a\le b\), and \(m_a,m_b\ge0\) such that $m_a\varphi(a)+m_b\varphi(b)=s.$ The measure \(m_a\delta_a+m_b\delta_b\) is therefore optimal. Its allocation
rule is
\[
x(1-t,t)
=
m_ax^a(1-t,t)+m_bx^b(1-t,t)
=
\begin{cases}
(r_1,0), & t<a,\\
(b_1,b_2), & a\le t<b,\\
(0,r_2), & t\ge b,
\end{cases}
\]
where
\begin{equation}\label{eq:quantitiesinmenu}
r_1:=am_a+bm_b,\qquad
b_1:=bm_b,\qquad
b_2:=(1-a)m_a,\qquad
r_2:=(1-a)m_a+(1-b)m_b.
\end{equation}
Thus, an optimal mechanism offers at most two pure options and one bundle.

\paragraph{Characterizing optimality of the pure option mechanism.}
In the threshold-rule representation, the pure option mechanism is uniquely
represented by $\frac{s_1}{\varphi_1(\theta_2^0)}\delta_{\theta_2^0},$ and therefore generates welfare
\begin{equation}\label{eq:pure_welfare_short}
\frac{s_1}{\varphi_1(\theta_2^0)}w(\theta_2^0).
\end{equation}
By the preceding argument, it is optimal if and only if it weakly dominates
every nondegenerate two-threshold mechanism. Such a mechanism is generated by
\(a<b\) and \(m_a,m_b>0\) satisfying
\[
m_a\varphi(a)+m_b\varphi(b)=s.
\]
Since \(\varphi_1(z)/\varphi_2(z)\) is strictly increasing, these weights
exist uniquely if and only if
\[
\frac{\varphi_1(a)}{\varphi_2(a)}
<
\frac{s_1}{s_2}
<
\frac{\varphi_1(b)}{\varphi_2(b)}.
\]
Moreover, the pure-option cutoff satisfies $\frac{s_1}{s_2}
=
\frac{\varphi_1(\theta_2^0)}{\varphi_2(\theta_2^0)},$ so the preceding inequalities are equivalent to
\(a<\theta_2^0<b\). Such a mechanism generates welfare
\(m_aw(a)+m_bw(b)\). Hence, the pure option mechanism is optimal exactly when
\eqref{eq:asym_no_profitable_bundle} holds for every such pair \(a,b\).

\subsection{Proof of Corollary \ref{cor:2good_symmetric_two_options_theta}}

Symmetrizing any feasible measure \(\nu\) under \(z\mapsto1-z\) preserves
feasibility and welfare. Among symmetric measures, the two supply constraints
coincide. The resulting one-constraint linear program is therefore solved by
either the cutoff \(1/2\) or a symmetric pair \(z,1-z\), with \(z<1/2\).
Hence, the pure option mechanism is optimal if and only if
\[
\frac{w(1-z)}{\varphi_1(1-z)+\varphi_2(1-z)}
\le
\frac{w(1/2)}{\varphi_1(1/2)+\varphi_2(1/2)}
\qquad
\text{for every }z\in[0,1/2).
\]
By exchangeability,
\[
w(1-z)
=
\frac12\mathbb E\left[
\lambda(\Theta)
\left(
(1-z)\mathbf 1_{\{\min_i\Theta_i\ge z\}}
+
\max_i\Theta_i\mathbf 1_{\{\min_i\Theta_i<z\}}
\right)
\right],
\]
while
\[
\varphi_1(1-z)+\varphi_2(1-z)
=
\frac12\left[
1+(1-2z)\mathbb P(\min_i\Theta_i\ge z)
\right].
\]
Moreover,
\[
\frac{w(1/2)}{\varphi_1(1/2)+\varphi_2(1/2)}
=
\mathbb E[\lambda(\Theta)\max_i\Theta_i].
\]
Substitution and rearrangement show that the ratio inequality is equivalent
to
\[
\mathbb E\left[
\lambda(\Theta)(\min_i\Theta_i-z)
\mathbf 1_{\{\min_i\Theta_i\ge z\}}
\right]
\le
(1-2z)\mathbb E[\lambda(\Theta)\max_i\Theta_i]\,
\mathbb P(\min_i\Theta_i\ge z).
\]
Dividing by \(\mathbb P(\min_i\Theta_i\ge z)>0\) then gives
\eqref{eq:foralltau_cor}.

Finally, suppose \(t\lambda(1-t,t)\) is non-decreasing on
\([1/2,1]\). On \(\{\min_i\Theta_i\ge z\}\),
\[
\min_i\Theta_i-z
=
1-z-\max_i\Theta_i
\le
(1-2z)\max_i\Theta_i.
\]
Furthermore,
\[
\lambda(\Theta)\max_i\Theta_i
=
\max_i\Theta_i\,
\lambda(1-\max_i\Theta_i,\max_i\Theta_i)
\]
is non-decreasing in \(\max_i\Theta_i\). Conditioning on
\(\min_i\Theta_i\ge z\), equivalently
\(\max_i\Theta_i\le1-z\), therefore weakly lowers its expectation. Hence
\[
\mathbb E\left[
\lambda(\Theta)(\min_i\Theta_i-z)
\ \middle|\ 
\min_i\Theta_i\ge z
\right]
\le
(1-2z)\mathbb E[\lambda(\Theta)\max_i\Theta_i],
\]
which is \eqref{eq:foralltau_cor}.

\section{Omitted proofs from Section \ref{sec:observable-groups}}

We first establish a lemma that will be useful throughout.

\begin{lemma}
\label{lem:Wr-properties}
For each group \(r\), the value function
\(W_r:\mathbb R_+^N\to\mathbb R_+\) is finite, increasing, concave, and
homogeneous of degree one.
\end{lemma}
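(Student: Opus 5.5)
The plan is to work directly with the inner problem defining \(W_r(s^r)=\pi_r W(\lambda_r,g_r,s^r/\pi_r)\), which is Problem~\ref{prob:renormalized} with primitives \((\lambda_r,g_r)\) and supply \(s^r/\pi_r\). Each of the four properties will follow from a simple operation on feasible allocation rules.

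\textbf{Finiteness and nonnegativity.} The zero allocation is feasible, so \(W_r\ge0\). For an upper bound, take any feasible \(x\). Since \(\lambda_r\) is bounded by Assumption~\ref{ass:observable-groups} and \(\theta\cdot x(\theta)\le\sum_k x_k(\theta)\), the objective is at most
\[
\|\lambda_r\|_\infty\sum_k\int x_k\,dG_r\le\|\lambda_r\|_\infty\sum_k s^r_k/\pi_r<\infty .
\]
This bound needs neither Proposition~\ref{prop:P_redundant} nor attainment. Coordinates of \(s^r\) equal to zero are harmless for finiteness, because the bound holds for all \(s^r\in\mathbb R_+^N\).

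\textbf{Monotonicity.} If \(s\le s'\) coordinatewise, every rule feasible under \(s\) is feasible under \(s'\), so \(W_r(s)\le W_r(s')\).

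\textbf{Homogeneity of degree one.} The constraints \eqref{eq:ICprime} and \eqref{eq:supply'} and the objective are all linear in \(x\), and the nonnegativity cone is preserved by positive scaling. So for \(t>0\), the map \(x\mapsto tx\) is a bijection between rules feasible under \(s\) and rules feasible under \(ts\), and it multiplies welfare by \(t\). Hence \(W_r(ts)=tW_r(s)\) for \(t>0\). The case \(t=0\) holds because \(W_r(0)=0\): with zero supply, only \(x=0\) a.e.\ is feasible.

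\textbf{Concavity.} Take \(s,s'\), \(\alpha\in[0,1]\), and \(\epsilon>0\). Choose \(x,x'\) feasible under \(s,s'\) respectively and within \(\epsilon\) of the suprema. The mixture \(\alpha x+(1-\alpha)x'\) is nonnegative and uses supply at most \(\alpha s+(1-\alpha)s'\). It is also incentive compatible, and this is the only step requiring care, because in general IC is not preserved under mixtures of allocation rules. Here it is preserved because \eqref{eq:ICprime} is a family of linear inequalities \(\theta\cdot(x(\theta)-x(\theta'))\ge0\) in \(x\), with no transfers or type-dependent nonlinearities, so the set of IC rules is a convex cone. Its welfare is therefore at least \(\alpha W_r(s)+(1-\alpha)W_r(s')-\epsilon\); letting \(\epsilon\to0\) gives concavity.

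Concavity combined with homogeneity also gives superadditivity, which is likely the form used later in the outer problem \eqref{eq:supply-splitting-problem}. I see no real obstacle in any of these steps. The one point to state explicitly is that \(W\) is defined as a supremum, so the argument does not rely on attainment when some \(s^r_i=0\).
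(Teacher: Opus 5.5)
Your proof is correct and follows essentially the same route as the paper: finiteness from the uniform bound via $\|\lambda_r\|_\infty$, monotonicity because a larger supply relaxes the constraint, concavity by mixing two feasible rules (IC is a family of linear inequalities in $x$), and homogeneity by scaling allocations. The only difference is that you use $\epsilon$-optimal rules where the paper uses optimizers, which removes the paper's implicit reliance on attainment when some supply coordinates are zero.
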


\begin{proof}
Fix a group \(r\). For every allocation rule \(x\) feasible under supply
\(s\),
\[
0
\leq
\pi_r\int_\Gamma
\lambda_r(\theta)\theta\cdot x(\theta)\,dG_r(\theta)
\leq
\|\lambda_r\|_\infty\sum_{i=1}^N s_i,
\]
so \(W_r\) is finite. It is increasing because a larger supply vector
relaxes the feasibility constraint. For concavity, fix \(s,t\in\mathbb R_+^N\) and \(\alpha\in[0,1]\), and let
\(x^s\) and \(x^t\) be optimal under \(s\) and \(t\), respectively. The
allocation $x^\alpha:=\alpha x^s+(1-\alpha)x^t$ is incentive compatible and feasible under
\(\alpha s+(1-\alpha)t\). By linearity of welfare,
\[
W_r\bigl(\alpha s+(1-\alpha)t\bigr)
\geq
\alpha W_r(s)+(1-\alpha)W_r(t),
\]
which proves concavity.

Finally, if \(x\) is feasible under \(s\), then \(\beta x\) is feasible
under \(\beta s\) and generates \(\beta\) times its welfare. Thus,
\(W_r(\beta s)\geq\beta W_r(s)\) for every \(\beta\geq0\). For
\(\beta>0\), scaling any allocation feasible under \(\beta s\) by
\(1/\beta\) gives the reverse inequality. Hence \(W_r(\beta s)=\beta W_r(s)\) for every \(\beta\geq0\), where the case \(\beta=0\) follows from \(W_r(0)=0\).
\end{proof}

\subsection{Proof of Theorem \ref{thm:3}}

\begin{lemma}
\label{lem:infinite-marginal-missing-good}
Fix \(s\in\mathbb R_+^N\setminus\{0\}\) such that \(s_j=0\). Then
\[
\lim_{\varepsilon\downarrow0}
\frac{W_r(s+\varepsilon e_j)-W_r(s)}{\varepsilon}
=
+\infty.
\]
\end{lemma}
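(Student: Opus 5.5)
We prove the lemma by building an explicit feasible mechanism under the perturbed supply \(s+\varepsilon e_j\) and bounding its welfare gain from below. Set \(B:=\{i:s_i>0\}\); it is nonempty and \(j\notin B\). Since \(W_r\) is concave (Lemma~\ref{lem:Wr-properties}), the difference quotient is non-increasing in \(\varepsilon\), so the limit exists in \((-\infty,+\infty]\). It therefore suffices to show that the quotient exceeds any prescribed bound \(K\) for all small \(\varepsilon\).

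\textbf{Starting point and key fact.} Let \(x\) be an optimal within-group rule for supply \(s/\pi_r\); it exists by the footnote after \eqref{eq:group-value-function}. Since \(x\) uses no good \(j\), the types near vertex \(e_j\) value their bundle at \(U(\theta)=\theta\cdot x(\theta)\le (1-\theta_j)\max_{i}\|x_i\|_\infty\). This utility vanishes as \(\theta_j\to1\), even though these types have \(\lambda_r\theta_j\) bounded away from zero on a set of positive mass (Assumption~\ref{ass:observable-groups} with \(B=\{j\}\), and \(\lambda_r>0\) a.e.). We exploit this gap.

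\textbf{Perturbed menu.} Add to the menu implementing \(x\) a new pure option \(\{\delta\text{ of good }j\}\) with \(\delta>0\) small. A type switches to the new option only if \(\delta\theta_j>U(\theta)\). Types with \(\theta_j\ge 1-\delta/(2\bar x)\), where \(\bar x\) bounds the allocations, all switch, and every switching type satisfies \(1-\theta_j< \delta/\theta_j\cdot(\text{const})\). Let \(m(\delta)\) be the mass of switching types. They consume \(\delta m(\delta)\) of good \(j\); set \(\varepsilon=\pi_r\delta m(\delta)\). Their departure frees the goods in \(B\) that they held, so the supply constraints in \(B\) stay satisfied. The new menu is incentive compatible because it is a menu, and types that do not switch are unaffected. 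The welfare gain is at least \(\pi_r\int_{\text{switch}}\lambda_r(\delta\theta_j-U)\,dG_r\ge 0\). More usefully, we give every switching type \(\theta\) with \(\theta_j\ge 1-\delta/(2\bar x)\) a gain of at least \(\delta\theta_j/2\). This alone yields a quotient of order \(\mathbb E[\lambda_r\theta_j\mid\text{switch}]\), which is bounded, not infinite.

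\textbf{Main step: obtaining the blow-up.} To get an infinite marginal value we must recycle the freed goods of \(B\). The switching types near \(e_j\) held, by incentive compatibility, allocations of mass of order \(U/(1-\theta_j)\). This mass is bounded below whenever the pure options in \(B\) have positive size. Hence the freed supply of \(B\)-goods is of order \(m(\delta)\cdot\underline x\), where \(\underline x\) is a lower bound on the total \(B\)-allocation received by switchers. Redistributing this freed supply proportionally, that is, scaling the remaining allocation rule by \(1+\kappa m(\delta)\), keeps incentive compatibility and raises welfare by roughly \(\kappa m(\delta)W_r(s)\). Dividing by \(\varepsilon=\pi_r\delta m(\delta)\) gives a quotient of order \(\kappa W_r(s)/(\pi_r\delta)\to\infty\) as \(\delta\downarrow0\). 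This is the mechanism behind the paper's intuition: a tiny amount of good \(j\) replaces bundles of other goods that near-\(e_j\) types barely value. The main obstacle is to make the lower bound on the freed \(B\)-goods rigorous. We would first ensure that near-\(e_j\) types in \(x\) receive a total quantity bounded below. If they receive little, we instead compare with the rule in which \(x\) is modified to hand them nothing. Near \(e_j\), any IC \(x\) has \(x_i(\theta)\ge\) the \(i\)-th entry of the best bundle for \(\theta_j\to1\), and nonnegativity together with convexity of \(U\) near the face controls this. If no positive lower bound exists, a scaling argument using homogeneity of \(W_r\) still gives an additional \(\Omega(m(\delta))\) of supply relative to the \(\varepsilon=O(\delta m(\delta))\) cost.
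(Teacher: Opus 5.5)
Your overall plan matches the paper's. You add a small option $\delta e_j$, observe that the switching types (a set $D$ of positive mass containing a neighborhood of $e_j$, with $m(\delta)=G_r(D)$) vacate goods in $B$, recycle what they vacate, and compare a gain of order $m(\delta)$ with the cost $\varepsilon=\pi_r\delta m(\delta)$. However, the step you call the ``main obstacle'' is the entire content of the lemma, and you leave it open. Your fallbacks are not arguments: handing near-$e_j$ types nothing, bounding $x_i(\theta)$ by a limiting best bundle, and an unspecified homogeneity argument. The side claim that every switcher has $1-\theta_j$ of order $\delta$ is unjustified and also unnecessary.

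The missing idea is a global lower bound from incentive compatibility, and it needs no pure options. For each $k\in B$, optimality forces positive use of good $k$; otherwise, adding a common amount of $k$ to every option would raise welfare. So some report $\hat\theta^k$ has $x_k(\hat\theta^k)\ge a_k>0$. Every type may imitate that report, so $U(\theta)\ge a_k\theta_k$ for all $\theta$ and all $k\in B$. Hence $U(\theta)\ge\underline{M}\sum_{k\in B}\theta_k$ with $\underline{M}:=\min_{k\in B}a_k/|B|$. Since $x_l=0$ a.e.\ for $l\notin B$, you also have $U(\theta)=\sum_{k\in B}\theta_kx_k(\theta)\le\bigl(\sum_{k\in B}\theta_k\bigr)\bigl(\sum_{k\in B}x_k(\theta)\bigr)$ a.e. Combining the two bounds gives $\sum_{k\in B}x_k(\theta)\ge\underline{M}$ for a.e.\ $\theta$. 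Therefore the freed vector $z:=\int_D x\,dG_r$ satisfies $\sum_{k\in B}z_k\ge\underline{M}\,G_r(D)$.

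Your recycling step has a second gap. Scaling the rule by $1+\kappa m(\delta)$ stays within supply only if each coordinate $z_k$, $k\in B$, is at least of order $\kappa m(\delta)s_k$. A bound on the total $\sum_k z_k$ gives no such coordinatewise control. Nothing prevents the switchers' holdings from being concentrated on a strict subset of $B$, in which case $\kappa$ cannot be kept bounded away from zero.

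The paper avoids this composition issue by adding $z$ as a common bundle to every option of the augmented menu, including $\delta e_j$. A common shift preserves incentive compatibility. The resulting rule uses exactly $\pi_r\int x\,dG_r+\pi_r\delta G_r(D)e_j$. Welfare rises by at least $\pi_r\int\lambda_r\,\theta\cdot z\,dG_r\ge\pi_r\underline{\beta}\,\underline{M}\,G_r(D)$, where $\underline{\beta}:=\min_{k\in B}\int\lambda_r\theta_k\,dG_r>0$. Dividing by $\varepsilon=\pi_r\delta G_r(D)$ gives a quotient of at least $\underline{\beta}\,\underline{M}/\delta$, which tends to $\infty$ as $\delta\downarrow0$. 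Your concavity observation then extends this from the resulting sequence of $\varepsilon$'s to all small $\varepsilon$.
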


\begin{proof}
Suppress the group index and write \(\pi,G,g,\lambda\), and \(W\) for the
group-specific objects. Let \(U\) be the indirect utility induced by an
optimal allocation under supply \(s\), and let \(x\) be its bounded,
pointwise incentive-compatible implementation from the proof of
Proposition~\ref{thm:icchar}.

Let $B:=\{k:s_k>0\}.$ Then every supply constraint corresponding to \(k\in B\) binds: otherwise, adding a
small common amount of good \(k\) to every option would preserve incentive
compatibility and strictly increase welfare. Hence $\pi\int_\Gamma x_k(\theta)\,dG(\theta)=s_k>0$ for every $k\in B$, while \(x_l=0\) almost everywhere for \(l\notin B\).

By Corollary~\ref{lem:utility-space-corr}, there exists
\(\ool M<\infty\) such that \(\sum_kx_k(\theta)\leq\ool M\) a.e. Therefore
\begin{equation}
U(\theta)
=
\sum_{k\in B}\theta_kx_k(\theta)
\leq
\ool M\sum_{k\in B}\theta_k
\leq
\ool M(1-\theta_j)
\quad\text{a.e.}
\label{eq:direct-upper-bound-U}
\end{equation}
Now, fix \(q>0\) and add the option \(qe_j\) to the menu corresponding to the
optimal mechanism under supply \(s\). Let $D_q
:=
\{\theta\in\Gamma:q\theta_j>U(\theta)\}$ denote the set of types switching to this new option. By
\eqref{eq:direct-upper-bound-U}, \(D_q\) contains, up to a null set, $\{
\theta\in\Gamma:
\theta_j>{\ool M}/{\ool M+q}
\}.$
This is a nonempty, relatively open subset of \(\Gamma\), so full support
implies
\begin{equation}
G(D_q)>0
\ \ \text{for every }q>0.
\label{eq:dqbound}
\end{equation}

We next derive a lower bound on the old allocations of the switching types. For every
\(k\in B\), positive aggregate use of good \(k\) implies that there exist
\(\hat\theta^k\in\Gamma\) and \(a_k>0\) such that
\(x_k(\hat\theta^k)\geq a_k\). Incentive compatibility then gives $U(\theta)
\geq
\theta\cdot x(\hat\theta^k)
\geq
a_k\theta_k.$ Consequently, defining
\[
\uul M
:=
\frac{1}{|B|}
\min_{k\in B}a_k
>0,
\]
we obtain
\begin{equation}
U(\theta)
\geq
\uul M\sum_{k\in B}\theta_k.
\label{eq:direct-lower-bound-U}
\end{equation}
On the other hand, $U(\theta)
\leq
(\sum_{k\in B}\theta_k)
(\sum_{k\in B}x_k(\theta)).$ Since \(\sum_{k\in B}\theta_k=0\) only on a null set,
\eqref{eq:direct-lower-bound-U} implies
\begin{equation}
\sum_{k\in B}x_k(\theta)\geq\uul M
\ \ \text{for almost every }\theta\in D_q.
\label{eq:direct-lower-bound-allocation}
\end{equation}

We now use $z_q:=\int_{D_q}x(\theta)\,dG(\theta)$ to denote the supply freed up by types in \(D_q\) switching to the new option \(qe_j\). Consider the following allocation rule rebating this supply uniformly:
\[
\tilde x_q(\theta)
:=
\begin{cases}
z_q+qe_j, & \theta\in D_q,\\
z_q+x(\theta), & \theta\notin D_q.
\end{cases}
\]
This rule is incentive compatible and its
aggregate supply use is
\[
\pi\int_\Gamma\tilde x_q(\theta)\,dG(\theta)
=
\pi\int_\Gamma x(\theta)\,dG(\theta)
+
\varepsilon_qe_j,
\qquad
\varepsilon_q:=\pi qG(D_q)>0.
\]
It is therefore feasible under \(s+\varepsilon_qe_j\).

We now consider the welfare consequences of adding this option. The switching types weakly gain from the new option, so the welfare gain is at least the value of adding \(z_q\) to every allocation. Thus,
\[
W(s+\varepsilon_qe_j)-W(s)
\geq
\pi\int_\Gamma\lambda(\theta)\theta\cdot z_q\,dG(\theta).
\]
For \(k\in B\), define
\[
\beta_k
:=
\int_\Gamma\lambda(\theta)\theta_k\,dG(\theta)>0,
\qquad
\underline\beta^B:=\min_{k\in B}\beta_k>0.
\]
Since \(z_q\) contains only goods in \(B\), equation
\eqref{eq:direct-lower-bound-allocation} gives
\[
\pi\int_\Gamma\lambda(\theta)\theta\cdot z_q\,dG(\theta)
\geq
\pi\underline\beta^B\sum_{k\in B}(z_q)_k
\geq
\pi\underline\beta^B\uul M\,G(D_q).
\]
Using \(\varepsilon_q=\pi qG(D_q)\), we obtain $\frac{W(s+\varepsilon_qe_j)-W(s)}{\varepsilon_q}
\geq
\frac{\underline\beta^B\uul M}{q}.$ Now, choose any sequence \(q_n\downarrow0\), and define
\(\varepsilon_n:=\varepsilon_{q_n}\). Since \(G(D_{q_n})\leq1\), we have $0<\varepsilon_n
=
\pi q_nG(D_{q_n})
\leq
\pi q_n
\to 0 .$ At the same time, $\frac{W(s+\varepsilon_ne_j)-W(s)}{\varepsilon_n}
\geq
\frac{\underline\beta^B\uul M}{q_n}
\to+\infty,$ so the difference quotient diverges to \(+\infty\) along the sequence
\(\varepsilon_n\to 0\). It remains to show that the same holds for every sufficiently small
\(\varepsilon>0\). Since \(W\) is
concave, $h(\varepsilon)
:=
\frac{W(s+\varepsilon e_j)-W(s)}{\varepsilon}$ is non-increasing in \(\varepsilon>0\). Hence, whenever
\(0<\varepsilon\leq\varepsilon_n\), we have $h(\varepsilon)\geq h(\varepsilon_n).$ Fix any \(K>0\). Since \(h(\varepsilon_n)\to+\infty\), there exists \(n\)
such that \(h(\varepsilon_n)>K\). Therefore, for every
\(0<\varepsilon\leq\varepsilon_n\), we have $h(\varepsilon)\geq h(\varepsilon_n)>K.$ Since \(K>0\) was arbitrary, it follows that $\lim_{\varepsilon\downarrow0}
\frac{W(s+\varepsilon e_j)-W(s)}{\varepsilon}
=
+\infty.$
\end{proof}

\paragraph{Every group is either active or excluded.}
Suppose toward a contradiction that group \(r\) receives none of good \(j\) but a strictly positive amount of other goods: \(s_j^r=0\) and \(s^r\neq0\). At an optimum, the whole supply is allocated, so there is some group \(r'\neq r\) for which \(s_j^{r'}>0\).

By Lemma~\ref{lem:Wr-properties}, the function
\(t\mapsto W_{r'}(s^{r'}+te_j)\) is finite and concave. Since
\(s_j^{r'}>0\), it is locally Lipschitz at zero. Thus, for some
\(L<\infty\) and all sufficiently small \(\varepsilon>0\), we have $W_{r'}(s^{r'})
-
W_{r'}(s^{r'}-\varepsilon e_j)
\leq
L\varepsilon.$ By Lemma~\ref{lem:infinite-marginal-missing-good}, for all sufficiently
small \(\varepsilon>0\), $W_r(s^r+\varepsilon e_j)-W_r(s^r)>L\varepsilon.$
Then taking \(\varepsilon<s_j^{r'}\) and transferring \(\varepsilon\) units of
good \(j\) from group \(r'\) to group \(r\) strictly increases welfare; contradiction.

\paragraph{Sufficient condition for exclusion.}
For any supply vector \(s^r\),
\begin{equation}
W_r(s^r)
\leq
\sum_{i=1}^N\overline v_{ri}s_i^r,
\label{eq:group-welfare-upper-bound}
\end{equation}
because every allocation feasible under \(s^r\) satisfies
\[
\pi_r\int_\Gamma
\lambda_r(\theta)\theta\cdot x_r(\theta)\,dG_r(\theta)
\leq
\sum_{i=1}^N
\overline v_{ri}
\pi_r\int_\Gamma x_{ri}(\theta)\,dG_r(\theta)
\leq
\sum_{i=1}^N\overline v_{ri}s_i^r.
\]

Suppose, toward a contradiction, that group \(r\) is active. Then, by the above, \(s^r\gg0\). For each good \(i\), choose
\(k(i)\neq r\) such that $\eta_{k(i)i}=\max_{k\neq r}\eta_{ki}.$ Remove \(s^r\) from group \(r\) and give every agent in group \(k(i)\) the
common amount \(s_i^r/\pi_{k(i)}\) of good \(i\). This is incentive
compatible and generates welfare
\[
\sum_{i=1}^N
\left(\max_{k\neq r}\eta_{ki}\right)s_i^r
>
\sum_{i=1}^N\overline v_{ri}s_i^r
\geq
W_r(s^r).
\]
The reallocation therefore strictly increases welfare; contradiction. Hence \(s^r=0\).

\paragraph{Sufficient condition for activity.}
Fix a nonempty set \(B\) and consider the following allocation: assign every good \(i\in B\) entirely to group \(r\); for each \(i\notin B\), assign good \(i\) entirely to a group
\(k(i)\in\argmax_{k\neq r}\eta_{ki}\). Distributing each group's assigned
bundle uniformly among its members is incentive compatible and generates
welfare
\begin{equation}
\hat W
=
\sum_{i\in B}s_i\eta_{ri}
+
\sum_{i\notin B}s_i\max_{k\neq r}\eta_{ki}.
\label{eq:comparative-advantage-test-welfare}
\end{equation}

By contrast, the welfare of any allocation excluding group \(r\) is at most $\sum_{i=1}^N
s_i\max_{l\neq r}\overline v_{li}.$ Indeed, if \(s^l\) is the supply assigned to group \(l\neq r\), then its
welfare is at most
\[
\sum_{l\neq r}\sum_{i=1}^N\overline v_{li}s_i^l
\leq
\sum_{i=1}^N
\left(\max_{l\neq r}\overline v_{li}\right)
\sum_{l\neq r}s_i^l
\leq
\sum_{i=1}^N
s_i\max_{l\neq r}\overline v_{li}.
\]
Condition~\eqref{eq:comparative-advantage-participation-difference} is
equivalent to $\hat W
>
\sum_{i=1}^N
s_i\max_{l\neq r}\overline v_{li}.$ Thus, some feasible allocation serving group \(r\) generates strictly more
welfare than every allocation that excludes it. Group \(r\) must therefore
be active in every optimal supply split.

\subsection{Proof of Proposition \ref{prop:generic-common-pseudomarket}}

\begin{lemma}
\label{lem:Wr-differentiable-at-pure}
Fix a group \(r\), and let \(s\gg0\). Suppose that a pure option mechanism
with option vector \(q\in\mathbb R_{++}^N\) clears supply \(s\) and is
optimal for \(W_r(s)\). Then \(W_r\) is differentiable at \(s\), and
\[
\nabla W_r(s)
=
c^r(q)
=
J^r(q)^{-1}A^r(q),
\]
where \(J^r(q)\) and \(A^r(q)\) are computed using group-\(r\)
primitives.
\end{lemma}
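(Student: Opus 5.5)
Since \(W_r\) is concave and finite on \(\mathbb R_+^N\) (Lemma~\ref{lem:Wr-properties}) and \(s\gg0\) lies in the interior, \(W_r\) is differentiable at \(s\) exactly when its superdifferential \(\partial W_r(s)\) is a singleton. The plan is therefore to show \(\partial W_r(s)=\{c^r(q)\}\). We do this in two directions: every supergradient must equal \(c^r(q)\), which is a first-order argument along pure-option perturbations, and \(c^r(q)\) is itself a supergradient, which is a Lagrangian/duality argument.

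\emph{Uniqueness of supergradients.} Take any \(p\in\partial W_r(s)\), so that \(W_r(s')\le W_r(s)+p\cdot(s'-s)\) for all \(s'\). For \(h\in\mathbb R^N\) and small \(t\), consider the pure option menu \(q+th\) and let \(s(t)\) be its aggregate supply use (scaled by \(\pi_r\)). This menu is feasible for \(s(t)\), so
\[
\pi_r\int\lambda_r U_{q+th}g_r\,d\theta\le W_r(s(t))\le W_r(s)+p\cdot(s(t)-s).
\]
At \(t=0\) both sides equal \(W_r(s)\), because the pure option mechanism is optimal by hypothesis. Differentiating at \(t=0\) from both sides and using Lemma~\ref{lem:pure_option_derivatives} (the group-\(r\) version, with the factor \(\pi_r\) cancelling appropriately) gives \(A^r\cdot h=p\cdot J^{r\top}h\) for every \(h\). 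Hence \(J^{r\top}p=A^r\)... but comparing with the indexing in that lemma, where \(\partial_t\int (x)_j=\sum_i J_{ij}h_i\), the correct identity is \(\sum_j p_j J_{ij}=A_i\). I would recheck the transposition convention carefully so that it matches \(c=J^{-1}A\) in Definition~\ref{def:supply_costs}. Since \(J^r\) is invertible (Fact~\ref{fact:supply_cost_properties}), \(p\) is uniquely determined and equals \(c^r(q)\).

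\emph{Nonemptiness.} Concave finite functions have nonempty superdifferential at interior points, so \(\partial W_r(s)\neq\emptyset\). This step needs no further argument, and combined with uniqueness it gives differentiability with \(\nabla W_r(s)=c^r(q)\). As a consistency check, the supergradient is also the supply multiplier identified in Proposition~\ref{prop:supply_cost_characterization}.

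\emph{Main obstacle.} The delicate point is the two-sided derivative. The function \(t\mapsto W_r(s(t))-p\cdot(s(t)-s)\) is maximized at \(t=0\) and is bounded below by a function that is differentiable at \(0\) and touches it there. The inequality therefore yields \(\frac{d}{dt}\big[\pi_r\int\lambda_rU_{q+th}g_r\big]=p\cdot s'(0)\), provided that \(s(t)\) and the welfare of the pure menu are genuinely differentiable in \(t\) (not just one-sided). That in turn requires the boundary-movement formula behind Lemma~\ref{lem:pure_option_derivatives}, which holds under Assumption~\ref{ass:observable-groups}, where \(g_r\) has Sobolev traces on the interfaces. I would verify this carefully and also confirm that \(s(t)\) stays in \(\mathbb R_{++}^N\) for small \(t\).
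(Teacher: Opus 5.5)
Your proposal is correct and matches the paper's proof. Both take an arbitrary supergradient $p$ and note that $t\mapsto \pi_r\int\lambda_r U_{q+th}g_r\,d\theta - p\cdot(s(t)-s)$ is maximized at $t=0$; differentiating with Lemma~\ref{lem:pure_option_derivatives} then gives $\pi_r(A^r-J^rp)\cdot h=0$, and invertibility of $J^r$ together with nonemptiness of the superdifferential at the interior point $s$ yields $\partial^+W_r(s)=\{c^r(q)\}$. On your transposition worry, $p\cdot s'(0)=\sum_j p_j\sum_i J_{ij}h_i=(J^rp)\cdot h$, so the identity is $J^rp=A^r$ (your corrected $\sum_j p_jJ_{ij}=A_i$, not $J^{r\top}p=A^r$), which is consistent with $c=J^{-1}A$.
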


\begin{proof}
By Lemma~\ref{lem:Wr-properties}, \(W_r\) is finite and concave. Fix any \(p\in\partial^+W_r(s)\) and, for any within-group incentive-compatible allocation rule \(x\), let
\[
t:=\pi_r\int_\Gamma x(\theta)\,dG_r(\theta).
\]
Then
\[
\pi_r\int_\Gamma
\lambda_r(\theta)\theta\cdot x(\theta)\,dG_r(\theta)
\leq
W_r(t)
\leq
W_r(s)+p\cdot(t-s).
\]
The optimal pure option mechanism attains equality and therefore maximizes
the corresponding Lagrangian. Differentiating along \(q+\varepsilon h\)
and applying Lemma~\ref{lem:pure_option_derivatives} gives $0
=
\pi_r
\bigl(A^r(q)-J^r(q)p\bigr)\cdot h$ for every  $h\in\mathbb R^N.$ Hence, by Fact~\ref{fact:supply_cost_properties}, we have $p
=
J^r(q)^{-1}A^r(q)
=
c^r(q).$ Since \(p\in\partial^+W_r(s)\) was arbitrary,
\(\partial^+W_r(s)=\{c^r(q)\}\). Thus, \(W_r\) is differentiable at \(s\),
with $\nabla W_r(s)=c^r(q).$
\end{proof}

Let
\[
\mathcal Q
:=
\left\{
q\in\mathbb R_{++}^N:
\sum  q_i=1
\right\},
\qquad
\mathcal Q_\delta
:=
\left\{
q\in\mathcal Q:
q_i\geq\delta\text{ for every }i
\right\} \ \ \  \text{for }\delta>0,
\]
and, for \(q\in\mathcal Q\), define
\[
\Gamma_i(q)
:=
\left\{
\theta\in\Gamma:
\theta_iq_i\geq\theta_jq_j
\text{ for every }j
\right\}.
\]
Fix baseline primitives $((\lambda_r^*,g_r^*)_{r=1}^R,s^*)$ and parameterize nearby primitives by
\[
\lambda_r=e^{l_r}\lambda_r^*,
\qquad
g_r=
\frac{e^{h_r}g_r^*}
{\int_\Gamma e^{h_r}g_r^*\,dm},
\qquad
r=1,\ldots,R,
\]
together with a nearby supply vector \(s\). Suppose that a CEUI is optimal
at such primitives. After normalizing the common inverse-price vector, we can apply the argument from the
proof of Fact~\ref{fact:pure_implementations} separately to each group. It follows that there exist
\(q\in\mathcal Q\) and budgets \(b_1,\ldots,b_R\geq0\) such that group \(r\)
is offered the pure option vector \(b_rq\). The supply of good \(i\) assigned
to group \(r\) is
\begin{equation}
\label{eq:common-price-group-supply}
s_i^r(q,b_r)
:=
\pi_rb_rq_i
\int_{\Gamma_i(q)}g_r(\theta)\,d\theta,
\qquad
i=1,\ldots,N,
\end{equation}
and market clearing requires
\begin{equation}
\label{eq:common-price-supply-clearing}
\sum_{r=1}^R s^r(q,b_r)=s.
\end{equation}

Suppose that the active set of this CEUI is
\(\mathcal A\subseteq\{1,\ldots,R\}\), with \(|\mathcal A|\geq2\); that is,
\(b_r>0\) if and only if \(r\in\mathcal A\). Then
\(s^r(q,b_r)\gg0\) for every \(r\in\mathcal A\). Because the CEUI is optimal
for the overall supply-splitting problem, its within-group mechanism must be
optimal at its assigned supply. By the preceding lemma, $\nabla W_r\bigl(s^r(q,b_r)\bigr)
=
c^r(b_rq)
=
c^r(q)$ for $r\in\mathcal A,$ where the second equality follows because rescaling a pure option vector
does not change its supply-cost vector.

Since the supplies assigned to all active groups are interior, the
first-order conditions for the outer problem require their gradients to
coincide. Fixing \(r_0\in\mathcal A\), we obtain
\begin{equation}
\label{eq:common-price-cost-equalities}
c^{r_0}(q)=c^r(q),
\qquad
r\in\mathcal A\setminus\{r_0\}.
\end{equation}
Moreover, \(b_r=0\) for every \(r\notin\mathcal A\), so
\eqref{eq:common-price-supply-clearing} reduces to $\sum_{r\in\mathcal A}s^r(q,b_r)=s.$ We therefore define
\[
\mathcal F_{\mathcal A}
\left(
(l_r,h_r)_{r=1}^R,
s,
q,
(b_r)_{r\in\mathcal A}
\right)
:=
\begin{pmatrix}
\bigl(c^{r_0}(q)-c^r(q)\bigr)_
{r\in\mathcal A\setminus\{r_0\}}
\\[0.6em]
\displaystyle
\sum_{r\in\mathcal A}s^r(q,b_r)-s
\end{pmatrix}.
\]
The map \(\mathcal F_{\mathcal A}\) takes values in
\(\mathbb R^{|\mathcal A|N}\). Note that a CEUI with active set
\(\mathcal A\) can be optimal only at a zero of \(\mathcal F_{\mathcal A}\). I now show that the existence of such zeros is nongeneric.

\begin{lemma}
\label{lem:common-pseudomarket-smoothness}
Fix \(\delta>0\). On a sufficiently small neighborhood of the baseline
primitives, \(\mathcal F_{\mathcal A}\) is continuously differentiable
jointly in the primitive perturbations, \(q\in\mathcal Q_\delta\), and
\((b_r)_{r\in\mathcal A}\).
\end{lemma}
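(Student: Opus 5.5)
The plan is to treat the two blocks of $\mathcal F_{\mathcal A}$ separately and, for each, pull every moving region back to a fixed reference domain so that all dependence on $(q,(l_r,h_r),b)$ moves into smooth integrands.

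\textbf{Supply block.} The map $\sum_{r\in\mathcal A}s^r(q,b_r)-s$ is affine in $s$ and linear in each $b_r$. By \eqref{eq:common-price-group-supply}, it therefore suffices to show that
\[
(q,h_r)\mapsto \int_{\Gamma_i(q)} g_r\,d\theta
\]
is $C^1$. For this I would write $\Gamma_i(q)$ as the image of a fixed reference region $\Gamma_i(\bar q)$ under the identification map $\Phi_i$ from the proof of Proposition~\ref{prop:neighbourhood-perturb}. That map rescales the ratios $\theta_k/\theta_i$ by $q_i/q_k$. On $\mathcal Q_\delta$ it, and its Jacobian, depend smoothly ($C^\infty$) on $q$, uniformly in $\theta$, because $q_k\ge\delta$ keeps all the ratios $q_i/q_k$ bounded and bounded away from zero.

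After the change of variables the integrand is $g_r\circ\Phi_i$ times the Jacobian. Here $g_r=e^{h_r}g_r^*/\int e^{h_r}g_r^*$. The map $h\mapsto e^h$ is $C^1$ (indeed smooth) from $C^2(\Gamma)$ into $C^2(\Gamma)$, and the normalizing integral is a $C^1$ functional that stays bounded away from zero. Joint differentiability in $(h_r,q)$ then follows by differentiation under the integral sign: the derivative in $q$ uses $\nabla g_r$, which is continuous, and the derivative in $h_r$ is a bounded linear functional that depends continuously on $(h_r,q)$. Equivalently, Lemma~\ref{lem:pure_option_derivatives}'s boundary-movement formula gives explicit derivatives in $q$, namely interface integrals of $g_r$ traces, and these are continuous in all arguments.

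\textbf{Cost block.} I next need $c^r(q)=J^r(q)^{-1}A^r(q)$. The components are:
\begin{itemize}
\item $M_i$ and $A_i$, which are volume integrals of $g_r$ and of $\lambda_r\theta_i g_r$ over $\Gamma_i(q)$;
\item $T_{ij}$, which is an interface integral of $g_r\theta_i$ over $\Gamma_i(q)\cap\Gamma_j(q)$, divided by the smooth, positive norm $\sqrt{q_i^2+q_j^2-\tfrac1N(q_i-q_j)^2}$.
\end{itemize}
The volume integrals are handled exactly as in the supply block, now also using that $l_r\mapsto e^{l_r}\lambda_r^*$ is $C^1$ into $C^1(\Gamma)$. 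For $T_{ij}$ I would parametrize the interface $\{q_i\theta_i=q_j\theta_j\}\cap\Gamma$ by a fixed reference face through a smooth map. The resulting integrand is then $g_r\circ(\text{map})$ times a surface Jacobian, smooth in $q$. Its $q$-derivative involves $\nabla g_r$, and its $h_r$-derivative is continuous because $g_r\in C^2$.

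Hence $J^r$ and $A^r$ are $C^1$. Fact~\ref{fact:supply_cost_properties} shows that $J^r=H^rD$ is a nonsingular $M$-matrix times a positive diagonal matrix. Nonsingularity is therefore automatic at every point, and matrix inversion is smooth on invertible matrices, so $c^r$ is $C^1$.

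\textbf{Main obstacle.} The hard part is the interface integrals $T_{ij}$ when $N\ge3$. The interfaces meet one another at $\theta^0(q)$ and meet $\partial\Gamma$, so the reference parametrization has to be chosen so that it maps the whole polyhedral face (with its moving corner $\theta^0(q)$) onto a fixed one smoothly. I would first try the explicit ratio coordinates $\theta_k/\theta_i$, $k\ne i,j$, together with the constraint $\theta_i/\theta_j=q_j/q_i$. In these coordinates the face becomes a fixed box-type region scaled by ratios of $q$, and this should give the required smoothness uniformly on $\mathcal Q_\delta$. Shrinking the primitive neighborhood only serves to keep the normalizing constants and the norms bounded away from zero.
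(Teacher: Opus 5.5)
Your proposal is correct and follows essentially the paper's route: the paper parametrizes each $\Gamma_i(q)$ by the fixed cube $[0,1]^{N-1}$ through the same ratio rescaling $\theta_j/\theta_i=(q_i/q_j)z_j$. It then pulls $M_i^r$, $A_i^r$ and $T_{ij}^r$ back to the cube and its faces, differentiates under the integral sign, and gets $c^r=(J^r)^{-1}A^r$ and $s^r$ from the invertibility of $J^r$ in Fact~\ref{fact:supply_cost_properties}. Your ``main obstacle'' is already resolved by your volume parametrization, because the ratio map sends the reference interface onto $\Gamma_i(q)\cap\Gamma_j(q)$; this is the face $z_j=1$ of the cube, which is how the paper handles $T_{ij}$.
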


\begin{proof}
For each \(i\), parameterize \(\Gamma_i(q)\) by
\(z\in[0,1]^{N-1}\), with coordinates indexed by \(j\neq i\), through
\[
\Phi_i(q,z)_i
=
\frac{1}
{1+\sum_{j\neq i}(q_i/q_j)z_j},
\qquad
\Phi_i(q,z)_j
=
\frac{(q_i/q_j)z_j}
{1+\sum_{k\neq i}(q_i/q_k)z_k}.
\]
This is a bijective parameterization of \(\Gamma_i(q)\), and the face
\(z_j=1\) parameterizes \(\Gamma_i(q)\cap\Gamma_j(q)\). On
\(\mathcal Q_\delta\), the map \(\Phi_i\), together with its interior and
face Jacobians, depends smoothly on \((q,z)\).

Pulling the integrals defining \(M_i^r(q)\), \(A_i^r(q)\), and
\(T_{ij}^r(q)\) back to the fixed cubes \([0,1]^{N-1}\) and their faces, and differentiating under the integral sign, shows that these objects are continuously differentiable jointly in \((l_r,h_r,q)\).

Thus \(A^r(q)\), \(J^r(q)\), and \(M_i^r(q)\) are jointly \(C^1\).
By Fact~\ref{fact:supply_cost_properties}, \(J^r(q)\) is invertible, so $c^r(q)=J^r(q)^{-1}A^r(q)$ and $s_i^r(q,b_r)=\pi_rb_rq_iM_i^r(q)$ are jointly \(C^1\). Hence \(\mathcal F_{\mathcal A}\) is jointly \(C^1\).
\end{proof}

\paragraph{Uniform bounds on the CEUI parameters.}
After shrinking the primitive neighborhood if necessary, there exist
\(\underline s,\overline s>0\) such that $s_i\geq\underline s$ and $\sum_{i=1}^Ns_i\leq\overline s$ for every nearby supply vector \(s\). We first obtain a uniform upper bound on the budgets. Fix \(r\) and
\(q\in\mathcal Q\), and choose \(i\) such that \(q_i=\max_jq_j\). Then
\(q_i\geq1/N\) and $\left\{
\theta\in\Gamma:
\theta_i\geq\theta_j
\text{ for every }j
\right\}
\subseteq
\Gamma_i(q).$ The mass of the set on the left is uniformly bounded away from zero under
sufficiently small primitive perturbations. Hence, there exists
\(\kappa>0\) such that $\sum_{j=1}^N
q_j\int_{\Gamma_j(q)}g_r(\theta)\,d\theta
\geq\kappa$ for every \(r\), every \(q\in\mathcal Q\), and every nearby primitive
vector. Therefore, for each \(r\in\mathcal A\), market clearing gives
\[
\overline s
\geq
\sum_{i=1}^Ns_i
\geq
\sum_{i=1}^Ns_i^r(q,b_r)
=
\pi_rb_r
\sum_{i=1}^N
q_i\int_{\Gamma_i(q)}g_r(\theta)\,d\theta
\geq
\pi_rb_r\kappa;
 \qquad \text{thus} \qquad
b_r\leq\frac{\overline s}{\pi_r\kappa}
=:\overline b_r,
\quad r\in\mathcal A.
\]

We next obtain a uniform lower bound on the components of \(q\). For every
\(i\), market clearing and the preceding budget bounds imply $s_i
=
q_i\sum_{r\in\mathcal A}
\pi_rb_r
\int_{\Gamma_i(q)}g_r(\theta)\,d\theta
\leq
q_i\sum_{r\in\mathcal A}\pi_r\overline b_r.$ Since $s_i\geq\underline s$, we have $q_i
\geq
\frac{\underline s}
{\sum_{r\in\mathcal A}\pi_r\overline b_r}.$ Set $\underline q
:=
\min\{
{\underline s}/
{\sum_{r\in\mathcal A}\pi_r\overline b_r}, \ 
{1}/{2N}
\}
>0.$ Consequently, every zero of \(\mathcal F_{\mathcal A}\) associated with
nearby primitives lies in the compact set
\[
K_{\mathcal A}
:=
\left\{
\left(q,(b_r)_{r\in\mathcal A}\right):
q\in\mathcal Q,\quad
q_i\geq\underline q,\quad
0\leq b_r\leq\overline b_r
\right\}.
\]

\paragraph{Transversality.}
After shrinking the primitive neighborhood if necessary, we can apply
Lemma~\ref{lem:common-pseudomarket-smoothness} with
\(\delta=\underline q/2\) to conclude that \(\mathcal F_{\mathcal A}\) is
jointly \(C^1\) in the primitive perturbations and the endogenous parameters.
Since every \(q\) appearing in \(K_{\mathcal A}\) satisfies
\(q_i\geq\underline q>\underline q/2\), this smoothness holds on a
neighborhood of \(K_{\mathcal A}\).

Fix a zero of \(\mathcal F_{\mathcal A}\). We first show that variations in
the primitives span the entire codomain of \(\mathcal F_{\mathcal A}\).
For each \(r\in\mathcal A\setminus\{r_0\}\), a variation
\(k_r\in C^1(\Gamma)\) of \(l_r\) gives
\[
D_{l_r}c^r(q)[k_r]
=
J^r(q)^{-1}
\left(
\int_{\Gamma_i(q)}
\theta_i\lambda_r(\theta)
k_r(\theta)g_r(\theta)\,d\theta
\right)_{i=1}^N.
\]
For each \(i\), choose a \(C^1\) function supported in
\(\Gamma_i(q)^\circ\) for which the corresponding integral is nonzero.
Scaling these functions independently shows that the vector of integrals
can take any value in \(\mathbb R^N\). Since \(J^r(q)\) is invertible,
\(D_{l_r}c^r(q)\) is therefore surjective onto \(\mathbb R^N\).

A variation in \(l_r\) affects only the \(r\)-th cost-difference block
\(c^{r_0}(q)-c^r(q)\). In particular, it does not affect the other
cost-difference blocks or the supply-clearing block. Hence independent
variations in \((l_r)_{r\in\mathcal A\setminus\{r_0\}}\) span all
\((|\mathcal A|-1)N\) cost-difference coordinates of
\(\mathcal F_{\mathcal A}\). Variations in \(s\) span the remaining \(N\)
coordinates, since $D_s\mathcal F_{\mathcal A}[\dot s]
=
(0,-\dot s)^T.$ Thus the derivative of \(\mathcal F_{\mathcal A}\) with respect to the
primitive variables is surjective at every zero, and so \(\mathcal F_{\mathcal A}\) is transverse to \(\{0\}\).

Now, stratify \(K_{\mathcal A}\) by the relative interiors of its finitely many
faces. Let \(\mathcal P\) denote the sufficiently small open neighborhood
of primitive perturbations under consideration. For every stratum
\(\Sigma\) of \(K_{\mathcal A}\), consider the restriction $\mathcal F_{\mathcal A}^{\Sigma}:
\mathcal P\times\Sigma
\to
\mathbb R^{|\mathcal A|N}.$ By the preceding argument, this map is \(C^1\) and transverse to \(\{0\}\).
The Banach-manifold parametric transversality theorem
\citep[see, e.g.,][]{abraham1967transversal} therefore implies that, for a
residual subset of \(\mathcal P\), the primitive-specific restriction to
\(\Sigma\) is transverse to \(\{0\}\). Since $\dim\Sigma
\leq N+|\mathcal A|-1
<
|\mathcal A|N,$ such a transverse restriction cannot have a zero. Intersecting over the
finitely many strata therefore gives a residual, and hence dense, set of
primitive perturbations for which \(\mathcal F_{\mathcal A}\) has no zero
in \(K_{\mathcal A}\). Next, define
\[
\mathcal U_{\mathcal A}
:=
\left\{
\xi\in\mathcal P:
\min_{(q,b)\in K_{\mathcal A}}
\left\|\mathcal F_{\mathcal A}(\xi,q,b)\right\|>0
\right\}.
\]
The preceding dense set is contained in \(\mathcal U_{\mathcal A}\), so
\(\mathcal U_{\mathcal A}\) is dense. Compactness of \(K_{\mathcal A}\)
and joint continuity of \(\mathcal F_{\mathcal A}\) imply that
\(\mathcal U_{\mathcal A}\) is also open. Therefore, for each fixed
\(\mathcal A\subseteq\{1,\ldots,R\}\) with \(|\mathcal A|\geq2\), there is
an open and dense set of nearby primitives for which no optimal CEUI has
active set \(\mathcal A\). There are only finitely many such active sets, so
intersecting these open and dense sets shows that optimality of a CEUI with
at least two active groups is nongeneric at
\(((\lambda_r^*,g_r^*)_{r=1}^R,s^*)\).

\section{Omitted proofs from Section \ref{sec:proofdetails}}

\subsection{Proof of Proposition \ref{thm:icchar} and Corollary
\ref{lem:utility-space-corr}}

Necessity was shown in the main text, so it suffices to prove sufficiency and
Corollary~\ref{lem:utility-space-corr}. The corollary is implied by the following lemma.
\begin{lemma}\label{lem:utility-space-formulation}
Let \(U:\Gamma\to\mathbb R_+\) be convex and satisfy
\eqref{eq:pseudomono}. Then \(U\) is Lipschitz continuous. Moreover, there
exists a nonempty compact convex set \(X\subseteq\mathbb R_+^N\) such that
\begin{equation}\label{eq:support-representation-U}
U(\theta)
=
\max_{z\in X}\theta\cdot z
\qquad
\text{for every }\theta\in\Gamma.
\end{equation}
At every \(\theta\in\Gamma^\circ\) where \(\nabla_HU(\theta)\) exists, the
maximizer in \eqref{eq:support-representation-U} is unique and equals
\[
x_U(\theta)
=
\nabla_H U(\theta)
-
\mathbf 1\bigl(\nabla_H U(\theta)\cdot\theta-U(\theta)\bigr).
\]
Consequently, \(x_U(\theta)\in\mathbb R_+^N\), \(x_U\) is uniformly bounded,
and \(U(\theta)=\theta\cdot x_U(\theta)\) a.e. Furthermore,
every allocation rule satisfying \eqref{eq:ICprime} and implementing \(U\)
coincides with \(x_U\) a.e.
\end{lemma}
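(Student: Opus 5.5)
The plan is to build the candidate set $X$ directly from $U$ and then read off the remaining claims from convex analysis. Define the positively homogeneous extension $\tilde U(v):=(\sum_j v_j)\,U\bigl(v/\sum_j v_j\bigr)$ on $\mathbb R^N_+\setminus\{\mathbf 0\}$, with $\tilde U(\mathbf 0)=0$.

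\textbf{Step 1: the extension is sublinear and monotone.} I will first show that $\tilde U$ is convex. Convexity of $U$ on the simplex together with degree-one homogeneity gives this by the standard perspective-function argument. Next, condition \eqref{eq:pseudomono} is exactly monotonicity of $\tilde U$. To see this, fix $v$ and raise one coordinate $v_k$. For each $i\neq k$, the ratios $v_l/v_i$ weakly increase, so the renormalized type moves down in $\succ_i$. Writing $\tilde U(v)=v_i\,U(\theta)/\theta_i$ shows that $\tilde U$ does not decrease.

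A convex, positively homogeneous function is the support function of its subdifferential at the origin. I therefore set
\[
X:=\{z\in\mathbb R^N:\ v\cdot z\le \tilde U(v)\ \text{for all } v\in\mathbb R^N_+\}.
\]
Two steps are needed to make this rigorous: extending $\tilde U$ to a finite convex homogeneous function on a neighborhood of the cone, and intersecting $X$ with $\mathbb R^N_+$. The intersection is without loss, since replacing $z$ by $z^+$ keeps $v\cdot z^+\le\tilde U(v)$ by monotonicity. This yields \eqref{eq:support-representation-U}, with the max attained by compactness. Boundedness of $X$ follows from $z_i\le \tilde U(e_i)=U(e_i)<\infty$; note that $U$ is bounded on $\Gamma$, being convex on a polytope and hence bounded by its vertex values.

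\textbf{Step 2: Lipschitz continuity.} Since $U$ is a maximum of linear functions $\theta\cdot z$ with $z\in X$ and $X$ bounded, $U$ is Lipschitz with constant $\sup_{z\in X}\|z\|$.

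\textbf{Step 3: the formula for $x_U$.} At a point $\theta\in\Gamma^\circ$ where $\nabla_H U$ exists, any maximizer $z$ satisfies $U(\theta')\ge\theta'\cdot z$ with equality at $\theta$. Hence the tangential part of $z$ equals $\nabla_H U(\theta)$, so $z=\nabla_H U+\alpha\mathbf 1$. The scalar $\alpha$ is then pinned down by $\theta\cdot z=U(\theta)$ and $\sum\theta_i=1$, which gives the displayed formula and uniqueness. By Rademacher's theorem this holds a.e. The properties $x_U\ge0$ and uniform boundedness are inherited from $X$.

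\textbf{Step 4: uniqueness of implementing rules, and the main obstacle.} Any incentive-compatible $x$ implementing $U$ satisfies $U(\theta')\ge\theta'\cdot x(\theta)$ with equality at $\theta$, so the same tangency argument forces $x=x_U$ a.e. Sufficiency for the proposition then follows by taking $x:=x_U$, using measurable selections off the differentiability set. I expect the main obstacle to be Step 1 near $\partial\Gamma$, where some $\theta_i=0$. There, \eqref{eq:pseudomono} is only stated for $\theta_i,\theta_i'>0$, so monotonicity of $\tilde U$ along faces must be recovered by continuity. My first attempt will be to prove continuity of $U$ up to the boundary from convexity plus \eqref{eq:pseudomono}, which rules out upward jumps at faces, and then pass to limits.
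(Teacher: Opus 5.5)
Your outline follows the paper's proof: the homogeneous extension $\tilde U$, sublinearity, monotonicity from \eqref{eq:pseudomono}, a support-function representation, and then the tangency argument plus Rademacher for $x_U$ and for any implementing rule. However, you misdiagnose the one step you leave open, and as written it is not closed. Monotonicity on faces needs no continuity argument. To get $\tilde U(v+te_k)\ge\tilde U(v)$ you need only one index $i\neq k$ with $v_i>0$. Then $\alpha=v/\sum_l v_l$ and $\beta=(v+te_k)/(\sum_l v_l+t)$ both have positive $i$-th coordinate, so $\alpha\succ_i\beta$ is a legitimate instance of \eqref{eq:pseudomono} even when both points lie on $\partial\Gamma$. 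If no such $i$ exists, $v$ is a multiple of $e_k$, and homogeneity plus nonnegativity suffice. Your $z\mapsto z^+$ step already needs this boundary version, since the vector obtained from $v$ by zeroing the coordinates where $z_k\le 0$ lies on the boundary of the orthant.

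The real boundary issue is whether $U(\theta)=\max_{z\in X}\theta\cdot z$ holds exactly at $\theta\in\partial\Gamma$. Subgradients at interior points of the cone give it only on $\Gamma^\circ$. A convex function on $\Gamma$ can jump up on $\partial\Gamma$; for example, $U=\mathbf 1_{\partial\Gamma}$ is convex but cannot be represented by any support function. So this is exactly where \eqref{eq:pseudomono} must be used at the boundary. Your Step 2 cannot supply it, because it derives Lipschitz continuity from the representation.

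With monotonicity on all of $\mathbb R^N_+$, the fix is immediate, and it is what the paper does. The extension you ask for is $v\mapsto\tilde U(v_+)$, which is finite and sublinear on $\mathbb R^N$ because $(v+w)_+\le v_++w_+$. It is therefore the support function of its subdifferential at $0$. Testing $v=\pm e_i$ places that set in $[0,M]^N$ with $M=\max_iU(e_i)$, and the representation then holds on all of $\Gamma$ at once. The paper also reverses your order for Lipschitz continuity, obtaining it first from $\tilde U(v)\le\tilde U\bigl(w+(v-w)_+\bigr)\le\tilde U(w)+M\|v-w\|_1$. The continuity-and-limits detour you anticipate is therefore unnecessary.
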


\begin{proof}
Extend \(U\) homogeneously to \(\mathbb R_+^N\) by
\[
\tilde U(z)
:=
\begin{cases}
(\sum_lz_l) U\bigl(z/\sum_lz_l\bigr),&z\neq0,\\[3pt]
0,&z=0.
\end{cases}
\]
Since \(U\) is convex and nonnegative, \(\tilde U\) is nonnegative,
convex, positively homogeneous, and hence also sublinear.

I first show that \(\tilde U\) is coordinatewise non-decreasing. It is
enough to prove that $\tilde U(r+te_j)\ge\tilde U(r)$ for every $r\in\mathbb R_+^N,\ t\ge0.$ If \(r=0\), this follows from nonnegativity. If \(r\) has a positive
coordinate \(r_i\) with \(i\neq j\), define
\[
\alpha:=\frac{r}{\sum_kr_k},
\qquad
\beta:=\frac{r+te_j}{\sum_kr_k+t}.
\]
Then \(\alpha\succ_i\beta\), so \eqref{eq:pseudomono} gives $\frac{U(\beta)}{\beta_i}
\ge
\frac{U(\alpha)}{\alpha_i}.$ Since $\alpha_i=\frac{r_i}{\sum_kr_k}$ and $\beta_i=\frac{r_i}{\sum_kr_k+t},$ this is equivalent to
\(\tilde U(r+te_j)\ge\tilde U(r)\). The remaining case, in which
\(r\) is supported only on coordinate \(j\), follows from positive
homogeneity.

I now show that $U$ is Lipschitz on $\Gamma$. Let \(M:=\max_iU(e_i)\). For every \(v\in\mathbb R_+^N\), sublinearity gives
\[
0\le\tilde U(v)
\le
\sum v_i\tilde U(e_i)
\le
M\|v\|_1.
\]
Moreover, coordinatewise monotonicity and sublinearity imply
\[
\tilde U(v)
\le
\tilde U\bigl(w+(v-w)_+\bigr)
\le
\tilde U(w)+M\|v-w\|_1.
\]
Interchanging \(v\) and \(w\) yields $|\tilde U(v)-\tilde U(w)|
\le
M\|v-w\|_1.$

Now, since \(\tilde U\) is finite, sublinear, and coordinatewise
non-decreasing, it is the support function of a nonempty compact convex set
\(X\subseteq[0,M]^N\).\footnote{This follows from the support-function
representation of closed sublinear functions; see
\citet[Theorem~13.2]{rockafellar1997convex}. Apply the theorem to the
sublinear extension \(v\mapsto\tilde U(v_+)\).}
Thus $\tilde U(v)=\max_{z\in X}v\cdot z$ for every $v\in\mathbb R_+^N.$ Restricting this representation to \(\Gamma\) gives
\eqref{eq:support-representation-U}.

Finally, fix a point \(\theta\in\Gamma^\circ\) at which
\(\nabla_HU(\theta)\) exists, and let \(z\in X\) maximize
\(\theta\cdot z\). For every \(\theta'\in\Gamma\),
\[
U(\theta')
\ge
\theta'\cdot z
=
U(\theta)+z\cdot(\theta'-\theta).
\]
Thus, the projection of \(z\) onto the tangent space of \(\Gamma\) equals
\(\nabla_HU(\theta)\). Together with \(\theta\cdot z=U(\theta)\), this
uniquely determines \(z\) and gives
\[
z
=
\nabla_H U(\theta)
-
\mathbf 1\bigl(\nabla_H U(\theta)\cdot\theta-U(\theta)\bigr)
=
x_U(\theta).
\]
Hence \(x_U(\theta)\in X\subseteq\mathbb R_+^N\), \(x_U\) is uniformly
bounded, and \(U(\theta)=\theta\cdot x_U(\theta)\) almost everywhere.

Now, let \(x:\Gamma\to\mathbb R_+^N\) be any allocation rule satisfying
\eqref{eq:ICprime} and implementing \(U\). Incentive compatibility implies
that, for every \(\theta,\theta'\in\Gamma\), we have $U(\theta')
\ge
U(\theta)+x(\theta)\cdot(\theta'-\theta).$ At every \(\theta\in\Gamma^\circ\) where \(\nabla_HU(\theta)\) exists, this
implies that the projection of \(x(\theta)\) onto the tangent space of
\(\Gamma\) equals \(\nabla_HU(\theta)\). Together with
\(\theta\cdot x(\theta)=U(\theta)\), this uniquely determines \(x(\theta)\)
and gives
\[
x(\theta)
=
\nabla_H U(\theta)
-
\mathbf 1\bigl(\nabla_H U(\theta)\cdot\theta-U(\theta)\bigr)
=
x_U(\theta).
\]
Since \(U\) is Lipschitz, Rademacher's theorem implies that it is
differentiable almost everywhere. Therefore, \(x=x_U\) almost everywhere.
\end{proof}
It remains to show sufficiency. To that end, for each \(\theta\in\Gamma\), choose $x(\theta)\in\argmax_{z\in X}\theta\cdot z.$ The measurable maximum theorem gives a measurable selection. By
\eqref{eq:support-representation-U}, $U(\theta)=\theta\cdot x(\theta)$ for every $\theta\in\Gamma$. Moreover, for every \(\theta,\theta'\in\Gamma\), we have $\theta\cdot x(\theta)
=
\max_{z\in X}\theta\cdot z
\ge
\theta\cdot x(\theta').$ Thus \(x\) satisfies \eqref{eq:ICprime} and implements \(U\), proving
sufficiency.

\subsection{Proof of Proposition \ref{prop:supply_cost_characterization}}

For \(U\in\mathcal U\), write
\[
\Phi(U):=\int_\Gamma\lambda U g\,d\theta,
\qquad
\Pi(U):=\int_\Gamma x_U g\,d\theta.
\]
Recall that \(\Pi(U_{\mathrm{pure}})=s\) and, by
Fact~\ref{fact:supply_cost_properties}, \(c=J^{-1}A\gg0\).

Suppose first that \(U_{\mathrm{pure}}\) solves
\eqref{eq:problem-supplycost}. If \(U\) is feasible in
\eqref{eq:U-objective}, then \(\Pi(U)\le s\), and hence
\[
\Phi(U)
\le
\Phi(U)-c\cdot(\Pi(U)-s)
\le
\Phi(U_{\mathrm{pure}}).
\]
Thus \(U_{\mathrm{pure}}\) solves \eqref{eq:U-objective}.

Conversely, suppose that \(U_{\mathrm{pure}}\) solves
\eqref{eq:U-objective}, and define
\[
P(r):=\sup\{\Phi(U):U\in\mathcal U,\ \Pi(U)\le r\},
\qquad r\in\mathbb R_+^N.
\]
This function is finite, increasing, and concave. Indeed, if
\(\Pi(U)\le r\), then, writing \(\bar\lambda\) for an upper bound on
\(\lambda\),
\[
\Phi(U)
\le
\bar\lambda\int_\Gamma\sum_i(x_U)_i\,dG
\le
\bar\lambda\sum_i r_i.
\]
Monotonicity is immediate, while concavity follows from the convexity of
\(\mathcal U\) and the fact that $x_{tU+(1-t)V}=tx_U+(1-t)x_V$ a.e. for $t\in[0,1].$ Now, since \(s\gg0\), \(P\) has a supergradient \(c^*\) at \(s\). Because \(P\) is
increasing, \(c^*\ge0\), and $P(r)\le P(s)+c^*\cdot(r-s)$ for every $r\in\mathbb R_+^N.$ Moreover, \(P(s)=\Phi(U_{\mathrm{pure}})\), so, for every \(U\in\mathcal U\),
\[
\Phi(U)-c^*\cdot(\Pi(U)-s)
\le
P(\Pi(U))-c^*\cdot(\Pi(U)-s)
\le
\Phi(U_{\mathrm{pure}}).
\]
Thus \(U_{\mathrm{pure}}\) maximizes the Lagrangian with multiplier \(c^*\). It remains to identify this multiplier.

Note that for every \(h\in\mathbb R^N\), \(U_{q+th}\in\mathcal U\) for all sufficiently
small \(t\). Since \(U_{\mathrm{pure}}=U_q\) maximizes the Lagrangian,
Lemma~\ref{lem:pure_option_derivatives} implies
\[
0
=
A\cdot h
-
\sum c_j^*\sum_iJ_{ij}h_i
=
\bigl(A-Jc^*\bigr)\cdot h.
\]
Since \(h\) was arbitrary, \(Jc^*=A\). Therefore
\(c^*=J^{-1}A=c\), so \(U_{\mathrm{pure}}\) solves
\eqref{eq:problem-supplycost}.

\subsection{Proof of Lemma \ref{lemma:measure-objective}}

\emph{Step 1: Measure formulation.}
Dropping the constant \(c\cdot s\), the objective is
\[
\int_\Gamma \lambda U g\,d\theta
-
\int_\Gamma c\cdot x_U\,g\,d\theta.
\]
By the definition of \(x_U\),
\[
\int_\Gamma c\cdot x_U\,g\,d\theta
=
\int_\Gamma
\big(c-(\sum c_j)\theta\big)\cdot\nabla_HU\,g\,d\theta
+
\Bigl(\sum c_j\Bigr)\int_\Gamma Ug\,d\theta.
\]
The vector field $\big(c-(\sum c_j)\theta\big)g$ is tangent to \(H\), and \(U\) is Lipschitz by
Corollary~\ref{lem:utility-space-corr}. Hence, by
Theorem~\ref{th:divonH} and integration by parts,
$\int_\Gamma \lambda Ug\,d\theta
-
\int_\Gamma c\cdot x_U\,g\,d\theta$ is equal to
\[
\int_\Gamma U
\big[
(\lambda-\sum c_j)g
+
\operatorname{div}
\big(
\big(c-(\sum c_j)\theta\big)g
\big)
\big]d\theta-
\int_{\partial\Gamma}
U\big(c-(\sum c_j)\theta\big)g\cdot\nu\,d\sigma\\
=
\int_\Gamma\frac{U}{\theta_*}\,d\mu.
\]
The equality follows from the definition of \(\mu\) in
\eqref{eq:signedmeas}.

\emph{Step 2: Balance on each region \(\Gamma_i\).}
We now apply Theorem~\ref{th:divonH} on \(\Gamma_i\) with $\eta(\theta)=\theta_i$ and $X(\theta)
=
\bigl(c-(\textstyle\sum c_j)\theta\bigr)g(\theta).$ Since \(\nabla_H\theta_i=e_i-\tfrac1N\mathbf1\) and \(X(\theta)\in TH\),
substituting the resulting integration-by-parts identity into the definition
of \(\mu(\Gamma_i)\) gives
\[
\mu(\Gamma_i)
=
A_i-c_iM_i
+
\int_{\partial\Gamma_i\setminus\partial\Gamma}
\theta_i
\bigl(c-(\textstyle\sum c_j)\theta\bigr)g\cdot\nu\,d\sigma.
\]
Up to lower-dimensional intersections, the interior boundary of
\(\Gamma_i\) is the union of the interfaces
\(\Gamma_i\cap\Gamma_k\), \(k\neq i\). Along such an interface,
\[
q_i\theta_i=q_k\theta_k,
\qquad
\nu_{ik}^{(i)}
=
-\frac{\nabla_H(q_i\theta_i-q_k\theta_k)}
{\|\nabla_H(q_i\theta_i-q_k\theta_k)\|}.
\]
Hence
\[
\theta_i
\bigl(c-(\textstyle\sum c_j)\theta\bigr)g
\cdot\nu_{ik}^{(i)}
=
(q_kc_k-q_ic_i)g
\frac{\theta_i}
{\|\nabla_H(q_i\theta_i-q_k\theta_k)\|},
\]
where the terms involving \(\sum c_j\) cancel because
\(q_i\theta_i=q_k\theta_k\). Moreover,
\[
\|\nabla_H(q_i\theta_i-q_k\theta_k)\|^2
=
q_i^2+q_k^2-\frac{(q_i-q_k)^2}{N}.
\]
Therefore, by the definition of \(T_{ik}\),
\[
\mu(\Gamma_i)
=
A_i-c_iM_i
+
\sum_{k\neq i}(q_kc_k-q_ic_i)T_{ik}.
\]
The right-hand side is zero by the \(i\)-th row of \(Jc=A\). Thus
\(\mu(\Gamma_i)=0\) for every \(i\).

\subsection{Proof of Proposition \ref{prop:dualcert}}

$(\Leftarrow)$. Suppose \((\gamma_1,\dots,\gamma_N)\) is a dual
certificate. For every \(U\) feasible in \eqref{eq:primal},
\[
\int_\Gamma\frac{U}{\theta_*}\,d\mu
\le
\sum_{i=1}^N\int_{R_i}B_iU\,d\gamma_i
\le0,
\]
because \(B_iU\le0\) on \(R_i\) and \(\gamma_i\ge0\). Since $\int_\Gamma\frac{U_{\mathrm{pure}}}{\theta_*}\,d\mu=0$ by \eqref{eq:primalzeropure}, \(U_{\mathrm{pure}}\) is primal optimal.

$(\Rightarrow)$. We introduce the
ordered Banach space $Y:=\prod_{i=1}^N C( R_i)$ with the product sup norm and the pointwise order. The primal constraints are
\( B_iU\le0\) on \( R_i\), for every \(i\). Note also that
$\mathcal U$, defined in \eqref{eq:UL}, can be written as
\[
\mathcal{U}
=
\left\{
U\in K:
 B_iU\le0\text{ on } R_i
\text{ for every }i
\right\}.
\]
We first show the following lemma:

\begin{lemma}
\label{lem:R-repair}
There exists \(C<\infty\) such that the following holds. Let
\(\bar U\in K\) and \(\delta\ge0\). If $B_i\bar U\le\delta$ on $R_i$ for every $i$, then there exists \(U\in\mathcal U\) such that
\(\|\bar U-U\|_\infty\le C\delta\).
\end{lemma}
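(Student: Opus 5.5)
The plan is to pass to the positively homogeneous extension used in the proof of Lemma~\ref{lem:utility-space-formulation}, read the approximate constraint $B_i\bar U\le\delta$ as approximate coordinatewise monotonicity of that extension, and then repair $\bar U$ by taking the positive parts of its supporting slopes. The output is then automatically an incentive-compatible indirect utility, hence lies in $\mathcal U$.

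\emph{Step 1 (extension and approximate monotonicity).} Let $\tilde U(v):=(\sum_l v_l)\,\bar U(v/\sum_l v_l)$, which is convex, nonnegative and sublinear on $\mathbb R^N_+$. In the proof of Lemma~\ref{lem:utility-space-formulation}, the exact inequality $\tilde U(r+te_j)\ge\tilde U(r)$ was obtained from \eqref{eq:pseudomono} applied to the pair $\alpha=r/\sum r_k$, $\beta=(r+te_j)/(\sum r_k+t)$, with $\alpha\succ_i\beta$. I would redo that computation with the cross-multiplied constraint $\alpha_i\bar U(\beta)-\beta_i\bar U(\alpha)\ge-\delta$, i.e.\ $(\alpha,\beta)$ in $R_i$ read with the convention of $B_i$. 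Multiplying through by $(\sum r_k)(\sum r_k+t)/r_i$ should give
\[
\tilde U(r+te_j)\ \ge\ \tilde U(r)-\delta\,\frac{(\sum_k r_k)(\sum_k r_k+t)}{r_i}.
\]
Choosing $i$ as the largest coordinate other than $j$ makes the factor at most of order $N(\sum_k r_k)(\sum_k r_k+t)$ whenever $r$ is not concentrated on $e_j$.

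\emph{Step 2 (slopes and repair).} Since $\bar U$ is convex on the compact simplex, $\bar U(\theta)=\sup_{z\in Z}\theta\cdot z$, where $Z\subseteq\mathbb R^N$ is the set of affine minorants (slopes) touching $\bar U$ at interior points. Any representation of $\bar U$ as a maximum of linear functions is shifted by $\mathbf 1$ along $TH$, so I would fix a normalization, for instance the one from \eqref{eq:x-from-U}. Define the candidate
\[
U(\theta):=\sup_{z\in Z}\theta\cdot z_+ .
\]
This is the indirect utility of the menu $\{z_+:z\in Z\}\subseteq\mathbb R^N_+$, so it is convex, nonnegative, and satisfies \eqref{eq:ICprime}; by the necessity part of Proposition~\ref{thm:icchar} it satisfies \eqref{eq:pseudomono}, so $U\in\mathcal U$. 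Moreover $\bar U\le U\le\bar U+\max_j\sup_{z\in Z}(z_j)_-$, because $\theta\in\Gamma$ has unit mass. It therefore suffices to show $(z_j)_-\le C\delta$ for every $z\in Z$ and every $j$.

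\emph{Step 3 (bounding negative slopes; the main obstacle).} For a slope $z$ supporting $\tilde U$ at $r$ (with $\sum r_k=1$), the supporting hyperplane gives $\tilde U(r+te_j)\ge\tilde U(r)+tz_j$, which is only an upper bound on $z_j$. To bound $z_j$ from below I will instead use the point $r-te_j$, which requires $t\le r_j$: from $\tilde U(r-te_j)\ge\tilde U(r)-tz_j$ and Step~1 applied at $r-te_j$, one gets $-tz_j\le C'\delta$, i.e.\ $z_j\ge -C'\delta/t$. This is uniform only when $r_j$ is bounded below. Slopes supporting at points near the face $\{\theta_j=0\}$ are the real difficulty, and I expect this to be the hardest step. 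I would try two fixes, in this order.

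\begin{itemize}
\item[(a)] Use homogeneity to move along the ray: support at $r$ is support at $\kappa r$ for every $\kappa>0$. Taking $\kappa$ large makes $t=\kappa r_j$ large, but the error factor in Step~1 grows like $\kappa^2$. The goal is to check whether, after optimizing over $\kappa$, the ratio $\delta\,(\text{factor})/t$ stays bounded. If not, I would exploit the freedom of moving mass along coordinates $i\ne j$ to absorb the growth.
\item[(b)] Failing (a), define $Z$ only from slopes at points with $\min_k\theta_k\ge\varepsilon$. Taking the convex hull is harmless, since it does not change the support function. I would then show that the support function of this truncated set differs from $\bar U$ by $O(\delta)$ near the faces, again using the approximate monotonicity of Step~1 along edges of $\Gamma$ (now with $r_i$ bounded below for the appropriate $i$), and let $\varepsilon\downarrow0$ with a constant independent of $\varepsilon$.
\end{itemize}

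Either way, the target is a bound $C$ depending only on $N$.
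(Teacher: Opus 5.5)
Your Steps~1--2 are sound, and your candidate is in fact the paper's. Since $\tilde U(x)=\sup_{z\in Z}z\cdot x$ on $\mathbb R^N_+$, one has $\sup_{0\le x\le\theta}\tilde U(x)=\sup_{z\in Z}\theta\cdot z_+$, so your $U$ is exactly the coordinatewise monotone hull $W^\uparrow|_\Gamma$ used there. Your verification that $U\in\mathcal U$ via the menu $\{z_+\}$ is fine. (Incidentally, no normalization needs to be fixed: $\Gamma$ contains $e_1,\dots,e_N$, so a linear functional is determined by its values on $\Gamma$.)

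\textbf{The gap.} The reduction at the end of Step~2 targets an estimate that is false: $(z_j)_-\le C\delta$ does not hold for every supporting slope. So Step~3 cannot be completed by (a), (b), or anything else. Take $L>1$ and
\[
\bar U(\theta)=\delta\max\{0,\,1-L\theta_j\}.
\]
It lies in $K$, and $B_i\bar U(\theta,\theta')\le\theta_i\bar U(\theta')\le\|\bar U\|_\infty=\delta$ for every $i$, so the hypothesis holds. But at every interior point with $\theta_j<1/L$ the unique supporting slope is $z=\delta(\mathbf 1-Le_j)$, so $(z_j)_-=\delta(L-1)$, which is unbounded in $L$. For (b), the truncated set still contains this $z$ once $\varepsilon<1/L$. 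Your bound $z_j\ge -C'\delta/r_j$ is therefore essentially sharp. The blow-up near the face is genuine, and the estimate $U\le\bar U+\max_j\sup_z(z_j)_-$ is simply too lossy to use.

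\textbf{The missing idea} is to bound $\theta\cdot z_+$ directly rather than slope by slope. Since $z\cdot v\le\tilde U(v)$ for all $v\in\mathbb R^N_+$, let $x\le\theta$ denote $\theta$ with the coordinates where $z_k<0$ set to zero. Then $\theta\cdot z_+=z\cdot x\le\tilde U(x)$. So it suffices to show $\tilde U(x)\le\tilde U(\theta)+C\delta$ whenever $0\le x\le\theta\in\Gamma$.

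Your Step~1 already delivers this if you only ever add mass to a coordinate that is currently zero. When $r_j=0$, $\sum_k r_k+t\le1$, and $i$ is the largest coordinate of $r\neq0$, the factor $(\sum_k r_k)(\sum_k r_k+t)/r_i$ is at most $N-1$. Restoring the missing coordinates one at a time therefore costs at most $(N-1)^2\delta$; if $x=0$ there is nothing to prove.

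This is essentially the paper's argument. It proves $W(x)\le W(y)+N(N-1)\delta$ for all $0\le x\le y$ with $\sum_l y_l\le1$, using convexity of $a\mapsto W(r+ae_j)$ to handle intermediate points. It then takes $U:=W^\uparrow|_\Gamma$ and checks $B_iU\le0$ directly from monotonicity and homogeneity. What $\delta$ controls is how much $\tilde U$ can drop when mass is added to a coordinate, not the size of negative slopes. That drop is precisely what limits how far truncating to $z_+$ can raise utility.
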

\begin{proof}
Extend \(\bar U\) homogeneously to \(\mathbb R_+^N\) by
\[
W(z)
:=
\begin{cases}
(\sum_lz_l)\bar U\bigl(z/\sum_lz_l\bigr),&z\neq0,\\[3pt]
0,&z=0.
\end{cases}
\]
Since \(\bar U\) is convex and nonnegative, \(W\) is nonnegative and
sublinear.

I first show that \(W\) is approximately coordinatewise increasing. Fix
\(j\), let \(r\in\mathbb R_+^N\) satisfy \(r_j=0\), and take
\(0\le s\le t\) with \(\sum_lr_l+t\le1\). First, note that
\begin{equation}\label{eq:approx-coordinate-monotonicity}
W(r+se_j)
\le
W(r+te_j)+(N-1)\delta.
\end{equation}
Indeed, if \(r=0\), this follows from homogeneity and nonnegativity. Otherwise, choose
\(i\neq j\) such that
\(r_i\ge\frac{\sum_lr_l}{N-1}\), and set
\[
\alpha:=\frac{r}{\sum_lr_l},
\qquad
\beta:=\frac{r+te_j}{\sum_lr_l+t}.
\]
Then \((\beta,\alpha)\in R_i\), so
\(\beta_i\bar U(\alpha)-\alpha_i\bar U(\beta)\le\delta\). Using the
definitions of \(\alpha\), \(\beta\), and \(W\), we obtain
\[
W(r)-W(r+te_j)
\le
\delta\frac{(\sum_lr_l)(\sum_lr_l+t)}{r_i}
\le
(N-1)\delta.
\]
Convexity of \(a\mapsto W(r+ae_j)\) on \([0,t]\) then gives
\eqref{eq:approx-coordinate-monotonicity}.

Now, applying \eqref{eq:approx-coordinate-monotonicity} successively to each
coordinate shows that, whenever \(0\le x\le y\) and \(\sum_ly_l\le1\),
\begin{equation}\label{eq:approx-monotone-cone}
W(x)\le W(y)+N(N-1)\delta.
\end{equation}
We now construct the desired \(U\). To that end, define the coordinatewise
monotone hull
\[
W^\uparrow(y)
:=
\sup\{W(x):0\le x\le y\}.
\]
Then, for every \(y\in\Gamma\), \eqref{eq:approx-monotone-cone} gives
\begin{equation}\label{eq:hull-close}
0
\le
W^\uparrow(y)-W(y)
\le
N(N-1)\delta.
\end{equation}
The function \(W^\uparrow\) is coordinatewise increasing and positively
homogeneous. It is also subadditive. Indeed, if \(0\le z\le y+y'\), define $z_k^1:=\min\{z_k,y_k\}$ and $z^2:=z-z^1.$ Then \(0\le z^1\le y\), \(0\le z^2\le y'\), and sublinearity of \(W\) gives
\[
W(z)
\le
W(z^1)+W(z^2)
\le
W^\uparrow(y)+W^\uparrow(y').
\]
Taking the supremum over \(z\le y+y'\) proves subadditivity.

Now, set \(U:=W^\uparrow|_\Gamma\). Since \(W^\uparrow\) is nonnegative and
sublinear, \(U\) is nonnegative and convex. It is also continuous: if
\(M:=\max_iW^\uparrow(e_i)\), then sublinearity and coordinatewise
monotonicity imply $|U(\theta)-U(\theta')|
\le
M\|\theta-\theta'\|_1$, so \(U\in K\). Moreover, \eqref{eq:hull-close} gives $\|\bar U-U\|_\infty
\le
N(N-1)\delta.$ 

It therefore remains to verify the constraints defining \(\mathcal U\). Fix
\((\theta,\theta')\in R_i\). If \(\theta_i=0\), then $B_iU(\theta,\theta')
=
-\theta_i'U(\theta)
\le0.$ If \(\theta_i>0\), let
\(z:=(\theta_i'/\theta_i)\theta\). By the definition of \(R_i\),
\(z\ge\theta'\) coordinatewise. Therefore,
\[
U(\theta')
\le
W^\uparrow(z)
=
\frac{\theta_i'}{\theta_i}U(\theta),
\]
so \(B_iU(\theta,\theta')\le0\). Thus \(U\in\mathcal U\), and the result
follows with \(C=N(N-1)\).
\end{proof}
We now prove existence of a certificate. By
\eqref{eq:primalzeropure} and the optimality of
\(U_{\mathrm{pure}}\),
\begin{equation}
\label{eq:KR-nonpositive-no-lip}
\int_\Gamma\frac{U}{\theta_*}\,d\mu\le0
\qquad\text{for every }U\in\mathcal U.
\end{equation}
Let \(Y_+\) denote the positive cone of \(Y\), and define the convex cone
\[
\mathcal C
:=
\left\{
\left(
(-B_iU-w_i)_{i=1}^N,\,
\int_\Gamma\frac{U}{\theta_*}\,d\mu-r
\right):
U\in K,\ (w_i)_{i=1}^N\in Y_+,\ r\ge0
\right\}.
\]
I now show that \((0,1)\notin\overline{\mathcal C}\). Otherwise, there would
exist \(U_n\in K\), \(w_n\in Y_+\), and \(r_n\ge0\) such that
\[
(-B_iU_n-w_{n,i})_{i=1}^N\to0,
\qquad
\int_\Gamma\frac{U_n}{\theta_*}\,d\mu-r_n\to1.
\]
Since $\|(-B_iU_n-w_{n,i})_{i=1}^N\|_Y\longrightarrow0$ and \(w_{n,i}\geq0\), we have $B_iU_n
\le
\|(-B_jU_n-w_{n,j})_{j=1}^N\|_Y$ on $R_i$ for every $i$. By Lemma~\ref{lem:R-repair}, there is
\(\hat U_n\in\mathcal U\) such that
\[
\|U_n-\hat U_n\|_\infty
\le
C\|(-B_iU_n-w_{n,i})_{i=1}^N\|_Y
\longrightarrow0.
\]
Hence, \eqref{eq:KR-nonpositive-no-lip} gives
\[
\limsup_{n\to\infty}
\int_\Gamma\frac{U_n}{\theta_*}\,d\mu
\le
\limsup_{n\to\infty}
\left\|\frac{\mu}{\theta_*}\right\|_{TV}
\|U_n-\hat U_n\|_\infty
=
0.
\]

This contradicts \(r_n\ge0\) and
\(\int_\Gamma U_n/\theta_*\,d\mu-r_n\to1\).

Since \(\overline{\mathcal C}\) is a closed convex cone, Hahn--Banach
separation yields \((\Lambda,\lambda)\in Y^*\times\mathbb R\) that strictly
separates \((0,1)\) from \(\overline{\mathcal C}\). Thus $\Lambda(z)+\lambda a\le0$ for every $(z,a)\in\overline{\mathcal C}$ and $\lambda>0.$ Normalize \(\lambda=1\). Since \((-w,0)\in\mathcal C\) for every
\(w\in Y_+\), the functional \(\Lambda\) is positive. Moreover, taking
\(w=0\) and \(r=0\) gives
\begin{equation}
\label{eq:functional-dual-no-lip}
\int_\Gamma\frac{U}{\theta_*}\,d\mu
\le
\Lambda\bigl((B_iU)_{i=1}^N\bigr)
\qquad\text{for every }U\in K.
\end{equation}
By the Riesz representation theorem, there exist finite positive Borel
measures \(\gamma_i\in\mathcal M_+(R_i)\) such that
\[
\Lambda\bigl((B_iU)_{i=1}^N\bigr)
=
\sum_{i=1}^N\int_{R_i}B_iU\,d\gamma_i.
\]
Substituting this identity into
\eqref{eq:functional-dual-no-lip} shows that
\((\gamma_1,\dots,\gamma_N)\) is a dual certificate.

\subsection{Proof of Lemma \ref{lem:regionwise_exact_certificate}}

Let \((\gamma_1,\dots,\gamma_N)\) be a dual certificate. Applying
\eqref{eq:dualcert} to \(U_{\mathrm{pure}}\), and using
\eqref{eq:primalzeropure} and \(B_iU_{\mathrm{pure}}\le0\), gives
\[
0
\le
\sum_{i=1}^N\int_{R_i}B_iU_{\mathrm{pure}}\,d\gamma_i
\le0.
\]
Hence each \(\gamma_i\) is supported on $E_i
:=
\{(\theta,\theta')\in R_i:
B_iU_{\mathrm{pure}}(\theta,\theta')=0\}.$ Since every \(U\) satisfies \(B_iU=0\) on $Z_i
:=
\{(\theta,\theta')\in R_i:\theta_i=\theta_i'=0\},$ we may discard the mass on \(Z_i\).

Fix \((\theta,\theta')\in E_i\setminus Z_i\). Then
\(\theta_i,\theta_i'>0\), since otherwise
\(B_iU_{\mathrm{pure}}(\theta,\theta')=0\) would contradict
\(U_{\mathrm{pure}}>0\). Since \((\theta,\theta')\in R_i\), we get $\frac{q_k\theta_k}{\theta_i}
\ge
\frac{q_k\theta_k'}{\theta_i'}$ for every $k$. Moreover, \(B_iU_{\mathrm{pure}}(\theta,\theta')=0\) implies
\[
\max_k\frac{q_k\theta_k}{\theta_i}
=
\frac{U_{\mathrm{pure}}(\theta)}{\theta_i}
=
\frac{U_{\mathrm{pure}}(\theta')}{\theta_i'}
=
\max_k\frac{q_k\theta_k'}{\theta_i'}.
\]
If \(j\) maximizes the last expression, the corresponding coordinate
inequality must bind. Hence
\[
q_j\theta_j=U_{\mathrm{pure}}(\theta),
\qquad
q_j\theta_j'=U_{\mathrm{pure}}(\theta'),
\]
so \(\theta,\theta'\in\Gamma_j\). Therefore $E_i\setminus Z_i
\subseteq
\bigcup_{j=1}^N(\Gamma_j\times\Gamma_j).$ To obtain a disjoint partition, define
\[
S_{i1}
:=
(E_i\setminus Z_i)\cap(\Gamma_1\times\Gamma_1),
\quad \text{and, for \(j\ge2\),} \quad
S_{ij}
:=
(E_i\setminus Z_i)\cap(\Gamma_j\times\Gamma_j)
\setminus
\bigcup_{l<j}(\Gamma_l\times\Gamma_l).
\]
Let \(\eta_{ij}:=\gamma_i|_{S_{ij}}\), so that
\(\gamma_i=\sum_j\eta_{ij}\). We now reassign \(\eta_{ij}\) from constraint \(i\) to constraint \(j\).
On \(S_{ij}\), we have $U_{\mathrm{pure}}(\theta)=q_j\theta_j$ and $U_{\mathrm{pure}}(\theta')=q_j\theta_j'$. \(B_iU_{\mathrm{pure}}=0\) therefore gives
\begin{equation}\label{eq:commratio}
\frac{\theta_i}{\theta_j}
=
\frac{\theta_i'}{\theta_j'}.
\end{equation}
Together with \((\theta,\theta')\in R_i\), this implies
\((\theta,\theta')\in R_j\). Define $d\tilde\eta_{ij}(\theta,\theta')
:=
\frac{\theta_i}{\theta_j}\,
d\eta_{ij}(\theta,\theta').$ This measure is finite because, on \(\Gamma_j\),
\(q_j\theta_j=U_{\mathrm{pure}}(\theta)\ge \min_kq_k/N\), so
\(\theta_j\) is bounded away from zero. Moreover, by \eqref{eq:commratio},
\[
\frac{\theta_i}{\theta_j}
B_jU(\theta,\theta')
=
B_iU(\theta,\theta'),
\]
and hence
\begin{equation}\label{eq:measuretransform}
\int_{R_j}B_jU\,d\tilde\eta_{ij}
=
\int_{S_{ij}}B_iU\,d\eta_{ij}
\qquad
\text{for every }U\in K.
\end{equation}

Finally, set $\bar\gamma_j:=\sum_{i=1}^N\tilde\eta_{ij}.$ Then $\bar\gamma_j\in\mathcal M_+(R_j),$ with $\operatorname{supp}(\bar\gamma_j)
\subseteq
R_j\cap(\Gamma_j\times\Gamma_j)$. Summing \eqref{eq:measuretransform} gives
\[
\sum_{j=1}^N\int_{R_j}B_jU\,d\bar\gamma_j
=
\sum_{i=1}^N\int_{R_i}B_iU\,d\gamma_i
\qquad
\text{for every }U\in K.
\]
Thus the original certificate inequality continues to hold with
\((\bar\gamma_1,\dots,\bar\gamma_N)\), which is therefore a regionwise dual
certificate.

\subsection{Proof of Lemma \ref{lem:certificate-transport-equivalence}}

Suppose \((\gamma_i)_{i=1}^N\) is a regionwise dual certificate. For each
\(i\), define
\(d\pi_i(\theta,\theta'):=\theta_i\theta_i'\,
d\gamma_i(\theta,\theta')\), and let \(\rho_i\) and \(\tau_i\) denote the
first and second marginals of \(\pi_i\). Since
\(\operatorname{supp}(\gamma_i)\subseteq
R_i\cap(\Gamma_i\times\Gamma_i)\), the measure \(\pi_i\) is a
\(\succ_i\)-monotone transport plan from \(\rho_i\) to \(\tau_i\). Define the finite signed measure
\(\tilde\mu_i:=\tau_i-\rho_i\) on \(\Gamma_i\). Then, for every \(U\in K\),
\[
\int_{\Gamma_i}\frac{U(\theta)}{\theta_i}\,
d\tilde\mu_i(\theta)
=
\int_{R_i}B_iU(\theta,\theta')\,
d\gamma_i(\theta,\theta').
\]
Summing over \(i\) and applying the certificate inequality gives
\[
\int_\Gamma\frac{U}{\theta_*}\,d\mu
\le
\sum_{i=1}^N
\int_{\Gamma_i}\frac{U}{\theta_i}\,d\tilde\mu_i,
\]
which is the preprocessing condition. It remains to verify the regionwise transport condition. Since \(\pi_i\)
moves mass only upward in the \(\succ_i\)-order, for every closed
\(\succ_i\)-upper set \(C\subseteq\Gamma_i\) we have
\(\rho_i(C)\le\tau_i(C)\), and therefore
\(\tilde\mu_i(C)\ge0\). Moreover,
\(\tilde\mu_i(\Gamma_i)=0\), because \(\rho_i\) and \(\tau_i\) have the same
total mass. Hence
\(\tilde\mu_i^+(\Gamma_i)=\tilde\mu_i^-(\Gamma_i)\). By
Theorem~\ref{thm:stoch_order_i}, there exists a
\(\succ_i\)-monotone transport plan from \(\tilde\mu_i^-\) to
\(\tilde\mu_i^+\).

Conversely, suppose
\((\tilde\mu_1,\dots,\tilde\mu_N)\) satisfies preprocessing and regionwise
transport. For each \(i\), let \(\pi_i\) be a
\(\succ_i\)-monotone transport plan from \(\tilde\mu_i^-\) to
\(\tilde\mu_i^+\). Since \(\Gamma_i\) is compact and
\(\theta_i>0\) everywhere on \(\Gamma_i\), the coordinate \(\theta_i\) is
bounded away from zero on \(\Gamma_i\). Hence the finite nonnegative measure
\(\gamma_i\) defined by
\(d\gamma_i(\theta,\theta')
:=(\theta_i\theta_i')^{-1}d\pi_i(\theta,\theta')\)
is well defined. Its support is contained in
\(R_i\cap(\Gamma_i\times\Gamma_i)\). For every \(U\in K\),
\[
\int_{R_i}B_iU\,d\gamma_i
=
\int_{\Gamma_i}\frac{U}{\theta_i}\,d\tilde\mu_i.
\]
Therefore, summing over \(i\) and using preprocessing,
\[
\int_\Gamma\frac{U}{\theta_*}\,d\mu
\le
\sum_{i=1}^N
\int_{\Gamma_i}\frac{U}{\theta_i}\,d\tilde\mu_i
=
\sum_{i=1}^N
\int_{R_i}B_iU\,d\gamma_i.
\]
Thus \((\gamma_i)_{i=1}^N\) is a regionwise dual certificate.

\section{Deriving examples}\label{sec:verifyex3}

\subsection{Optimal mechanism in Example~\ref{example:ex3}}

Let us verify that the optimal mechanism in Example~\ref{example:ex3} offers only pure options. For \(\rho>0\), define the welfare-cost ratio
\[
R_\rho(z):=\frac{w(z)}{\rho\varphi_1(z)+\varphi_2(z)}.
\]
For \(z\in[0,\tfrac12]\), direct calculation gives
\[
\mathbb P(\Theta_2\le z)=\frac{z}{2(1-z)},
\quad
\mathbb P(\Theta_2\ge z)=\frac{2-3z}{2(1-z)}, 
\quad 
\text{and}
\quad
\mathbb E \left[\max\{zV_1,(1-z)V_2\}\right]
=\frac{4z^2-6z+3}{6(1-z)}.
\]
For \(z\in[\tfrac12,1]\),
\[
\mathbb P(\Theta_2\le z)=\frac{3z-1}{2z},
\quad
\mathbb P(\Theta_2\ge z)=\frac{1-z}{2z}, 
\quad 
\text{and}
\quad
\mathbb E \left[\max\{zV_1,(1-z)V_2\}\right]
=\frac{4z^2-2z+1}{6z}.
\]
Since $\mathbb E \left[\lambda(\Theta)\max\{z\Theta_1,(1-z)\Theta_2\}\right]
=
\mathbb E \left[\max\{zV_1,(1-z)V_2\}\right],$ we obtain
\[
R_\rho(z)=
\begin{cases}
\displaystyle
\frac13\,
\frac{4z^2-6z+3}{(\rho+3)z^2-5z+2},
& z\in[0,\tfrac12], \\[10pt]
\displaystyle
\frac13\,
\frac{4z^2-2z+1}{(3\rho+1)z^2-(\rho+2)z+1},
& z\in[\tfrac12,1].
\end{cases}
\]
Note that $R_\rho(z)=\frac{1}{\rho}R_{1/\rho}(1-z).$ Thus \(R_\rho\) has a unique maximizer if and only if \(R_{1/\rho}\) does,
so it is enough to consider \(\rho\le1\).

Now, on \([0,\tfrac12]\), differentiating the first expression gives
\[
R_\rho'(z)
=
\frac{(6\rho-2)z^2-(6\rho+2)z+3}
{3\bigl((\rho+3)z^2-5z+2\bigr)^2}.
\]
The numerator is strictly decreasing on this interval and equals
\(\tfrac32(1-\rho)\ge0\) at \(z=\tfrac12\). Hence \(R_\rho\) is strictly
increasing on \([0,\tfrac12]\). On \([\tfrac12,1]\), differentiating the second expression gives
\[
R_\rho'(z)
=
\frac{2\rho z^2-6\rho z+\rho-6z^2+6z}
{3\bigl((3\rho+1)z^2-(\rho+2)z+1\bigr)^2}.
\]
The numerator is strictly decreasing in \(z\), is nonnegative at
\(z=\tfrac12\) (with equality only when \(\rho=1\)), and is negative at
\(z=1\). Combining the two intervals shows that \(R_\rho\) has a unique
maximizer on \([0,1]\). By symmetry, the same conclusion
holds for every \(\rho>0\).

Let \((c_1^*,c_2^*)\gg0\) be an optimal dual vector for
\eqref{eq:primalLP}, and set \(\rho^*:=c_1^*/c_2^*\). Dual feasibility gives
\[
R_{\rho^*}(z)
=
\frac{w(z)}
{\rho^*\varphi_1(z)+\varphi_2(z)}
\le c_2^*
\qquad\text{for every }z\in[0,1].
\]
Complementary slackness then implies that equality holds
\(\nu^*\)-a.e. for every optimal primal measure \(\nu^*\).
Since an optimal measure is nonzero, it follows that $c_2^*=\max_{z\in[0,1]}R_{\rho^*}(z)$ and that every optimal measure is supported on $\arg\max_{z\in[0,1]}R_{\rho^*}(z).$ Since this set is a singleton in \((0,1)\), every optimal measure is supported on one interior cutoff.

\subsection{Optimal mechanism in Example~\ref{example:ex2}}

By symmetry, the pure option mechanism offers \(2s\) units
of either good. Its welfare is
\[
W_{\mathrm{pure}}
=
2s\,\mathbb E[\max\{V_1,V_2\}].
\]
Fix \(\tau\in(1/2,1)\) and consider instead the following menu:
\[
\left\{
q_\tau e_1,\,
q_\tau e_2,\,
\tau q_\tau\mathbf 1
\right\},
\ \ \ \text{where}
\ \ \ p_\tau
:=
\mathbb P\bigl(\max\{V_1,V_2\}\leq \tau(V_1+V_2)\bigr),
\quad
q_\tau
:=
\frac{2s}{1+p_\tau(2\tau-1)}.
\]
Under this menu, type \(V\) obtains utility $q_\tau\max\{V_1,V_2\}$ from her preferred pure option and utility $\tau q_\tau(V_1+V_2)$ from the bundle. She therefore chooses the bundle exactly when $\max\{V_1,V_2\}\leq\tau(V_1+V_2).$ Note also that the corresponding mechanism satisfies supply constraints: by symmetry, the aggregate use of either
good is
\[
q_\tau\frac{1-p_\tau}{2}
+
\tau q_\tau p_\tau
=
q_\tau\frac{1+p_\tau(2\tau-1)}{2}
=
s.
\]
The welfare from this menu is given by
\[
W_\tau
=
q_\tau
\mathbb E\left[
\max\left\{
\max\{V_1,V_2\},
\tau(V_1+V_2)
\right\}
\right].
\]
We now verify it is higher than that under the pure option mechanism. Since
\[
\max\left\{
\max\{V_1,V_2\},
\tau(V_1+V_2)
\right\}
=
\max\{V_1,V_2\}
+
\bigl(\tau(V_1+V_2)-\max\{V_1,V_2\}\bigr)^+,
\]
we have \(W_\tau>W_{\mathrm{pure}}\) if and only if
\begin{equation}
\label{eq:ex2-sufficient}
\mathbb E\left[
\bigl(\tau(V_1+V_2)-\max\{V_1,V_2\}\bigr)^+
\right]
>
p_\tau(2\tau-1)
\mathbb E[\max\{V_1,V_2\}].
\end{equation}
Consider first the case of \(\varepsilon=0\). Then, common-need agents have $V_1=V_2=1$; since \(\tau>1/2\), they strictly prefer the bundle:
\[
\max\{V_1,V_2\}
=
1
<
2\tau
=
\tau(V_1+V_2).
\]
Almost all specialized-need agents strictly prefer the corresponding pure option because
\[
\max\{V_1,V_2\}
=
V_1+V_2
>
\tau(V_1+V_2).
\]
Consequently, \(p_\tau=1/2\), and
\[
\mathbb E\left[
\bigl(\tau(V_1+V_2)-\max\{V_1,V_2\}\bigr)^+
\right]
=
\frac{2\tau-1}{2},
\qquad
\mathbb E[\max\{V_1,V_2\}]
=
\frac34.
\]
Thus, the difference between the two sides of
\eqref{eq:ex2-sufficient} equals
\[
\frac{2\tau-1}{2}
-
\frac12(2\tau-1)\frac34
=
\frac{2\tau-1}{8}
>
0.
\]
Now, note that all the objects appearing in \eqref{eq:ex2-sufficient} vary
continuously as \(\varepsilon\downarrow0\). Since values are uniformly bounded
and the limiting distribution assigns probability zero to the indifference
event $\max\{V_1,V_2\}=\tau(V_1+V_2)$, the strict inequality continues to hold for all sufficiently small
\(\varepsilon>0\). We then get $W_\tau>W_{\mathrm{pure}},$ so the pure option mechanism is not optimal. Finally, by Theorem~\ref{th:2good_asym_menu}, the optimal mechanism offers two pure options and one bundle.

\subsection{Example \ref{ex:iid_reverse_hazard_examples}}

For the power family, $x\frac{f_M(x)}{F_M(x)}=\alpha,$ which is constant; the uniform distribution is the special case with \(\alpha=1\). For the truncated exponential family, $x\frac{f_M(x)}{F_M(x)}
=
\frac{\beta x}{e^{\beta x}-1}.$ This is decreasing in \(x\), since \(z/(e^z-1)\) is decreasing on
\((0,\infty)\). Thus, each family satisfies
\eqref{eq:reverse_hazard_elasticity}, and the result follows from
Corollary~\ref{cor:iid_reverse_hazard_pure}.

\subsection{Example \ref{ex:uniform-cube-BUS}}

We first change variables within each choice region, then compute the rent measure in these coordinates, and finally establish the uniform lower bound required by \eqref{eq:BUS}.

\paragraph{Ratio coordinates.}
The market-clearing pure-option vector is symmetric. Hence, up to null
indifference sets,
\[
\Gamma_i
=
\{\theta\in\Gamma:\theta_i\ge\theta_k\text{ for all }k\}.
\]
Fix \(i\), let \(Q:=[0,1]^{N-1}\), and identify \(\Gamma_i\) with \(Q\)
through $R_i(\theta)
:=
\left(\frac{\theta_k}{\theta_i}\right)_{k\neq i}.$ For \(r=(r_k)_{k\neq i}\in Q\), the inverse map is
\begin{equation}\label{eq:invratiomap}
\theta_i(r)=\frac{1}{1+\sum_{k\neq i}r_k},
\qquad
\theta_k(r)=\frac{r_k}{1+\sum_{l\neq i}r_l}
\quad(k\neq i).
\end{equation}
These coordinates are useful because they turn the relevant order into the
coordinatewise order: if \(C\subseteq\Gamma_i\) is a closed
\(\succ_i\)-upper set, then \(D:=R_i(C)\) is a closed downset in \(Q\).

We first use this transformation to compute the primitives and supply
costs. Since \(V\) is uniform on \([0,1]^N\), the ray
\(\{t\theta:t\ge0\}\) remains in the cube exactly when
\(0\le t\le1/\max_j\theta_j\). Thus, for some normalizing constant
\(K_N>0\),
\[
g(\theta)=K_N\left(\max_j\theta_j\right)^{-N},
\qquad
\lambda(\theta)
=
\frac{N}{N+1}\frac{1}{\max_j\theta_j}.
\]
On \(\Gamma_i\), this becomes
\[
g(\theta)=K_N\theta_i^{-N},
\qquad
\lambda(\theta)\theta_i=\frac{N}{N+1}.
\]
The Jacobian of \(R_i^{-1}\) is proportional to \(\theta_i^N\), which
cancels the variation in \(g\). Hence the pushforward of \(g\,dm\) is
constant, and symmetry gives
\[
(R_i)_\#(g\,dm)=\frac1N\,dr,
\qquad
M_i=\frac1N,
\qquad
A_i=\frac1{N+1}.
\]
By symmetry, the supply costs are equal. Since each row of \(J\) sums to
\(M_i=1/N\), the equation \(Jc=A\) gives $c_i=c_0:=\frac{N}{N+1}$ for every $i$.

\paragraph{The rent measure.}
We next compute \(\mu_i\) in ratio coordinates. 

First, the vector field $(c-(\sum_jc_j)\theta)g
=
c_0(1-N\theta)g$ has \(r_k\)-component density
\[
b_k(r)
=
\frac{c_0}{N}(1-r_k)
\left(1+\sum_{l\neq i}r_l\right);
\quad \text{therefore}\quad 
\theta_i(r)\sum_{k\neq i}\partial_{r_k}b_k(r)
=
-\frac{c_0\sum_{k\neq i}r_k}
{1+\sum_{k\neq i}r_k}.
\]
The remaining interior terms contribute
\[
\frac{c_0}{N}
\left(
1-\frac{N}{1+\sum_{k\neq i}r_k}
\right)dr,
\]
so the total interior density is \(-\frac{N-1}{N+1}\). Each lower face
\(\{r_k=0\}\) contributes
\(\frac1{N+1}\mathcal H^{N-2}\). Hence
\[
(R_i)_\#\mu_i
=
\frac1{N+1}
\left[
\sum_{k\neq i}
\mathcal H^{N-2}\restriction_{\{r_k=0\}}
-
(N-1)\mathcal L^{N-1}\restriction_Q
\right].
\]
Thus, for every closed \(\succ_i\)-upper set
\(C\subseteq\Gamma_i\),
\begin{equation}\label{eq:uniform-cube-mu-D}
\mu_i(C)
=
\frac1{N+1}
\left[
\sum_{k\neq i}
\mathcal H^{N-2}(D\cap\{r_k=0\})
-
(N-1)\mathcal L^{N-1}(D)
\right],\quad \text{where } D:=R_i(C).
\end{equation}

\paragraph{Uniform slack.}
By \eqref{eq:uniform-cube-mu-D}, it remains to show that the positive
lower-face mass uniformly dominates the negative interior mass for every
downset. For Borel \(B\subseteq Q\), define
\[
\bar\alpha(B)
:=
\mathcal L^{N-1}(B)
+
\sum_{k\neq i}
\mathcal H^{N-2}(B\cap\{r_k=0\}).
\]
We now show that there exists \(\kappa_N>0\) such that every closed downset
\(D\subseteq Q\) satisfies
\begin{equation}\label{eq:downset-gap}
\sum_{k\neq i}
\mathcal H^{N-2}(D\cap\{r_k=0\})
-
(N-1)\mathcal L^{N-1}(D)
\ge
\kappa_N
\min\{\bar\alpha(D),\bar\alpha(Q\setminus D)\}.
\end{equation}
Write
\[
V:=\mathcal L^{N-1}(D),
\qquad
P_k:=\mathcal H^{N-2}(D\cap\{r_k=0\}),
\qquad
S:=\sum_{k\neq i}P_k.
\]
If \(V=0\), the left-hand side of \eqref{eq:downset-gap} equals
\(S=\bar\alpha(D)\). If \(V=1\), closedness gives \(D=Q\), so both sides
vanish. If \(N=2\), every nonempty proper closed downset is \(D=[0,V]\),
and the left-hand side equals \(1-V=\bar\alpha(Q\setminus D)\). Hence
\eqref{eq:downset-gap} holds with \(\kappa_N=1\).

Suppose now that \(N\ge3\). Because \(D\) is a downset,
\(D\cap\{r_k=0\}\) is its coordinate projection onto that face.
The Loomis--Whitney and arithmetic--geometric mean inequalities therefore
give
\[
V^{N-2}
\le
\prod_{k\neq i}P_k
\le
\left(\frac{S}{N-1}\right)^{N-1},
\]
and hence
\begin{equation}\label{eq:bounddex4}
S\ge (N-1)V^{(N-2)/(N-1)}.
\end{equation}

Now, consider two cases. If \(V\le1/2\), the ratio $\frac{S-(N-1)V}{S+V}$ is increasing in \(S\). Thus, by \eqref{eq:bounddex4},
\[
\frac{S-(N-1)V}{S+V}
\ge
\frac{(N-1)\bigl(1-V^{1/(N-1)}\bigr)}
{N-1+V^{1/(N-1)}}
\ge
\frac{(N-1)\bigl(1-2^{-1/(N-1)}\bigr)}
{N-1+2^{-1/(N-1)}}
=:\kappa_1>0.
\]
Since \(S+V=\bar\alpha(D)\), we obtain $S-(N-1)V\ge\kappa_1\bar\alpha(D).$

If \(V\ge1/2\), let \(t:=V^{1/(N-1)}\). Then \eqref{eq:bounddex4} gives
\[
\frac{S-(N-1)V}{1-V}
\ge
\frac{(N-1)t^{N-2}}{1+t+\cdots+t^{N-2}}
\ge
2^{-(N-2)/(N-1)}
=:\kappa_2>0.
\]
Moreover,
\[
\bar\alpha(Q\setminus D)
=
N(1-V)-\bigl(S-(N-1)V\bigr)
\le N(1-V),
\quad \text{and hence}
S-(N-1)V
\ge
\frac{\kappa_2}{N}\bar\alpha(Q\setminus D).
\]
This proves \eqref{eq:downset-gap} with $\kappa_N
:=
\min\left\{\kappa_1,\frac{\kappa_2}{N}\right\}>0.$

It remains to translate this geometric bound from \(Q\) back to
\(\Gamma_i\). By \eqref{eq:invratiomap}, the Jacobians of \(R_i^{-1}\)
and its restrictions to the lower faces are bounded above by some
\(B_N<\infty\). The area formula therefore gives
\[
\alpha(R_i^{-1}(B))
\le B_N\bar\alpha(B)
\qquad
\text{for every Borel }B\subseteq Q.
\]
Applying this inequality to \(D\) and \(Q\setminus D\), and combining it
with \eqref{eq:uniform-cube-mu-D} and \eqref{eq:downset-gap}, yields
\[
\mu_i(C)
\ge
\frac{\kappa_N}{(N+1)B_N}
\min\{\alpha(C),\alpha(\Gamma_i\setminus C)\}.
\]
Therefore, \eqref{eq:BUS} holds with \(\eta=\frac{\kappa_N}{(N+1)B_N}>0\).

\subsection{Example \ref{ex:specialists-exclude-generalist}}

For every good \(i\),
\[
\max_{k\neq0}\eta_{ki}
\ge \eta_{ii}
=
(1-\varepsilon)\left(H+\frac{\delta}{2}\right)
+\varepsilon\frac{L}{2}
=
H+\frac{\delta}{2}+o(1),
\]
whereas \(\overline v_{0i}\le L+\delta\). Since
\(H-L>\delta/2\), it follows that
\(\max_{k\neq0}\eta_{ki}>\overline v_{0i}\) for every \(i\) and all
sufficiently small \(\varepsilon>0\). Part~$(i)$ of
Theorem~\ref{thm:3} therefore implies that group \(0\) is excluded.

\subsection{Example \ref{ex:parametric-comparative-advantage}}

Fix \(r\) and apply part~$(ii)$ of Theorem~\ref{thm:3} with \(B=\{r\}\). As
\(\varepsilon\downarrow0\),
\[
\eta_{rr}-\max_{l\neq r}\overline v_{lr}
\ge
H-L-\frac{\delta}{2}+o(1),
\]
as \(\eta_{rr}=H+\delta/2+o(1)\), while
\(\overline v_{lr}\le L+\delta\) for \(l\neq r\). For every \(i\neq r\),
group \(i\) remains available, so
\[
\max_{l\neq r}\overline v_{li}
-
\max_{k\neq r}\eta_{ki}
\le
H+\delta-\eta_{ii}
=
\frac{\delta}{2}+o(1).
\]
Consequently, the difference between the left- and right-hand sides of
\eqref{eq:comparative-advantage-participation-difference} is at least
\[
s_r(H-L)
-
\frac{\delta}{2}\sum_{i=1}^Ns_i
+
o(1).
\]
This is strictly positive for all sufficiently small \(\varepsilon>0\) by
\eqref{eq:parametric-comparative-advantage-gap}. Thus, group \(r\) is active
in every optimal supply split. Since \(r\) was arbitrary, every group is
active.

\end{document}